\documentclass[11pt]{article}
\usepackage[utf8]{inputenc}
\usepackage{amsmath}
\usepackage{amssymb}
\usepackage{amsthm}
\usepackage{graphicx}
\usepackage{braket}
\usepackage{color}
\usepackage{hyperref}
\usepackage{qcircuit}
\usepackage{thmtools} 
\usepackage{thm-restate}
\usepackage{ytableau}

\usepackage[nameinlink]{cleveref}

\hypersetup{colorlinks={true},linkcolor={blue},citecolor=magenta}

\newtheorem{theorem}{Theorem}
\newtheorem{lemma}[theorem]{Lemma}
\newtheorem{proposition}[theorem]{Proposition}
\newtheorem{conjecture}{Conjecture}
\theoremstyle{definition}
\newtheorem{definition}{Definition}
\numberwithin{theorem}{section}
\numberwithin{lemma}{section}
\numberwithin{proposition}{section} %

\usepackage[margin=1in]{geometry}
\DeclareMathOperator{\Tr}{Tr}

\renewcommand{\cal}[1]{\mathcal{#1}}
\newcommand{\ketbra}[2]{|#1 \rangle \langle #2 |}

\newcommand{\BQP}{\mathsf{BQP}}

\newcommand{\QCMA}{\mathsf{QCMA}}
\newcommand{\QMA}{\mathsf{QMA}}
\newcommand{\QMAtwo}{\mathsf{QMA(2)}}
\DeclareMathOperator{\E}{\mathop{\mathbb{E}}}
\newcommand{\C}{\mathbb{C}}
\newcommand{\R}{\mathbb{R}}

\newcommand{\unitary}{\mathrm{U}}
\newcommand{\LU}{\mathrm{LU}}
\newcommand{\poly}{\mathrm{poly}}

\title{Unitary property testing lower bounds by polynomials}
\author{Adrian She\thanks{E-mail: \texttt{adrian.she@mail.utoronto.ca}} \\University of Toronto \and Henry Yuen\thanks{E-mail: \texttt{hyuen@cs.columbia.edu}}\\Columbia University}
\date{\today}

\begin{document}

\maketitle

\begin{abstract}
We study \emph{unitary property testing}, where a quantum algorithm is given query access to a black-box unitary and has to decide whether it satisfies some property. In addition to containing the standard quantum query complexity model (where the unitary encodes a binary string) as a special case, this model contains ``inherently quantum'' problems that have no classical analogue. Characterizing the query complexity of these problems requires new algorithmic techniques and lower bound methods.

Our main contribution is a generalized polynomial method for unitary property testing problems. By leveraging connections with invariant theory, we apply this method to obtain lower bounds on problems such as determining recurrence times of unitaries, approximating the dimension of a marked subspace, and approximating the entanglement entropy of a marked state. We also present a unitary property testing-based approach towards an oracle separation between $\mathsf{QMA}$ and $\mathsf{QMA(2)}$, a long standing question in quantum complexity theory. 
\end{abstract}

\newpage

\setcounter{tocdepth}{2}
\tableofcontents

\newpage

\section{Introduction}
\label{sec:intro}

The query model of quantum algorithms plays a central role in the theory of quantum computing. In this model, the algorithm queries (in superposition) bits of an unknown input string $X$, and after some number of queries decides whether $X$ satisfies a property $\cal{P}$ or not. We now have an extensive understanding of the query complexity of many problems; we refer the reader to Ambainis's survey~\cite{ambainis2018understanding} for an extensive list of examples. 

Although this query model involves quantum algorithms, the task being solved is \emph{classical property testing}, that is, deciding properties of classical strings. This has been very useful for comparing the performance of classical versus quantum algorithms for the same task. In contrast, \emph{quantum property testing} -- deciding properties of quantum objects such as states and unitaries -- has been been studied much less but has been receiving more attention in recent years~\cite{montanaro2013survey}.  

In this paper we focus on \emph{unitary property testing}, where the goal is to decide whether a unitary $U$ satisfies a property $\cal{P}$ by making as few queries to $U$ as possible. This systematic study of this topic was initiated by Wang~\cite{wang2011property}, and various aspects have been studied further in~\cite{montanaro2013survey,chen2022testing,aharonov2022quantum}. We continue explorations of this topic by developing a new lower bound technique, demonstrating its utility with several unitary property testing problems, and exploring intriguing connections between unitary property testing, invariant theory, and the complexity class $\QMAtwo$. Before presenting our findings in detail, we first explain the unitary property testing model.

\subsection{Unitary Property Testing}

The model of unitary property testing we consider is formally defined as follows. Fix a dimension $d$ and let $\cal{P}_{yes}, \cal{P}_{no}$ denote disjoint subsets (called \emph{yes} and \emph{no} instances respectively) of $d$-dimensional unitary operators. A \emph{tester} for deciding the problem $\cal{P} = (\cal{P}_{yes},\cal{P}_{no})$ is a quantum algorithm that, given query access to a unitary $U \in \cal{P}_{yes} \cup \cal{P}_{no}$ (called the \emph{problem instance}), accepts with high probability if $U \in \cal{P}_{yes}$ and otherwise accepts with low probability.\footnote{We note that the concept of property testing discussed in this paper is more general than typically presented in the literature (see, e.g.,~\cite{wang2011property,dall2021quantum}), where \emph{no} instances are defined to be $\epsilon$-far from the set of \emph{yes} instances for some distance measure. In this paper we allow for other ways of defining \emph{no} instances (as long as they are disjoint from \emph{yes} instances), which may be more natural in many contexts.}

This model includes the standard query model as a special case: a quantum query to a classical string $X \in \{0,1\}^d$ is defined to be a query to the unitary $U$ that maps $\ket{i}$ to $(-1)^{X_i} \ket{i}$ for all $i \in [d]$. In other words, the unitary is self-adjoint and diagonal in the standard basis.

We can go beyond self-adjoint, diagonal unitaries and study quantum analogues of classical property testing problems, such as:
\begin{itemize}
    \item Testing quantum juntas: if $U$ is an $n$-qubit unitary, determining whether there is a $k$-sized subset $S$ of qubits outside of which $U$ acts as the identity. This is analogous to determining whether the input $X$, interpreted as a function on $\{0,1\}^n$, only depends on a $k$-subset of coordinates. This was studied in~\cite{wang2011property,chen2022testing}.
    
    \item Approximate dimension: promised that $U$ applies a phase to all states $\ket{\psi}$ in a subspace $S$ of dimension either at $w$ or $2w$, determine the dimension of the subspace. This is analogous to the classical problem of approximating the Hamming weight of an input $X$. This was studied in~\cite{aaronson2020quantum}.
\end{itemize}
We can also study property testing problems that have no classical analogue at all, such as:
\begin{itemize}
    \item Unitary recurrence times: Determining whether $U^t = I$ or $\| U^t - I \| \geq \epsilon$ (promised that one is the case) where $t$ is a fixed integer.
    
    \item Hamiltonian properties: Promised that $U = e^{-iH}$ for some Hamiltonian $H$ with bounded spectral norm, determine properties of $H$, such as whether it is a sum of $k$-local terms, or the ground space is topologically ordered.
    
    \item Unitary subgroup testing: decide whether $U$ belongs to some fixed subgroup of the unitary group (such as the Clifford subgroup). This was studied in~\cite{brakerski2021unitary}.
    
    \item Entanglement entropy problem: Given  access to a unitary $U = I - 2 \ket{\psi} \bra{\psi}$ for some state $\ket{\psi} \in \C^d \otimes \C^d,$ decide if the entanglement entropy of the state $\ket{\psi}$ is low or high, promised one is the case.

\end{itemize}
These examples illustrate the rich variety of unitary property testing problems: some are motivated by well-studied classical problems in computer science (such as junta testing and approximate counting), whereas others are inspired by questions in quantum physics (e.g., identifying quantum chaos, topological order, or entanglement).

\subsection{A Generalized Polynomial Method and Its Applications}

Our main contribution is a lower bound technique for unitary property testing that generalizes the well-known polynomial method in complexity theory. First introduced by Nisan and Szegedy as~\cite{nisan1994degree} in classical complexity theory and then adapted for quantum algorithms by Beals, et al.~\cite{beals2001quantum}, the polynomial method is a powerful technique to lower bound the quantum query complexity of a variety of problems (see, e.g.,~\cite{aaronson2004quantum,bun2018polynomial}, and the references therein). 

The polynomial method is based on the fact that a quantum algorithm making $T$ queries to a boolean input $X = (x_1,\ldots,x_n)$ yields a real polynomial $p:\R^n \to \R$ of degree at most $2T$ such that $p(x_1,\ldots,x_n)$ is equal to the acceptance probability of the algorithm on input $X$. If the algorithm distinguishes between \emph{yes} and \emph{no} instances with some bias, so does the polynomial $p$. Thus, lower bounds on the degree of any such distinguishing polynomial directly translates into a lower bound on the bounded-error quantum query complexity for the same task.

We generalize this to arbitrary unitary properties. We note that we allow queries to both $U$ and $U^*$ as we are interested in lower bounds for the strongest possible query model. We leave it as an open question to see if there is a property testing problem that can be efficiently solved using oracle access to both $U$ and $U^*$ but not with access to only $U$. 

\begin{restatable}[Generalized polynomial method]{proposition}{genpoly}
\label{prop:basic-polynomial}
The acceptance probability of a quantum algorithm making $T$ queries to a $d \times d$ unitary $U$ and its inverse $U^*$ can be computed by a degree at most $2T$ self-adjoint\footnote{A self-adjoint polynomial is unchanged after complex conjugating every variable and every coefficient.} polynomial $p:\C^{2(d \times d)} \to \C$ evaluated at the matrix entries of $U$ and $U^*$. Thus, degree lower bounds on such polynomials yields a query lower bound on the algorithm.
\end{restatable}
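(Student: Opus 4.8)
The plan is to mimic the classical polynomial method of Beals et al.~\cite{beals2001quantum}, with the black-box unitary in place of the boolean oracle. First I would fix a canonical form for a $T$-query tester: a circuit $V_T O_T V_{T-1} O_{T-1} \cdots V_1 O_1 V_0$ applied to $\ket{0^N}$, where the $V_t$ are input-independent unitaries, each $O_t$ is a query gate that applies $U$ or $U^*$ to a designated $d$-dimensional register (and the identity to the rest, possibly controlled on other registers, and possibly a coherent choice between $U$ and $U^*$), and the tester's acceptance probability is the weight of the final state $\ket{\psi_T}$ on a fixed ``accept'' subspace; intermediate measurements are deferred to the end in the usual way. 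Then I would introduce $2d^2$ commuting formal variables arranged as two $d\times d$ matrices $A,B$ standing for the entries of $U$ and of $U^*$ respectively, and record the one structural fact the argument rests on: in the computational basis every entry of each query gate $O_t$ is a polynomial of degree at most $1$ in $A,B$ (indeed each such entry is either a fixed scalar coming from the identity/control part, or a single variable $A_{ij}$ or $B_{ij}$).

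The next step is a one-line induction on $t$ showing that every amplitude $\braket{z|V_t O_t \cdots V_1 O_1 V_0|0^N}$ is a polynomial in $A,B$ of degree at most $t$: the base case holds since $V_0$ is a constant matrix, and in the inductive step left-multiplying a vector of degree-$\le(t-1)$ polynomials by the degree-$\le 1$ matrix $O_t$ raises the degree by at most one, while left-multiplying by the constant matrix $V_t$ preserves it. Hence every amplitude of $\ket{\psi_T}$ has degree at most $T$. I would then write the acceptance probability as $\sum_{z\in\mathrm{acc}}\braket{z|\psi_T}\,\overline{\braket{z|\psi_T}}$ and observe that conjugating a degree-$\le T$ amplitude polynomial $q(A,B)$ amounts to conjugating its coefficients and replacing each $A_{ij}$ by $B_{ji}$ and each $B_{ij}$ by $A_{ji}$ (using $\overline{U_{ij}}=(U^*)_{ji}$ and $\overline{(U^*)_{ij}}=U_{ji}$ on the locus $B=A^*$), which is again a polynomial of degree $\le T$ in $A,B$. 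Multiplying the two factors and summing over accepting $z$ yields the claimed polynomial $p:\C^{2(d\times d)}\to\C$ of degree at most $2T$ that computes the acceptance probability; self-adjointness of $p$ (invariance under conjugating every coefficient together with the natural involution swapping the roles of $U$ and $U^*$) is immediate from its construction as a sum of terms of the form $q\cdot\overline q$, and it guarantees that $p$ is real on the physical locus, consistent with acceptance probabilities being real.

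I expect essentially all of the above to be routine bookkeeping; the one place that genuinely needs care is the first step, namely pinning down what counts as a single query in the strongest reasonable model -- queries to both $U$ and $U^*$, controlled queries, queries on an arbitrary subregister, coherent selection between $U$ and $U^*$ within a query -- and checking in each such case that the query gate's matrix entries remain polynomials of degree at most $1$ in the $2d^2$ variables. A related point worth spelling out, and the reason the statement carries the entries of $U^*$ as separate formal variables rather than working with $U$ alone, is that complex conjugation does not preserve the set $\{U_{ij}\}$: to express the acceptance probability (which is not holomorphic in the entries of $U$) as a genuine polynomial one must have the entries of $U^*$ available as variables, so that conjugation acts only on coefficients and permutes the variables, keeping the expression inside the polynomial ring. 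With those conventions fixed, the rest follows the classical template.
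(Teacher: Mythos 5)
Your proposal is correct and follows essentially the same route as the paper's proof: an induction over the queries showing that each amplitude after $t$ queries is a polynomial of degree at most $t$ in the entries of $U$ and $U^*$ (since each query gate has entries of degree at most $1$), followed by writing the acceptance probability as a sum of squared magnitudes of final amplitudes, which yields a self-adjoint polynomial of degree at most $2T$. Your extra care about the query model and about why the entries of $U^*$ must be carried as separate variables is consistent with, and slightly more explicit than, the paper's treatment.
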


Furthermore, we say that a unitary property $\mathcal{P} = (\mathcal{P}_{yes}, \mathcal{P}_{no})$ is closed under inversion if $U \in \mathcal{P}_{yes}$ iff $U^* \in \mathcal{P}_{yes},$ and $U \in \mathcal{P}_{no}$ iff $U^* \in \mathcal{P}_{no}.$ All properties we will study in this paper will be closed under inversion, and hence the polynomial $p$ satisfies a symmetry under this condition.

\begin{restatable}{proposition}{genpolyinversion}
\label{prop:basic-polynomial-inv}
Let $\mathcal{P}$ be an property closed under inversion and suppose there is a $T$-query quantum algorithm for testing property $\mathcal{P}$. Let $p$ be the polynomial from \Cref{prop:basic-polynomial} that computes the acceptance property of the algorithm. Then, we may assume that $p(U, U^*) = p(U^*, U).$
\end{restatable}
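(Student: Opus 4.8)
The plan is to symmetrize the given tester: run it either ``as-is'' or ``with $U$ and $U^{*}$ interchanged'', each with probability $\tfrac12$, and then invoke \Cref{prop:basic-polynomial} on the resulting algorithm.

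Concretely, let $\mathcal{A}$ be the hypothesized $T$-query tester and let $p$ be the degree-$\le 2T$ self-adjoint polynomial from \Cref{prop:basic-polynomial}, so that $p(W, W^{*})$ (shorthand for $p$ evaluated at the entries of $W$ in its first block of variables and the entries of $W^{*}$ in its second) equals the acceptance probability of $\mathcal{A}$ on oracle $W$, for every $d\times d$ unitary $W$. I would first define $\mathcal{A}^{*}$ to be the algorithm obtained from $\mathcal{A}$ by replacing every oracle query with a query to its inverse and vice versa; this is again a $T$-query algorithm in our model, and running it on oracle $W$ yields exactly the output distribution of $\mathcal{A}$ run on $W^{*}$. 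Hence, applying \Cref{prop:basic-polynomial} to $\mathcal{A}$ with the oracle $W^{*}$ (whose inverse is $W$), the acceptance probability of $\mathcal{A}^{*}$ on $W$ is $p(W^{*}, W)$.

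Next I would argue that $\mathcal{A}^{*}$ is still a valid tester for $\mathcal{P}$ with the same completeness and soundness: because $\mathcal{P}$ is closed under inversion, $W\in\mathcal{P}_{yes}$ (resp.\ $\mathcal{P}_{no}$) iff $W^{*}\in\mathcal{P}_{yes}$ (resp.\ $\mathcal{P}_{no}$), and the acceptance probability of $\mathcal{A}^{*}$ on $W$ matches that of $\mathcal{A}$ on $W^{*}$. Now let $\mathcal{B}$ be the $T$-query algorithm that tosses a fair coin and runs $\mathcal{A}$ or $\mathcal{A}^{*}$ accordingly (equivalently, one can attach a control qubit in uniform superposition). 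Then $\mathcal{B}$ is a tester for $\mathcal{P}$ with the same bias, and its acceptance probability on $W$ is $\tfrac12\bigl(p(W,W^{*}) + p(W^{*},W)\bigr)$. Feeding $\mathcal{B}$ into \Cref{prop:basic-polynomial} --- or, more directly, simply checking that $\tilde p(X,Y) := \tfrac12\bigl(p(X,Y) + p(Y,X)\bigr)$ has degree $\le 2T$, is self-adjoint, and satisfies $\tilde p(X,Y) = \tilde p(Y,X)$ --- produces the claimed polynomial, so we may indeed take $p$ with $p(U,U^{*}) = p(U^{*},U)$.

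The steps above are essentially bookkeeping, so I do not anticipate a genuine obstacle; the two places to be careful are (i) pinning down that $\mathcal{A}^{*}$ run on $W$ computes $p(W^{*}, W)$, i.e.\ that interchanging the oracle with its inverse corresponds exactly to swapping the two blocks of matrix-entry variables --- this is precisely where the ``closed under inversion'' hypothesis enters, via the fact that $\mathcal{A}^*$ on $W$ has the same acceptance probability as $\mathcal{A}$ on $W^*$ and the latter is still a \emph{yes}/\emph{no} instance --- and (ii) confirming that permuting the variables of a self-adjoint polynomial and then averaging preserves self-adjointness, so that $\tilde p$ is a legitimate instance of the polynomial guaranteed by \Cref{prop:basic-polynomial}.
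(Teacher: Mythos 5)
Your proposal is correct and ultimately reduces to the same symmetrization as the paper: the paper simply defines $q(U,U^*) = \tfrac{1}{2}\bigl(p(U,U^*) + p(U^*,U)\bigr)$ and uses closure under inversion to see that $q$ retains the completeness/soundness thresholds, which is exactly your polynomial $\tilde p$. Your algorithmic wrapper (the coin-flip between $\mathcal{A}$ and $\mathcal{A}^*$) is a harmless cosmetic addition that realizes $\tilde p$ as the acceptance probability of an actual tester, but the substance of the argument is identical.
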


Hence, while establishing the existence of $p$ is straightforward, proving lower bounds on its degree is another matter. The standard approach in quantum query complexity is to \emph{symmetrize} $p$ to obtain a related polynomial $q$ whose degree is not too much larger than $p$, and acts on a much smaller number of variables (ideally a single variable). %
The choice of symmetrization method depends on the problem being analyzed.
For example, for (classical) properties that only depend on the Hamming weight of the boolean string (these are called \emph{symmetric} properties in the literature), the polynomial $p$ is averaged over all binary strings with Hamming weight $k$ in order to obtain a univariate polynomial $q(k)$ (this is known as \emph{Minsky-Papert symmetrization}~\cite{minsky2017perceptrons}). Lower bounds on the degree of $q$ can be then obtained by using Markov-Bernstein type inequalities from approximation theory.

\subsubsection{Lower Bounds for Unitarily Invariant Properties} 

To make \Cref{prop:basic-polynomial} useful, we develop symmetrization techniques for unitary properties that are invariant under certain symmetries. We first study \emph{unitarily invariant} properties\footnote{We acknowledge that the name ``unitarily invariant unitary property'' may seem redundant! The first ``unitarily'' refers to the symmetry; the second ``unitary'' refers to the type of property we are studying.}: these are properties $\cal{P} = (\cal{P}_{yes},\cal{P}_{no})$ such that conjugating an instance in $\cal{P}_{yes} \cup \cal{P}_{no}$ by any $g$ in the unitary group $\unitary(d)$ does not change whether it is a \emph{yes} or \emph{no} instance (in other words $g \cal{P}_{yes} g^{-1} \subseteq \cal{P}_{yes}$ and $g \cal{P}_{no} g^{-1} \subseteq \cal{P}_{no}$ for all $g \in \unitary(d)$). 
An example of such a property includes deciding whether a unitary $U$ is a reflection about a subspace of $\C^d$ of dimension at most $w$ (the \emph{no} instances) or dimension at least $2w$ (the \emph{yes} instances). %

It is easy to see that whether a unitary $U$ is a \emph{yes} or \emph{no} instance of a unitarily invariant property $\cal{P}$ only depends on the multiset of eigenvalues of $U$. In fact, we can say something stronger; the following establishes a symmetrization method for polynomials that decide unitarily invariant properties.

\begin{restatable}[Symmetrization for unitarily invariant properties]{theorem}{unitarilyinvariant}
Let $\cal{P} = (\cal{P}_{yes},\cal{P}_{no})$ denote a $d$-dimensional unitarily invariant property. Suppose there is a $T$-query quantum algorithm that accepts \emph{yes} instances with probability at least $a$ and \emph{no} instances with probability at most $b$. Then there exists a degree at most $2T$ symmetric\footnote{Here, symmetric means that for all permutations $\pi:[d] \to [d]$, permuting the variables $z_i \to z_{\pi(i)}$ and $z_i^* \to z_{\pi(i)}^*$ leaves the polynomial $q$ unchanged.} self-adjoint polynomial $q(z_1, \ldots, z_d, z_1^*, \ldots, z_d^*)$ satisfying
\begin{itemize}
    \item If $U \in \cal{P}_{yes}$ then $q(z_1, \ldots, z_d,z_1^*,\ldots,z_d^*) \geq a$
    
    \item If $U \in \cal{P}_{no}$ then $q(z_1, \ldots, z_d,z_1^*,\ldots,z_d^*) \leq b$
\end{itemize}
 where $(z_1, \dots, z_d)$ and $(z_1^*, \dots, z_d^*)$ are the eigenvalues of $U$ and their complex conjugates, respectively.
\label{lem:unitary_invariant_main_thm}
\end{restatable}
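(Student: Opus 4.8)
The plan is to start from the polynomial $p(U,U^*)$ guaranteed by \Cref{prop:basic-polynomial}, which computes the acceptance probability and has degree at most $2T$ in the matrix entries of $U$ and $U^*$, and then average it over the unitary symmetry of the problem. Concretely, since $\cal{P}$ is unitarily invariant, for every $g \in \unitary(d)$ the conjugated instance $gUg^{-1}$ is a \emph{yes} (resp.\ \emph{no}) instance whenever $U$ is, so $p(gUg^{-1}, gU^*g^{-1})$ still lies in $[a,1]$ (resp.\ $[0,b]$). Hence the Haar average
\begin{equation*}
\tilde{p}(U,U^*) \;=\; \int_{\unitary(d)} p(gUg^{-1},\, gU^*g^{-1}) \, dg
\end{equation*}
is a polynomial of degree at most $2T$ (integration is a linear operation on the coefficients, so degree cannot increase), it is still self-adjoint, and it still separates \emph{yes} from \emph{no} instances with the same thresholds $a$ and $b$. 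The point of averaging is that $\tilde p$ is now a \emph{conjugation-invariant} polynomial function on matrices: $\tilde p(hUh^{-1}, hU^*h^{-1}) = \tilde p(U,U^*)$ for all $h \in \unitary(d)$.

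The next step is to invoke invariant theory to argue that such a conjugation-invariant polynomial can be rewritten in terms of the eigenvalues only. The classical fact here is that the ring of $\unitary(d)$-conjugation-invariant polynomials in the entries of a matrix $M$ (and, allowing $M^*$ as well, of a matrix together with its adjoint) is generated by traces of words $\Tr(W(M,M^*))$ — this is the first fundamental theorem of invariant theory for $\mathrm{GL}$/$\unitary$ acting by conjugation. Because $U$ is unitary, $U$ and $U^*$ commute and are simultaneously diagonalizable, so any trace of a word in $U, U^*$ collapses to a power sum $\sum_i z_i^{a} (z_i^*)^{b} = \sum_i z_i^{a-b}$ in the eigenvalues (using $z_i^* = z_i^{-1}$ on the unit circle, though one can also keep the $z_i^*$ formally). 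Substituting these expressions into $\tilde p$ yields a polynomial $q$ in $z_1,\dots,z_d,z_1^*,\dots,z_d^*$; it is symmetric under simultaneous permutation of the $z_i$ and $z_i^*$ because relabeling eigenvalues corresponds to conjugating $U$ by a permutation matrix, which leaves $\tilde p$ invariant; it is self-adjoint because $\tilde p$ is; and its degree is still at most $2T$ because each matrix-entry variable is replaced by a degree-one monomial in the eigenvalues (one has to track this degree bookkeeping carefully — traces of length-$k$ words have degree $k$, matching the degree of the monomials they replace). Finally $q$ evaluated at the eigenvalues of $U$ equals $\tilde p(U,U^*) = $ the (symmetrized) acceptance probability, so the two bullet points follow.

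The main obstacle I expect is the degree-tracking through the invariant-theoretic rewriting: a priori the first fundamental theorem only says $\tilde p$ is \emph{some} polynomial in the trace generators, with no control on how high-degree those traces are or how many are multiplied together, so one must argue that a degree-$2T$ conjugation-invariant polynomial can be expressed using trace-words whose \emph{total} degree is at most $2T$. The clean way to do this is to use a more refined, degree-graded version of the invariant theory statement — e.g.\ work directly with the isotypic/Schur--Weyl decomposition of $(\C^d)^{\otimes T} \otimes (\C^{d*})^{\otimes T}$ and observe that the span of conjugation-invariants of degree $\le 2T$ in $M, M^*$ is exactly the span of traces of words of length $\le 2T$, so no degree blow-up occurs. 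A secondary subtlety is ensuring the symmetrization preserves self-adjointness and the permutation symmetry simultaneously; both follow because the Haar measure is bi-invariant and because permutation matrices are themselves unitaries in $\unitary(d)$, but this should be spelled out. Everything else — linearity of the average, the $[a,b]$ bounds surviving, converting traces of words in commuting normal operators to power sums of eigenvalues — is routine.
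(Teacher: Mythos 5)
Your proposal is correct and follows essentially the same route as the paper: Haar-averaging the polynomial from \Cref{prop:basic-polynomial} over conjugation (this is exactly the paper's \Cref{prop:basic-polynomial-symmetrized}), then invoking the degree-graded first fundamental theorem for simultaneous conjugation, and finally using $UU^* = I$ to collapse trace words into power sums of the eigenvalues. The degree-bookkeeping concern you flag is handled in the paper exactly as you suggest, by citing Procesi's theorem (\Cref{thm:unitarily_invariants}), whose statement is homogeneous-degree-graded: degree-$r$ invariants are linear combinations of $\Tr_\sigma(A_1,\dots,A_r)$ with word length exactly $r$, so no blow-up occurs.
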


The symmetrization of \Cref{lem:unitary_invariant_main_thm} may at first appear quite modest. The symmetrized polynomial $q$ still acts on $2d$ variables, which is fewer than the $2d^2$ variables acted on by the original polynomial $p$ from \Cref{prop:basic-polynomial}, but is a far cry from a univariate polynomial which approximation theory is best suited to handle. We have made some progress, however: as mentioned, the property $\cal{P}$ only depends on the eigenvalue multiset of $U$, and the polynomial $q$ is directly a function of the eigenvalues (whereas the original polynomial $p$ is a function of the matrix entries of $U$, which are not obviously related to the property in a low-degree fashion). Furthermore, the polynomial $q$ can be symmetrized further to obtain a univariate polynomial $r$, which we analyze using approximation theory. We illustrate this with two applications of \Cref{lem:unitary_invariant_main_thm}. 

\paragraph{Unitarily Invariant Subspace Properties.} As a warmup, consider \emph{subspace properties}, which consist of reflections about a subspace, i.e., $U = I - 2\Pi$ where $\Pi$ is the projector onto some subspace $S \subseteq \C^d$. We say that $U$ \emph{encodes} the subspace $S$. An example of a unitarily invariant subspace property is the \emph{Approximate Dimension} problem, which we parametrize by an integer $w \in \{1,2,\ldots,d\}$. The \emph{yes} instances consist of (unitaries encoding) subspaces of dimension at least $2w$, and the \emph{no} instances consist of subspaces of dimension at most $w$. This is a quantum generalization of the \emph{Approximate Counting} problem, which is to determine whether the Hamming weight of an input string is at least $2w$ or at most $w$.

Since the eigenvalues of subspace unitaries are either $1$ or $-1$, by \Cref{lem:unitary_invariant_main_thm} unitarily invariant subspace properties yield polynomials that compute acceptance probabilities on (a subset of) $\{1,-1\}^d$. Note that, after mapping $\{1,-1\}$ to $\{0,1\}$, these are the same kind of polynomials that arise when analyzing classical properties! In fact, there is a one-to-one correspondence between symmetric classical properties $\cal{S}$ (properties that only depend on the Hamming weight of the input) and unitarily invariant subspace properties $\cal{P}$.
This implies that the polynomial $q$ given by \Cref{lem:unitary_invariant_main_thm} (associated to a unitarily invariant subspace property $\cal{P}$), also distinguishes between the \emph{yes} and \emph{no} instances of the associated classical symmetric property $\cal{S}$. This yields a lower bound method for the query complexity of $\cal{P}$:

\begin{restatable}{proposition}{symmetric}
\label{prop:symmetric}
Let $\cal{P}$ be a unitarily invariant subspace property and let $\cal{S}$ be the associated symmetric classical property. The query complexity of distinguishing between \emph{yes} and \emph{no} instances of $\cal{P}$ is at least the minimum degree of any polynomial that distinguishes between the \emph{yes} and \emph{no} instances of $\cal{S}$.  
\end{restatable}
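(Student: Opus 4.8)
The plan is to show that any $T$-query tester for the unitarily invariant subspace property $\cal{P}$ gives rise, via \Cref{lem:unitary_invariant_main_thm}, to a polynomial of degree at most $2T$ that distinguishes the \emph{yes} and \emph{no} instances of the associated symmetric classical property $\cal{S}$; the contrapositive is exactly the claimed bound. First I would recall the one-to-one correspondence between symmetric classical properties and unitarily invariant subspace properties: a symmetric classical property $\cal{S}$ is determined by a partition of the possible Hamming weights $\{0,1,\ldots,d\}$ into \emph{yes}-weights and \emph{no}-weights, and the corresponding unitary property $\cal{P}$ declares a subspace unitary $U = I - 2\Pi$ to be \emph{yes} (resp.\ \emph{no}) exactly when $\dim S = \operatorname{rank}\Pi$ lies in the \emph{yes}-weights (resp.\ \emph{no}-weights). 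This $\cal{P}$ is unitarily invariant since conjugation by $g \in \unitary(d)$ sends $I - 2\Pi$ to $I - 2(g\Pi g^{-1})$, which encodes a subspace of the same dimension.

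Next I would invoke \Cref{lem:unitary_invariant_main_thm}: given a $T$-query tester accepting \emph{yes} instances of $\cal{P}$ with probability at least $a$ and \emph{no} instances with probability at most $b$, we obtain a degree at most $2T$ symmetric self-adjoint polynomial $q(z_1,\ldots,z_d,z_1^*,\ldots,z_d^*)$ that, evaluated at the eigenvalues of the instance, is at least $a$ on \emph{yes} instances and at most $b$ on \emph{no} instances. The key observation is that for subspace unitaries every eigenvalue $z_i$ is either $+1$ or $-1$, so $z_i^* = z_i$ and we may substitute $z_i^* \mapsto z_i$ to get a polynomial $\tilde q(z_1,\ldots,z_d)$ of degree at most $2T$ in $d$ real variables that takes values in $\{+1,-1\}^d$. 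Re-parametrizing each coordinate by $x_i = (1 - z_i)/2 \in \{0,1\}$ (an affine substitution that does not increase the degree), we obtain a polynomial $p(x_1,\ldots,x_n)$ on the boolean cube whose value is the acceptance probability, and whose value depends only on the Hamming weight $|x|$ because a subspace unitary with a given eigenvalue multiset corresponds to a boolean string with $\operatorname{rank}\Pi$ ones, and $\cal{P}$'s membership depends only on this rank. Thus $p \geq a$ whenever $|x|$ is a \emph{yes}-weight of $\cal{S}$ and $p \leq b$ whenever $|x|$ is a \emph{no}-weight, i.e.\ $p$ distinguishes the \emph{yes} and \emph{no} instances of $\cal{S}$.

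It follows that $\deg p \leq 2T$, so $T \geq \tfrac12 \deg p \geq \tfrac12 \cdot (\text{min degree of a distinguishing polynomial for } \cal{S})$; but in fact a cleaner route, which I would take to get the stated bound without the factor of $\tfrac12$, is to observe that the quantity being lower-bounded is simply ``$T$'' in the standard convention where a $T$-query algorithm produces a degree-$2T$ polynomial and one speaks of the polynomial degree directly — reading \Cref{prop:symmetric} as comparing $2T$ (or the query count, depending on normalization) against the minimum distinguishing degree, the argument above is exactly the needed reduction. I do not anticipate a serious obstacle here: the only point requiring a little care is verifying that the correspondence between $\cal{P}$ and $\cal{S}$ really is degree-preserving under the affine substitution $x_i = (1-z_i)/2$ and that restricting $q$ from $\C^{2d}$ to the diagonal $z_i^* = z_i$ on $\{\pm1\}^d$ is legitimate — both are immediate once one notes the substitution is affine and the eigenvalues of subspace unitaries are real. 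Everything else is bookkeeping about which multisets of eigenvalues correspond to which Hamming weights.
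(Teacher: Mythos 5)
Your proposal is correct and takes essentially the same route as the paper: the paper simply invokes \Cref{lem:unitarily_invariant_2} (itself a corollary of \Cref{lem:unitary_invariant_main_thm}) to get a degree-$2T$ univariate polynomial in $\dim\Pi$ and reads it as a polynomial in the Hamming weight, while you apply \Cref{lem:unitary_invariant_main_thm} directly and restrict to $\pm 1$ eigenvalues with the affine change of variables — the same reduction. The factor-of-two bookkeeping you worry about is equally implicit in the paper's own statement and proof, so it is not a gap.
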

Therefore, degree lower bounds on polynomials that decide a classical symmetric property $\cal{S}$, automatically yield query complexity lower bounds for the quantum property $\cal{P}$. Approximate degree lower bounds on symmetric boolean functions are well-studied in complexity theory~\cite{paturi92degree,de2008note}; these can be automatically ``lifted'' to the unitary property testing setting. 

We note that there is another way of seeing this reduction: any $T$-query tester for $\cal{P}$ is automatically a $T$-query tester for $\cal{S}$; thus lower bounds on $\cal{S}$ imply lower bounds on $\cal{P}$. 
Here our motivation is to present \Cref{prop:symmetric} as a simple application of \Cref{lem:unitary_invariant_main_thm}. 

For example, the Approximate Dimension problem contains the Approximate Counting problem as a special case. Any polynomial that decides with bounded error the Approximate Counting problem must have degree at least $\Omega(\sqrt{d/w})$, which implies the same lower bound for the query complexity of the Approximate Dimension problem.

\paragraph{Recurrence Time of Unitaries.}

Not all unitarily invariant properties reduce to classical lower bounds. For instance, we analyze a problem related to the recurrence times of unitaries. 

In general, the recurrence time of a dynamical system is the time that the system takes to return to a state that is close to its initial state (if it exists). The recurrence statistics of dynamical systems have been extensively studied in the physics literature, where they have been used as indicators of chaotic behaviour within a dynamical system \cite{saussol2009introduction}. As the time evolution of a quantum system is governed by a unitary operator, the recurrence times of unitary matrices is of particular interest. The Poincar\'e recurrence theorem guarantees that the recurrence time exists for certain quantum mechanical systems \cite{bocchieri1957quantum}. For example, the expected recurrence times of a Haar-random unitary were studied in \cite{https://doi.org/10.48550/arxiv.1412.3085}. We now define these concepts more formally.

\begin{restatable}[Recurrence Time Problem]{definition}{recurrencedef}
The $(t,\epsilon)$-Recurrence Time problem is to decide, given oracle access to a unitary $U$, whether $U^t = I$ (\emph{yes} case) or $\| U^t - I \| \geq \epsilon$ in the spectral norm (\emph{no} case), promised that one is the case.
\end{restatable}

Note that the instances of this problem are generally not self-adjoint; their eigenvalues can be any complex number on the unit circle. There is no obvious classical analogue of the unitary Recurrence Time problem, and thus it does not seem to naturally reduce to a classical lower bound. We instead employ \Cref{lem:unitary_invariant_main_thm} to prove the following lower bound on the Recurrence Time problem:

\begin{restatable}{theorem}{recurrence}
Let $\epsilon \leq \frac{1}{2 \pi}.$ Any quantum query algorithm solving the $(t,\epsilon)$-Recurrence Time problem for $d$-dimensional unitaries with error $\epsilon$ must use $\Omega(\max(\frac{t}{\epsilon}, \sqrt{d}))$ queries. 
\label{thm:recurrence}
\end{restatable}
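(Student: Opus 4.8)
The plan is to prove the two lower bounds $\Omega(t/\epsilon)$ and $\Omega(\sqrt d)$ separately and combine them by a max. The $\Omega(\sqrt d)$ bound should follow by a reduction: if $U = I - 2\Pi$ is a reflection about a subspace, then $U^2 = I$, so on reflections the $(2,\epsilon)$-Recurrence Time problem trivializes — which is the wrong direction. Instead I would reduce from Approximate Dimension (or directly from Grover/unstructured search): embed a ``marked'' direction by considering unitaries that act as a nontrivial rotation $e^{i\theta}$ on a $1$-dimensional marked subspace and as the identity elsewhere, with $\theta$ chosen so that $U^t$ is either $I$ (when $\theta = 2\pi k/t$) or far from $I$. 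Distinguishing these requires finding/detecting the marked direction, which costs $\Omega(\sqrt d)$ queries; this last fact can itself be obtained from \Cref{lem:unitary_invariant_main_thm} together with the approximate-degree lower bound for $\mathrm{OR}$, exactly as in \Cref{prop:symmetric}.

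For the main $\Omega(t/\epsilon)$ bound I would use \Cref{lem:unitary_invariant_main_thm} directly. The Recurrence Time problem is unitarily invariant (conjugation doesn't change the eigenvalue multiset, hence doesn't change whether $U^t = I$ or $\|U^t - I\| \ge \epsilon$), so a $T$-query tester yields a degree-$2T$ symmetric self-adjoint polynomial $q(z_1,\dots,z_d,z_1^*,\dots,z_d^*)$ in the eigenvalues and their conjugates, with $q \ge a$ on \emph{yes} instances and $q \le b$ on \emph{no} instances. Now I would further symmetrize to a univariate polynomial: restrict attention to instances all of whose eigenvalues are equal, $z_1 = \dots = z_d = e^{i\phi}$ on the unit circle, and set $r(\phi) = q(e^{i\phi},\dots,e^{i\phi},e^{-i\phi},\dots,e^{-i\phi})$. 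Since $q$ has degree $2T$ and each $z_j$ or $z_j^*$ contributes $e^{\pm i\phi}$, $r$ is a trigonometric polynomial of degree at most $2T$ in $\phi$; equivalently, writing $w = e^{i\phi}$, $r$ is (up to multiplying by $w^{2T}$) an ordinary polynomial of degree at most $4T$. On the \emph{yes} side, $U^t = I$ forces $e^{it\phi} = 1$, i.e.\ $\phi \in \{2\pi k / t : k \in \mathbb Z\}$, and there $r(\phi) \ge a$. On the \emph{no} side, whenever $e^{i\phi}$ is a scalar with $|e^{it\phi} - 1| \ge \epsilon$ (which happens on an interval of $\phi$'s of length $\Omega(\epsilon/t)$ around each non-recurrent point, in fact on most of each period), $r(\phi) \le b$.

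So $r$ is a bounded trigonometric polynomial of degree $O(T)$ that jumps from $\ge a$ to $\le b$ as $\phi$ moves off the lattice $2\pi \mathbb Z/t$ — and it must do this near \emph{each} of the $t$ lattice points in $[0,2\pi)$, with the transition happening over a $\phi$-window of width $O(\epsilon/t)$. The conclusion then follows from a Markov–Bernstein type inequality: a degree-$D$ trigonometric polynomial bounded in $[0,1]$ on a union of arcs has derivative $O(D)$ at comparable scales, so to change by a constant over a window of width $O(\epsilon/t)$ while oscillating at $t$ separate places requires $D = \Omega(t/\epsilon)$. Concretely I would either invoke the Bernstein inequality for trigonometric polynomials on each of the $t$ transition windows (each forcing $D \gtrsim t/\epsilon$), or rescale a single period to reduce to the classical statement that a polynomial $[-1,1]\to[0,1]$ that is $\ge a$ at one endpoint region and $\le b$ a distance $O(\epsilon/t)$ away needs degree $\Omega(\sqrt{t/\epsilon})$ per period and then use the $t$-fold periodic structure to upgrade $\sqrt{\cdot}$ to the linear bound. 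The main obstacle I anticipate is this last step: carefully setting up the univariate restriction so that the \emph{no}-case hypothesis genuinely applies at the right $\phi$'s (one must ensure the all-equal-eigenvalue instances used are legitimate \emph{no} instances, which requires $|e^{it\phi}-1| \ge \epsilon$, fine) and then extracting the sharp linear-in-$t/\epsilon$ bound rather than a lossy $\sqrt{\cdot}$ bound — i.e.\ correctly exploiting that the polynomial must perform the same constant-size jump independently near all $t$ recurrence points, which is what defeats the usual Chebyshev-type construction and pushes the degree up to $\Omega(t/\epsilon)$.
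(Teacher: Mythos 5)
Your plan is sound and would prove the theorem, but it takes a genuinely different route from the paper. The paper runs everything through a single symmetrization (\Cref{lem:recurrence_time_lemma}): it averages the polynomial of \Cref{lem:unitary_invariant_main_thm} over a product distribution $D(p,z)$ on eigenvalues (each eigenvalue equals $z$ with probability $p$, else $1$), obtaining a bivariate polynomial $r(p,z)$; the $\Omega(\sqrt d)$ part comes from Markov's inequality (\Cref{lem:markov_inequality}) applied to $p\mapsto r(p,z')$ between $p=0$ and $p=2/d$ with $z'=e^{4\pi i\epsilon/t}$, and the $\Omega(t/\epsilon)$ part comes from the substitution $x=z+z^*$ (\Cref{lem:symmetric_hermitian}) and Markov on $[-2,2]$, exploiting that the window near $x=2$ has width $O(\epsilon^2/t^2)$. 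You instead (a) get $\Omega(\sqrt d)$ by a marked-direction (Grover-like) family plus symmetrization, and (b) get $\Omega(t/\epsilon)$ by restricting to scalar unitaries $e^{i\phi}I$ and applying Bernstein's inequality for trigonometric polynomials to one transition of width $O(\epsilon/t)$. Route (b) is, if anything, cleaner than the paper's: Bernstein's linear-in-degree derivative bound gives $\Omega(t/\epsilon)$ from a \emph{single} transition, so your worry about needing all $t$ recurrence points, and the fallback "upgrade $\sqrt{\cdot}$ via periodicity" plan, are unnecessary (and the latter, as sketched, is the only genuinely shaky part of the writeup). The paper's mixture-based symmetrization buys uniformity: the same lemma is reused verbatim for the $\QMA$ lower bound (\Cref{thm:recurrence_qma}) with Paturi's bound replacing Markov.

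Two points need tightening. First, Bernstein requires the trigonometric polynomial to be bounded on the \emph{whole} circle, not merely "on a union of arcs" as you wrote; this is true here, but the justification is that the symmetrized polynomial of \Cref{lem:unitary_invariant_main_thm} is (an average of) the algorithm's acceptance probability at \emph{every} unitary, and every scalar $e^{i\phi}I$ is a legitimate input even when it violates the promise — the paper relies on exactly this fact for its own bounds, so you should state it. Second, your $\Omega(\sqrt d)$ step does not follow "exactly as in \Cref{prop:symmetric}", since that proposition concerns subspace properties with eigenvalues $\pm1$, while your marked eigenvalue is $e^{i\theta}$ with $|e^{it\theta}-1|\ge\epsilon$; the fix is a one-line observation that on instances with $k$ eigenvalues equal to $e^{i\theta}$ and $d-k$ equal to $1$, every power sum $\sum_j z_j^m$ is affine in $k$, so the acceptance probability is a univariate degree-$2T$ polynomial in $k$, bounded on $\{0,\dots,d\}$, with a constant gap between $k=0$ and all $k\ge1$, whence the standard OR approximate-degree bound gives $2T=\Omega(\sqrt d)$.
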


We prove the lower bound by observing that the Recurrence Time problem is testing a unitarily invariant property, and hence \Cref{lem:unitary_invariant_main_thm} applies to give a polynomial $q$ representing the acceptance probability of any algorithm solving the problem in terms of the eigenvalues of the input unitary $U$. We then symmeterize the polynomial $q$ by constructing a distribution $D(p, z)$ on unitaries with exactly two eigenvalues $\{1, z = e^{i \theta}\}$ such that the expected acceptance  probability $r(p, z)$ of the algorithm over $D(p, z)$ remains a polynomial in $p$ and $z$. Afterwards, Markov-Bernstein type inequalities are used to prove a lower bound on the degree of $r$. 

We also establish the following upper bound:

\begin{restatable}{theorem}{recurrenceub}
The $(t,\epsilon)$-Recurrence Time problem can be solved using $O(t \sqrt{d}/\epsilon)$ queries.
\label{thm:recurrence-ub}
\end{restatable}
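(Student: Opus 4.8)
The plan is to solve the $(t,\epsilon)$-Recurrence Time problem by estimating, to sufficient precision, the quantity $\|U^t - I\|$, or equivalently a surrogate for it that distinguishes $U^t = I$ from $\|U^t - I\| \geq \epsilon$. The natural surrogate is the normalized trace $\frac{1}{d}\Tr(U^t)$: when $U^t = I$ this equals $1$, while when $\|U^t - I\| \geq \epsilon$ some eigenvalue $e^{i\phi}$ of $U^t$ satisfies $|e^{i\phi} - 1| \geq \epsilon$, forcing $\mathrm{Re}\frac{1}{d}\Tr(U^t) \leq 1 - \Omega(\epsilon^2/d)$. So it suffices to estimate $\frac{1}{d}\Tr(U^t)$ to additive error $\Theta(\epsilon^2/d)$.

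The key algorithmic ingredient is a Hadamard-test-style circuit for estimating normalized traces of a unitary to which we have query access. First I would build a circuit implementing $U^t$ using $t$ queries to $U$ (controlled on an ancilla qubit). Then I would apply the standard trick for estimating $\frac{1}{d}\Tr(V)$ for a $d$-dimensional $V$: prepare the control qubit in $\ket{+}$, prepare the target register in the maximally mixed state $I/d$ (e.g.\ by taking half of a maximally entangled state on $\C^d \otimes \C^d$ and only acting on one half with controlled-$V$), apply controlled-$V$, and measure the control qubit in the $X$ (and $Y$) basis; the measurement statistics yield the real (and imaginary) parts of $\frac{1}{d}\Tr(V)$ as expectation values. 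Each run uses $t$ queries to $U$. To boost this to additive precision $\delta = \Theta(\epsilon^2/d)$ with constant success probability, I would either repeat $O(1/\delta^2) = O(d^2/\epsilon^4)$ times and average, or — to get the claimed $O(t\sqrt{d}/\epsilon)$ bound — use amplitude estimation, which attains additive error $\delta$ with $O(1/\delta)$ repetitions of the base circuit, giving $O(d/\epsilon^2)$ runs and hence $O(td/\epsilon^2)$ queries. Since this is worse than the claimed bound, the precision target must actually be the weaker $\Theta(\epsilon/\sqrt{d})$; indeed, a cleaner surrogate is $\sqrt{1 - |\frac{1}{d}\Tr(U^t)|^2}$ or the Frobenius-norm quantity $\frac{1}{d}\|U^t - I\|_F^2 = 2 - \frac{2}{d}\mathrm{Re}\Tr(U^t)$, for which the gap between the yes and no cases is $\Omega(\epsilon^2/d)$ in the trace but $\Omega(\epsilon/\sqrt{d})$ after taking a square root, and estimating $\frac{1}{d}\|U^t-I\|_F^2$ to error $\Theta(\epsilon^2/d)$ via amplitude estimation on the swap-test/Hadamard-test circuit costs $O(\sqrt{d}/\epsilon^2) \cdot$... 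Let me instead commit to: estimate $\frac1d\Tr(U^t)$ via amplitude estimation to additive error $\epsilon/(c\sqrt d)$ using $O(\sqrt d/\epsilon)$ base-circuit runs, each costing $t$ queries, for a total of $O(t\sqrt d/\epsilon)$ queries, and argue this precision suffices to detect a single eigenphase of magnitude $\geq \epsilon$ because such an eigenvalue shifts $\frac1d\Tr(U^t)$ by $\Theta(\epsilon/d)$ in a known direction — wait, that is only $\Theta(\epsilon/d)$, not $\Theta(\epsilon/\sqrt d)$.

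Given the tension above, the honest plan is: the correct figure of merit is that $\|U^t - I\| \geq \epsilon$ implies $\frac1d\|U^t - I\|_F^2 \geq \epsilon^2/d$ (from one bad eigenvalue) but we do better by noting $\frac1d\|U^t-I\|_F^2 = \frac2d\sum_j(1-\cos\phi_j)$ and, crucially, when $U^t=I$ it is exactly $0$; so we need additive error $o(\epsilon^2/d)$ on this quantity. Amplitude estimation estimating the acceptance probability of a Hadamard test (whose bias is a linear function of $\frac1d\mathrm{Re}\Tr(U^t)$, with scale $1/d$) to additive error $\eta$ costs $O(1/\eta)$ base runs; taking $\eta = \Theta(\epsilon^2/d)$ gives $O(d/\epsilon^2)$ runs. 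To recover $O(t\sqrt d/\epsilon)$ I would instead have each base circuit itself use a $d$-fold-amplified state (so the relevant bias is $\Theta(1/\sqrt d)$ rather than $1/d$), or equivalently estimate $|\frac1d\Tr(U^t)|$ which for the no-case with $k$ bad eigenvalues each contributing phase $\geq\epsilon$ gives deviation $\Theta(k\epsilon^2/d)$ from $1$, worst case $k=1$. The main obstacle, and the step I expect to require the most care, is exactly this bookkeeping: pinning down the largest detectable trace-gap as a function of $\epsilon$ and $d$ (a single eigenvalue at angle $\epsilon$ versus many), choosing the matching precision, and invoking amplitude estimation with the right parameters so that the query count is $O(t\sqrt d/\epsilon)$ rather than $O(td/\epsilon^2)$ — this likely uses that it suffices to distinguish $\frac1d\Tr(U^t)=1$ from $\mathrm{Re}\frac1d\Tr(U^t)\le 1-\Theta(\epsilon^2/d)$ and that amplitude estimation of a probability near $1$ (or near $0$) has cost scaling with the square root of the gap, i.e.\ $O(\sqrt{d}/\epsilon)$ base runs, each of $t$ queries. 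I would then assemble these pieces, verify the constants (using $\epsilon \le 1/(2\pi)$ where convenient so that phases and chord lengths are comparable), and conclude the $O(t\sqrt d/\epsilon)$ upper bound.
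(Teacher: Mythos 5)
Your final plan is correct, but it reaches the bound by a genuinely different route than the paper. The paper runs phase estimation on half of the maximally entangled state to precision $\epsilon/8t$ (costing $O(t/\epsilon)$ queries per run) and then amplitude-amplifies over the $d$ eigenvectors to find one whose estimated phase is far from $1$, for $O(\sqrt{d})$ amplification rounds and $O(t\sqrt{d}/\epsilon)$ total; a by-product is an explicit eigenvector/eigenphase witness, which the paper later exploits in its $\mathsf{coQMA}$ discussion. You instead run a Hadamard test for $U^t$ on the maximally mixed input (each base circuit costing only $t$ queries) and amplify the ``reject'' outcome, whose probability is $\frac{1}{2d}\sum_j\bigl(1-\cos(t\theta_j)\bigr)$: exactly $0$ in the \emph{yes} case and at least $\Omega(\epsilon^2/d)$ in the \emph{no} case, so $O(\sqrt{d}/\epsilon)$ rounds of amplitude amplification/estimation suffice, again giving $O(t\sqrt{d}/\epsilon)$. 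The one point your writeup should state cleanly rather than arrive at after several false starts is precisely why the square-root scaling in the gap is legitimate: generic amplitude estimation of a probability to additive error $\Theta(\epsilon^2/d)$ would indeed cost $O(d/\epsilon^2)$ runs (your worry about $O(td/\epsilon^2)$), but here the \emph{yes} case has rejection probability exactly zero, since $(I-U^t)\ket{\phi}=0$ for every $\ket{\phi}$ when $U^t=I$; distinguishing $p=0$ from $p\geq\delta$ takes only $O(1/\sqrt{\delta})$ invocations of the base circuit (and its inverse, available since queries to $U^*$ are allowed). With that observation made explicit, and the elementary estimate $1-\cos\phi\geq\Omega(\epsilon^2)$ for $|e^{i\phi}-1|\geq\epsilon$, your argument closes; the exploratory detours about $\Theta(\epsilon/\sqrt{d})$ precision targets and Frobenius-norm surrogates can simply be deleted.
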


It is an interesting question to determine whether the upper bound or lower bound (or neither) is tight. 

\subsubsection{Beyond Unitarily Invariant Properties}
\label{sec:intro_lu_properties}

For unitarily invariant properties, there was a natural candidate for how to symmetrize the polynomials $p$ we get from  \Cref{prop:basic-polynomial} by using viewing $p$ as a polynomial in the eigenvalues of the matrix $U$ rather than the matrix entries of $U$. However, a symmetrization technique for other properties is less unclear. In this direction, we develop symmetrization techniques based on invariant theory.

Invariant theory studies the action of a group $G$ on a polynomial ring $\mathbb{C}[x_1, \dots, x_n]$. We denote the action of $g \in G$ on $f \in \mathbb{C}[x_1, \dots, x_n]$ by $g \cdot f.$ The ring of invariant polynomials $\mathbb{C}[x_1, \dots, x_n]^{G}$ is then the subring of $\mathbb{C}[x_1, \dots, x_n]$ consisting of polynomials satisfying $g \cdot f = f$ for all $g \in G$, that is $f \in \mathbb{C}[x_1, \dots, x_n]^G$ is left unchanged by the action of $g$ for all group elements.

There are many natural questions about the invariant ring $\mathbb{C}[x_1, \dots, x_n]^G$ one can ask, such as construction of a generating set for the invariant ring. For example, one classical example is the action of a permutation $\sigma \in S_n$ acting on a polynomial $p(x_1, \dots, x_n)$ by permuting the variables by $\sigma \cdot p = p(x_{\sigma(1)}, \dots, x_{\sigma(n)}).$ The invariant ring is known as the ring of symmetric polynomials, for which there are many well-known generating sets. One example of a generating set is the power sum symmetric polynomials given by $p_i = \sum_{j=1}^n x_j^i$, which generate the symmetric polynomial ring as an algebra. In other words, for any symmetric polynomial $f$, there exists a polynomial $g$ for which $f = g(p_1, \dots, p_n).$ There are similar characterizations of the invariant ring for numerous other group actions. 

To connect invariant theory with our \Cref{prop:basic-polynomial}, we prove the following result for testing $G$-invariant unitary properties. Since we are studying properties of general unitaries, not just boolean strings, we consider symmetries coming from subgroups of the unitary group $\unitary(d)$.  Let $G \subseteq U(d)$ be a compact subgroup equipped with a Haar measure $\mu$ (i.e., a measure over $G$ that is invariant under left-multiplication by elements of $G$). %

\begin{definition}[$G$-invariant property]
Let $G \subseteq U(d)$ be a compact group. A $d$-dimensional unitary property $\mathcal{P} = (\cal{P}_{yes},\cal{P}_{no})$ is $G$-invariant if for every $g \in G$ we have $g \cal{P}_{yes} g^{-1} \subseteq \cal{P}_{yes}$ and $g \cal{P}_{no} g^{-1} \subseteq \cal{P}_{no}$.
\label{defn:g-inv-property}
\end{definition}

\begin{definition}[Invariant rings]
Let $\C[X, Y]_d$ be the ring of complex polynomials in matrix variables $X = (x_{i,j})_{1 \leq i, j \leq d}$ and $Y = (y_{i,j})_{1 \leq i, j \leq d}.$ Observe that there is an action of $G$ on any $f(X, Y) \in \C[X, Y]$ by simultaneous conjugation $g \cdot f(X, Y) = f(g X g^{-1}, g Y g^{-1} ).$

The \emph{invariant ring} $\C[X, Y]_d^G$ is the subring of polynomials in $\C[X, Y]_d$ satisfying $g \cdot f = f$ for all $g \in G.$
\label{defn:invariants}
\end{definition}

The general theory of invariant theory guarantees the existence and finiteness of a generating set for the invariant ring $\mathbb{C}[X, Y]^G$ for all compact groups, which includes all finite groups, the unitary group, and products of unitary groups as special cases. Furthermore, the following proposition connects the invariant ring to property testers for $G$-invariant properties.

\begin{restatable}[Symmeterization for $G$-invariant properties]{proposition}{symmetrizedpolynomial}
Suppose $\mathcal{P} = (\cal{P}_{yes},\cal{P}_{no})$ is a $G$-invariant $d$-dimensional unitary property. If there is a $T$-query tester for $\cal{P}$ that accepts \emph{yes} instances with probability at least $a$ and \emph{no} instances with probability at most $b$, then there exists a self-adjoint degree-$2T$ polynomial $q$ in the invariant ring $\C[X, X^*]_d^G$ satisfying
\begin{itemize}
    \item If $U \in \mathcal{P}_{yes}$, then $q(U,U^*) \geq a$.
    \item If $U \in \mathcal{P}_{no}$, then $q(U,U^*) \leq b$.
\end{itemize}
\label{prop:basic-polynomial-symmetrized}
\end{restatable}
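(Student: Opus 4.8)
The plan is to combine the existence of the distinguishing polynomial $p$ from \Cref{prop:basic-polynomial} with a standard averaging (Reynolds operator) argument, using the fact that the property $\mathcal{P}$ is $G$-invariant. First I would invoke \Cref{prop:basic-polynomial} to obtain a self-adjoint polynomial $p \in \C[X, X^*]_d$ of degree at most $2T$ such that $p(U, U^*)$ equals the acceptance probability of the tester on input $U$; in particular $p(U,U^*) \ge a$ for $U \in \mathcal{P}_{yes}$ and $p(U,U^*) \le b$ for $U \in \mathcal{P}_{no}$, and $0 \le p(V, V^*) \le 1$ for every unitary $V$ (since these values are genuine probabilities).

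Next I would define the symmetrized polynomial by averaging $p$ over the group action:
\begin{equation}
q(X, Y) \;=\; \int_G p(g X g^{-1}, g Y g^{-1}) \, d\mu(g),
\end{equation}
where $\mu$ is the Haar probability measure on the compact group $G$. Since $G$ acts linearly on the entries of $X$ and $Y$, the integrand is, for each fixed $g$, a polynomial in the entries of $X, Y$ of degree at most $2T$ whose coefficients are continuous (indeed polynomial) functions of the entries of $g$; integrating these coefficients over the compact set $G$ is well-defined and yields a polynomial $q$ of degree at most $2T$. By the left-invariance of Haar measure, $q$ is $G$-invariant: for any $h \in G$, substituting $X \mapsto hXh^{-1}$, $Y \mapsto hYh^{-1}$ and changing variables $g \mapsto gh^{-1}$ shows $h \cdot q = q$, so $q \in \C[X, Y]_d^G$. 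Restricting to $Y = X^*$ and using that $(gXg^{-1})^* = g X^* g^{-1}$ for $g$ unitary, together with the self-adjointness of $p$, gives that $q(X, X^*)$ is self-adjoint as well.

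Then I would verify the two bullet conditions. For $U \in \mathcal{P}_{yes}$ and any $g \in G$, $G$-invariance of the property gives $gUg^{-1} \in \mathcal{P}_{yes}$, hence $p(gUg^{-1}, g U^* g^{-1}) \ge a$; integrating over $g$ with respect to the probability measure $\mu$ preserves this inequality, so $q(U, U^*) \ge a$. The \emph{no} case is identical with the inequality reversed and $b$ in place of $a$. This completes the argument. The only genuinely delicate point — and the step I would be most careful about — is justifying that the Haar average of a polynomial is again a polynomial of the same degree; this rests on the compactness of $G$ (guaranteeing the integral converges) and the linearity of the conjugation action on matrix entries (guaranteeing no degree blow-up), both of which are hypotheses we have available. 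Everything else is routine bookkeeping with the Reynolds operator and the definitions of $G$-invariance.
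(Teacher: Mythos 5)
Your proposal is correct and matches the paper's own proof: both invoke \Cref{prop:basic-polynomial} and then symmetrize by defining $q(U,U^*) = \E_{g \sim \mu}\, p(gUg^{-1}, gU^*g^{-1})$, using left-invariance of the Haar measure to show $q$ lies in the invariant ring and $G$-invariance of $\mathcal{P}$ to preserve the acceptance bounds. Your extra remark justifying that the Haar average remains a polynomial of the same degree is a detail the paper leaves implicit, but it is the same argument.
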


While \Cref{prop:basic-polynomial-symmetrized} at first may seem difficult to apply, the invariant ring has been characterized in numerous cases. Depending on the group, the associated invariant ring may have a much simpler description than the full polynomial ring, making it easier to prove degree lower bounds. For instance, in the case where $G$ is the full unitary group, the invariant polynomials are exactly symmetric polynomials in the eigenvalues of $U$ and the adjoint $U^*.$ 

We illustrate this connection to invariant theory by considering property testing questions related to \textit{entanglement} of quantum states, which is a central concept in quantum information theory. Recall that a state $\ket{\psi} \in \mathbb{C}^d \otimes \mathbb{C}^d$ is \textit{entangled} if it cannot be written as a tensor product of two states $\ket{\psi_1} \otimes \ket{\psi_2}$ where $\ket{\psi_1}, \ket{\psi_2} \in \mathbb{C}^d.$ 
The property of being entangled is invariant under the \textit{local unitary} group instead of the full unitary group.

\begin{restatable}[Local Unitary Group]{definition}{localunitary}
Let $d_1, d_2 \geq 2.$ The local unitary group $\LU(d_1, d_2)$ is the subgroup $\unitary(d_1) \times \unitary(d_2)$ of $\unitary(d_1 d_2)$ consisting of all unitaries of the form $g \otimes h$ where $g \in \unitary(d_1), h \in \unitary(d_2).$
\end{restatable}

Furthermore, the \textit{entanglement entropy} of a state $\ket{\psi} \in \mathbb{C}^d \otimes \mathbb{C}^d$ can be used as a measurement of how entangled the state is. Numerous definitions of entanglement entropy have been proposed in the physics and quantum information literature; we use the following definition of entanglement entropy in this work. 
\begin{restatable}[R\'{e}nyi $2$-entropy]{definition}{renyientropy}
Given a state $\ket{\psi} \in \mathbb{C}^d \otimes \mathbb{C}^d$ with reduced density matrix on the first register $\rho$, the R\'{e}nyi 2-entropy of $\ket{\psi}$ is defined as $H_2(\ket{\psi}) = - \log \Tr(\rho^2).$ 
\end{restatable}

We note that since $\ket{\psi} \in \C^d \otimes \C^d$ is a pure state, it does not matter whether or not the reduced density matrix is taken with respect to the first or second register, since both matrices will have the same set of eigenvalues. 

We now define the Entanglement Entropy problem as the task of distinguishing between high and low entropy states.

\begin{restatable}[Entanglement Entropy Problem]{definition}{entanglemententropyproblem}
Let $0 < a < b \leq \log d$. Given oracle access to a reflection oracle $U = I - 2 \ketbra{\psi}{\psi}$ where $\ket{\psi} \in \mathbb{C}^d \otimes \mathbb{C}^d,$ decide whether or not the state $\ket{\psi}$ satisfies one of the following two conditions, promised one of the following is the case:

\begin{itemize}
    \item Low entropy case: $H_2(\ket{\psi}) \leq a$
    \item High entropy case: $H_2(\ket{\psi}) \geq b$
\end{itemize}
\end{restatable}

Since entanglement entropy of a state is an $\LU$-invariant quantity (i.e. $H_2((g \otimes h) \ket{\psi}) = H_2(\ket{\psi})$ for all unitaries $g$ and $h$), the Entanglement Entropy problem corresponds to an $\LU$-invariant unitary property, opening the door to exploiting well-known results from invariant theory to prove query lower bounds. Indeed, leveraging a characterization of $\LU$-invariant polynomials by Procesi~\cite{PROCESI1976306} and Brauer~\cite{brauer1937algebras}, and specializing it to the Entanglement Entropy problem, we obtain the following lower bound using our generalized polynomial method and \Cref{prop:basic-polynomial-symmetrized}:
\begin{restatable}{theorem}{entanglemententropybound}
Assume $a \geq 5.$ Given parameters $a < b \leq \log d$, any tester must make $\Omega(\exp(a/4))$ queries to distinguish between the low and high entropy cases in the Entanglement Entropy problem.
\label{thm:entanglement_entropy_bound}
\end{restatable}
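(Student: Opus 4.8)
The plan is to combine the generalized polynomial method for $\LU$-invariant properties (\Cref{prop:basic-polynomial-symmetrized}) with the Procesi--Brauer characterization of $\LU$-invariant polynomials and a symmetrization down to a univariate polynomial that we can attack with approximation theory. First I would apply \Cref{prop:basic-polynomial-symmetrized} to the Entanglement Entropy problem: since $H_2$ is invariant under the local unitary group, the property is $\LU(d,d)$-invariant, so a $T$-query tester yields a self-adjoint degree-$2T$ polynomial $q \in \C[X,X^*]_d^{\LU(d,d)}$ with $q(U,U^*) \ge a'$ in the low-entropy case and $q(U,U^*) \le b'$ in the high-entropy case (renaming the acceptance thresholds to avoid clashing with the entropy parameters $a,b$). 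The input $U = I - 2\ketbra{\psi}{\psi}$ is determined by $\ket\psi$, and by the Procesi--Brauer first fundamental theorem for the local unitary group, every $\LU$-invariant polynomial in the matrix entries of $U$ and $U^*$ can be written as a polynomial in the ``loop'' traces obtained from contracting copies of $\rho$ and $\bar\rho$ along the two tensor factors. Restricting to $U$ of the reflection form, these traces reduce to polynomials in the moments $\Tr(\rho^k)$ of the reduced density matrix $\rho$ of $\ket\psi$; a word-length bound shows a degree-$2T$ invariant only involves moments $\Tr(\rho^k)$ for $k \le O(T)$.

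Next I would reduce to a univariate problem. Parametrize the hard instances by the Schmidt spectrum of $\ket\psi$: consider reduced density matrices $\rho$ whose eigenvalues form a fixed simple shape — for instance $\rho$ with one eigenvalue $\lambda$ and the remaining mass $1-\lambda$ spread uniformly over $m-1$ other eigenvalues, so that $\Tr(\rho^k) = \lambda^k + (1-\lambda)^k/(m-1)^{k-1}$ and $H_2(\ket\psi) = -\log\bigl(\lambda^2 + (1-\lambda)^2/(m-1)\bigr)$. Feeding these moments into $q$ produces a univariate polynomial $r(\lambda)$ (after possibly averaging over the orbit to kill residual symmetry, exactly as in the proof of \Cref{thm:recurrence}) of degree $O(T)$ in $\lambda$, which takes value $\ge a'$ when the spectrum corresponds to the low-entropy case and $\le b'$ when it corresponds to the high-entropy case. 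By tuning $m$ (e.g. $m = \Theta(d)$ so that $b \le \log d$ is achievable) and reading off which values of $\lambda$ put $H_2$ below $a$ versus above $b$, one gets two subintervals of $[0,1]$, of lengths controlled by $\exp(-\Theta(a))$, on which $r$ must be bounded away from each other. A Markov--Bernstein / Paturi-type inequality on the interval then forces $\deg r = \Omega(\exp(a/4))$, hence $T = \Omega(\exp(a/4))$, since the relevant gap in $\lambda$ scales like $e^{-a/2}$ and the Markov inequality costs a square root. (The constant $a \ge 5$ and the factor $1/4$ come from making these interval-length and gap estimates honest.)

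The main obstacle I expect is the invariant-theoretic bookkeeping in the middle step: showing cleanly that, after restricting to reflection oracles and to the one-parameter family of Schmidt spectra, every degree-$2T$ element of $\C[X,X^*]_d^{\LU(d,d)}$ collapses to a genuine univariate polynomial in $\lambda$ of degree $O(T)$ — in particular that the contractions arising from Procesi--Brauer really only produce the power sums $\Tr(\rho^k)$ with bounded $k$ and nothing that blows up the degree in $\lambda$ (e.g. via the $1/(m-1)^{k-1}$ factors interacting badly). Handling this requires carefully tracking how a word of length $\le 2T$ in $U,U^*$ contracts under the two independent unitary groups, and verifying that replacing $U$ by $I - 2\ketbra\psi\psi$ and then averaging over the stabilizer of the chosen spectrum leaves a polynomial whose $\lambda$-degree is linear in $T$. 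A secondary, more routine obstacle is choosing the family of spectra and the parameter $m$ so that both the low- and high-entropy promises are met with the right separation while keeping $b \le \log d$, and then plugging the resulting interval data into the correct form of the Markov--Bernstein inequality to extract the $\exp(a/4)$ bound rather than a weaker $\exp(a/c)$ for some larger $c$.
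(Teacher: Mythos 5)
Your proposal is correct, and its first half coincides with the paper's argument: the paper also applies \Cref{prop:basic-polynomial-symmetrized} to this $\LU$-invariant property and then invokes the Procesi--Brauer/Biamonte characterization (packaged as \Cref{thm:one_dim_local_unitaries} and \Cref{lem:lu-one-dim}) to turn a $T$-query tester into a degree-$2T$ symmetric polynomial in the Schmidt spectrum of $\ket{\psi}$; the ``invariant-theoretic bookkeeping'' you flag as your main obstacle is exactly what that cited result supplies (for reflections about a one-dimensional subspace the generators $\Tr((R_\sigma\otimes R_\tau)\Pi^{\otimes k})$ are degree-$k$ symmetric polynomials in the eigenvalues of $\rho$), so it is not a real gap. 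Where you genuinely diverge is the choice of hard family and the approximation-theoretic endgame. The paper restricts to flat spectra (maximally mixed on an $r$-dimensional subspace), so its univariate object is a Laurent polynomial in $r$ with nonpositive exponents; after substituting $x=1/r$ it is only known to lie in $[0,1]$ at the discrete realizable points $x=1/n$, which forces a two-case analysis (either $q_2$ stays bounded by $2$ on $[1/d,\exp(-a/2)]$ and Markov applies directly, or it exceeds $2$ somewhere and a large derivative is extracted between consecutive points $1/(r+1)$ and $1/r$). Your spiked family $(\lambda,\tfrac{1-\lambda}{m-1},\dots,\tfrac{1-\lambda}{m-1},0,\dots)$ with $e^{b}\leq m\leq d$ is realizable for \emph{every} $\lambda\in[0,1]$, so $r(\lambda)$ is genuinely bounded on the whole interval, and a single application of \Cref{lem:markov_inequality} across the gap between $\lambda=1/m$ (high entropy, value $\geq 2/3$) and $\lambda\geq\Theta(e^{-a/2})$ (low entropy, value $\leq 1/3$) gives $\deg r=\Omega(\exp(a/4))$ directly --- arguably cleaner than the paper's case split, and the orbit-averaging you mention is unnecessary since the polynomial is already symmetric in the eigenvalues. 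One small inaccuracy: the relevant regions are not ``two subintervals of length $\exp(-\Theta(a))$''; the low-entropy region is the long interval $[\Theta(e^{-a/2}),1]$, and what matters is the $e^{-a/2}$-sized gap separating it from the high-entropy point $1/m$ --- but your next sentence uses precisely this gap together with the square-root loss in Markov, so the $\exp(a/4)$ bound comes out correctly.
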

We hope that this connection between invariant theory and quantum query complexity can be used as a general framework to prove new lower bounds.

\subsection{Property Testing with Quantum Proofs}

We also study unitary property testing with \emph{quantum proofs}: in addition to getting query access to the instance $U$, the tester also receives an additional quantum state called a \emph{proof} that supposedly certifies that $U$ is a \emph{yes} instance. On one hand, having access to a quantum proof can significantly reduce the number of queries to $U$ needed. On the other hand, the proof state is not trusted and must be verified: if $U$ is a \emph{no} instance it must be rejected with high probability no matter what proof was provided. Since this definition is analogous to the definition of the complexity class $\QMA$, we call this the ``$\QMA$ property testing model''. Similarly we call the standard definition of unitary property testing (without quantum proofs) the ``$\BQP$ property testing model''. 

An illustration of $\QMA$ property testing is that of unstructured search, where the goal is to determine whether $U \ket{\psi} = -\ket{\psi}$ for some state $\ket{\psi}$ (and acts as the identity everywhere else) or whether $U = I$, promised that one is the case. If $U$ is $d$-dimensional, then the generalized polynomial method (see \Cref{sec:symmetric_properties}) implies the $\BQP$ query complexity of this problem is $\Theta(\sqrt{d})$, but on the other hand the $\QMA$ query complexity of this problem is $1$: given a proof state $\ket{\theta}$, the tester can verify using a single query whether $U$ applies a nontrivial phase to $\ket{\theta}$, in which case the tester would accept, and otherwise reject. Thus quantum proofs can dramatically reduce the query complexity of a problem. 

The $\QMA$ property testing model motivates the following questions: which problems admit query speedups when quantum proofs are provided? For which problems are quantum proofs useless? In addition to the $\BQP$ property testing lower bounds mentioned above, we prove $\QMA$ property testing lower bounds for the Approximate Dimension,  Recurrence Time, and Entanglement Entropy problems. 

We note that in the $\BQP$ setting, our lower bounds can also be obtained by other methods, such as the ``hybrid method'' of \cite{bennett1997strengths}. However, it is unclear how to apply this method in the $\QMA$ setting, and hence the polynomial method appears necessary to prove non-trivial $\QMA$ lower bounds. Furthermore, even in the $\BQP$ setting, we believe that the polynomial method provides a clean and simple method to prove  lower bounds compared to other methods.

The $\QMA$ lower bound for Approximate Dimension is obtained by observing that any $\QMA$ tester for Approximate Dimension is also a $\QMA$ tester for the classical Approximate Counting problem (i.e., counting the Hamming weight of an input string). Thus, the lower bound follows immedately from the $\QMA$ lower bound on Approximate Counting proved by Aaronson, et al.~\cite{aaronson2020quantum}:

\begin{restatable}[$\QMA$ lower bound for Approximate Dimension]{theorem}{appxdimqma}
Suppose there is a $T$-query algorithm that solves the Approximate Dimension problem (i.e. deciding whether a $d$-dimensional unitary encodes a subspace of dimension at least $2w$ or at most $w$) with the help of a $m$-qubit proof. Then either $m = \Omega(w)$, or $T \geq \Omega(\sqrt{\frac{d}{w}})$.
\label{thm:apx_dim_qma}
\end{restatable}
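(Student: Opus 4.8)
The plan is to reduce the Approximate Dimension problem, in the $\QMA$ property testing model, to the classical Approximate Counting problem in the analogous $\QMA$ query model, and then invoke the known $\QMA$ lower bound for Approximate Counting of Aaronson et al.~\cite{aaronson2020quantum}. First I would recall the correspondence, already noted in the discussion preceding \Cref{prop:symmetric}, between symmetric classical properties and unitarily invariant subspace properties: a boolean string $X \in \{0,1\}^d$ of Hamming weight $k$ corresponds (up to the embedding $\{0,1\} \to \{1,-1\}$) to the reflection unitary $U_X = I - 2\Pi_X$, where $\Pi_X$ projects onto the coordinate subspace $\mathrm{span}\{\ket{i} : X_i = 1\}$, which encodes a subspace of dimension exactly $k$. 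A standard quantum oracle query to $X$ is exactly one query to $U_X$ (the phase oracle $\ket{i} \mapsto (-1)^{X_i}\ket{i}$ is literally $U_X$). Hence the ``at least $2w$ / at most $w$'' dimension promise on subspace unitaries restricts, on the sub-family of coordinate-subspace unitaries, to precisely the ``Hamming weight $\geq 2w$ / $\leq w$'' promise of Approximate Counting.

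The key step is then the observation that any $T$-query $\QMA$ tester for Approximate Dimension, given an $m$-qubit proof, is immediately a $T$-query $\QMA$ verifier for Approximate Counting with an $m$-qubit proof: on input $X$, simulate the tester with its queries to $U_X$, feeding it whatever proof state is supplied. Completeness and soundness transfer verbatim — for a yes-instance $X$ (weight $\geq 2w$) there is a proof making the tester accept with high probability (namely the proof that works for the yes-instance subspace unitary $U_X$), and for a no-instance $X$ (weight $\leq w$) no proof makes it accept with more than low probability (since $U_X$ is a no-instance subspace unitary and the tester's soundness guarantee holds against all proofs). I would state this as a one-paragraph reduction lemma. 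Then I quote the $\QMA$ Approximate Counting lower bound from \cite{aaronson2020quantum}: any $\QMA$ protocol solving Approximate Counting on $d$ bits with an $m$-qubit proof and $T$ queries must satisfy $m = \Omega(w)$ or $T = \Omega(\sqrt{d/w})$. Combining with the reduction gives exactly the stated dichotomy for Approximate Dimension.

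There is essentially no hard technical obstacle here; the proof is a black-box reduction, and the content is entirely in setting up the correspondence cleanly. The one point requiring a little care is making sure the reduction respects the $\QMA$ semantics in the right direction: the proof state is untrusted, so I must check that the verifier constructed for Approximate Counting inherits \emph{both} completeness (an honest prover exists) and, crucially, soundness (no cheating prover succeeds), and that the parameters $a, b$ for acceptance probabilities and $m$ for proof length are carried through unchanged. I should also note that the instances arising from the reduction are a restricted sub-family (coordinate subspaces), so the lower bound on Approximate Dimension only uses these hard instances — which is fine, since a lower bound need only exhibit a hard sub-family. A secondary bookkeeping point is to confirm that the query model matches: the tester is allowed queries to both $U$ and $U^*$, but $U_X$ is self-adjoint, so $U_X^* = U_X$ and access to the inverse gives nothing extra, so the reduction loses nothing. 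With these checks in place the theorem follows immediately.
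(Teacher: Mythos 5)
Your proposal is correct and follows exactly the paper's argument: the phase oracle for a string $X$ is itself a subspace-reflection unitary, so any $\QMA$ tester for Approximate Dimension is automatically a $\QMA$ tester for classical Approximate Counting, and the lower bound of Aaronson et al.~\cite{aaronson2020quantum} then gives the stated dichotomy. Your additional checks (soundness/completeness transfer, self-adjointness of $U_X$ making inverse queries redundant, hardness via a restricted sub-family) are fine bookkeeping but do not change the substance.
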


As with the $\BQP$ lower bound for the Recurrence Time problem, the $\QMA$ lower bounds for the Recurrence Time problem requires more work than leveraging lower bounds on a related classical problem. However, using a similar technique as the $\BQP$ lower bound, we obtain the following:

\begin{restatable}[$\QMA$ lower bound for the Recurrence Time problem]{theorem}{recurrenceqma}
Let $\epsilon \leq \frac{1}{2 \pi}.$ Suppose there is a $T$-query algorithm that solves the Recurrence Time problem for $d$-dimensional unitaries with the help of an $m$-qubit proof. Then either $m \geq \Omega(d)$, or $T \geq \Omega(\max(\sqrt{\frac{d}{m}}, \frac{t}{m}, \frac{1}{\epsilon}))$. 
\label{thm:recurrence_qma}
\end{restatable}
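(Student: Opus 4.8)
The plan is to prove \Cref{thm:recurrence_qma} by reducing the $\QMA$ setting to the $\BQP$ setting at the cost of the proof size, mirroring the structure of the $\QMA$ lower bound for Approximate Dimension but using the eigenvalue-symmetrization machinery from the $\BQP$ recurrence lower bound (\Cref{thm:recurrence}) rather than a classical black box. First I would recall the standard ``quantum proof is useless if you just guess it'' observation: if there is a $T$-query tester that uses an $m$-qubit proof and accepts \emph{yes} instances with probability $\geq 2/3$ and \emph{no} instances with probability $\leq 1/3$, then feeding the tester the maximally mixed state $I/2^m$ in place of the proof yields a proof-free algorithm which, on \emph{yes} instances, accepts with probability at least $2^{-m} \cdot \tfrac{2}{3}$, while on \emph{no} instances accepts with probability at most $1/3$. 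This gap is exponentially small in $m$, so instead I would apply the generalized polynomial method directly: by \Cref{prop:basic-polynomial} the mixed-proof algorithm's acceptance probability is a self-adjoint polynomial of degree at most $2T$ in the entries of $U, U^*$; since the Recurrence Time property is unitarily invariant, \Cref{lem:unitary_invariant_main_thm} gives a symmetric self-adjoint polynomial $q$ of degree $\leq 2T$ in the eigenvalues $z_1, \dots, z_d$ of $U$, separating the two cases by $a = 2^{-m}\cdot\tfrac23$ versus $b = \tfrac13$.

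Next I would reuse the symmetrization step from the proof of \Cref{thm:recurrence}: restrict attention to unitaries with exactly two distinct eigenvalues $\{1, z=e^{i\theta}\}$, with multiplicities controlled by a parameter $p$, and average $q$ over the distribution $D(p,z)$ to obtain a bivariate polynomial $r(p,z)$ of degree $O(T)$ whose value is the (mixed-proof) acceptance probability. Choosing the multiplicities so that $U^t = I$ (i.e. $z$ a $t$-th root of unity) gives a \emph{yes} instance, while $z$ a small perturbation away from such a root with a constant fraction of the eigenvalues equal to $z$ gives a \emph{no} instance with $\|U^t - I\|\geq \epsilon$; this is exactly the construction underlying \Cref{thm:recurrence}, and it already forces $\deg r = \Omega(\max(t/\epsilon,\sqrt d))$ in the \emph{perfect-completeness} setting. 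The new ingredient is that here completeness is only $2^{-m}$, so I would invoke the appropriate ``one-sided'' approximation-theory bound — a Markov–Bernstein or Coppersmith–Rivlin / Nisan–Szegedy style inequality for polynomials that are $\geq 2^{-m}$ on one set and $\leq 1/3$ on another — to conclude that either $T$ is still $\Omega(\max(\sqrt d, t/\epsilon))$, or the exponentially small completeness degrades the bound by a factor polynomial in $m$. Carefully tracking this degradation should yield the stated trade-off $T \geq \Omega(\max(\sqrt{d/m},\, t/m,\, 1/\epsilon))$, together with the escape clause $m = \Omega(d)$ (which kicks in once $m$ is so large that the $\sqrt{d/m}$ and $t/m$ bounds become vacuous, in which case the problem is trivial anyway since the whole unitary fits in the proof).

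The main obstacle I expect is the approximation-theoretic bookkeeping in the regime of exponentially small completeness. In the standard polynomial method one exploits a constant gap between $a$ and $b$; here $a = 2^{-m} \cdot \tfrac23$ can be tiny, so the degree lower bound that one extracts from $r$ is correspondingly weaker, and one has to argue that the loss is only a $\poly(m)$ factor rather than, say, $2^{m}$. Concretely, after the two-eigenvalue symmetrization one obtains a univariate polynomial (in $\theta$, or in a related variable) that is at least $2^{-m}$ at the ``recurrent'' points and at most $1/3$ at a point at angular distance $\sim\epsilon/t$ away; bounding its degree from below requires a quantitative statement that a low-degree polynomial cannot jump from $2^{-m}$ up to a constant over a short interval — this is where a Bernstein-type inequality on the unit circle, combined with the fact that $r$ is bounded by $1$ everywhere (since it is an acceptance probability), does the work, and the $1/m$ factors in the final bound come precisely from solving for $T$ in that inequality with the $2^{-m}$ lower endpoint. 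I would also need to double-check that the $t/m$ term, not just the $\sqrt{d/m}$ term, survives this analysis, which means running the recurrence-specific part of the \Cref{thm:recurrence} argument (the part that gives $t/\epsilon$) through the same degradation, and verifying that the two separate lower bounds can be combined into the single $\max$ in the statement.
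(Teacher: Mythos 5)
Your overall skeleton (remove the proof state, reuse the $D(p,z)$ symmetrization of \Cref{lem:recurrence_time_lemma}, then apply approximation theory in $p$ and $z$ separately) matches the paper's strategy, but there is a genuine gap in the very first step: the way you remove the proof destroys the gap you later need. Substituting the maximally mixed state $I/2^m$ gives completeness $\tfrac{2}{3}\cdot 2^{-m}$ against soundness $\tfrac13$, i.e.\ the \emph{yes} acceptance bound lies \emph{below} the \emph{no} bound. A polynomial that is $\geq 2^{-m}$ on the yes instances and $\leq 1/3$ on the no instances is a vacuous constraint --- the constant polynomial $1/4$ satisfies it --- so no Markov--Bernstein, Coppersmith--Rivlin, or ``one-sided'' inequality can extract any degree lower bound from it. The paper avoids this by invoking the Guessing Lemma (\Cref{lem:guessing_lemma}), which does more than guess the witness: it amplifies soundness down to $2^{-10m}$ while completeness stays at $2^{-m}$, at the price of blowing the query count up to $O(mT)$. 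That multiplicative separation ($2^{-m}$ versus $2^{-10m}$) is exactly what makes the subsequent steps non-vacuous: after rescaling by $2^{9m}$ or $2^{10m}$, Paturi's bound (\Cref{lem:paturi}) gives $\deg r \geq \Omega(\sqrt{md})$ in the $p$-variable and $\Omega(m/\epsilon)$ in the $z$-variable, and a sign-change/\Cref{lem:fta} count at the $t$-th roots of unity gives $\Omega(t)$.

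This also changes where the $1/m$ losses in the final bound come from. You attribute them to a degradation in the approximation-theoretic step applied to a degree-$O(T)$ polynomial; in the paper they come from the other side of the inequality: the polynomial has degree $O(mT)$ (because of the Guessing Lemma's query blow-up), and comparing $O(mT)$ against the degree lower bounds $\Omega(\sqrt{md})$, $\Omega(t)$, $\Omega(m/\epsilon)$ yields $T \geq \Omega(\max(\sqrt{d/m},\, t/m,\, 1/\epsilon))$. Finally, the hypothesis $m \leq o(d)$ is not merely the regime where the bound is vacuous; it is used quantitatively (e.g.\ to ensure $p = 10m/d$ is a valid probability and that the Paturi parameter $\mu = O(m/d)$ is small enough), so the escape clause $m \geq \Omega(d)$ is a technical necessity of the argument rather than an afterthought. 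Once you replace your proof-removal step with \Cref{lem:guessing_lemma} and track the degree as $O(mT)$, the rest of your outline can be carried out essentially as you describe.
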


Finally, we also adapt the technique for the $\BQP$ lower bound for the Entanglement Entropy problem, to prove a $\QMA$ lower bound for the same problem.

\begin{restatable}[$\QMA$ lower bound for the Entanglement Entropy problem]{theorem}{entanglemententropyboundqma}
Assume $a \geq 5$ and $a < b \leq \log d$ Suppose there is a $T$-query algorithm that solves the entanglement entropy problem with the help of an $m$-qubit witness, then  $mT \geq \Omega(\exp(a/4)).$
\label{thm:entanglement_entropy_bound_qma}
\end{restatable}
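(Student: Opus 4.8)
The plan is to combine the $\BQP$ lower bound technique for the Entanglement Entropy problem (\Cref{thm:entanglement_entropy_bound}) with a standard ``proof-elimination'' argument that converts a $\QMA$ tester into a $\BQP$ tester with a query blowup proportional to the witness size. First I would recall how the $\BQP$ bound works: by \Cref{prop:basic-polynomial-symmetrized}, any $T$-query tester yields a self-adjoint degree-$2T$ polynomial $q$ in the invariant ring $\C[X, X^*]_d^{\LU}$; using the Procesi--Brauer characterization of $\LU$-invariants, one specializes to reflection oracles $U = I - 2\ketbra{\psi}{\psi}$ and extracts a low-degree univariate polynomial in (a suitable function of) the Schmidt spectrum of $\ket{\psi}$ that separates the low- and high-entropy cases, and an approximation-theory (Markov--Bernstein type) argument forces $T = \Omega(\exp(a/4))$. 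The goal now is to reproduce this with an $m$-qubit proof, paying a factor of $m$.

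The key steps, in order: (1) Suppose an $m$-qubit-proof, $T$-query tester exists with completeness $\geq 2/3$ and soundness $\leq 1/3$. Apply the standard reduction (as in Aaronson et al.~\cite{aaronson2020quantum}, also used for \Cref{thm:recurrence_qma}): for a $\QMA$ tester with an $m$-qubit witness, the acceptance probability maximized over witnesses is at most $2^m$ times the acceptance probability on the maximally mixed witness $I/2^m$ — more precisely, $\max_{\ket{w}} \Pr[\text{accept}] \leq 2^m \cdot \Pr_{I/2^m}[\text{accept}]$. Running the tester on the maximally mixed state gives a genuine $\BQP$-style (witnessless) algorithm whose acceptance probability is still computed by a degree-$2T$ self-adjoint polynomial $p$ in the matrix entries of $U$ and $U^*$ (the maximally mixed witness only averages the polynomial, preserving degree). (2) Because the Entanglement Entropy property is $\LU$-invariant, symmetrize $p$ over $\LU(d,d)$ via \Cref{prop:basic-polynomial-symmetrized} to land in $\C[X,X^*]_d^{\LU}$, then specialize to reflection oracles and use the Procesi--Brauer generators exactly as in the $\BQP$ proof to get a univariate polynomial $r$ of degree $O(T)$ in the relevant spectral parameter. (3) The amplified soundness/completeness gap is now a $2^m$ factor: on high-entropy (no) instances $r$ is at most $2^m/3$, while on low-entropy (yes) instances $r$ is at least... well, this is where care is needed — after multiplying by $2^m$ the completeness bound becomes vacuous (it can exceed $1$), so instead one should track that the \emph{difference} in behaviour the polynomial must exhibit between the two cases, scaled down by $2^m$, is at least a constant. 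Concretely: the witnessless acceptance probability $\Pr_{I/2^m}[\text{accept}]$ is $\geq 2/3 \cdot 2^{-m}$ on yes-instances and $\leq 1/3$ on no-instances (trivially), which is not yet a separation; the right move is to observe that the optimal-witness acceptance probabilities ($\geq 2/3$ vs.\ $\leq 1/3$) together with the sandwiching $\Pr_{I/2^m} \leq \Pr_{\text{opt}} \leq 2^m \Pr_{I/2^m}$ force the witnessless polynomial $p$ to take value $\geq 2^{-m}/3$ on yes and $\leq 1/3$ on no while \emph{also} the quantity the polynomial method constrains — the variation of a bounded polynomial over the spectral interval — must accommodate a $\Theta(2^{-m})$-scale separation over a parameter range of size $\Theta(\exp(-a/4))$ (roughly), yielding $T \gtrsim 2^{-m}\exp(a/4)$, i.e.\ $mT = \Omega(\exp(a/4))$ after rebalancing with the trivial bound or by a direct Markov-inequality estimate that produces the product form.

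The main obstacle I anticipate is step (3): making the gap-amplification quantitatively compatible with the approximation-theoretic lower bound so that the bound comes out as the \emph{product} $mT \geq \Omega(\exp(a/4))$ rather than something weaker. The subtlety is that naive amplification destroys completeness, so one cannot simply feed a ``$2/3$ vs.\ $1/3$'' polynomial into the Markov--Bernstein machinery; instead one must argue at the level of the univariate polynomial $r$ that, being bounded in $[0,1]$ on the whole interval but forced to change by $\Omega(2^{-m})$ across a subinterval of length $\exp(-\Omega(a))$, its degree is $\Omega(2^{-m}\exp(\Omega(a)))$ — and then the stated $mT$ bound follows by noting $T \geq \Omega(2^{-m}\exp(a/4))$ combined with the elementary observation that this is only non-trivial when $2^m \lesssim \exp(a/4)$, in which regime $mT \geq m \cdot 2^{-m}\exp(a/4)$; the clean product form likely requires instead the sharper route of bounding the $\ell_\infty$-to-derivative tradeoff directly, as is done for the Recurrence Time $\QMA$ bound (\Cref{thm:recurrence_qma}), and transplanting that estimate here. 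A secondary, more mechanical obstacle is verifying that the maximally-mixed-witness reduction interacts cleanly with the $\LU$-symmetrization — i.e.\ that averaging the polynomial over the witness register and over $\LU(d,d)$ can be done in either order without affecting degree — which should follow since both operations are linear and degree-nonincreasing, but deserves an explicit line.
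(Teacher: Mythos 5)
There is a genuine gap, and it sits exactly where you flagged it: your witness-elimination step does not produce a usable polynomial separation. Plugging in the maximally mixed witness without first amplifying soundness gives a degree-$2T$ polynomial whose value is $\geq \tfrac{2}{3}2^{-m}$ on yes instances and $\leq \tfrac{1}{3}$ on no instances; these constraints are satisfied by the constant polynomial $1/4$, so no degree lower bound of any kind follows, and the subsequent claim that "the variation of a bounded polynomial must accommodate a $\Theta(2^{-m})$-scale separation" is not justified by anything you have established. Moreover, even granting your heuristic conclusion $T \gtrsim 2^{-m}\exp(a/4)$, this only yields $mT \geq m\,2^{-m}\exp(a/4)$, which is far weaker than the stated bound $mT \geq \Omega(\exp(a/4))$ (the factor $m2^{-m}$ tends to $0$), so the "rebalancing" step cannot recover the theorem. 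The paper instead removes the witness via the Guessing Lemma (\Cref{lem:guessing_lemma}): amplify soundness to $2^{-10m}$ using $O(m)$ repetitions and then guess the witness, obtaining a witnessless tester with $O(mT)$ queries whose acceptance is $\geq 2^{-m}$ on yes instances and $\leq 2^{-10m}$ on no instances. This is the crucial point you are missing: the gap is exponentially amplified \emph{in the right direction} (yes value exceeds no value by a factor $2^{9m}$), and the factor $m$ in the final bound enters through the degree $O(mT)$, not through any rebalancing.

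With that polynomial in hand, the rest of your outline (the $\LU$-invariant symmetrization, Procesi--Brauer, specialization to the entanglement spectrum, and a Markov/Paturi analysis as in \Cref{thm:recurrence_qma}) is the right machinery, and it is essentially what the paper does: following the $\BQP$ proof of \Cref{thm:entanglement_entropy_bound} one gets a univariate polynomial $q$ of degree $O(mT)$ with $q(1/n) \leq 2^{-10m}$ for low-entropy points and $q(1/n) \geq 2^{-m}$ for high-entropy points, and then a two-case analysis finishes: if $q$ ever exceeds $2^{-9m}$ on $[\exp(-a),\exp(-a/2)]$, Markov's inequality (\Cref{lem:markov_inequality}) forces $\deg q \geq \Omega(\exp(a/4))$; otherwise $q$ is tiny on that whole interval while reaching $2^{-m}$ at a high-entropy point, and Paturi's bound (\Cref{lem:paturi}) applied to the rescaled polynomial uses the $2^{9m}$ ratio to force $\deg q \geq \Omega(m\exp(a/4))$. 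Either way $mT \geq \Omega(\exp(a/4))$. To repair your proposal, replace the unamplified maximally-mixed-witness reduction by \Cref{lem:guessing_lemma} and carry out this exponential-scale case analysis explicitly; the commutation of witness-averaging with $\LU$-symmetrization that you worry about at the end is indeed a non-issue, since the guessing-lemma tester is just another (witnessless) $O(mT)$-query algorithm to which \Cref{prop:basic-polynomial-symmetrized} applies directly.
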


We note that we are able to give an algorithm for the entanglement entropy problem using $O(\exp(a))$ queries and a certificate size with $O(\exp(a))$ qubits as long as the gap satisfies $b \geq 2a$. Furthermore, our best upper bound in the $\QMA$ setting for the Recurrence Time problem is identical to the $\BQP$ upper bound (\Cref{thm:recurrence-ub}), which uses $O(\frac{t\sqrt{d}}{\epsilon})$ queries. It is open whether or not the query complexity in the $\QMA$ setting can be improved by making use of the witness, or if the lower bounds can be tightened.

\subsection{$\QMA$ vs. $\QMAtwo$}

In addition to the $\QMA$ model of property testing with the help of a quantum proof, we can also study what happens if we place restrictions on the proof states allowed. For example, what if the proof is guaranteed to be a classical string (i.e., a $\QCMA$ proof) or is unentangled across a fixed bipartition of qubits (i.e., a $\QMAtwo$ proof)?

For example, consider the problem of testing whether or not a given unitary $U$ was the identity $I$ or a reflection $I - 2 \ketbra{\psi}{\psi}$ where $\ket{\psi}$ is an $n$-qubit state. As we observed in the previous section, this problem can be solved using one query given access to a proof state $\ket{\psi}$. However, Aaronson and Kuperberg \cite{aaronson2007quantum} showed that if the proof given was an $m$-bit \emph{classical} string, any quantum algorithm must use $\Omega(\sqrt{\frac{2^n}{m+1}})$ queries to distinguish between the two cases. In particular, the result was used  by Aaronson and Kuperberg in \cite{aaronson2007quantum} to give a quantum oracle separation between \textsf{QMA} and \textsf{QCMA}.

The complexity class \textsf{QMA(k)} is defined as the class of problems verifiable by a polynomial time quantum circuit with access to $k \geq 2$ \emph{un}entangled proofs. It was shown in \cite{harrow2013testing} that for any constant $k > 2$, we have \textsf{QMA(k)} = \textsf{QMA(2)}, as any \textsf{QMA(k)} verifier can be simulated by a \textsf{QMA(2)} verifier. %

As discussed in Aaronson's survey paper on quantum query complexity in \cite{aaronson2021open}, an oracle separation between \textsf{QMA(2)} and \textsf{QMA} is a notorious open problem in quantum complexity theory.  We note that $\QMA \subseteq \QMAtwo$ since a $\QMAtwo$ verifier could simulate a $\QMA$ verifier by using only one of the proofs provided. However, 
there is evidence that \textsf{QMA(2)} could be more powerful than \textsf{QMA}:

\begin{itemize}
    \item The existence of \textsf{QMA(2)} protocols for the verification of \textsf{NP}-complete problems (eg. graph 3-colouring) using a logarithmic number of qubits, as outlined in \cite{bliertapp2009}. If there were a  \textsf{QMA} protocol for these problems using a logarithmic number of qubits, then \textsf{NP} $\subseteq$ $\textsf{QMA}_{\textsf{log}}$ = \textsf{BQP}, which is considered unlikely \cite{marriott2005quantum}. Otherwise, if there exists a \textsf{QMA} protocol with a sublinear number of qubits for 3-SAT or 3-colouring, then the (classical) Exponential Time Hypothesis is false~\cite{aaronson2008power,chen2010short}.  \footnote{We can prove there is an oracle relative to which $\mathsf{NP}$ does not have sublinear-sized $\QMA$ proofs, using an argument of Aaronson similar to that in \cite{aaronson2011impossibility} that combines the ``guessing lemma'' and the polynomial method.} 
    \item Certain problems in quantum chemistry, specifically the pure state $N$-representability problem, known to have \textsf{QMA(2)} protocols but not \textsf{QMA} protocols \cite{liu2007}. 
    \item The non-existence of a product test using local quantum operations and classical communication (LOCC) as proven in \cite{harrow2013testing}. If an LOCC product test existed, then \textsf{QMA(2)} = \textsf{QMA}.  
\end{itemize} 
On the other hand, despite many years of study, the only complexity inclusions about $\QMAtwo$ known are $\textsf{QMA} \subseteq \textsf{QMA(2)} \subseteq \textsf{NEXP}$, a vast gap in the complexity-theoretic landscape. A first step towards showing that $\QMAtwo$ is indeed more powerful than $\QMA$ would be to identify an oracle relative to which $\QMAtwo$ is different than $\QMA$. This would already have very interesting consequences in quantum information theory, such as ruling out the existence of disentanglers~\cite{aaronson2008power}. 

In this paper we identify a unitary property testing problem, for which if we can prove a strong $\QMA$ lower bound would immediately imply an oracle separation between $\QMAtwo$ and $\QMA$. While we do not obtain a lower bound, we present some observations that may be helpful towards eventually obtaining the desired oracle separation. 

In order to define the problem, we first have to define the notion of an \emph{$\epsilon$-completely entangled subspace}. This is a subspace $S \subseteq \C^d \otimes \C^d$ such that all states $\ket{\theta} \in S$ are $\epsilon$-far in trace distance from any product state $\ket{\psi} \otimes \ket{\phi}$. It is known, via the probabilistic method, that there exist subspaces of dimension $\Omega(d^2)$ that are $\Omega(1)$-completely entangled~\cite{Hayden_2006}. We now introduce the Entangled Subspace problem:

\begin{restatable}[Entangled Subspace problem]{definition}{entangledsubspace}
Let $0 \leq a < b < 1$ be constants. The $(a,b)$-Entangled Subspace problem is to decide, given oracle access to a unitary $U = I -2\Pi$ where $\Pi$ is the projector onto a subspace $S \subseteq \C^d \otimes \C^d$, whether 
\begin{itemize}
    \item (\emph{yes} case) $S$ contains a state $\ket{\theta}$ that is $a$-close in trace distance to a product state $\ket{\psi} \otimes \ket{\phi}$. 
    \item (\emph{no} case) $S$ is $b$-completely entangled
\end{itemize}
promised that one is the case.
\end{restatable}

First, we observe that the Entangled Subspace property is $\LU$-invariant: applying local unitaries $g \otimes h$ to a subspace $S$ preserves whether it is a \emph{yes} instance or a \emph{no} instance of the problem. Thus one can hope to prove query lower bounds for the Entangled Subspace problem in both the $\BQP$ and $\QMA$ setting using our generalized polynomial method and tools from invariant theory. 

Next, we observe that there is in fact a $\QMAtwo$ upper bound for the Entangled Subspace problem, which is almost immediate from the definition of $\QMAtwo$:

\begin{restatable}[$\QMAtwo$ upper bound for the Entangled Subspace problem]{proposition}{entangledsubspaceqmatwo}
The Entangled Subspace problem
can be solved by a $\QMAtwo$ tester, meaning that the tester receives a proof state in the form $\ket{\psi} \otimes \ket{\varphi}$ of $\poly\log(d)$ qubits, makes one query to the unitary $U$, and can distinguish between \emph{yes} and \emph{no} cases with constant bias. %
\label{prop:entangledsubspaceqmatwo}
\end{restatable}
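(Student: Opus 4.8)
The plan is to give a $\QMAtwo$ verifier that uses its two unentangled proof registers to hold a candidate product state $\ket{\psi}\otimes\ket{\phi}$ across the \emph{same} bipartition $\C^d\otimes\C^d$ as the instance, and then uses a single query to $U$ to estimate how much of that product state lies inside the marked subspace $S$. The first step is to translate the yes/no promise into a statement about this quantity. Recall that for pure states the trace distance between $\ket{\alpha}$ and $\ket{\beta}$ equals $\sqrt{1-|\langle\alpha|\beta\rangle|^2}$. Hence in the \emph{yes} case, if $\ket{\theta}\in S$ is within trace distance $a$ of a product state $\ket{\psi}\otimes\ket{\phi}$, then $\bra{\psi\otimes\phi}\Pi\ket{\psi\otimes\phi}\geq |\langle\theta|\psi\otimes\phi\rangle|^2 \geq 1-a^2$, since $\Pi\succeq\ketbra{\theta}{\theta}$. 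In the \emph{no} case, for \emph{every} product state $\ket{\psi}\otimes\ket{\phi}$ the vector $\Pi\ket{\psi\otimes\phi}$ is, after normalization, a state in $S$, which by $b$-complete-entanglement is at trace distance at least $b$ from $\ket{\psi}\otimes\ket{\phi}$; therefore $\bra{\psi\otimes\phi}\Pi\ket{\psi\otimes\phi}=\|\Pi\ket{\psi\otimes\phi}\|^2\leq 1-b^2$. Since $0\le a<b<1$, these ranges are separated by the constant gap $b^2-a^2>0$.

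Next I would specify the verifier. Taking $d=2^n$ (padding to the next power of two otherwise), the verifier interprets the $\QMAtwo$ proof as $\ket{\psi}\otimes\ket{\phi}$ with each of $\ket{\psi},\ket{\phi}\in\C^d$ occupying an $n=\log d$ qubit register, so the proof has $\poly\log(d)$ qubits and aligning the proof bipartition with the instance's tensor-product structure makes the honest product state a legal $\QMAtwo$ proof. The verifier then runs the standard one-query reflection (Hadamard) test: introduce an ancilla in $\ket{+}$, apply controlled-$U$ (with $U=I-2\Pi$) to the proof registers, apply a Hadamard to the ancilla, measure it, and accept iff the outcome is $1$. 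A direct calculation gives that the acceptance probability equals $\tfrac14\|(I-U)\ket{\psi\otimes\phi}\|^2 = \|\Pi\ket{\psi\otimes\phi}\|^2 = \bra{\psi\otimes\phi}\Pi\ket{\psi\otimes\phi}$, using $I-U=2\Pi$ and $\Pi^2=\Pi$. Combining with the previous paragraph, in the \emph{yes} case the honest prover supplies the product state with overlap at least $1-a^2$, so the verifier accepts with probability at least $1-a^2$; in the \emph{no} case every product proof is accepted with probability at most $1-b^2$. Thus the verifier distinguishes the two cases with constant bias $b^2-a^2$ using one query and a $\poly\log(d)$-qubit product proof, as claimed.

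There is no serious obstacle here; the proposition is essentially a direct consequence of the definition of a completely entangled subspace together with the fact that a reflection oracle about $S$ lets one read off the overlap of any chosen state with $S$ in a single query. The only points requiring care are (i) aligning the $\QMAtwo$ proof bipartition with the instance's tensor-product structure, (ii) the pure-state conversion between trace distance and squared fidelity, used in both directions, and (iii) the one-query Hadamard-test identity on a reflection oracle. It is also worth noting that $\QMAtwo$ soundness only requires rejecting \emph{product} proofs, which is exactly what the $b$-complete-entanglement hypothesis supplies, so no statement about entangled proofs is needed; and since the proposition only asks for constant bias, no amplification is required.
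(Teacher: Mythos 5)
Your proposal is correct and follows essentially the same route as the paper: the one-query controlled-$U$ (Hadamard/subspace-membership) test on the product proof, whose acceptance probability is $\|\Pi(\ket{\psi}\otimes\ket{\phi})\|^2$, analyzed via the pure-state trace-distance/fidelity relation to get completeness at least $1-a^2$ and soundness at most $1-b^2$. The only cosmetic difference is that you phrase the yes case via $\Pi\succeq\ketbra{\theta}{\theta}$ while the paper phrases both cases via the normalized projection onto $S$; the argument is the same.
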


We also construct another $\QMAtwo$ verifier for the Entangled Subspace problem with the property that a valid proof state for the \emph{yes} instances is symmetric. We call this verifier the \textit{product test verifier}. This procedure is based on the \emph{product test}, which is a procedure for detecting whether a state $\ket{\theta}$ is close to a product state $\ket{\psi} \otimes \ket{\phi}$, given access to $k \geq 2$ copies $\ket{\theta}^{\otimes k}$. We use of a generalization of the  product test analysis of \cite{soleimanifar2022testing} that applies for all $k \geq 2$, which they showed for the case $k = 2$.  This additional property will be relevant when discussing \Cref{thm:one_dimensional_property_qma}. 

We conjecture the following $\QMA$ lower bound on the Entangled Subspace problem.

\begin{conjecture}
Any $\QMA$ tester for the Entangled Subspace problem that makes $T$ queries to the oracle and receives an $m$-qubit witness must have either $m$ or $T$ be superpolynomial in $\log d$. 
\label{conj:qmatwo}
\end{conjecture}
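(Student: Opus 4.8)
The plan is to push the generalized polynomial method into the $\QMA$ setting while exploiting the $\LU$-invariance of the Entangled Subspace property, mirroring the proofs of \Cref{thm:entanglement_entropy_bound} and \Cref{thm:entanglement_entropy_bound_qma}. Fix a putative $\QMA$ tester making $T$ queries to $U = I - 2\Pi$ with an $m$-qubit proof, and let $P(U)$ be the positive semidefinite $2^m \times 2^m$ acceptance operator, whose entries are self-adjoint degree-$2T$ polynomials in the matrix entries of $U$ and $U^*$ by \Cref{prop:basic-polynomial}; the optimal acceptance probability is $\lambda_{\max}(P(U))$. Averaging over $g \otimes h \in \LU(d,d)$ as in \Cref{prop:basic-polynomial-symmetrized}, and handling the $m$-qubit proof by the same reduction that controls the proof in \Cref{thm:entanglement_entropy_bound_qma} (which costs only a multiplicative factor of $m$), one reduces to a self-adjoint degree-$2T$ polynomial $q$ in the $\LU$-invariant ring $\C[X,X^*]_d^{\LU}$, evaluated at $\Pi$, that separates \emph{yes} from \emph{no} instances with constant bias whenever $mT$ is small. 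By the Procesi--Brauer description of $\LU$-invariants \cite{PROCESI1976306,brauer1937algebras}, $q(\Pi)$ is a linear combination of ``Brauer-diagram'' traces of $\Pi^{\otimes k}$ for $k \le 2T$ --- contractions by permutations of the $k$ copies together with partial transposes across the bipartition --- so the whole behaviour of the tester is governed by a family of $\poly(mT)$ such moments of $\Pi$.

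The second step is to construct two ensembles of subspaces that agree on all of these low-order moments. For the \emph{no} side, let $\Pi_{\mathrm{no}}$ project onto a Haar-random subspace $S_{\mathrm{no}} \subseteq \C^d \otimes \C^d$ of the largest dimension $D = \Omega(d^2)$ for which a Haar-random subspace is $b$-completely entangled with overwhelming probability, which exists by \cite{Hayden_2006}. For the \emph{yes} side, let $S_{\mathrm{yes}} = \mathrm{span}\{\ket{\psi}\otimes\ket{\phi}\} \oplus S'$ where $\ket{\psi},\ket{\phi}$ are independent Haar-random unit vectors and $S'$ is an independent Haar-random $(D-1)$-dimensional subspace; $S_{\mathrm{yes}}$ contains an exactly product state and so is a valid \emph{yes} instance for every $a \ge 0$. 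The two ensembles differ only in whether the ``extra'' direction beyond a random $(D-1)$-dimensional subspace is forced to be product. I would then compute, by Weingarten calculus over $\unitary(d^2)$ (for the subspace) and over $\unitary(d) \times \unitary(d)$ (for the planted product vector, which factorizes into two $\unitary(d)$ computations), all Brauer-diagram moments of degree $k \le \poly(mT)$ for both ensembles, and show they coincide up to error $d^{-\Omega(1)}$, the heuristic being that a single rank-one product perturbation of a $\Theta(d^2)$-dimensional random projector shifts each such moment by only $d^{-\Omega(1)}$ as long as $k = \poly(\log d)$. Combining the two steps would contradict correctness whenever $m$ and $T$ are both $\poly(\log d)$, proving the conjecture (indeed, one would expect the quantitative bound $mT = d^{\Omega(1)}$, even for $a = 0$).

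The hard part will be the interplay with the quantum proof. The clean intuition --- ``from a single proof register the verifier cannot certify that a planted state is product'' --- is essentially the statement that a general entangled state on the proof register can fool a product test \cite{harrow2013testing}, which is exactly why the $\QMAtwo$ upper bound of \Cref{prop:entangledsubspaceqmatwo} does not obviously collapse to $\QMA$; but it does not by itself rule out a $\poly(\log d)$-qubit quantum proof that encodes useful partial information about the planted direction $\ket{\psi}\otimes\ket{\phi}$. Making this rigorous requires the proof to genuinely cost only a small polynomial overhead in the degree--error budget of the symmetrized polynomial $q$, which is precisely the guessing/averaging step above and is the least routine ingredient: for problems with less symmetry this is where the Laurent-polynomial technology of \cite{aaronson2020quantum} (or a new variant) has been needed, and it is not clear a priori that the $\LU$-symmetrized version is as well-behaved. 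The Weingarten moment-matching is also delicate --- one must control sums over pairs of permutations in $S_k$ uniformly in $k$, and verify that incorporating the partial transposes in the Brauer diagrams creates no low-order invariant (such as a ``realignment''-type quantity) that happens to separate the ensembles. I expect the proof-handling step to be the true bottleneck; resolving it is exactly what would convert \Cref{prop:entangledsubspaceqmatwo} together with this lower bound into the sought oracle separation between $\QMA$ and $\QMAtwo$.
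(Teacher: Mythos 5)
You should first note that the statement you are proving is \Cref{conj:qmatwo}, which the paper explicitly leaves open: it states that ``new techniques are needed'' and, in \Cref{sec:invariant_theory_qma}, identifies the lack of a usable description of the $\LU$-invariants $\Tr((R_\sigma \otimes R_\tau)\,\Pi^{\otimes k})$ for projectors $\Pi$ of high rank as the main barrier. Your hard instances are essentially the paper's Planted Product State problem (with $s = \Theta(d^2)$), for which the paper proves only a $\QCMA$ lower bound (\Cref{thm:qcma_lower_bound}) and separately conjectures the $\QMA$ bound (\Cref{conj:qma_lower_bound_pps}). So your proposal is an attack plan whose unproven steps coincide exactly with what the paper flags as open, and it does not constitute a proof.

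Two concrete gaps. First, the proof-handling step is misstated: the only tool available in the paper is the Guessing Lemma (\Cref{lem:guessing_lemma}), which removes the $m$-qubit witness at the cost of shrinking the completeness/soundness values to $2^{-m}$ versus $2^{-10m}$. After symmetrization you therefore have a degree-$O(mT)$ invariant polynomial that separates the two ensembles by a bias that is exponentially small in $m$, not constant; with $m = \poly\log d$ this is far below the $d^{-\Omega(1)}$ moment-matching error you budget for, so the intended contradiction does not follow as stated. Second, and more fundamentally, approximate matching of the Brauer-diagram moments does not control the value of an arbitrary bounded low-degree polynomial: acceptance-probability polynomials can have enormous coefficients in the invariant basis, so you would need either exact moment matching, a coefficient bound, or a dual-certificate argument, none of which is supplied. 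Moreover, the heuristic that a rank-one product perturbation shifts every low-order invariant by $d^{-\Omega(1)}$ is false in absolute terms: already at degree two, the invariant $\Tr\bigl[(\Tr_B \Pi)^2\bigr]$ changes by roughly $1 - 2/d = \Theta(1)$ when the planted direction is product rather than Haar-random (a product direction contributes marginal purity $1$ instead of $\approx 2/d$). Whether this signal is usable depends on comparing it to the fluctuations of the same invariant over the Haar ensemble, which is a nontrivial random-matrix/Weingarten analysis that the proposal defers; this is precisely where the argument would have to do real work, and it is where the paper's own attempts stop. Your plan is a reasonable articulation of the natural polynomial-method route, but as written it leaves the conjecture unproven.
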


If this conjecture is true, then this would imply the existence of a quantum oracle that separates $\QMA$ from $\QMAtwo$: the oracle would encode, for each $\QMA$ tester, an instance of the Entangled Subspace problem that the tester decides incorrectly. 

We note that in the one-dimensional case, the Entanglement Entropy problem can be viewed as a special case of the Entanglement Subspace problem, since \Cref{lem:entanglement_schmidt} provides bounds on the entropy in terms of the trace distance to the closest product state. However, the $\QMA$ lower bound for the Entanglement Entropy problem in \Cref{thm:entanglement_entropy_bound_qma} is not sufficient to prove \Cref{conj:qmatwo}. Since we are assuming the gap between $a$ and $b$ is constant for a $\QMAtwo$ verifier to be able to distinguish between the two cases, the states given in the Entangled Subspace problem have entropy independent of the dimension $d$, whereas we only have we have a strong $\QMA$ lower bound for the Entanglement Entropy problem when the entropy of the hidden states grows with the dimension $d$.

Hence, we expect that new techniques are needed to resolve \Cref{conj:qmatwo}, such as techniques that would extend to the setting where the hidden subspace has a larger dimension. As far as we know, \Cref{conj:qmatwo} potentially could be proved by giving a $\QMA$ lower bound for the Entangled Subspace problem restricted to \emph{two-dimensional} subspaces. 

\paragraph{Constraints on the Conjecture.}
As a first step towards understanding whether \Cref{conj:qmatwo} is true, we identify constraints on the conjecture: we show that we cannot hope to prove a super-polynomial $\QMA$ lower bound on the Entangled Subspace problem when only considering one-dimensional subspaces. This is a consequence of the following general statement. 

\begin{restatable}{lemma}{onedimensionalqma}
Let $\cal{P}$ denote a property where the instances are unitaries encoding a one-dimensional subspace (i.e. a pure state): $U = I - 2\ketbra{\psi}{\psi}$ for some state $\ket{\psi}$. Suppose that there is a $T$-query $\QMAtwo$ tester that decides $\cal{P}$, with the condition that a valid proof state for \emph{yes} instances is $\ket{\psi}^{\otimes 2}$. Then there exists a $O(T)$-query $\QMA$ tester that also decides $\cal{P}$. 
\label{thm:one_dimensional_property_qma}
\end{restatable}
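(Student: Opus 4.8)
The plan is to turn the given $\QMAtwo$ tester into a $\QMA$ tester by replacing one half of the (symmetric) honest proof with a consistency check carried out through the oracle itself. Write $V$ for the $\QMAtwo$ tester: it makes $T$ queries, receives a proof on two registers $A \otimes B$, and on a \emph{yes} instance $U = I - 2\ketbra{\psi}{\psi}$ its honest proof is $\ket{\psi}_A \otimes \ket{\psi}_B$; let $c$ and $s$ denote its completeness and soundness (the latter holding against \emph{all} product-state proofs), with $\Delta := c - s$ a positive constant. The enabling observation is that one (controlled) query to $U = I - 2\ketbra{\psi}{\psi}$ suffices to test whether a register holds the encoded state $\ket{\psi}$: the Hadamard test with a controlled-$U$ applied to a register with reduced density matrix $\rho_A$ outputs its distinguished outcome with probability exactly $\bra{\psi}\rho_A\ket{\psi}$, because $U$ is Hermitian with $\Tr(U\rho_A) = 1 - 2\bra{\psi}\rho_A\ket{\psi}$. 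Call this the \emph{copy test} on register $A$.

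The $\QMA$ tester I would build does the following: it requests a single proof state $\ket{\theta}$ on the two registers $A \otimes B$ that $V$ expects; with probability $\lambda$ it runs the copy test on register $A$, and with probability $1-\lambda$ it runs $V$ on $\ket{\theta}$, where $\lambda \in (0,1)$ is a constant fixed below. This costs $T + 1 = O(T)$ queries. Completeness is immediate: on a \emph{yes} instance the prover sends $\ket{\psi}_A \otimes \ket{\psi}_B$, which passes the copy test with probability $1$ and is accepted by $V$ with probability at least $c$, so the tester accepts with probability at least $\lambda + (1-\lambda)c$.

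For soundness, fix a \emph{no} instance $U = I - 2\ketbra{\psi}{\psi}$ and an arbitrary proof $\ket{\theta}_{AB}$ (pure, without loss of generality), and set $p_A := \bra{\psi}\rho_A\ket{\psi}$ with $\rho_A = \Tr_B \ketbra{\theta}{\theta}$. If $p_A < 1 - \delta$ for a constant $\delta$ fixed below, the copy test rejects with probability at least $\delta$, so the tester accepts with probability at most $\lambda(1-\delta) + (1-\lambda) = 1 - \lambda\delta$. If instead $p_A \geq 1 - \delta$, then $\rho_A$ has fidelity at least $\sqrt{1-\delta}$ with the pure state $\ket{\psi}$; fixing $\ket{\theta}$ as a purification of $\rho_A$ and noting that every purification of $\ketbra{\psi}{\psi}$ into $B$ has the product form $\ket{\psi}_A \otimes \ket{\chi}_B$, Uhlmann's theorem supplies a product state $\ket{\psi}_A \otimes \ket{\chi}_B$ within trace distance $\sqrt{\delta}$ of $\ket{\theta}$. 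Since acceptance probabilities of quantum circuits are $1$-Lipschitz in trace distance and $V$ accepts the product state $\ket{\psi}_A \otimes \ket{\chi}_B$ with probability at most $s$ by soundness of the $\QMAtwo$ tester, $V$ accepts $\ket{\theta}$ with probability at most $s + \sqrt{\delta}$; hence the $\QMA$ tester accepts with probability at most $\lambda + (1-\lambda)(s + \sqrt{\delta})$. Taking $\delta = \Delta^2/4$ (so that $s + \sqrt{\delta} = s + \Delta/2 < c$) and then $\lambda = \tfrac12\bigl(1 + \tfrac{1-c}{1-c+\delta}\bigr) \in (0,1)$ pushes both soundness bounds below the completeness $\lambda + (1-\lambda)c$; a direct computation gives a gap of at least $\min\bigl\{\Delta^2/8,\ \delta\Delta/(4(1-c+\delta))\bigr\}$, a positive constant depending only on $c$ and $s$. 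Amplifying the resulting $\QMA$ tester (which multiplies the query count by a constant) then yields the desired $O(T)$-query $\QMA$ tester.

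The idea that makes the reduction go through — and the only step that is not bookkeeping — is the purity/Uhlmann observation: a proof register that passes the copy test with probability near $1$ is barely entangled with the rest of the proof, so the global state is essentially a product state, and the soundness of the $\QMAtwo$ tester (which only promises to reject \emph{product} proofs) already applies to it. The remaining care is in choosing the two constants $\delta$ and $\lambda$ so that both soundness branches sit strictly below completeness; amplification of $\QMA$ then removes any dependence on the particular gap of $V$. (As elsewhere in the paper, we allow the tester to make controlled queries to the oracle, which for a reflection oracle is standard.)
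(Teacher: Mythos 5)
Your proposal is correct, but it takes a genuinely different route from the paper. The paper's verifier applies the (one-query) membership test to \emph{both} proof registers and, crucially, uses that for a one-dimensional subspace this measurement exactly collapses any entangled proof onto $\ket{\psi}^{\otimes 2}$: writing $\ket{\Phi} = c_0\ket{\psi}^{\otimes 2} + c_1\ket{\xi}$, the post-measurement state conditioned on both tests passing is exactly the product state $\ket{\psi}^{\otimes 2}$, so running $V$ afterwards accepts a \emph{no} instance with probability $|c_0|^2 s \leq s$ — completeness and soundness are preserved exactly, with $T+2$ queries and no tuning of constants. You instead never touch the proof that $V$ sees: you randomly either run a single-register copy test or run $V$ on the raw proof, and argue soundness by a dichotomy on $p_A = \bra{\psi}\rho_A\ket{\psi}$, using Uhlmann's theorem to convert ``nearly pure marginal'' into ``$\sqrt{\delta}$-close to a product state'' and then invoking the product-state soundness of $V$ via the $1$-Lipschitz property of acceptance probabilities. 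Your constant bookkeeping checks out (the gaps $\Delta^2/8$ and $\delta\Delta/(4(1-c+\delta))$ are as you compute), and since the paper allows arbitrary constant thresholds $a>b$, the final amplification step is not even needed. What each approach buys: the paper's projection argument is exact, shorter, and keeps the original gap; yours makes explicit the robust statement that a proof register passing the copy test with probability near $1$ is necessarily nearly unentangled from the rest, at the cost of an approximation loss and the choice of $\delta,\lambda$. Both rely essentially on the subspace being one-dimensional (exact collapse in the paper, near-purity $\Rightarrow$ near-product in yours), which is exactly the feature that fails for higher-dimensional subspaces.
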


Let $\cal{P}$ denote the property testing problem where in the \emph{yes} case the unitary encodes a product state $\ket{\psi} = \ket{\varphi} \otimes \ket{\xi}$ and in the \emph{no} case it encodes an entangled state. As we can construct a $\QMAtwo$ verifier for $\mathcal{P}$ satisfying the conditions of \Cref{thm:one_dimensional_property_qma}, we immediately get that there is a $\QMA$ tester that decides $\mathcal{P}.$

Hence, this theorem is saying that property testing questions related to quantum \textit{states} are insufficient to give an oracle separation between $\QMA$ and $\QMAtwo$. On the other hand, in the higher dimensional setting we obtain a different result. We analyze the behavior of product test verifier in the $\QMA$ setting. This means that instead of getting a proof state that is promised to be unentangled across a bipartition, the product test verifier receives a single pure state that may be entangled everywhere. We show that the product test verifier loses its soundness in the $\QMA$ setting:

\begin{theorem}[Informal Version of \Cref{thm:counterexample}]
There exists a $6$-dimensional completely entangled subspace $S$ and an entangled proof state $\ket{\theta}$ that the product test verifier accepts with probability $1$ (even though it rejects all product proof states $\ket{\psi} \otimes \ket{\phi}$ with high probability).
\end{theorem}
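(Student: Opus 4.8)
The plan is to construct an explicit $6$-dimensional completely entangled subspace $S \subseteq \C^3 \otimes \C^3$ together with an entangled ``fake proof'' state $\ket{\theta}$ living in a larger space that fools the product test verifier. Recall that the product test verifier, on input $U = I - 2\Pi_S$, expects a proof of the form $\ket{\psi}^{\otimes k}$ (for whatever $k \geq 2$ is used), applies the $k$-fold product test across the appropriate bipartitions, and — using the query to $U$ — also checks that the (first copy of the) proof state lies in $S$. For the \emph{yes} case the honest proof $\ket{\psi}^{\otimes k}$ with $\ket{\psi} \in S$ close to a product state passes both checks; soundness against genuine product proofs in the \emph{no} case is exactly the completely-entangled property of $S$ combined with the product test guarantee. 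The point of the theorem is that if we drop the promise that the proof is unentangled across the copy-register bipartition, the verifier can be tricked.

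The first step is to choose $S$ concretely. A natural candidate is the $6$-dimensional completely entangled subspace of $\C^3 \otimes \C^3$ obtained as the span of vectors of the form $\ket{i} \otimes \ket{j}$ supplemented by suitable combinations, or — cleaner — the orthogonal complement of a $3$-dimensional subspace spanned by three product-ish states chosen so that the complement contains no product state; constructions of antisymmetric-type or ``generic'' completely entangled subspaces of the maximal dimension $(d-1)^2 = 4$ for $\C^3\otimes\C^3$ fall short of $6$, so one instead takes $S$ inside $\C^3\otimes\C^2$ or uses a different factorization — I would pin down the exact dimensions in \Cref{thm:counterexample} so that $6$ is the relevant subspace dimension and the ambient bipartite space has the product test applied to it. The key structural feature I want is that the \emph{completely entangled} projector $\Pi_S$ nonetheless admits a vector $\ket{\theta}$ in the \emph{multi-copy} space that (i) is supported, in its first-copy marginal, entirely on $S$, so it passes the $U$-query check with probability $1$, and (ii) passes the product (swap/symmetrization) test with probability $1$ despite not being a genuine tensor power. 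Condition (ii) is achievable because the product test only certifies closeness to a product state \emph{when the input is a tensor power of a single state}; a carefully entangled state across the copies — for instance a symmetric state that is a superposition of $\ket{\psi_1}^{\otimes k}, \ket{\psi_2}^{\otimes k},\dots$ for various product $\ket{\psi_i}$ — can be engineered to lie in the symmetric subspace and pass all the swap tests while its first marginal is a mixture living in the completely entangled $S$.

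Concretely, the construction I would carry out: pick product states $\ket{\psi_1},\dots,\ket{\psi_m}$ in the bipartite space such that none of them lies in $S$, but such that an appropriate superposition $\ket{\theta} = \sum_i c_i \ket{\psi_i}^{\otimes k}$ (or a limit/derivative thereof in the coincidence case) has its first-copy reduced state supported on $S$. The reduced state of $\ket{\theta}$ on the first copy is $\rho = \sum_{i,j} c_i \bar c_j \braket{\psi_j | \psi_i}^{k-1} \ketbra{\psi_i}{\psi_j}$; imposing $\Pi_S^\perp \rho \Pi_S^\perp = 0$ is a finite system of linear constraints on the $c_i$, and for $k$ and $m$ chosen large enough relative to $\dim S^\perp$ this system has a nonzero solution by dimension counting. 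Then $\ket\theta$, being in the symmetric subspace (each summand is a tensor power), passes every pairwise swap test with certainty, and the first marginal being in $S$ makes the reflection-oracle check pass with certainty, so the verifier accepts with probability $1$; meanwhile the completely-entangled property of $S$ plus the $k\geq 2$ product test analysis of~\cite{soleimanifar2022testing} guarantees genuine product proofs are rejected with high probability, giving the parenthetical claim.

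The main obstacle I anticipate is the simultaneous satisfaction of the two requirements on $\ket{\theta}$: the marginal-in-$S$ constraint pushes toward using product states $\ket{\psi_i}$ that are ``spread out'' relative to $S$, while passing the product test for free requires $\ket\theta$ to be exactly a combination of tensor powers (not merely symmetric), and these tensor-power summands individually have product marginals, so it is the \emph{interference} between them that must conspire to land the mixed marginal inside the completely entangled $S$ — this is a delicate linear-algebra balancing act, and verifying that a small explicit dimension like $6$ actually suffices (rather than needing the subspace to be larger) will require either a clever closed-form choice of the $\ket{\psi_i}$ and $c_i$ or a careful parameter count showing the constraint system is underdetermined. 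A secondary subtlety is handling the general $k$ version of the product test verifier (the one whose honest \emph{yes}-proof is $\ket{\psi}^{\otimes k}$): I would either fix $k=2$ for the counterexample and note it extends, or track the $\braket{\psi_j|\psi_i}^{k-1}$ Gram factors through the argument, which only helps (larger $k$ makes off-diagonal terms smaller, easing the marginal constraint) at the cost of slightly more bookkeeping.
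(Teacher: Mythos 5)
There is a genuine gap, in two places. First, you never actually produce the $6$-dimensional completely entangled subspace. The maximal dimension of a completely entangled subspace of $\C^{d_1}\otimes\C^{d_2}$ is $(d_1-1)(d_2-1)$, so $\C^3\otimes\C^3$ caps out at $4$ and $\C^3\otimes\C^2$ at $2$; your fallback of ``pinning down the exact dimensions'' is precisely the missing content. The paper's $S$ lives in $\C^d\otimes\C^d$ with $d\geq 4$: it is the span of the six symmetrized states $(\ket{ij}+\ket{ji})/\sqrt{2}$ on four levels $u,v,w,x$, and the quantitative fact that every vector in $S$ has squared overlap at most $3/4$ with any product state (needed both for ``completely entangled'' and for the parenthetical claim that product proofs are rejected with constant probability, via the product-test soundness bound) is proved by a short Cauchy--Schwarz argument exploiting that $S$ lies inside the symmetric subspace. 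Nothing in your write-up substitutes for this step.

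Second, your existence argument for the fooling state does not work as stated. The ansatz $\ket{\theta}=\sum_i c_i\ket{\psi_i}^{\otimes k}$ with product $\ket{\psi_i}$ is fine (for $k=2$ it is even fully general: the states accepted by both swap tests with certainty form $\mathrm{Sym}^2(\C^d_A)\otimes \mathrm{Sym}^2(\C^d_B)$ after regrouping registers, which is spanned by such tensor squares). But the probability-$1$ membership condition reads $\sum_i c_i\,(\Pi_{S^\perp}\ket{\psi_i})\otimes\ket{\psi_i}^{\otimes(k-1)}=0$, where $\Pi_{S^\perp}$ projects onto $S^\perp$, and applying an arbitrary functional $\bra{w}$ to the first factor shows any nonzero solution forces the vectors $\ket{\psi_i}^{\otimes(k-1)}$ (over the $i$ with $c_i\neq 0$) to be linearly \emph{dependent}; otherwise $c_i\,\Pi_{S^\perp}\ket{\psi_i}=0$, i.e.\ $\ket{\psi_i}\in S$, impossible for a product state in a completely entangled $S$. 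So for generic choices of the $\ket{\psi_i}$ the only solution is $c=0$: the system is not ``underdetermined by dimension counting'' (the number of scalar constraints scales like $\dim S^\perp\cdot d^{2(k-1)}$), and increasing $k$ makes matters strictly harder, not easier as you claim, since tensor powers become more independent. The entire theorem is the explicit structured choice you defer as a ``delicate balancing act'': the paper takes $\ket{\theta}=\ket{\psi_{uvwx}}$, the normalized sum over the three perfect matchings $\{ij\},\{kl\}$ of $\{u,v,w,x\}$ of $(\ket{ij}+\ket{ji})\otimes(\ket{kl}+\ket{lk})$, symmetrized over the two copies; this state is invariant under all permutations of the four registers (so both swap tests pass with certainty) and manifestly lies in $S\otimes S$ (so both membership queries pass with certainty), which finishes the argument once $S$ is fixed.
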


The construction of the ``fooling'' subspace $S$ and the proof state $\ket{\theta}$ is completely explicit and combinatorial; we believe it can suggest how more general $\QMA$ verifiers (beyond the product test) can be fooled, which would give insight towards proving \Cref{conj:qmatwo}. Indeed in \Cref{sec:product_test_witnesses}, we characterize all states that pass the product test with probability one, which enables the construction of other ``fooling'' subspaces for the product test.

\paragraph{Average Case Problems.} Finally, we also propose two \emph{average case} variants of the Entangled Subspace problem, in which the task is to distinguish between two distributions over unitaries $U$. Let $U$ be a Haar-random matrix on $\C^d \otimes \C^d$. A Haar-random subspace of dimension $s$ is the image of $UP_S$, where $P_S$ is a projector onto any fixed subspace $S \subseteq \C^d \otimes \C^d$ of dimension $s$.

\begin{restatable}[Planted Product State Problem]{definition}{plantedproductstate}
Let $0 < s < d^2$ denote an integer parameter.
Consider the following two distributions over subspaces $S$ of $\C^d \otimes \C^d$:

\begin{itemize}
    \item {\bf No planted state}: $S$ is a Haar-random subspace of dimension $s$. 
    
    \item {\bf Has planted state}:  $S$ is an $(s+1)$-dimensional subspace chosen by taking the span of a Haar-random $s$-dimensional subspace with a product state $\ket{\psi}\otimes \ket{\phi}$ for Haar-random $\ket{\psi},\ket{\phi}$.
\end{itemize}

The Planted Product State problem is to distinguish, given oracle access to a unitary $U = I - 2\Pi$ encoding a subspace $S$, whether $S$ was sampled from the \textbf{No planted state} distribution (\emph{no} case) or the \textbf{Has planted state}  distribution (\emph{yes} case), promised that one is the case.
\end{restatable}

\begin{restatable}[Restricted Dimension Counting Problem]{definition}{restricteddimcount}
Let $0 < t \leq d$ and $0 < r \leq t^2$ denote integer parameters. Consider the following distribution, parameterized by $(t,r)$, over subspaces $S \subseteq \C^d \otimes \C^d$:
\begin{itemize}
    \item Sample Haar-random $t$-dimensional subspaces $R,Q \subseteq \mathbb{C}^d$.
    \item Sample a Haar-random $r$-dimensional subspace of $S \subseteq R \otimes Q$.
\end{itemize}
Let $0 < C_1 < C_2 < 1$ denote constants. The Restricted Dimension Counting problem is to decide, given query access to a unitary $U = I - 2\Pi$ encoding a subspace $S$, whether $S$ was sampled from either the $(t,C_1 t^2)$ distribution or $(t,C_2 t^2)$ distribution, promised that one is the case. 
\end{restatable}

The relationship between these two average case problems and the Entangled Subspace problem is captured by the following propositions.

\begin{restatable}{proposition}{plantedyesno}
If $S$ is sampled from the \textbf{Has planted state} distribution of the Planted Product State problem, then it is a \emph{yes} instance of the Entangled Subspace problem. If $S$ is sampled from the \textbf{No planted state} distribution with $s = Cd^2$ for some sufficiently small constant $C > 0$, then it is a \emph{no} instance with overwhelming probability.
\label{prop:plantedyesno}
\end{restatable}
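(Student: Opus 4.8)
The first claim is immediate from the definitions. If $S$ is drawn from the \textbf{Has planted state} distribution, then by construction $S$ contains a product state $\ket{\psi}\otimes\ket{\phi}$, which is at trace distance $0 \le a$ from a product state; hence $S$ is a \emph{yes} instance of the $(a,b)$-Entangled Subspace problem with certainty, for every $a \ge 0$.

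The second claim reduces to the statement that a Haar-random subspace $S\subseteq\C^d\otimes\C^d$ of dimension $s=Cd^2$ is $b$-completely entangled with overwhelming probability, provided $C$ is a small enough constant depending only on $b$. For unit vectors $\ket\theta,\ket\chi$ the pure-state trace distance is a strictly decreasing function of $|\braket{\theta|\chi}|$ which vanishes iff $|\braket{\theta|\chi}|=1$; since $\max_{\ket\theta\in S}|\braket{\theta|\chi}| = \|\Pi\ket\chi\|$, where $\Pi$ is the projector onto $S$, the subspace $S$ is $b$-completely entangled exactly when $\max_{\text{product }\ket\chi}\|\Pi\ket\chi\|^2 \le \beta$ for a suitable constant $\beta=\beta(b)<1$. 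Following the problem's parametrization, I would write $\Pi = U P_0 U^*$ for a fixed rank-$s$ projector $P_0$ and a Haar-random $U\in\unitary(d^2)$, so that $\|\Pi\ket\chi\|^2 = \|P_0 U^*\ket\chi\|^2$, whose expectation over $U$ is $\Tr(P_0)/d^2 = C$.

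The plan for the rest is the net argument of Hayden, Leung, and Winter~\cite{Hayden_2006}, specialized to the constants at hand. First, for a \emph{fixed} product state $\ket\chi$ the map $U\mapsto\|P_0 U^*\ket\chi\|$ is $1$-Lipschitz on $\unitary(d^2)$ with mean at most $\sqrt{C}$, so by concentration of measure for Lipschitz functions on the unitary group, $\Pr_U[\,\|P_0 U^*\ket\chi\|^2 \ge C+\delta\,]\le \exp(-\Omega(d^2\delta^2))$ for every constant $\delta>0$. Second, I would take an $\epsilon$-net $\mathcal N$ of the set of product states: this set is the Segre image of $\mathbb{CP}^{d-1}\times\mathbb{CP}^{d-1}$, a manifold of real dimension $O(d)$, so (building $\mathcal N$ as a product of nets for the two factors) one gets $|\mathcal N|\le \exp(O(d\log(1/\epsilon)))$. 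Third, a union bound over $\mathcal N$ — in which the per-point failure probability $\exp(-\Omega(d^2\delta^2))$ dominates $|\mathcal N|$ for all large $d$ and any fixed $\delta,\epsilon$ — shows that with probability $1-\exp(-\Omega(d^2))$ every $\ket{\chi'}\in\mathcal N$ has $\|\Pi\ket{\chi'}\|^2\le C+\delta$. Finally, any product state $\ket\chi$ lies within $\epsilon$ of some $\ket{\chi'}\in\mathcal N$, so $\|\Pi\ket\chi\|\le\|\Pi\ket{\chi'}\|+\epsilon\le\sqrt{C+\delta}+\epsilon$ and hence $\max_{\text{product}}\|\Pi\ket\chi\|^2\le(\sqrt{C+\delta}+\epsilon)^2$; choosing, in this order, $\beta=\beta(b)<1$, then $C$, then $\delta$, then $\epsilon$ small enough (all functions of $b$ alone) makes $(\sqrt{C+\delta}+\epsilon)^2\le\beta$, which finishes the proof.

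I do not expect a genuine obstacle here; the only points needing care are bookkeeping ones: confirming the Lipschitz constant and the mean $\sqrt{s/d^2}$ of $\|P_0 U^*\ket\chi\|$ for Haar-random projections, bounding the covering number of the product-state manifold, and checking that $\epsilon$-closeness of the two factors implies $O(\epsilon)$-closeness of the product states. The one thing that must be arranged correctly is the order of quantifiers — fix $b$, then the geometric constants, then send $d\to\infty$ — so that the $\exp(O(d))$-size net is beaten by the $\exp(-\Omega(d^2))$ tail bound. Alternatively, one can cite \cite{Hayden_2006} essentially verbatim: their generic-entanglement theorem already states that a Haar-random subspace of $\C^d\otimes\C^d$ of dimension $cd^2$ is, except with probability $\exp(-\Omega(d^2))$, completely entangled in a sense equivalent to $b$-complete entanglement for an appropriate constant $b$.
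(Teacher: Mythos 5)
Your proposal is correct, but it takes a genuinely different route from the paper. The paper proves the \emph{no} case by invoking its random-subspace theorem (\Cref{lem:random_entanglement}, part 1), which follows Hayden--Leung--Winter: one puts an $\epsilon$-net of size $(5/\epsilon)^{2s} = \exp(O(d^2))$ on the random subspace $S$ itself, uses L\'evy's lemma to show every net point has purity $\Tr(\rho^2) \leq \delta$, and then converts the purity bound into a bound on the overlap with the closest product state via $\omega^2 \leq \Tr(\rho^2) \leq \omega$ (\Cref{lem:entanglement_schmidt}); the delicate balancing of $s \leq \frac{C\delta^2}{1024\log(40/\delta)}d^2$ against the $\exp(-\Omega(d^2\delta^2))$ tail is exactly the quantifier bookkeeping you flag. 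You instead fix a product state $\ket{\chi}$, observe that $\|\Pi\ket{\chi}\|^2$ concentrates around $s/d^2 = C$ (since $U^*\ket{\chi}$ is a Haar-random pure state, this is again L\'evy with a $1$-Lipschitz function), and union bound over an $\epsilon$-net of the product-state manifold of size only $\exp(O(d\log(1/\epsilon)))$, using $\sup_{\theta\in S,\,\chi\text{ product}}|\braket{\theta|\chi}|^2 = \sup_{\chi}\|\Pi\ket{\chi}\|^2 \leq 1-b^2$ as the definition of $b$-complete entanglement. Your version buys a much more comfortable union bound (an $\exp(O(d))$-size net against an $\exp(-\Omega(d^2))$ tail) and avoids the purity-to-overlap translation entirely, directly controlling the relevant quantity; what the paper's route buys is reuse -- the same random-subspace theorem supplies both parts needed for \Cref{prop:restrictedyesno} (smallness of a random subspace's product overlap \emph{and} existence of a near-product state when the dimension is large), so a single lemma serves all the average-case reductions. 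Your handling of the \emph{yes} case matches the paper's one-line observation.
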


\begin{restatable}{proposition}{restrictedyesno}
There exist constants $0 < C_1 < C_2 < 1$ such that if $S$ is sampled from the $(t,C_1 t^2)$ distribution from the Restricted Dimension Counting problem, it is a \emph{no} instance of the Entangled Subspace problem with overwhelming probability. If it is sampled from the $(t,C_2 t^2)$ distribution, then it is a \emph{yes} instance with overwhelming probability.
\label{prop:restrictedyesno}
\end{restatable}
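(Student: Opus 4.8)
The plan is to condition on the ambient random subspaces $R,Q\subseteq\C^d$ and thereby reduce the proposition to a statement about a single Haar-random subspace of $\C^t\otimes\C^t$. Once $R,Q$ are fixed, $S$ is a Haar-random $r$-dimensional subspace of the $t^2$-dimensional space $R\otimes Q$, which I identify with $\C^t\otimes\C^t$ via fixed isometries $R\cong\C^t$, $Q\cong\C^t$. The first step is to observe that the Entangled Subspace problem does not see this embedding: for $\ket\theta\in R\otimes Q$ and any product state $\ket a\otimes\ket b\in\C^d\otimes\C^d$ one has $\braket{\theta|a\otimes b}=\braket{\theta|(P_Ra)\otimes(P_Qb)}$, so replacing $\ket a,\ket b$ by their normalized projections onto $R,Q$ only increases $|\braket{\theta|a\otimes b}|$; hence the top Schmidt eigenvalue $\lambda_1(\ket\theta):=\max_{\ket v\text{ product}}|\braket{\theta|v}|^2$ is the same computed in $R\otimes Q$ or in $\C^d\otimes\C^d$, and an optimal product state may be taken in $R\otimes Q$. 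Using that the trace distance from a pure state $\ket\theta$ to the (pure) product states equals $\sqrt{1-\lambda_1(\ket\theta)}$, and that $\sup_{\ket\theta\in S}\lambda_1(\ket\theta)=\sup_{\ket v\text{ product}}\|P_S\ket v\|^2$ (sup over unit vectors), it is enough to control $\sup_{\ket v\text{ product in }\C^t\otimes\C^t}\|P_S\ket v\|^2$ for Haar-random $S$; in particular the randomness of $R,Q$ is irrelevant to this proposition.

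For the \emph{no} case, with $r=C_1t^2$, I would run the net-and-concentration argument of~\cite{Hayden_2006}. For a fixed unit vector $\ket v$, rotation invariance gives $\|P_S\ket v\|^2\overset{d}{=}\sum_{i\le r}|w_i|^2$ for a Haar-random unit $\ket w\in\C^{t^2}$, an $O(1)$-Lipschitz function of $\ket w$ with mean $r/t^2=C_1$, so Levy's lemma yields $\Pr[\,\|P_S\ket v\|^2>C_1+\delta\,]\le\exp(-\Omega(t^2\delta^2))$. Since the (pure) product states in $\C^t\otimes\C^t$ form a variety of real dimension $O(t)$, they admit an $\epsilon$-net $\cal{N}$ with $\log|\cal{N}|=O(t\log(1/\epsilon))$; union bounding over $\cal{N}$ shows that with probability $1-\exp(O(t\log(1/\epsilon))-\Omega(t^2\delta^2))=1-e^{-\Omega(t^2)}$ every $\ket v\in\cal{N}$ obeys $\|P_S\ket v\|^2\le C_1+\delta$, and a one-line Lipschitz estimate ($\|P_S\ket v\|\le\|P_S\ket{v'}\|+\|v-v'\|$ for $\ket{v'}$ a near net point) upgrades this to $\sup_{\ket v\text{ product}}\|P_S\ket v\|^2\le C_1+\delta'$ for a small constant $\delta'=\delta'(\delta,\epsilon)$. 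Hence every $\ket\theta\in S$ has $\lambda_1(\ket\theta)\le C_1+\delta'$, i.e.\ lies at trace distance $\ge\sqrt{1-C_1-\delta'}$ from every product state, so $S$ is $b$-completely entangled provided $C_1\le1-b^2-\delta'$.

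For the \emph{yes} case, with $r=C_2t^2$, no net is needed: fix any single product state $\ket v\in R\otimes Q$ and use the same concentration bound in the opposite direction, $\Pr[\,\|P_S\ket v\|^2<C_2-\delta\,]\le e^{-\Omega(t^2\delta^2)}$. On that event put $\ket w=P_S\ket v/\|P_S\ket v\|\in S$; then $|\braket{w|v}|^2=\|P_S\ket v\|^2\ge C_2-\delta$, and since $\ket v$ is a product state this forces $\lambda_1(\ket w)\ge C_2-\delta$, so $\ket w\in S$ is within trace distance $\sqrt{1-C_2+\delta}$ of a product state, making $S$ a \emph{yes} instance as soon as $C_2\ge1-a^2+\delta$. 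To conclude, choose $\delta,\epsilon$ small enough that $\delta'<\tfrac12(b^2-a^2)$ (possible since $a<b$) and set $C_1=1-b^2-\delta'$, $C_2=1-a^2+\delta$; then $0<C_1<C_2<1$ (here $0<a<b<1$ is used — and $a>0$ is genuinely needed, since if $a=0$ the yes case demands an exact product state in $S$, which by a dimension count over the Segre variety requires $r\ge t^2-2t+2$ and hence no constant $C_2<1$), and both events hold with overwhelming probability.

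I expect the only real work to be the \emph{no}-case net bound: one must check that the metric entropy of the product-state variety in $\C^t\otimes\C^t$ is $O(t\log(1/\epsilon))$ while the Levy tail is $e^{-\Omega(t^2)}$, so the union bound survives comfortably; the reduction to $\C^t\otimes\C^t$, the pure-state trace-distance identity, and the one-vector projection argument for the \emph{yes} case are all routine.
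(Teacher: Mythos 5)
Your proposal is correct, but it takes a somewhat different route from the paper. The paper's proof is a one-line application of its already-established random-subspace result (\Cref{lem:random_entanglement}): the \emph{no} case there is proved by taking an $\epsilon$-net over the random subspace $S$ itself (size $\exp(O(s))$) and bounding the \emph{purity} of every state in $S$, then converting to product overlap via $\omega^2 \le \Tr(\rho^2)$ (\Cref{lem:entanglement_schmidt}); this forces the constant $C_1$ to be quite small relative to $(1-b^2)^2$. You instead net over the Segre variety of product states in $\C^t \otimes \C^t$, whose metric entropy is only $O(t\log(1/\epsilon))$, and bound $\sup_{\ket{v}\,\mathrm{product}}\|P_S\ket{v}\|^2$ directly; this avoids the purity proxy and lets you take any constant $C_1 < 1-b^2$, which is quantitatively stronger (your \emph{yes}-case argument, a single fixed product vector plus L\'evy concentration, coincides with the paper's). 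Two further points in your write-up are genuinely valuable and are left implicit in the paper's terse proof: (i) the reduction from $\C^d\otimes\C^d$ to $R\otimes Q \cong \C^t\otimes\C^t$, in particular your observation that for $\ket{\theta}\in R\otimes Q$ the closest product state in the ambient space may be taken inside $R\otimes Q$ (via $\braket{\theta|a\otimes b}=\braket{\theta|(P_R a)\otimes(P_Q b)}$), which is exactly what is needed for the \emph{no} case; and (ii) the observation that $a>0$ is required for a constant $C_2<1$ to exist — indeed the paper's own choice $C_2 = 2\sqrt{1-a^2}$ from \Cref{lem:random_entanglement}(2) is only below $1$ for rather large $a$, so your treatment actually covers a wider parameter range than the paper's as written.
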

These two propositions are proved using methods from random matrix theory. \Cref{prop:entangledsubspaceqmatwo} in turn implies that the Planted Product State and Restricted Dimension Counting problems can be solved by a $\QMAtwo$ tester with overwhelming probability. We further conjecture that in fact these two problems cannot be solved in polynomial time by a $\QMA$ tester. This conjecture would clearly also imply an oracle separation between $\QMA$ and $\QMAtwo$. 

\begin{conjecture}
Any $\QMA$ tester that solves the Planted Product State or Restricted Dimension Counting problems with constant probability either requires a proof state of super-polynomial size, or make super-polynomially many queries to the oracle.
\label{conj:qma_lower_bound_pps}
\end{conjecture}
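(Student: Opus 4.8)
The natural plan is to mount the generalized polynomial method of \Cref{prop:basic-polynomial} and \Cref{prop:basic-polynomial-symmetrized} in the $\QMA$ setting, as in the proofs of \Cref{thm:recurrence_qma} and \Cref{thm:entanglement_entropy_bound_qma}. Both the Planted Product State and the Restricted Dimension Counting problems are $\LU$-invariant average-case properties on self-adjoint reflection oracles $U = I - 2\Pi$, so from a $T$-query $\QMA$ tester using an $m$-qubit proof one would extract a bounded, self-adjoint, $\LU$-invariant polynomial $q$ of degree $\poly(T,m)$ in the entries of $\Pi$ --- for instance $q = \Tr(M_U^{m+1})$, with $M_U \succeq 0$ the acceptance observable induced on the proof register (whose entries are degree-$2T$ polynomials in the entries of $U$), then averaged over the Haar measure of $\LU(d,d)$ --- such that $\E_{\mathrm{yes}}[q]$ and $\E_{\mathrm{no}}[q]$ are separated by the margin (at most a constant, and degrading as the proof grows) the $\QMA$ polynomial method produces. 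Invoking the Procesi--Brauer description of local-unitary invariants, $q$ is a linear combination of bipartite moments $\Tr[(P_\sigma \otimes P_\tau)\Pi^{\otimes k}]$ over permutation operators acting on $k = \poly(T,m)$ copies of the two tensor factors, which one would then evaluate under the two subspace ensembles using the Weingarten calculus, building on the random-matrix estimates already established for \Cref{prop:plantedyesno} and \Cref{prop:restrictedyesno}. If one could show that no bounded $\LU$-invariant polynomial of degree $\poly\log d$ separates the two ensembles in expectation by more than that margin, the conjecture --- and hence an oracle separation of $\QMAtwo$ from $\QMA$ --- would follow.

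The central obstacle, and the reason \Cref{conj:qma_lower_bound_pps} remains out of reach, is that low-degree local-unitary invariants \emph{do} separate the two ensembles in expectation. The subspace dimension $\Tr\Pi$ is itself a degree-one $\LU$-invariant, and it takes different values on the two ensembles: $s$ versus $s+1$ for the Planted Product State problem, and $C_1 t^2$ versus $C_2 t^2$ for Restricted Dimension Counting. After normalizing to a polynomial bounded in $[0,1]$, the map $\Pi \mapsto \Tr\Pi/d^2$ therefore distinguishes the two ensembles in expectation with advantage $\Theta(1)$ for Restricted Dimension Counting (when $t = \Theta(d)$) and $\poly(1/d)$ for Planted Product State --- in either case above the margin the $\QMA$ polynomial method can hope to certify. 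More refined invariants such as the partial-transpose purity $\Tr[(\Pi^{T_A})^2]$ likewise detect a planted product state with $\poly(1/d)$ advantage, since the extra rank-one term in $\Tr[(\Pi')^{\otimes 2}]$ contributes $\langle \psi\phi\psi\phi|\,\mathrm{swap}_A\,|\psi\phi\psi\phi\rangle = 1$, which does not cancel. This is the average-case analogue of the limitation the paper already notes for the Entanglement Entropy problem: the $\QMA$ lower bound of \Cref{thm:entanglement_entropy_bound_qma} is meaningful only when the entropy scale grows with $\log d$, whereas the distinguishing tasks here are, being solvable by a $\poly\log d$-size $\QMAtwo$ verifier (\Cref{prop:entangledsubspaceqmatwo}), necessarily of $O(1)$ scale. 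Consequently the polynomial method, applied in the usual way, cannot prove \Cref{conj:qma_lower_bound_pps}.

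To get past this, a proof would have to stop treating $q$ as an arbitrary bounded low-degree invariant and exploit that it arises as an $(m+1)$-st power-trace $\Tr(M_U^{m+1})$ of an operator $M_U$ of Hilbert-space dimension only $2^m$ (with $m = \poly\log d$) whose entries have degree only $2T = \poly\log d$. The ``useless'' low-degree invariants above --- the subspace dimension, the partial-transpose purity, and their iterates --- ought not to be realizable, even approximately, as such power-traces of a small, low-degree, contractive operator: if $\Tr(M_U^{m+1})$ were correlated with $\Tr\Pi$, then the verifier's optimal acceptance probability $\lambda_{\max}(M_U)$ would be, up to an $O(1)$ factor, a monotone function of the subspace dimension, amounting to a $\poly\log d$-query $\QMA$ procedure that estimates the subspace dimension, contradicting the $\QMA$ lower bound for Approximate Dimension (\Cref{thm:apx_dim_qma}). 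Turning this heuristic into a \emph{structured} polynomial method --- one that certifies not merely $\deg q \le D$ but that $q$ is such a power-trace --- is, in our view, the crux, and we do not know how to do it. A complementary line of attack is to prove a direct $\QMA$ lower bound for a restricted version of the problem, such as the Entangled Subspace problem on two-dimensional subspaces (singled out in the discussion around \Cref{conj:qmatwo}), where the relevant invariant ring is small enough that one can aim to control the full $\|\cdot\|_\infty$-geometry of low-degree local-unitary invariants on the variety of bipartite projectors via a Markov--Bernstein-type inequality, rather than only their $L^2$-moments; even there, though, we expect new estimates in the spirit of Procesi--Brauer and the Weingarten calculus to be needed.
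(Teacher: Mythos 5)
You have not proved the statement, and neither does the paper: \Cref{conj:qma_lower_bound_pps} is explicitly a conjecture, left open. What the paper offers in its place is (i) partial evidence in the form of a $\QCMA$ lower bound for the Planted Product State problem (\Cref{thm:qcma_lower_bound}), proved not by the polynomial method but by the Aaronson--Kuperberg-style hybrid argument with $p$-uniform measures and L\'evy's lemma, and (ii) a barrier discussion (\Cref{sec:invariant_theory_qma}) explaining why the generalized polynomial method is currently stuck: the Procesi--Brauer invariants $\Tr[(R_\sigma \otimes R_\tau)\Pi^{\otimes k}]$ have a usable description only when $\Pi$ is a rank-one projector, and rank-one instances provably cannot separate $\QMA$ from $\QMAtwo$ (\Cref{thm:one_dimensional_property_qma}). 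Your proposal is an honest program-plus-obstacles discussion in the same spirit, so there is no false claim to flag; but as an attempt at the stated result it establishes nothing beyond what the paper already concedes is open, and it omits the one unconditional piece of evidence the paper does prove, the $\QCMA$ bound.

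Two technical caveats on your barrier analysis. First, the observation that $\Tr\Pi$ separates the two ensembles in expectation is not by itself an obstruction: the polynomial-method lower bound has to rule out polynomials achieving the specific margins produced by the guessing lemma ($2^{-m}$ versus $2^{-10m}$, pointwise over the promise sets, not merely in expectation), and the dimension discrepancy in the Planted Product State problem is a single unit out of $d^2$, which one can also remove by redefining the null ensemble to have dimension $s+1$; so the real barrier is the one the paper names (no handle on the higher-rank $\LU$-invariants), not the existence of $\Tr\Pi$. Second, your parenthetical regime $t=\Theta(d)$ for Restricted Dimension Counting is one in which the problem is already solvable with $O(d/t)=O(1)$ queries and no proof via \Cref{prop:dim_counting_algorithm}, so the conjecture is only meaningful for small $t$; in that regime the advantage of $\Tr\Pi/d^2$ is again polynomially small and your ``above the certifiable margin'' claim does not go through as stated. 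Your suggested structured approach (exploiting the power-trace form of the acceptance observable, or attacking two-dimensional subspaces) is reasonable and consistent with the paper's own closing remarks, but it remains a program, not a proof.
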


As evidence towards \Cref{conj:qma_lower_bound_pps}, we show a lower bound against $\QCMA$ testers.

\begin{theorem}[Informal version of \Cref{thm:qcma_lower_bound}]
Any $T$-query quantum algorithm solving the Planted Product State problem with the help of an $m$-bit classical witness must have $m$ or $T$ superpolynomial in $\log d.$
\end{theorem}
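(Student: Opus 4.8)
The plan is to prove the $\QCMA$ lower bound for the Planted Product State problem by adapting the Aaronson--Kuperberg ``guessing lemma'' argument (the same template used in their $\QMA$ vs.\ $\QCMA$ oracle separation~\cite{aaronson2007quantum}) to the unitary property testing setting, combined with the generalized polynomial method of \Cref{prop:basic-polynomial-symmetrized}. The key point is that a classical $m$-bit witness can take at most $2^m$ values, so a $\QCMA$ tester is really a collection of $2^m$ ordinary (witness-free) $T$-query testers, one for each witness string; a ``yes'' instance is accepted iff \emph{some} witness works. I would therefore reduce the $\QCMA$ lower bound to a $\BQP$-style average-case indistinguishability statement: for a random ``no'' instance (Haar-random $s$-dimensional $S$ with $s = C d^2$) and a random ``yes'' instance (the planted version), no single $T$-query algorithm can, except with exponentially small probability over the instance, tell them apart unless $T$ is large. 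Taking a union bound over the $2^m$ fixed witnesses then shows that unless $m$ or $T$ is superpolynomial in $\log d$, the $\QCMA$ tester fails on most planted instances.

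First I would set up the average-case polynomial method: fix a witness string $w$, giving a $T$-query tester $A_w$ whose acceptance probability on oracle $U = I - 2\Pi_S$ is, by \Cref{prop:basic-polynomial}, a self-adjoint degree-$2T$ polynomial $p_w$ in the entries of $U$ and $U^*$. Since the relevant distributions on $S$ are invariant under conjugation by the local unitary group $\LU(d,d)$ (the planted product state $\ket{\psi}\otimes\ket{\phi}$ has Haar-random factors, and a Haar-random subspace is $\unitary(d^2)$-invariant hence $\LU$-invariant), I would average $p_w$ over the group action and reduce, via \Cref{prop:basic-polynomial-symmetrized} and the Procesi--Brauer characterization of $\LU$-invariants already invoked for \Cref{thm:entanglement_entropy_bound}, to a low-degree polynomial in a small number of invariant quantities — concretely, traces of words in $\Pi_S$ and the partial-transpose / swap operators, which for a rank-one planted perturbation reduce to moments that can be controlled. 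The core estimate is then: the expectations of these $\LU$-invariant moments under the planted and unplanted distributions differ only by $O(s/d^2) = O(1)$ in the worst monomial but by a factor that is negligible once one tracks the $1/d$ suppression coming from the single planted product vector against a subspace of dimension $\Theta(d^2)$; more precisely, any degree-$2T$ invariant polynomial has expectation differing by at most $\poly(d)^{O(1)} \cdot d^{-1}$ (or an inverse-superpolynomial quantity when $T = \poly\log d$) between the two ensembles. This is where I would borrow the random-matrix-theory moment computations underlying \Cref{prop:plantedyesno,prop:restrictedyesno}.

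The final assembly: if the $\QCMA$ tester has $2^m$ witnesses and makes $T$ queries, then on a \emph{no} instance every $A_w$ accepts with probability $\le 1/3$, and on a \emph{yes} instance some $A_w$ accepts with probability $\ge 2/3$. By the average-case bound above, for each fixed $w$ the probability (over a planted instance $S$) that $A_w$ accepts with probability $\ge 1/2$ is at most $\exp(-\Omega(\cdot))$ whenever $T$ is sub-polynomial; a union bound over $2^m$ witnesses shows that if $m = \poly\log d$ and $T = \poly\log d$, then with overwhelming probability \emph{no} witness causes acceptance, contradicting soundness-completeness. Hence $m$ or $T$ must be superpolynomial in $\log d$.

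The main obstacle I expect is the average-case invariant-moment estimate: ordinary (worst-case) degree lower bounds via Markov--Bernstein inequalities, as used for \Cref{thm:recurrence} and \Cref{thm:entanglement_entropy_bound}, do not directly give the \emph{exponentially small} distinguishing advantage needed to survive a union bound over $2^m$ witnesses. One needs a genuinely quantitative statement that \emph{every} $\LU$-invariant polynomial of degree $2T$ has nearly identical expectation under the two ensembles, with an error that is inverse-superpolynomial for $T = \poly\log d$ — this requires carefully bounding how a single Haar-random planted product vector perturbs all low-degree traces of $\Pi_S$, and is the technical heart of the argument. A secondary subtlety is ensuring the reduction to $\LU$-invariants is legitimate in the $\QCMA$ setting: one symmetrizes \emph{per fixed witness} (the witness is classical, so this is unproblematic), but one must check that the symmetrized acceptance polynomial still witnesses the completeness/soundness gap after averaging, which it does precisely because the instance distributions — not the individual instances — are $\LU$-invariant.
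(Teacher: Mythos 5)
Your route is genuinely different from the paper's, and as proposed it has two concrete gaps. First, the final assembly does not go through: a union bound over the $2^m$ witnesses needs, for each fixed witness $w$, a \emph{tail} bound of the form $\Pr_{S\ \mathrm{planted}}[A_w \text{ accepts with prob.\ } \geq 1/2] \ll 2^{-m}$, but your core estimate is a comparison of \emph{expectations} of degree-$2T$ invariant polynomials under the two ensembles. Even if every such expectation matched to inverse-superpolynomial accuracy, Markov's inequality would only give a constant bound on that tail probability (the mean acceptance probability on planted instances being $\approx 1/3$ says nothing exponentially small about the fraction of planted instances accepted with probability $1/2$). To survive the union bound you need a concentration ingredient -- e.g.\ a Lipschitz bound on the acceptance probability as a function of the planted state together with L\'evy's lemma -- and once you have that, the invariant-polynomial machinery is doing no work. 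Second, the symmetrization step you lean on is not available: the usable characterization of $\LU$-invariants in the paper (\Cref{thm:one_dim_local_unitaries}, the one behind the entanglement-entropy bound) applies only to rank-one projectors, whereas here $\Pi_S$ projects onto a subspace of dimension $\Theta(d^2)$; the paper explicitly notes in \Cref{sec:invariant_theory_qma} that no good characterization of these invariants is known even for two-dimensional subspaces, and this is precisely why its $\QCMA$ bound does not use the polynomial method.

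For comparison, the paper's proof of \Cref{thm:qcma_lower_bound} is an Aaronson--Kuperberg-style hybrid argument that avoids the union bound altogether. It modifies the problem so the planted state is $\ket{\theta}^{\otimes 2}$ with $\ket{\theta}$ Haar-random, assigns to each $\ket{\theta}$ the best witness, and observes that some witness class has Haar measure at least $2^{-m}$, so the conditional distribution is $2^{-m}$-uniform. Fixing that single witness, the hybrid method bounds $\|\ket{\psi_{t+1}}-\ket{\psi_t}\| \leq 2\sqrt{\Tr(\ketbra{\theta}{\theta}^{\otimes 2}\rho_t)}$, and \Cref{lem:p-uniform-lemma} (L\'evy's lemma for $p$-uniform measures) together with \Cref{lem:haar_symmetric_qcma} shows the conditioning costs only an additive $O(\sqrt{(m+\log d)/d})$, yielding $T \geq \Omega((d/(m+\log d))^{1/4})$. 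If you want to salvage your per-witness-plus-union-bound structure, the missing piece is exactly this kind of measure-concentration statement for a fixed $T$-query tester over the random planted state, not a sharper invariant-moment computation.
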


We believe that these two average case problems are connected to several interesting topics. The first is the $\QMA$ versus $\QMAtwo$ problem, of course. Their formulation also suggests that tools from random matrix theory can be brought to bear to study them. Finally, the Restricted Dimension Counting problem is, as the name suggests, a more structured, special case of the general Approximate Dimension problem, for which we proved a strong $\QMA$ lower bound! Perhaps the techniques used to prove lower bounds on the Approximate Dimension problem can be extended to the Restricted Dimension Counting problem.

\subsection{Related Work}

In quantum query complexity, there have been two main paradigms for proving lower bounds for query complexity, namely the polynomial method \cite{beals2001quantum} and the adversary method \cite{vspalek2005all}. The methods are generally incomparable, as there are problems where one method is able to prove a tight lower bound but the other cannot. For instance, the collison problem \cite{kutin2003quantum} is a case where the polynomial method proves a tight lower bound but the adversary method provably fails to do so. On the other hand, Ambanis \cite{ambainis2006polynomial} constructed an example where the adversary method is provably better than the polynomial method. Furthermore, for evaluation of Boolean functions, the general adversary method characterizes the quantum query complexity up to constant factors \cite{reichardt2011reflections}.

However, in their original form, both methods assume that the quantum oracle encodes a Boolean function $f: \{0,1\}^n \rightarrow \{0,1\}$. A similar generalization of the adversary method to unitary property testing problems was introduced by Belovs \cite{belovs2015variations}. In particular, it was applied to the approximate counting problem \cite{belovs2020tight}, with a symmetrization technique based on the representation theory of the symmetric group.

Next, unitary property testing questions were also considered by Aharanov et al. in \cite{aharonov2022quantum}, who introduced a model called quantum algorithmic measurement and studied the query complexity of problems in this model. Their techniques used to prove lower bounds in their model primarily rely on the combinatorics of Weingarten functions, which we will also discuss in \Cref{subsection:weingarten}.  However, their model allows for interaction between a prover and a verifier, whereas we discuss query lower bounds in the non-interactive setting. Similar sample complexity lower bounds for quantum machine learning problems were proven using Weingarten calculus techniques in the works of \cite{chen2021hierarchy} and \cite{chen2022exponential}.

Next, Kretschmer \cite{Kretschmer2021quantumsupremacy} as well as Copeland and Pommersheim \cite{copeland2021quantum} also studied query complexity problems where the oracle does not necessarily encode a Boolean function. In particular, \cite{Kretschmer2021quantumsupremacy} studied the quantum query complexity of the heavy output generation problem and is motivated by recently quantum supremacy experiments, and \cite{copeland2021quantum} studied problems where the set of possible oracles form a representation of a group. However, there is a substantial difference between lower bound techniques in our works, which relies on linear programming duality in \cite{Kretschmer2021quantumsupremacy} and group character theory in \cite{copeland2021quantum}. 

Finally, we also note that Gur et al. \cite{gur2021sublinear} consider the query complexity of estimating the von Neumann entropy to a multiplicative factor of $\alpha > 1$, promised that the entropy of the given quantum state is not too small. They established a $\Omega(d^{\frac{1}{3 \alpha^2}})$ lower bound on the query complexity by reduction to the collision problem lower bound of \cite{kutin2003quantum}. On the other hand, the strongest lower bound we are able to prove is a $\Omega(d^{\frac{1}{4 \alpha}})$ lower bound by setting parameters $a = \frac{1}{\alpha} \log d$ and $b = \log d$ in the definition of our problem. However, our results are incomparable with their results since the oracle access models we consider are different.

\paragraph{Acknowledgments.} We thank Joshua Grochow for helpful conversations about invariant theory. We thank Shivam Nadimpalli for suggesting the problem about testing $k$-locality of a Hamiltonian. We thank Kunal Marwaha and Gregory Rosenthal for helpful discussions. We thank the reviewers for their feedback. A.S. is supported by an NSERC Canada Graduate Scholarship.  H.Y. is supported by AFOSR award FA9550-21-1-0040, NSF CAREER award CCF-2144219, and the Sloan Foundation.

\section{Preliminaries}
\label{sec:prelims}

\subsection{Testers for Unitary Properties}

Let $\cal{P} = (\cal{P}_{yes},\cal{P}_{no})$ be a $d$-dimensional unitary property. A quantum algorithm is a \emph{tester} for $\cal{P}$ if the following holds:
\begin{enumerate}
    \item If $U \in \cal{P}_{yes}$, then the algorithm makes queries to $U$ and accepts with probability $2/3$.
    \item If $U \in \cal{P}_{no}$, the algorithm makes queries to $U$ and accepts with probability at most $1/3$.
\end{enumerate}
A quantum algorithm is a \emph{$\QMA$ tester} for $\cal{P}$ if the following holds:
\begin{enumerate}
    \item If $U \in \cal{P}_{yes}$, then there exists a \emph{quantum proof state} $\ket{\psi}$ such that the algorithm on input $\ket{\psi}$, makes queries to $U$, and accepts with probability at least $2/3$.
    \item If $U \in \cal{P}_{no}$, then the algorithm given any proof state as input, making queries to $U$, accepts with probability at most $1/3$.
\end{enumerate}
As usual in complexity theory, the probabilities $2/3$ and $1/3$ can be set to any constants $a$ and $b$ as long as $a > b$ without changing any of the arguments that follow.

\subsection{The Guessing Lemma}

A simple but useful way to prove query lower bounds on $\QMA$ testers is to remove the proof state via the ``Guessing Lemma'', used in~\cite[Lemma 5]{aaronson2011impossibility} and ~\cite[Lemma 17]{aaronson2020quantum}:

\begin{lemma}
Suppose there is a $\QMA$ tester for a property $\cal{P}$ that makes $T$ queries and receives an $m$-qubit proof. Then there is a (standard) tester for the $\cal{P}$ that makes $O(mT)$ queries, receives no proof state, and satisfies
\begin{itemize}
    \item For all $U \in \mathcal{P}_{yes}$ the tester accepts with probability at least $2^{-m}$.
    \item For all $U \in \mathcal{P}_{no}$, the tester accepts with probability at most $2^{-10m}$.
\end{itemize}
\label{lem:guessing_lemma}
\end{lemma}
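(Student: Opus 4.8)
The plan is to amplify both the completeness and the soundness of the original $\QMA$ tester using repetition, and then to replace the quantum proof by a maximally mixed state, accepting only if \emph{all} repeated runs accept. First I would boost the gap: by running the given $\QMA$ tester $k = \Theta(m)$ times in parallel (sharing the \emph{same} proof across copies is not necessary; we can instead run sequentially on fresh ancillas, reusing oracle access to $U$) and taking a majority vote, standard Chernoff bounds give a new $\QMA$ tester making $O(mT)$ queries that accepts \emph{yes} instances (with the right proof) with probability $\geq 1 - 2^{-100m}$ and accepts \emph{no} instances (with any proof) with probability $\leq 2^{-100m}$. Call the boosted verifier $V$ and its proof register $R$ of $m' = O(m)$ qubits; we may in fact keep $m' \le 11 m$ by being careful, or simply absorb constants — the cleanest route is to first amplify and then track the proof size, noting amplification by sequential repetition does not increase the proof length at all if the same proof state is reused, but reuse degrades the state; the standard fix (see \cite{marriott2005quantum}) is in-place amplification, which keeps the proof size $m$ while achieving error $2^{-\poly}$. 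I would invoke in-place amplification to get error $2^{-100m}$ with an $m$-qubit proof and $O(mT)$ queries.

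Next I would build the proof-free tester: on input $U$, prepare the maximally mixed state $\rho = 2^{-m} I$ on the $m$-qubit proof register (e.g., by maximally entangling with an ancilla and discarding it), and run $V$ with $\rho$ in place of the proof, accepting iff $V$ accepts. This makes $O(mT)$ queries and no proof. For the analysis: since $\rho = 2^{-m}\sum_{i} \ketbra{\phi_i}{\phi_i}$ for any orthonormal basis $\{\ket{\phi_i}\}$ of the proof space, the acceptance probability is $2^{-m}\sum_i \Pr[V \text{ accepts with proof }\ket{\phi_i}]$. In the \emph{yes} case, there is some proof $\ket{\psi}$ with acceptance probability $\geq 1 - 2^{-100m} \geq 1/2$; extending $\ket{\psi}$ to a basis, that single term alone contributes $\geq 2^{-m}\cdot \tfrac12 \geq 2^{-m-1}$, and with a slightly larger constant in the amplification (or one more round) we get $\geq 2^{-m}$ as claimed. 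In the \emph{no} case, every proof is accepted with probability $\leq 2^{-100m}$, so the average over the $2^m$ basis states is also $\leq 2^{-100m} \leq 2^{-10m}$, giving the stated soundness bound. This is the core of the argument and it is essentially a one-line convexity computation once the amplification is in place.

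I do not expect a serious obstacle here; the only point requiring care is the interaction between \emph{amplification} and \emph{proof size}. Naive parallel repetition with independent proofs would blow the proof register up to $\Theta(m^2)$ qubits, which would then feed back into the $2^{-m}$ bound and ruin it. The clean resolutions are either (i) to use Marriott–Watrous in-place amplification, which achieves exponentially small error \emph{without} increasing the witness size, at the cost of $O(m)$ repetitions and hence $O(mT)$ queries; or (ii) to observe that we only need completeness error $\le 1/2$ and soundness error $\le 2^{-10m}$, and that one-sided amplification of soundness (running $V$ $\Theta(m)$ times and accepting only if all runs accept, which \emph{can} reuse the same proof since a good proof makes each run accept with high probability) suffices — a good proof still passes all $\Theta(m)$ runs with probability $\ge (2/3)^{\Theta(m)}$... which is too small, so option (ii) needs the runs to share the proof coherently, i.e. again in-place amplification. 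So the real content is just: cite in-place amplification, then do the convexity bound. I would write the proof in that order: (1) state in-place amplification to get error $2^{-100m}$, $m$-qubit proof, $O(mT)$ queries; (2) substitute the maximally mixed proof; (3) bound \emph{yes} acceptance below by $2^{-m}$ via the single good-basis-vector term; (4) bound \emph{no} acceptance above by $2^{-10m}$ via averaging.
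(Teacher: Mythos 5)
Your proposal is correct and is essentially the argument the paper relies on: the paper itself does not reprove the lemma but cites Aaronson's guessing lemma~\cite{aaronson2011impossibility,aaronson2020quantum}, whose proof is exactly your route --- Marriott--Watrous in-place amplification~\cite{marriott2005quantum} to error $2^{-\Omega(m)}$ with the witness size kept at $m$ qubits and $O(mT)$ queries, followed by substituting the maximally mixed state on the proof register and the convexity computation over a basis containing the good witness. The one quibble is your \emph{yes}-case constant: the single good-basis-vector term gives acceptance $2^{-m}(1-2^{-100m})$, and no amount of extra amplification (nor ``one more round'') pushes this up to exactly $2^{-m}$ since completeness never exceeds $1$; this slack is inherited from the cited statement and is immaterial in every application, but your proposed fix should be replaced by simply noting the bound holds up to this negligible loss (or stating the lemma with $2^{-m-1}$).
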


\subsection{Approximation Theory and Laurent Polynomials}

When we say that a polynomial is degree-$T$, we mean that it has degree \emph{at most} $T$. We say that a polynomial $p(z_1,\ldots,z_d,z_1^*,\ldots,z_d^*)$ with complex coefficients is \emph{self-adjoint} if $p(z_1, \ldots, z_d, z_1^*, \ldots, z_d^*) = p(z_1, \ldots, z_d, z_1^*, \ldots, z_d^*)^*.$ We say that $p$ is \emph{symmetric} if applying a permutation $\pi: [d] \to [d]$ to the variables $z_i$ and $z_i^*$ leaves $p$ unchanged. 

We record here some useful facts about polynomials, univariate polynomial approximation, and Laurent polynomials. %

\begin{lemma}[Fundamental Theorem of Algebra]
A real-valued degree-$T$ univariate polynomial has at most $T$ zeros.
\label{lem:fta}
\end{lemma}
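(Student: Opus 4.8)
The plan is to prove the statement that a real-valued degree-$T$ univariate polynomial has at most $T$ zeros by induction on the degree $T$, using only the division algorithm for polynomials over a field. First I would handle the base case: a degree-$0$ polynomial is a nonzero constant (if it were the zero polynomial the statement about ``degree $T$'' would be vacuous or interpreted as $-\infty$), hence has no zeros, and $0 \leq 0$. For the inductive step, suppose the claim holds for all polynomials of degree at most $T-1$, and let $p$ have degree exactly $T \geq 1$. If $p$ has no zeros at all we are done since $0 \leq T$. Otherwise, pick a zero $a$ of $p$, i.e. $p(a) = 0$.

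The key step is the factor theorem: applying polynomial long division of $p(x)$ by the linear polynomial $(x-a)$ yields $p(x) = (x-a) q(x) + c$ where $q$ has degree exactly $T-1$ and $c$ is a constant; evaluating at $x=a$ gives $c = p(a) = 0$, so $p(x) = (x-a) q(x)$. Now any zero $b$ of $p$ satisfies $(b-a) q(b) = 0$; since we are working over the reals (an integral domain), either $b = a$ or $q(b) = 0$. By the inductive hypothesis $q$ has at most $T-1$ zeros, so the set of zeros of $p$ is contained in $\{a\} \cup \{\text{zeros of } q\}$, which has at most $1 + (T-1) = T$ elements. This completes the induction.

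I expect the main obstacle to be purely expository rather than mathematical: one must be careful about the degenerate case of the identically-zero polynomial (which has infinitely many zeros but is conventionally assigned degree $-\infty$, so it does not count as a ``degree-$T$'' polynomial for finite $T$), and one must ensure that ``real-valued'' is used only to invoke the integral-domain property (no zero divisors) so that the factorization argument goes through. Everything else — the division algorithm and the factor theorem — is entirely standard, so the write-up should be just a few lines.
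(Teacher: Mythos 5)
Your proof is correct: the induction on the degree via the factor theorem (divide by $(x-a)$, use that $\mathbb{R}$ has no zero divisors) is the standard argument, and your handling of the zero-polynomial degeneracy is the right caveat. The paper does not actually prove this lemma — it is stated as a standard fact from approximation theory preliminaries — so your write-up is, if anything, more detailed than what the paper requires; there is no gap and no discrepancy with the paper's usage.
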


\begin{lemma}[Markov brother's inequality \cite{markov1889question}]
Let $p(x)$ be a degree-$n$ real-valued polynomial. If $|p(x)| \leq H$ on the interval $[a,b],$ then for all $x \in [a,b]$, 
$$|p'(x)| \leq \frac{2H n^2}{b-a}.$$
\label{lem:markov_inequality}
\end{lemma}

\begin{lemma}[Paturi's bound \cite{paturi92degree}]
If $p$ is a degree-$d$ real polynomial satisfying $|p(x)| \leq 1$ for all $|x| \leq 1$, then for all $|x| \leq 1 + \mu$ we have 

$$|p(x)| \leq \exp(2d \sqrt{2 \mu + \mu^2}).$$
\label{lem:paturi}
\end{lemma}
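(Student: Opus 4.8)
The plan is to reduce the bound to the single extremal polynomial, the Chebyshev polynomial $T_d$, and then estimate how fast $T_d$ can grow just outside the interval $[-1,1]$. First I would establish the classical \emph{Chebyshev extremal property}: every real polynomial $p$ of degree at most $d$ with $|p(x)| \le 1$ on $[-1,1]$ satisfies $|p(x)| \le |T_d(x)|$ for all real $x$ with $|x| \ge 1$. Granting this, the lemma follows quickly: for $|x| \le 1$ we have $|p(x)| \le 1 \le \exp(2d\sqrt{2\mu+\mu^2})$ since the exponent is nonnegative, and for $1 \le |x| \le 1+\mu$ we use $|p(x)| \le |T_d(x)| = T_d(|x|) \le T_d(1+\mu)$, where the identity $|T_d(x)| = T_d(|x|)$ for $|x|\ge 1$ and the monotonicity of $T_d$ on $[1,\infty)$ follow from $T_d(-x) = (-1)^d T_d(x)$ together with $T_d(\cosh\theta) = \cosh(d\theta)$.

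To bound $T_d(1+\mu)$ I would write $1+\mu = \cosh\theta$ with $\theta = \mathrm{arccosh}(1+\mu) = \ln\big(1+\mu+\sqrt{2\mu+\mu^2}\big)$, so that $T_d(1+\mu) = \cosh(d\theta) \le e^{d\theta}$. Since $\mu \le \sqrt{2\mu+\mu^2}$ for $\mu \ge 0$, we get $1+\mu+\sqrt{2\mu+\mu^2} \le 1 + 2\sqrt{2\mu+\mu^2} \le \exp\big(2\sqrt{2\mu+\mu^2}\big)$, hence $\theta \le 2\sqrt{2\mu+\mu^2}$ and therefore $T_d(1+\mu) \le \exp(2d\sqrt{2\mu+\mu^2})$, which is the claimed estimate. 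These are all elementary one-line inequalities.

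For the extremal property itself I would argue by contradiction. Suppose $|p(x_0)| > |T_d(x_0)|$ for some $|x_0| > 1$ (the endpoints $\pm 1$ are trivial since $|T_d(\pm 1)| = 1$). Then $|T_d(x_0)| > 1$, so $T_d(x_0) \ne 0$, and setting $c = T_d(x_0)/p(x_0)$ we have $0 < |c| < 1$. The polynomial $q = c\,p - T_d$ has degree at most $d$ and vanishes at $x_0$; at the $d+1$ points $y_k = \cos(k\pi/d)$ it equals $c\,p(y_k) - (-1)^k$, which is strictly negative for $k$ even and strictly positive for $k$ odd because $|c\,p(y_k)| \le |c| < 1$. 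Hence $q$ changes sign across each of the $d$ intervals $(y_{k+1},y_k)$, producing $d$ distinct zeros in $(-1,1)$; together with $x_0 \notin [-1,1]$ this is $d+1$ distinct zeros of a polynomial of degree at most $d$, so by \Cref{lem:fta} we get $q \equiv 0$, i.e.\ $c\,p = T_d$. Evaluating at $x=1$ gives $|c| = 1/|p(1)| \ge 1$, contradicting $|c| < 1$.

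The step I expect to be the main obstacle is this sign-alternation argument: one must keep the inequality $|c| < 1$ \emph{strict} so that each $q(y_k)$ is strictly nonzero and consecutive sign changes genuinely yield distinct interior zeros, and one must dispose of the degenerate possibilities $c = 0$ and $p(x_0) = 0$ (both excluded above). Since this extremal property is entirely standard --- it is the classical Chebyshev/Markov growth estimate, see e.g.\ Rivlin's monograph on Chebyshev polynomials --- an alternative is simply to cite it and present only the short estimate of $T_d(1+\mu)$ above.
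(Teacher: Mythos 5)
Your proposal is correct. Note that the paper does not prove \Cref{lem:paturi} at all: it is imported verbatim from Paturi's paper \cite{paturi92degree} and used as a black box, so there is no internal proof to compare against. What you supply is the standard self-contained argument: the Chebyshev comparison (extremal growth) theorem, proved by the sign-alternation/zero-counting contradiction, followed by the elementary estimate $T_d(1+\mu)=\cosh\bigl(d\,\mathrm{arccosh}(1+\mu)\bigr)\le \exp\bigl(d\ln(1+\mu+\sqrt{2\mu+\mu^2})\bigr)\le \exp\bigl(2d\sqrt{2\mu+\mu^2}\bigr)$. All the steps check out: the strictness $|c|<1$ does give strictly signed values $c\,p(y_k)-(-1)^k$ at the $d+1$ Chebyshev extreme points, the resulting $d$ interior zeros live in disjoint open intervals and are distinct from $x_0\notin[-1,1]$, so \Cref{lem:fta} forces $q\equiv 0$ and the evaluation at $x=1$ yields the contradiction $|c|\ge 1$; and the inequalities $\mu\le\sqrt{2\mu+\mu^2}$ and $1+2s\le e^{2s}$ give the exponent bound. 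One cosmetic remark: since $s^2=2\mu+\mu^2\ge 2\mu$ one even has $1+\mu+s\le e^{s}$, i.e.\ $\mathrm{arccosh}(1+\mu)\le\sqrt{2\mu+\mu^2}$, so the sharper bound $\exp(d\sqrt{2\mu+\mu^2})$ holds; your weaker factor of $2$ matches the statement as the paper uses it, so nothing needs to change. Citing the comparison theorem from Rivlin instead of proving it, as you suggest, would also be perfectly acceptable and closer in spirit to how the paper itself handles this lemma.
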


A Laurent polynomial $p$ in variables $x_1,\ldots,x_k$ is a polynomial in the variables $x_1,\ldots,x_k$ and $x_1^{-1},\ldots,x_k^{-1}$. We say that a univariate Laurent polynomial $p(z)$ is \emph{symmetric} if $p(z) = p(z^{-1})$. The following fact about Laurent polynomials was stated as \cite[Lemma 14]{aaronson2020quantum}.

\begin{lemma}
Suppose $p(z)$ is a symmetric Laurent polynomial. Then there exists a univariate polynomial $q$ such that $p(z) = q(z + \frac{1}{z})$.
\label{lem:symmetric_laurent}
\end{lemma}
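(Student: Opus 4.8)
The plan is to prove this by induction on the \emph{width} of $p$, meaning the largest exponent that appears with a nonzero coefficient. Writing $p(z) = \sum_{k=-n}^{n} a_k z^k$, the hypothesis $p(z) = p(z^{-1})$ forces $a_k = a_{-k}$ for every $k$. In the base case $n = 0$ the polynomial $p$ is the constant $a_0$, and we simply take $q(w) = a_0$.

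For the inductive step I would assume $a_n \neq 0$ for some $n \geq 1$ and that the statement holds for all symmetric Laurent polynomials supported on exponents in $\{-(n-1),\dots,n-1\}$. The crucial computation is the binomial expansion $(z + z^{-1})^n = \sum_{j=0}^{n} \binom{n}{j} z^{n-2j}$, which shows that $(z+z^{-1})^n$ is itself a symmetric Laurent polynomial of width exactly $n$ with coefficient $1$ on both $z^{n}$ and $z^{-n}$. Hence $p(z) - a_n (z + z^{-1})^n$ is again symmetric (being a difference of two symmetric Laurent polynomials) and its $z^{\pm n}$ coefficients have cancelled, so it has width at most $n-1$. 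Applying the induction hypothesis gives a polynomial $\tilde q$ with $p(z) - a_n(z+z^{-1})^n = \tilde q(z + z^{-1})$, and then $p(z) = q(z + z^{-1})$ for $q(w) := a_n w^n + \tilde q(w)$, which closes the induction.

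The one point I would be careful about is verifying that the subtraction actually reduces the width and not merely the top coefficient; this is precisely where symmetry enters, since $a_{-n} = a_n$ is what makes the $z^{-n}$ term vanish at the same time as the $z^{n}$ term. (Symmetry is genuinely needed here: $p(z) = z$ is a Laurent polynomial that is not of the form $q(z + z^{-1})$ for any polynomial $q$.) Beyond this bookkeeping the argument is entirely routine, so I do not anticipate a real obstacle.
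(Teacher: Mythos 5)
Your proof is correct: the symmetry $a_k = a_{-k}$, the expansion $(z+z^{-1})^n = \sum_{j=0}^n \binom{n}{j} z^{n-2j}$, and the induction on width with the leading-term subtraction $p(z) - a_n(z+z^{-1})^n$ all check out, and the point you flag (that symmetry kills the $z^{-n}$ term simultaneously with the $z^n$ term) is exactly the right one to be careful about. The paper itself gives no proof of this lemma, citing it as Lemma 14 of Aaronson et al., and your induction is the standard argument behind that cited fact, so there is nothing to reconcile.
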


We require a slight modification of Lemma \ref{lem:symmetric_laurent}. %

\begin{lemma}
If $p(x, x^*)$ is a self-adjoint degree-$d$ polynomial satisfying $p(x, x^*) = p(x^*, x)$, there exists a real-valued univariate polynomial $q$ of degree $d$ with the property that $q(x + x^*) = p(x, x^*)$ when $x$ is restricted to the unit circle. 
\label{lem:symmetric_hermitian}
\end{lemma}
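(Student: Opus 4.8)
We need: if $p(x,x^*)$ is self-adjoint, degree $d$, and satisfies $p(x,x^*) = p(x^*,x)$, then there is a real degree-$d$ univariate polynomial $q$ with $q(x+x^*) = p(x,x^*)$ when $|x|=1$.

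Let me set up the approach. On the unit circle, $x^* = 1/x$, so $p(x,x^*)$ restricted to $|x|=1$ becomes a Laurent polynomial $\tilde p(x) := p(x, 1/x)$ in the single variable $x$, of degree at most $d$ in both $x$ and $x^{-1}$. The symmetry $p(x,x^*) = p(x^*,x)$ translates to $\tilde p(x) = \tilde p(1/x)$, i.e. $\tilde p$ is a symmetric Laurent polynomial. By Lemma \ref{lem:symmetric_laurent}, there is a univariate polynomial $q$ with $\tilde p(x) = q(x + 1/x) = q(x + x^*)$ on the unit circle. That gives the representation; it remains to check the degree of $q$ and that its coefficients are real.

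So the plan has three parts. (1) Reduce to a symmetric Laurent polynomial: substitute $x^* \mapsto 1/x$ and invoke the hypothesis $p(x,x^*)=p(x^*,x)$ to get $\tilde p(x) = \tilde p(1/x)$. (2) Apply Lemma \ref{lem:symmetric_laurent} to obtain $q$ with $q(x+1/x) = \tilde p(x)$. (3) Track degrees and reality: since $p$ has degree $d$, the Laurent polynomial $\tilde p$ has terms $x^k$ only for $|k| \le d$; writing $\tilde p(x) = \sum_{k=-d}^d c_k x^k$ with $c_{-k} = c_k$ (symmetry), and using that $x^k + x^{-k}$ is a degree-$k$ polynomial in $(x+1/x)$, one sees $q$ has degree at most $d$. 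For reality of the coefficients of $q$: self-adjointness of $p$ means $p(x,x^*)$ is real whenever we plug in a complex number and its genuine conjugate; on the unit circle $x+x^* = 2\cos\theta$ ranges over $[-2,2]$, and $\tilde p$ takes real values there, so $q$ is real-valued on $[-2,2]$, an infinite set, forcing all coefficients of $q$ to be real (a polynomial real on infinitely many reals has real coefficients — expand in the basis $1, (t-t_0), (t-t_0)^2,\dots$ and use finite differences, or just note the interpolation system has a real solution).

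The main obstacle, and the only real subtlety, is bookkeeping at step (3): making sure the degree genuinely stays $\le d$ rather than inflating, and being careful that the hypothesis "$p$ self-adjoint" is exactly what delivers real-valuedness of $\tilde p$ on $[-2,2]$ (one must check that a self-adjoint polynomial in $(x,x^*)$, when $x^*$ is the actual conjugate of $x$, is real — this is immediate from the definition of self-adjoint, since complex-conjugating the value of $p$ swaps $x \leftrightarrow x^*$ and conjugates the coefficients, returning $p$). I would also remark why we only get the identity on the unit circle and not everywhere: off the circle, $x^*$ is an independent variable and $p(x,x^*)$ genuinely depends on two variables, whereas $q(x+x^*)$ depends only on the sum, so equality can only hold on a variety like $|x|=1$ where the extra relation $x x^* = 1$ collapses the two variables to one. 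This restriction is harmless for the applications, where inputs are eigenvalues of unitaries and hence lie on the unit circle.
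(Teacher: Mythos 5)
Your proposal is correct and follows essentially the same route as the paper's proof: restrict to the unit circle so that $x^* = 1/x$, observe that the hypothesis $p(x,x^*)=p(x^*,x)$ makes $p(x,1/x)$ a symmetric Laurent polynomial, and invoke Lemma \ref{lem:symmetric_laurent}. Your extra bookkeeping on the degree bound and the reality of the coefficients of $q$ is a careful elaboration of details the paper states more briefly, not a different argument.
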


\begin{proof}
If $x$ is restricted to the unit circle, then $x x^* = 1$. Let $q(x) = p(x, \frac{1}{x})$. We claim that $q$ is a symmetric Laurent polynomial. Since $p$ is self-adjoint and $p(x,x^*) = p(x^*, x)$, $$q(x) = p \left (x, \frac{1}{x} \right ) = p(x, x^*) = p(x^*, x) = p \left (\frac{1}{x}, x \right) = q\left (\frac{1}{x} \right ),$$

and $q$ has real coefficients. Hence, by Lemma \ref{lem:symmetric_laurent}, there exists a polynomial $r$ such that $p(x, x^*) = q(x) = r(x + \frac{1}{x}) = r(x + x^*)$.
\end{proof}

We will also use the following bounds on the cosine function, which follow from elementary calculus. %

\begin{lemma}
For all $|x| \leq 2,$ $\frac{x^2}{3} \leq 1 - \cos x \leq \frac{x^2}{2}.$
\label{lem:cosine_lemma}
\end{lemma}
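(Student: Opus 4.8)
The plan is to treat the two inequalities separately, using the half-angle identity $1 - \cos x = 2\sin^2(x/2)$ in both. For the upper bound, I would simply observe that $1 - \cos x = 2\sin^2(x/2) \le 2(x/2)^2 = x^2/2$, using the elementary fact $|\sin t| \le |t|$ valid for all real $t$; note this half of the claim actually holds for every real $x$, not only $|x| \le 2$.

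For the lower bound, substitute $t = x/2$, so that $|x| \le 2$ becomes $|t| \le 1$ and the claim $1 - \cos x \ge x^2/3$ becomes $2\sin^2 t \ge \tfrac{4}{3}t^2$, i.e. $|\sin t| \ge \sqrt{2/3}\,|t|$. By symmetry it suffices to handle $t \in [0,1]$, and there the plan is to show $s(t) := \sin(t)/t$ is nonincreasing on $(0,1]$: writing $s'(t) = (t\cos t - \sin t)/t^2$, the numerator $h(t) = t\cos t - \sin t$ has $h(0) = 0$ and $h'(t) = -t\sin t \le 0$ on $[0,1]$, so $h \le 0$ and hence $s' \le 0$ there. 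Therefore $s(t) \ge s(1) = \sin 1$ for $t \in (0,1]$. Since the alternating series for $\sin$ gives $\sin 1 \ge 1 - \tfrac16 = \tfrac56$, and $(5/6)^2 = 25/36 > 24/36 = 2/3$, we get $\sin t \ge \tfrac56 t \ge \sqrt{2/3}\,t$ on $(0,1]$, with the $t=0$ case trivial. Undoing the substitution yields $1 - \cos x \ge x^2/3$ for $|x|\le 2$.

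There is no genuine obstacle here; the only point needing mild care is that, unlike the upper bound, the lower bound is not a global inequality — it fails for large $x$ — so the hypothesis $|x|\le 2$ (equivalently $|t|\le 1$) must actually be used, and it is precisely the monotonicity of $\sin(t)/t$ on $(0,1]$ together with a concrete rational lower bound for $\sin 1$ that pins the constant $1/3$ on this interval. If one prefers to avoid the half-angle substitution, an equivalent route is to set $g(x) = 1 - \cos x - x^2/3$, note $g(0) = g'(0) = 0$, and observe that $g''(x) = \cos x - 2/3$ changes sign exactly once on $[0,2]$ (at $\arccos(2/3) < \pi$), which forces $g'$ to be first positive then negative on $(0,2]$ (using $g'(2) = \sin 2 - 4/3 < 0$); hence $g$ is unimodal on $[0,2]$ and attains its minimum at an endpoint, where $g(0) = 0$ and $g(2) = 1 - \cos 2 - 4/3 > 0$, giving $g \ge 0$ throughout.
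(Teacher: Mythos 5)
Your proof is correct, but it takes a genuinely different route from the paper. The paper's argument is a two-line Taylor-series computation: it uses $1 - \tfrac{x^2}{2} \le \cos x \le 1 - \tfrac{x^2}{2} + \tfrac{x^4}{24}$, and for the lower bound simply notes that $|x| \le 2$ gives $x^4 \le 4x^2$, hence $\tfrac{x^4}{24} \le \tfrac{x^2}{6}$ and $\cos x \le 1 - \tfrac{x^2}{3}$. You instead pass through the half-angle identity $1 - \cos x = 2\sin^2(x/2)$, get the upper bound from $|\sin t| \le |t|$, and for the lower bound prove monotonicity of $\sin(t)/t$ on $(0,1]$ plus a numerical bound $\sin 1 \ge 5/6$ (or, in your alternative, a sign analysis of $g'' = \cos x - 2/3$ and endpoint evaluation). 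Both are sound; the trade-off is that the paper's truncated-Taylor bound dispenses with calculus and numerics entirely and pins the constant $1/3$ in one line, while your sinc-monotonicity argument is self-contained at the level of elementary inequalities about $\sin$, correctly isolates where the hypothesis $|x| \le 2$ enters, and in fact shows the slightly stronger bound $1 - \cos x \ge \tfrac{25}{36} \cdot \tfrac{x^2}{2}$ on this interval. For the purposes of the paper either proof suffices, since only the constants $1/3$ and $1/2$ are used downstream (in \Cref{thm:recurrence} and \Cref{thm:recurrence_qma}).
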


\begin{proof}
By the Taylor series expansion, $1 - \frac{x^2}{2} \leq \cos x \leq 1 - \frac{x^2}{2} + \frac{x^4}{24}.$ Therefore, $1 - \cos x \leq \frac{x^2}{2}.$ 

For the lower bound, observe that if $|x| \leq 2,$ then $x^4 \leq 4x^2.$ Therefore, $\cos x \leq 1 - \frac{x^2}{2} + \frac{x^2}{6} = 1 - \frac{x^2}{3}.$ Hence, $\frac{x^2}{3} \leq 1 - \cos x$ in this range. 
\end{proof}

\subsection{Distance between Quantum States}

Let $\rho$ and $\sigma$ be $d \times d$ density matrices.

\begin{definition}
    The trace distance between $\rho$ and $\sigma$ is defined as $T(\sigma, \rho) = \frac{1}{2} || \rho - \sigma ||_1.$ The trace norm for a $d \times d$ Hermitian matrix $M$ is defined as $||M||_1 = \sum_{i=1}^d |\lambda_i|$ where $\lambda_i$ are the eigenvalues of $M$. 
\end{definition}

For pure states, which is when $\rho = \ket{\psi_1} \bra{\psi_1}$ and $\sigma = \ket{\psi_2} \bra{\psi_2}$ are rank one matrices, the trace distance and fidelity satisfy a well-known relation.

\begin{lemma}
    Given two pure states $\rho = \ket{\psi_1} \bra{\psi_1}$ and $\sigma = \ket{\psi_2} \bra{\psi_2}$, the trace distance satisfies $T(\rho, \sigma) = \sqrt{1 - |\braket{\psi_1 \vert \psi_2}|^2}.$ The quantity $|\braket{\psi_1 \vert \psi_2}|^2$ is also known as the fidelity between the states $\ket{\psi_1}$ and $\ket{\psi_2}.$
\end{lemma}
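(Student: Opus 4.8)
The plan is to reduce the computation to a two-dimensional subspace and then diagonalize a single $2 \times 2$ Hermitian matrix. First I would dispose of the degenerate case: if $\ket{\psi_1}$ and $\ket{\psi_2}$ are proportional, then $\rho = \sigma$, so $T(\rho,\sigma) = 0$, while $|\braket{\psi_1 \vert \psi_2}|^2 = 1$, and the identity $T = \sqrt{1 - |\braket{\psi_1 \vert \psi_2}|^2}$ holds trivially. Otherwise $\ket{\psi_1}$ and $\ket{\psi_2}$ span a two-dimensional subspace $V$. Since the range of $\rho - \sigma$ is contained in $V$ and $\rho - \sigma$ annihilates $V^{\perp}$, the eigenvalues of $\rho - \sigma$ are those of its restriction to $V$ (together with zeros), so $\|\rho - \sigma\|_1$ equals the trace norm of the $2 \times 2$ block.

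Next I would choose a convenient orthonormal basis of $V$: take $\ket{\psi_1}$ as the first basis vector and pick a unit vector $\ket{e} \in V$ orthogonal to $\ket{\psi_1}$, so that $\ket{\psi_2} = \cos\theta\,\ket{\psi_1} + e^{i\phi}\sin\theta\,\ket{e}$ for some $\theta \in (0, \pi/2]$ and some phase $\phi$, with $|\braket{\psi_1 \vert \psi_2}| = \cos\theta$. In this basis $\rho = \mathrm{diag}(1,0)$ and $\sigma$ is the rank-one projector onto the vector $(\cos\theta,\ e^{i\phi}\sin\theta)$, so $\rho - \sigma$ is the explicit matrix $\left(\begin{smallmatrix} \sin^2\theta & -e^{-i\phi}\cos\theta\sin\theta \\ -e^{i\phi}\cos\theta\sin\theta & -\sin^2\theta \end{smallmatrix}\right)$, which I would simply write down.

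The key observation is that $\rho - \sigma$ is Hermitian and traceless, hence has eigenvalues $\pm\lambda$ with $\lambda \geq 0$; consequently $\|\rho - \sigma\|_1 = 2\lambda$ and $\lambda^2 = -\det(\rho - \sigma)$. A one-line determinant computation gives $\det(\rho - \sigma) = -\sin^4\theta - \cos^2\theta\sin^2\theta = -\sin^2\theta$, so $\lambda = \sin\theta$ and therefore $T(\rho,\sigma) = \tfrac{1}{2}\|\rho - \sigma\|_1 = \sin\theta = \sqrt{1 - \cos^2\theta} = \sqrt{1 - |\braket{\psi_1 \vert \psi_2}|^2}$, as claimed. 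There is no genuine obstacle here; the only points deserving a moment of care are justifying the restriction to $V$ so that the trace norm is exactly that of the $2 \times 2$ block, and noting that the relative phase $\phi$ cancels in the determinant, so the final answer depends only on $\cos\theta = |\braket{\psi_1 \vert \psi_2}|$.
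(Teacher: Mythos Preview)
Your proof is correct. The paper does not actually give a proof of this lemma; it is stated as a well-known relation between trace distance and fidelity for pure states, without any argument supplied. Your approach---reducing to the two-dimensional span of $\ket{\psi_1}$ and $\ket{\psi_2}$, writing $\rho-\sigma$ explicitly in an adapted orthonormal basis, and reading off the eigenvalues from the trace and determinant---is the standard textbook computation, and all the details you give are accurate.
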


\section{The Generalized Polynomial Method}
\label{sec:genpoly}

The following Proposition, restated from the introduction, is the foundation for our generalized polynomial method:

{\renewcommand{\footnote}[1]{}\genpoly*}

\begin{proof}
A tester $A$ that queries an oracle $U$ can be written as a product of fixed unitary maps  $A_0,A_1,\ldots,A_T$ that don't depend on the oracle $U$, interleaved with controlled applications of $U$ and its inverse $U^*$ (denoted by $\mathrm{c}U$ and $\mathrm{c}U^*$ respectively). Let $\ket{\psi_t}$ denote the state of the circuit before the $t^{th}$ query, so that either $\ket{\psi_{t+1}} = A_{t+1} \mathrm{c}U \ket{\psi_t}$ or $\ket{\psi_{t+1}} = A_{t+1} \mathrm{c}U^* \ket{\psi_t}$. Write 
\[
\ket{\psi_t} = \sum_{b = 0}^1 \sum_{i=1}^d \sum_{k=1}^r p^t_{b,i,k}(U,U^*) \ket{b,i, k}
\]
where $b$ denotes the control qubit, $i$ denotes the query register, $k$ denotes the ancilla register, and $p^t_{b,i,k}(U,U^*)$ is some function of the matrix entries of $U$ and $U^*$.  

We claim that the amplitudes $p^t_{b,i,k}(U,U^*)$ are polynomials in the matrix entries of $U$ and $U^*$ of degree at most $t-1$. To show this, we proceed by induction on the number of queries $t$. This is clearly true for $t = 1$ since $\ket{\psi_1}$ is some fixed state independent of $U$, so $p^1_{b,i,k}(U)$ are constants for all $b,i,k$. Now assume the claim to be true for $\ket{\psi_{t}}.$

Suppose the $t^{th}$ oracle call is to $\mathrm{c}U$. Then
\begin{align*}
\mathrm{c}U \ket{\psi_{t}} &= \sum_{i=1}^d \sum_{k=1}^r p^t_{0,i,k}(U,U^*) \ket{0} \otimes \ket{i} \otimes \ket{k} + \sum_{i=1}^d \sum_{k=1}^r p^t_{1,i,k}(U,U^*) \ket{1} \otimes U\ket{i} \otimes \ket{k} \\
&= \sum_{j=1}^d \sum_{k=1}^r p^t_{0,j,k}(U,U^*) \ket{0,j,k} + \sum_{j=1}^d \sum_{k=1}^r \left (\sum_{i=1}^d U_{j,i} \,\, p^t_{1,i,k}(U,U^*)\right ) \ket{1,j,k}
\end{align*}
Hence, since $\ket{\psi_{t+1}} = A_{t+1} \mathrm{c}U \ket{\psi_t}$ for some linear map $A_{t+1},$ then $p^{t+1}_{i,k}(U,U^*)$ are linear combinations of the  polynomials $p^t_{0,i,k}(U,U^*)$ and $\sum_{i=1}^d U_{j,i} \,\,  p^t_{1,i,k}(U,U^*)$. Hence if $p^t_{b,i,k}(U,U^*)$ have degree at most $t - 1$, $p^{t+1}_{b,i, k}(X)$ have degree at most $t$. The same argument holds if the oracle call was to $\mathrm{c}U^*$.

Thus the amplitudes of the final state of the algorithm can be expressed as polynomials of degree at most $T$. The proposition follows since the acceptance probability of the circuit is given by a measurement of the ancilla qubits and seeing if the string given lies in some set $S$, which is a sum $p(U,U^*) = \sum_{(b,i,k) \in S} |p^{T+1}_{b,i,k}(U,U^*)|^2.$ This is a degree-$2T$ self-adjoint polynomial since each $p^{T+1}_{i,k}(U,U^*)$ has degree at most $T$. %
\end{proof}

We also now show the following, restated from the introduction.

\genpolyinversion*

\begin{proof}
Let $p$ be the polynomial from \Cref{prop:basic-polynomial} for the property $\mathcal{P}$. Define $$q(U, U^*) = \frac{p(U, U^*) + p(U^*, U)}{2}.$$ Clearly, $q(U, U^*) = q(U^*, U)$ and the degree of $q$ is no more than the degree of $p$. Furthermore, since the property $\mathcal{P}$ is closed under inversion, $q(U, U^*) \geq a$ if $p(U, U^*) \geq a$ and $q(U, U^*) \leq b$ if $p(U, U^*) \leq b.$ 
\end{proof}

Then, when applying \Cref{prop:basic-polynomial} to a $T$-query tester for a property $\cal{P} = (\cal{P}_{yes},\cal{P}_{no})$, we have that there exists a degree-$2T$ polynomial $p$ such that if $U \in \cal{P}_{yes}$, then $p(U,U^*) \geq 2/3$, and if $U \in \cal{P}_{no}$, then $p(U,U^*) \leq 1/3$. Furthermore, by \Cref{prop:basic-polynomial-inv}, we can further assume that $p(U, U^*) = p(U^*, U)$ for properties that are closed under inversion. However, proving degree lower bounds on $p$ directly is difficult for general properties $\cal{P}$. As mentioned in the introduction, we focus on properties that obey certain symmetries in order to further simplify the polynomial $p$. For example, the acceptance probability $p$ corresponding to symmetric classical properties of boolean strings can be averaged to a univariate polynomial $q$ using Minsky-Papert symmetrization~\cite{beals2001quantum}.

Recall the definition of $G$-invariant properties and the invariant ring as stated in \Cref{defn:g-inv-property} and \Cref{defn:invariants} in the introduction. The following observation states that testers for $G$-invariant properties give rise to low-degree polynomials in the invariant ring $\C[X,X^*]$ that decide the property:

\symmetrizedpolynomial*

\begin{proof}
Let $p(U,U^*)$ be the polynomial from \Cref{prop:basic-polynomial} corresponding to the tester. Define the function
\[
q(U,U^*) = \E_{g \sim \mu} p(gUg^{-1}, gU^* g^{-1})~.
\]
It is clear that $q(U,U^*)$ is a self-adjoint polynomial with degree at most $2T$. Furthermore by construction it belongs to the invariant ring $\C[X,X^*]^G_d$ because for all $h \in G$, 
\[
    q(hUh^{-1},hU^*h^{-1}) = \E_{g \sim \mu} p(hgUg^{-1}h^{-1}, hgU^* g^{-1}h^{-1}) = \E_{g \sim \mu} p(gUg^{-1}, gU^* g^{-1})=q(U,U^*)
\] 
where the second equality follows from the fact that the Haar measure $\mu$ is invariant under left multiplication. 

Finally, the stated bounds on the values of $q(U,U^*)$ hold because for all $U \in \cal{P}_{yes}$, the unitary $gUg^{-1}$ is also in $\cal{P}_{yes}$, and similarly for the \emph{no} instances.

\end{proof}

In this paper we focus on two subgroups of the unitary group $\unitary(d)$. The first is the full unitary group itself. The invariant ring in this case has an extremely simple description. The following result is a special case of a more general theorem due to Procesi~\cite{PROCESI1976306}, who computed the invariant rings of $n$-tuples of matrices under simultaneous conjugation by the classical groups. As we only need the case of 2 matrices (the unitary $U$ and its adjoint $U^*$) in this work, we specialize the original result. Firstly, given a permutation $\sigma \in S_n$, define $\Tr_\sigma(A_1, \dots, A_n) = \prod_{C \in C(\sigma)} \Tr(\prod_{j \in C} A_j)$ where $C(\sigma)$ is the set of disjoint cycles of $\sigma$.

\begin{theorem}[{{\cite[Section 11]{PROCESI1976306},  \cite[Chapter 4]{kraft1996classical}}}]
Let $\C[X, X^*]^G_d$ be the invariant ring corresponding to the group $G = \unitary(d)$. Then all homogenous degree-$r$ polynomials $f \in \C[X, X^*]^G_d$ can be written as a linear combination of invariants of the form $\Tr_\sigma(A_1, \dots, A_r),$ where each $A_i = X$ or $X^*$ and $\sigma$ is a permutation in $S_r$. All invariants of degree $\leq r$ are linear combinations of homogenous invariants of degree $\leq r.$  %
\label{thm:unitarily_invariants}
\end{theorem}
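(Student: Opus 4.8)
The plan is to deduce this from the classical First Fundamental Theorem for invariants of tuples of $d\times d$ matrices under simultaneous conjugation, together with two elementary reductions. First I would dispose of the second assertion and reduce to the homogeneous case: the action $g\cdot f(X,X^*)=f(gXg^{-1},gX^*g^{-1})$ is linear and preserves the grading of $\C[X,X^*]_d$ by total degree, so if $f=\sum_k f_k$ is the decomposition of $f$ into homogeneous components, then $g\cdot f=\sum_k (g\cdot f_k)$ with $g\cdot f_k$ again homogeneous of degree $k$. Hence $f$ is $G$-invariant iff each $f_k$ is, which is exactly the statement that every degree-$\le r$ invariant is a linear combination of homogeneous invariants of degree $\le r$; it then remains to describe the homogeneous ones.

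Next I would upgrade $\unitary(d)$-invariance to $\mathrm{GL}(d,\C)$-invariance. Given a homogeneous $f\in\C[X,X^*]^{\unitary(d)}_d$, write $g^{-1}=\mathrm{adj}(g)/\det(g)$ so that for $N$ large enough $\det(g)^N\big(f(gXg^{-1},gX^*g^{-1})-f(X,X^*)\big)$ is a genuine polynomial in the entries of $g$, $X$, and $X^*$. It vanishes for every $g\in\unitary(d)$, and since $\unitary(d)$ is Zariski-dense in $\mathrm{GL}(d,\C)$ it vanishes identically; dividing by $\det(g)^N$ shows $f$ is invariant under simultaneous conjugation by all of $\mathrm{GL}(d,\C)$. (Alternatively one can skip this step and run the Schur--Weyl argument below directly for $\unitary(d)$, whose group algebra spans the same subalgebra of $\mathrm{End}(\C^d)$ as $\mathrm{GL}(d,\C)$ does.)

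Now I would invoke the First Fundamental Theorem (\cite{PROCESI1976306}; see also \cite{kraft1996classical}): the invariant ring of a tuple of matrices under simultaneous $\mathrm{GL}(d,\C)$-conjugation is spanned, as a vector space, by products of traces of words in the matrices. The underlying mechanism is Schur--Weyl duality. Fully polarizing a homogeneous degree-$r$ invariant $f$ in the two matrix variables gives an $r$-multilinear invariant; by multilinearity this is determined by its restrictions to the $2^r$ ``colorings'' in which each of the $r$ slots holds $X$ or $X^*$, and each such restriction is a multilinear invariant of $r$ matrices $M_1,\dots,M_r$. Such a multilinear invariant has the form $(M_1,\dots,M_r)\mapsto \Tr\big(T\,(M_1\otimes\cdots\otimes M_r)\big)$ for some $T\in\mathrm{End}\big((\C^d)^{\otimes r}\big)$ commuting with $g^{\otimes r}$ for all $g$, so by Schur--Weyl $T=\sum_{\sigma\in S_r} c_\sigma P_\sigma$ for the tensor-permutation operators $P_\sigma$, and a standard computation gives $\Tr\big(P_\sigma(M_1\otimes\cdots\otimes M_r)\big)=\Tr_\sigma(M_1,\dots,M_r)$. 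Restituting the polarization substitutes $X$ or $X^*$ into each slot, so $f$ is exhibited as a linear combination of terms $\Tr_\sigma(A_1,\dots,A_r)$ with $A_i\in\{X,X^*\}$ and $\sigma\in S_r$ (a product of $s$ traces of words of total length $r$ being $\Tr_\sigma$ for a $\sigma$ with $s$ cycles). Conversely each $\Tr_\sigma(A_1,\dots,A_r)$ lies in $\C[X,X^*]^G_d$ since traces are conjugation-invariant, which gives the claimed description.

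The only real obstacle is invoking the First Fundamental Theorem in exactly the form needed: over the compact group $\unitary(d)$ rather than $\mathrm{GL}(d,\C)$ (handled by Zariski density, or by Schur--Weyl directly), and with the explicit $\Tr_\sigma$ spanning set (handled by restitution of polarizations). Notably, no relations among the $\Tr_\sigma$ enter the argument, since we claim only a spanning set and not a basis; such relations are the content of the Second Fundamental Theorem, which we do not need here.
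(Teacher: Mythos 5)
The paper never proves this statement itself: it is imported verbatim from Procesi and from Kraft's notes, so there is no internal proof to compare against, and your proposal is essentially a reconstruction of the standard argument in those references — reduce to homogeneous components (the conjugation action preserves the grading), pass from $\unitary(d)$ to $\mathrm{GL}(d,\C)$ by Zariski density (or run Schur--Weyl for $\unitary(d)$ directly, whose commutant on $(\C^d)^{\otimes r}$ is the same span of permutation operators), and then apply the multilinear First Fundamental Theorem via polarization and restitution, using that a multilinear invariant is $\Tr\big(T(M_1\otimes\cdots\otimes M_r)\big)$ with $T$ in the commutant, hence a combination of $P_\sigma$, and $\Tr\big(P_\sigma(M_1\otimes\cdots\otimes M_r)\big)=\Tr_\sigma(M_1,\dots,M_r)$. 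This is correct. The one loose phrase is the ``$2^r$ colorings'': what you actually need is that the full polarization of a bidegree-$(p,q)$ invariant ($p+q=r$) is a single multilinear invariant of $r$ matrix arguments, and restitution then substitutes $X$ into the $p$ designated slots and $X^*$ into the remaining $q$; stated that way the step is airtight, and the imprecision does not affect the validity of the argument.
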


We therefore get the following result (restated from the introduction) from combining \Cref{prop:basic-polynomial-symmetrized} and \Cref{thm:unitarily_invariants} for testing a unitarily invariant property.

{\renewcommand\footnote[1]{}\unitarilyinvariant*}

\begin{proof}
By \Cref{prop:basic-polynomial-symmetrized}, there exists a polynomial $p$ of degree at most $\leq 2T$ in the invariant ring $\C[U, U^*]^G$ with the property that $p$ distinguishes between \textit{yes} and \textit{no} instances. Furthermore, by \Cref{thm:unitarily_invariants}, $p$ is a linear combination of polynomials of the form $\Tr_\sigma(A_1, \dots, A_r)$ where $\sigma$ is a permutation on $r \leq 2T$ elements and each $A_i$ is $U$ or $U^*.$ Since $U U^* = I$, then each $\Tr_\sigma(A_1, \dots, A_r)$ is a product of terms of the form $\Tr(U^p)$ or $\Tr((U^*)^q)$ for $p, q \leq r$. Observe that each generator $\Tr(U^p) = \sum_{i=1}^d z_i^p$ is a power sum symmetric polynomial in the eigenvalues $z_i$ of $U$. Hence, since each term $\Tr_\sigma$ is a polynomial in the eigenvalues $(z_1, \dots, z_d)$ of $U$ and their conjugates of degree $\leq 2T$, that is symmetric under permutations of $(z_1, \dots, z_n)$ or $(z_1,^*, \dots, z_n^*)$, $p$ satisfies the same property since $p$ is a linear combination of the polynomials $\Tr_\sigma.$ 
\end{proof}

A natural subclass of unitarily invariant properties we consider are \emph{unitarily invariant subspace properties}, where the oracles $U$ are reflections about a subspace, i.e., $U = I - 2\Pi$ where $\Pi$ is the projection onto a subspace $S \subseteq \C^d$. 
Here, we get an analogue of the Minsky-Papert symmetrization technique originally applied to quantum query complexity in \cite[Lemma 3.2]{beals2001quantum}, since the acceptance probability of the quantum algorithm is reduced from a multivariable polynomial to a univariate polynomial.

\begin{lemma}
Suppose there is a $T$-query tester for a unitarily invariant subspace property $\mathcal{P}$. Then there exists a degree at most $2T$ \emph{univariate} polynomial $p(k)$ that for integer $k \in \{0,1,\ldots,d\}$ is equal to the acceptance probability of the tester when querying a unitary encoding a $k$-dimensional subspace.
\label{lem:unitarily_invariant_2}
\end{lemma}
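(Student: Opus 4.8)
The plan is to reduce the multivariate eigenvalue polynomial supplied by \Cref{lem:unitary_invariant_main_thm} to a univariate polynomial in $k$ via Minsky--Papert symmetrization, exploiting the fact that subspace unitaries have only the eigenvalues $\pm 1$.

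First I would note that any subspace property is automatically unitarily invariant: conjugating $U = I - 2\Pi$ by $g \in \unitary(d)$ gives $I - 2(g\Pi g^{-1})$, the reflection about the subspace $gS$, which has the same dimension as $S$. Hence \Cref{lem:unitary_invariant_main_thm} applies --- or, equivalently, one reruns its proof via \Cref{prop:basic-polynomial-symmetrized} and \Cref{thm:unitarily_invariants} --- and produces a symmetric, self-adjoint polynomial $q(z_1,\dots,z_d,z_1^*,\dots,z_d^*)$ of degree at most $2T$ whose value at the eigenvalue multiset of $U$ is the acceptance probability of the tester on $U$. To make the word ``acceptance probability'' literal rather than ``averaged'', I would first replace the tester, without loss of generality for a unitarily invariant property, by the one that conjugates the oracle by a uniformly random $g \in \unitary(d)$ before running; this does not change the query count and makes the acceptance probability a function of the conjugacy class of $U$ alone, which is exactly the quantity $\E_{g}\,p(gUg^{-1},gU^*g^{-1})$ appearing in \Cref{prop:basic-polynomial-symmetrized}.

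Next I would specialize to $U = I - 2\Pi$ with $\dim S = k$: the eigenvalues are $-1$ with multiplicity $k$ and $+1$ with multiplicity $d-k$, all real, so $z_i^* = z_i$, and $q$ restricted to these points is the symmetric polynomial $\tilde q(z) := q(z_1,\dots,z_d,z_1,\dots,z_d)$ of degree $\le 2T$, evaluated on the point of $\{-1,1\}^d$ with exactly $k$ coordinates equal to $-1$. Applying the degree-preserving linear substitution $x_i = \tfrac{1-z_i}{2}$ (which maps $\{-1,1\}$ to $\{0,1\}$ and preserves symmetry) yields a symmetric polynomial $\hat q(x)$ of degree $\le 2T$ with $\sum_i x_i = k$. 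On the Boolean cube one may multilinearize $\hat q$ using $x_i^2 = x_i$ without increasing its total degree, and then symmetry forces all monomials of a common size to share a coefficient, so $\hat q = \sum_{j=0}^{2T} c_j\, e_j(x)$ for scalars $c_j$, where $e_j$ is the degree-$j$ elementary symmetric polynomial. Since $e_j$ evaluated on a $0/1$ string with $k$ ones equals $\binom{k}{j}$, a degree-$j$ polynomial in $k$, the function $p(k) := \sum_{j=0}^{2T} c_j \binom{k}{j}$ is a univariate polynomial of degree at most $2T$ agreeing with the acceptance probability at every integer $k \in \{0,1,\dots,d\}$.

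There is no serious obstacle; the argument simply assembles \Cref{lem:unitary_invariant_main_thm} with the classical Minsky--Papert trick. The two points needing care --- both routine but load-bearing --- are (i) justifying that we may assume the tester's acceptance probability is constant on each conjugacy class, so that $q$ genuinely \emph{equals} the acceptance probability rather than merely bounding it, and (ii) checking that passing to the multilinear representative on $\{0,1\}^d$ does not raise the degree and that the symmetry of $q$ descends to equality of the coefficients of all same-size monomials.
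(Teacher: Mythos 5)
Your argument is correct, but it takes a genuinely different route from the paper's. The paper proves \Cref{lem:unitarily_invariant_2} directly from the trace-generator description of the invariant ring (\Cref{thm:unitarily_invariants} via \Cref{prop:basic-polynomial-symmetrized}): since a subspace unitary satisfies $U=U^*$ and $U^2=I$, every generator $\Tr(U^j)$, $\Tr((U^*)^j)$ collapses to either $d$ or $d-2\dim\Pi$, so the averaged acceptance-probability polynomial is already a degree-$2T$ univariate polynomial in $d-2k$, and a linear change of variables finishes the proof in two lines. You instead pass through the eigenvalue form of the symmetrization (\Cref{lem:unitary_invariant_main_thm}), restrict to $\pm1$ eigenvalues, map to the Boolean cube, multilinearize, and use symmetry to expand in elementary symmetric polynomials, whose values $\binom{k}{j}$ at Hamming weight $k$ give the univariate polynomial --- essentially the classical Minsky--Papert symmetrization grafted onto the unitary setting. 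Both routes are valid; the paper's is shorter because it exploits the relations $U^2=I$ at the level of the generators, while yours is more self-contained in that it never needs the explicit trace generators once the eigenvalue polynomial is in hand, and your randomized-conjugation preprocessing is a careful way to make ``equal to the acceptance probability'' literally true (the paper is slightly loose on this point, tacitly working with the orbit-averaged acceptance probability, which is all that is needed downstream in \Cref{prop:symmetric}). One small correction: your opening claim that \emph{any} subspace property is automatically unitarily invariant is false --- e.g.\ ``$U$ is the reflection about one particular fixed subspace'' is a subspace property that is not unitarily invariant; conjugation preserves being a reflection and the dimension, but not membership in an arbitrary $\cal{P}_{yes}$. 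This slip is harmless here because the lemma already assumes $\cal{P}$ is unitarily invariant, which is exactly what both \Cref{lem:unitary_invariant_main_thm} and your random-conjugation step require.
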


\begin{proof}
The oracles corresponding to subspace properties satisfy $U = U^*$ and $U^2 = I$. Since for all integer $j$, the traces $\Tr(U^j)$ and $\Tr((U^*)^j)$ are either equal to $\Tr(I) = d$ or $\Tr(U) = \Tr(I - 2\Pi) = d - 2\dim(\Pi)$, \Cref{thm:unitarily_invariants} implies that the acceptance probability can be expressed as a degree-$2T$ polynomial in $d$ and $d - 2\dim(\Pi)$; since $d$ is constant, we can perform a change of variables to obtain a degree-$2T$ univariate polynomial in $\dim(\Pi)$ only.

\end{proof}

\subsection{An Alternative Proof of \Cref{lem:unitary_invariant_main_thm}}
\label{subsection:weingarten}

Here we give an alternative proof of \Cref{lem:unitary_invariant_main_thm} using random matrix theory. When the property is invariant under all unitaries, the polynomial $q$ constructed in the proof of \Cref{prop:basic-polynomial-symmetrized} is an average of a polynomial over the Haar measure over the full unitary group $\unitary(d)$. A consequence of Schur-Weyl duality is that there are relatively explicit \emph{Weingarten formulas} for averaging over the unitary group. For example,
if $g$ is sampled from the Haar distribution over $\unitary(d)$ , and $B$ was a matrix of the appropriate dimension, then we have the formula \cite[Section 7.5]{Brandao2021}:
$$\mathbb{E}_{g \sim \unitary(d)} \,\, (g^*)^{\otimes 2k} B g^{\otimes 2k} = \sum_{\sigma, \tau \in S_{2k}} \mathrm{Wg}(\sigma^{-1} \tau, d) \Tr(B P_\tau) P_\sigma,$$
where $\sigma$ and $\tau$ are permutations on $2k$ elements, $P_\sigma \ket{i_1, \dots, i_{2k}} = \ket{i_{\sigma^{-1}(1)}, \dots, i_{\sigma^{-1}(2k)}}$ is a permutation operator, and $\mathrm{Wg}$ is the Weingarten function that we define shortly. 

Let $p(U,U^*)$ denote the acceptance probability of a $T$-tester that makes queries to $U$. By construction (see \Cref{prop:basic-polynomial}),  $p$ is a self-adjoint degree-$2T$ polynomial where every monomial has at most degree $T$ in the $U_{i,j}$ variables and at most degree $T$ in the $U_{i,j}^*$ variables, so there exist a matrix $B$ such that 
\[
    p(U,U^*) = \Tr\left (B \Big( U^{\otimes T} \otimes (U^*)^{\otimes T} \Big) \right)~.
\]
To see this, note that by linearity it suffices to argue that a monomial of $p$ in the matrix entries of $U, U^*$ can be expressed in this form. However, this is simply a specific matrix element of the tensor product $U^{\otimes T} \otimes (U^*)^{\otimes T}$. Thus, leveraging the Weingarten formula, the polynomial $q$ can be written as
\begin{equation*} 
\begin{split}
q(U,U^*) &= \E_{g \sim \unitary(d)} p(gUg^*, gU^*g^*) \\ 
&= \E_{g \sim \unitary(d)} \Tr\left (B \Big( (gUg^*)^{\otimes T} \otimes (gU^*g^*)^{\otimes T} \Big) \right) \\
&= \E_{g \sim \unitary(d)} \Tr\left ( \Big((g^*)^{\otimes 2T} B g^{\otimes 2T} \Big)  \Big( U^{\otimes T} \otimes (U^*)^{\otimes T} \Big) \right) \\
&= \sum_{\sigma, \tau \in S_{2T}} \mathrm{Wg}(\sigma^{-1} \tau, d) \Tr(B P_\tau) \Tr(P_\sigma U^{\otimes T} \otimes (U^*)^{\otimes T}) \\
&= \sum_{\sigma, \tau \in S_{2T}} \mathrm{Wg}(\sigma^{-1} \tau, d) \Tr(B P_\tau) \Tr_{\sigma^{-1}}(U, \dots, U, U^*, \dots, U^*),
\end{split}
\end{equation*}
where we define $\Tr_\sigma(A_1, \dots, A_n) = \prod_{C \in C(\sigma)} \Tr(\prod_{j \in C} A_j)$ where $C(\sigma)$ is the set of disjoint cycles of $\sigma$. In particular, since $UU^* = I$, every expression of the form $\Tr_{\sigma^{-1}}(U, \dots, U, U^*, \dots, U^*)$ is a product of traces of the form $\Tr(U^p)$ or $\Tr((U^*)^q)$ for some integers $p$ and $q$, and hence we once again obtain the result of \Cref{lem:unitary_invariant_main_thm}. 

The Weingarten functions are defined as follows: If $\sigma$ is a permutation in $S_n,$ then $\mathrm{Wg}(\sigma, d)$ is defined as 
$$
    \mathrm{Wg}(\sigma, d) = \frac{1}{n!^2} \sum_{\lambda} \frac{\chi^{\lambda}(1)^2 \chi^{\lambda}(\sigma)}{s_{\lambda}(1^d)}
$$
summed over all partitions $\lambda$ of $n$, where $\chi^\lambda$ is the character of the $S_n$ representation associated to the partition $\lambda$, and $s_\lambda(1^d)$ is the Schur function for $\lambda$ evaluated at $x_1 = 1, \dots, x_d = 1$ (and all other variables zero). Notably, the value of the Weingarten function depends only on the cycle type of $\sigma$. The asymptotics of Weingarten functions for large $d$ according to the cycle type of $\sigma$ are well-studied (\cite{gu2013moments}, \cite[Theorem 3.6]{collins2016random}, \cite[Lemma 16]{montanaro2013weak}).

Although we don't use the Weingarten formulas further in this paper, we believe this alternative derivation of \Cref{lem:unitary_invariant_main_thm} via the use of Weingarten averaging formulas will be useful for future applications of the generalized polynomial method. For example, one could potentially prove bounds on the acceptance probabilities of algorithms by using asymptotic bounds on the Weingarten functions. Furthermore, this calculation can be easily extended to averages over different groups, such as product groups $\unitary(d) \times \unitary(d)$ when considering local unitary symmetries (see \Cref{sec:local_unitary}). We leave investigation of this for future work.

\section{Unitarily Invariant Property Testing}
\label{sec:applications}

We now illustrate some applications of the general theory developed in the previous section. The applications will make crucial use of  \Cref{lem:unitary_invariant_main_thm} and \Cref{lem:unitarily_invariant_2} for testing unitarily invariant properties.

\subsection{Testing Unitarily Invariant Subspace Properties}
\label{sec:symmetric_properties}
We first start with the class of unitarily invariant subspace properties. Recall that a subspace property $\cal{P}$ is one where all instances are reflections about some subspace, i.e., $U = I - 2\Pi$ where $\Pi$ is the orthogonal projector onto a subspace $S \subseteq \C^d$ (we say that $U$ encodes the subspace $S$). Such unitaries have eigenvalues $1$ or $-1$.

We will show that lower bounds for testing $\mathcal{P}$ follow immediately from lower bounds for testing symmetric properties $\cal{S}$ of classical strings, which means that the instances of $\cal{S}$ are $d$-bit strings, and the \emph{yes} instances are invariant under permutation of the coordinates (and similarly with the \emph{no} instances).

There is a one-to-one correspondence between unitarily invariant subspace properties and symmetric classical properties:
\begin{itemize}
    \item 
Given a unitarily invariant subspace property $\cal{P}$, we define $\cal{S}_{yes/no} = \{ \mathrm{spec}(U) : U \in \cal{P}_{yes/no} \}$, where $\mathrm{spec}(U)$ denotes the multiset of eigenvalues of $U$, interpreted as a $d$-bit string (with $+1$ mapped to $0$ and $-1$ mapped to $1$). The resulting classical property $\cal{S}$ is symmetric. 
\item Given a classical symmetric property $\cal{S}$, we define $\cal{P}_{yes/no} = \{ V^* D_x V : x \in \cal{S}_{yes/no}, V \in \unitary(d) \}$ where $D_x$ is a diagonal matrix with $(-1)^{x_i}$ on the $i$'th diagonal entry. The resulting unitary property $\cal{P}$ is a subspace property and is unitarily invariant. 
\end{itemize}
It is straightforward to see that this correspondence is a bijection. 

We now establish the following simple relation between the query complexity of the classical property $\cal{S}$ to that of the quantum proprety $\cal{P}$:

\symmetric*
\begin{proof}

From \Cref{lem:unitarily_invariant_2}, a $T$-query tester for a unitarily invariant subspace property $\cal{P}$ yields a degree-$2T$ polynomial $q(k)$ representing the success probability of the algorithm on unitaries $U = I - 2 \Pi$ encoding subspaces $\Pi$ of dimension $0 \leq k \leq d$. Thus the polynomial $q$ also decides the associated classical symmetric property $\cal{S}$, by considering $k$ as the Hamming weight of the associated string $\text{spec}(U)$. %
\end{proof}

We note that one can also prove \Cref{prop:symmetric} by observing that a $T$-query tester for a unitarily invariant subspace property $\cal{P}$ is also a $T$-query tester for the associated classical symmetric property $\cal{S}$.

We now mention some easy applications of \Cref{prop:symmetric}.

\begin{theorem}[Unstructured Search Lower Bound]
Any tester that decides whether an oracle $U$ is a reflection about some quantum state $\ket{\psi}$, i.e., $U = I - 2 \ketbra{\psi}{\psi}$, or is the identity $U = I$, must make $\Omega(\sqrt{d})$ queries to $U$.
\end{theorem}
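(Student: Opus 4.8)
The plan is to recognize the Unstructured Search problem as a unitarily invariant subspace property and then apply \Cref{prop:symmetric} (equivalently \Cref{lem:unitarily_invariant_2}) to reduce the query lower bound to a degree lower bound on a univariate polynomial. The \emph{yes} instances $U = I - 2\ketbra{\psi}{\psi}$ are exactly the reflections about $1$-dimensional subspaces, and the \emph{no} instance $U = I$ is the reflection about the $0$-dimensional subspace; since conjugating by $V \in \unitary(d)$ sends $\ketbra{\psi}{\psi}$ to $\ketbra{V\psi}{V\psi}$ and fixes $I$, the property is unitarily invariant, and its associated classical symmetric property $\cal{S}$ has $\cal{S}_{no} = \{0^d\}$ and $\cal{S}_{yes} = \{x \in \{0,1\}^d : |x| = 1\}$. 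This is precisely unstructured search with a single marked element, i.e. the promise version of $\mathrm{OR}_d$.

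Given a $T$-query tester, \Cref{lem:unitarily_invariant_2} produces a degree-$2T$ univariate polynomial $q(k)$ equal to the acceptance probability of the tester on a $k$-dimensional subspace; hence $0 \le q(k) \le 1$ for every integer $k \in \{0,1,\ldots,d\}$, while $q(0) \le 1/3$ and $q(1) \ge 2/3$. It remains to show any such polynomial has degree $\Omega(\sqrt d)$, which is exactly the classical $\Omega(\sqrt d)$ approximate-degree lower bound for $\mathrm{OR}_d$ of Nisan--Szegedy~\cite{nisan1994degree} and Beals et al.~\cite{beals2001quantum}; plugging it into \Cref{prop:symmetric} yields $T = \Omega(\sqrt d)$. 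For completeness I would sketch the self-contained version: rescaling $[0,d]$ onto $[-1,1]$ turns $q$ into a polynomial $\tilde q$ of the same degree $m = 2T$ that is bounded by an absolute constant on $[-1,1]$ (using boundedness at the $d+1$ equally spaced integer points, valid when $m = o(\sqrt d)$) and jumps by at least $1/3$ between the points $-1$ and $-1 + 2/d$; the mean value theorem forces $|\tilde q'| = \Omega(d)$ somewhere on $[-1,1]$, whereas Markov's inequality (\Cref{lem:markov_inequality}) caps $|\tilde q'|$ there at $O(m^2)$, so $m = \Omega(\sqrt d)$.

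The main obstacle — really the only place any work is hidden — is the step that upgrades boundedness of $q$ at the integer points $\{0,\ldots,d\}$ to boundedness on the whole interval $[0,d]$; this is the classical Coppersmith--Rivlin / Ehlich--Zeller phenomenon and is exactly what makes the $\mathrm{OR}$ approximate-degree lower bound nontrivial. Everything else (recognizing the property as a unitarily invariant subspace property, the Minsky--Papert-style symmetrization already packaged in \Cref{lem:unitarily_invariant_2}, and the closing Markov-inequality estimate) is routine given the machinery developed above, which is consistent with this being presented as an ``easy application'' of \Cref{prop:symmetric}.
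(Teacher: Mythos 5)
Your proposal is correct and follows essentially the same route as the paper: cast the problem as a unitarily invariant subspace property, pass to the associated classical symmetric property (single marked element / promise $\mathrm{OR}_d$) via \Cref{prop:symmetric}, and invoke the classical $\Omega(\sqrt d)$ approximate-degree bound of~\cite{beals2001quantum}. Your extra sketch of the Ehlich--Zeller/Coppersmith--Rivlin step is a correct elaboration of what the paper simply cites as ``well-known via the standard polynomial method.''
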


\begin{proof}
Define $\cal{P}_{yes} = \{ I - 2\ketbra{\psi}{\psi} : \ket{\psi} \in \C^d \}$ and $\cal{P}_{no} = \{I \}$. Clearly $\cal{P} = (\cal{P}_{yes},\cal{P}_{no})$ is a unitarily invariant subspace property. The associated classical property $\cal{S}$ consists of \emph{yes} instances that are binary strings of Hamming weight $1$ (because the \emph{yes} instances of $\cal{P}$ have exactly one $-1$ eigenvalue) and the \emph{no} instances is the all zeroes string. This is essentially the Grover search problem, it is well-known via the standard polynomial method~\cite{beals2001quantum} that any polynomial deciding $\cal{S}$ requires $\Omega(\sqrt{d})$ queries, which implies $\Omega(\sqrt{d})$ query lower bound for property $\cal{P}$ by \Cref{prop:symmetric}. 

\end{proof}

We also consider the \emph{Approximate Dimension} problem, which for some integer parameter $ 0 \leq w \leq d$, distinguish between whether the subspace encoded by the oracle $U$ has dimension at least $2w$ (\emph{yes} instances) or at most $w$. This is a unitarily invariant subspace property testing problem, as conjugating a reflection $U = I - 2\Pi$ by any unitary $V$ leaves the dimension of the encoded subspace unchanged. This generalizes the classical Approximate Counting problem, which is to determine whether the Hamming weight of an input string is at most $w$ or at least $2w$. Again leveraging the standard polynomial method we obtain the following lower bound:

\begin{theorem}[Approximate Dimension Lower Bound]
Any tester that decides between whether a unitary encodes a subspace of dimension at least $2w$ or at most $w$ requires $\Omega(\sqrt{\frac{d}{w}})$ queries.
\end{theorem}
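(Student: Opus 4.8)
The plan is to deduce this immediately from the classical Approximate Counting lower bound via \Cref{prop:symmetric}. First I would verify that the Approximate Dimension problem really is a unitarily invariant subspace property: every instance is a reflection $U = I - 2\Pi$, and for any $V \in \unitary(d)$ conjugation sends $I - 2\Pi$ to $I - 2(V\Pi V^*)$, which encodes a subspace of the \emph{same} dimension, so \emph{yes}/\emph{no} membership is preserved. Under the bijection of \Cref{sec:symmetric_properties}, the associated classical symmetric property $\cal{S}$ is exactly Approximate Counting: $\cal{S}_{yes}$ is the set of $d$-bit strings of Hamming weight at least $2w$, and $\cal{S}_{no}$ those of Hamming weight at most $w$. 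So by \Cref{prop:symmetric} it suffices to show that any polynomial distinguishing these has degree $\Omega(\sqrt{d/w})$.

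Concretely, \Cref{lem:unitarily_invariant_2} turns a $T$-query tester into a univariate real polynomial $p(k)$ of degree at most $2T$ with $p(k) \in [0,1]$ for every integer $k \in \{0,1,\ldots,d\}$, $p(k) \le 1/3$ for integer $k \le w$, and $p(k) \ge 2/3$ for integer $k \ge 2w$. To lower-bound its degree I would argue: if $2T \ge \sqrt{d/w}$ we are already done, so assume $2T < \sqrt{d/w} \le \sqrt{d}$; then a standard Ehlich--Zeller / Coppersmith--Rivlin type estimate shows that a degree-$2T$ polynomial bounded by $1$ at all integers in $\{0,\ldots,d\}$ is bounded by an absolute constant $C$ on the whole interval $[0,d]$. \Cref{lem:markov_inequality} then gives $|p'(x)| \le 2C(2T)^2/d$ for all $x \in [0,d]$, whereas the mean value theorem applied on $[w,2w]$ (using $p(2w) - p(w) \ge 1/3$) produces $\xi \in (w,2w)$ with $|p'(\xi)| \ge 1/(3w)$. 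Combining the two inequalities yields $T^2 = \Omega(d/w)$, i.e.\ $T = \Omega(\sqrt{d/w})$, which by \Cref{prop:symmetric} is the desired query lower bound.

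The main obstacle is the mismatch between what we know — that $p$ is bounded only at integer points, where it literally equals an acceptance probability — and what Markov's inequality needs, namely boundedness on the continuous interval; bridging this gap is exactly the content of the Ehlich--Zeller/Coppersmith--Rivlin estimate, and it is what causes the bound to degrade from $\sqrt{d}$ (the $w=1$ Grover case) to $\sqrt{d/w}$. An alternative, cleaner route that sidesteps this technicality is to invoke directly the known formula for the $\epsilon$-approximate degree of symmetric partial Boolean functions of \cite{paturi92degree,de2008note}, which gives the $\Omega(\sqrt{d/w})$ degree bound for Approximate Counting as a black box; either way \Cref{prop:symmetric} then finishes the proof.
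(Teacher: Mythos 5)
Your proposal is correct and follows the same overall route as the paper: both observe that Approximate Dimension is a unitarily invariant subspace property whose associated classical symmetric property is exactly Approximate Counting, and both then invoke \Cref{prop:symmetric} (equivalently \Cref{lem:unitarily_invariant_2}) to reduce the query lower bound to a degree lower bound for polynomials deciding Approximate Counting. The difference is how that classical degree bound is obtained. The paper simply cites the known $\Omega(\sqrt{d/w})$ bound, obtained ``by reduction to the Grover search problem'' (pad/replicate bits so that a distinguisher for weight $\le w$ versus $\ge 2w$ on $d$ bits yields a distinguisher for OR on $\Theta(d/w)$ bits, then apply the standard $\Omega(\sqrt{d/w})$ degree bound for OR); your alternative black-box route via the approximate degree of symmetric partial functions \cite{paturi92degree,de2008note} is the same in spirit. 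Your primary route instead proves the degree bound from scratch: boundedness of $p$ at the integer points $\{0,\ldots,d\}$, an Ehlich--Zeller/Coppersmith--Rivlin estimate to upgrade this to constant boundedness on all of $[0,d]$ when $\deg p \le \sqrt{d}$ (handled by your case split), then \Cref{lem:markov_inequality} together with the mean value theorem on $[w,2w]$ to force $T^2 = \Omega(d/w)$. This argument is sound --- the promise gives $p(w)\le 1/3$ and $p(2w)\ge 2/3$ at the integers $w,2w$, so the MVT step is legitimate --- and it correctly identifies the integer-versus-interval issue that the black-box citation hides. What it buys is self-containedness at the level of approximation theory; what it costs is reliance on the Ehlich--Zeller/Coppersmith--Rivlin lemma, which is standard but not among the tools stated in the paper's preliminaries, so you would need to import it explicitly if you wanted the argument to be fully self-contained within the paper.
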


\begin{proof}
The associated classical property, Approximate Counting, is where the \emph{yes} instances correspond to strings with Hamming weight at least $2w$ and the \emph{no} instances have Hamming weight at most $w$. By reduction to the Grover search problem, we get that the degree of any polynomial that decides Approximate Counting is at least $\Omega(\sqrt{d/w})$, which by \Cref{prop:symmetric} is also a lower bound on the number of queries needed to decide the Approximate Dimension problem. 

\end{proof}

We note that by using an appropriate modification of the quantum counting algorithm of Brassard et al. \cite{brassard1998quantum}, we obtain a matching upper bound.

\begin{proposition}
There exists a tester that using $O(\sqrt{\frac{d}{w}})$ queries and certifies whether or not a unitary $U = I - 2P$ encodes a subspace $S$ of dimension at least $2w$ or at most $w$. 
\label{prop:dim_counting_algorithm}
\end{proposition}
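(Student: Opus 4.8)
The plan is to instantiate the quantum counting / amplitude estimation algorithm of Brassard et al.~\cite{brassard1998quantum}, with one twist. In classical approximate counting one prepares the uniform superposition, which has squared overlap $t/N$ with the set of $t$ marked elements; here there is no fixed state in $\C^d$ whose squared overlap with the unknown subspace $S$ equals $\dim(S)/d$, and this is the main obstacle. I would resolve it by working in $\C^d \otimes \C^d$ and using the purification of the maximally mixed state, $\ket{\mu} = \frac{1}{\sqrt d}\sum_{i=1}^d \ket{i}\ket{i}$: writing $U = I - 2\Pi$ and $k = \dim(S)$, one has $\|(\Pi \otimes I)\ket\mu\|^2 = \frac{1}{d}\Tr(\Pi) = k/d$ for \emph{every} subspace $S$, and $\ket\mu = A\ket{0}$ for an explicit circuit $A$ that makes no queries to $U$.

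Next I would build the amplitude-estimation iterate. Since $\Pi$ is a projector, $U$ is self-adjoint, so no inverse queries are needed; moreover $-(U\otimes I) = (2\Pi - I)\otimes I$ is exactly the reflection about the good subspace $S \otimes \C^d$ and uses one query to $U$, while $R_\mu = 2\ketbra{\mu}{\mu} - I = A(2\ketbra{0}{0} - I)A^\dagger$ uses none. The Grover iterate $Q = R_\mu (U \otimes I)$ (a standard reflection-product convention) preserves the two-dimensional subspace spanned by the components of $\ket\mu$ inside and outside $S \otimes \C^d$, acting there as a rotation with eigenvalues $e^{\pm 2 i \theta}$ where $\sin^2\theta = k/d$. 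Running phase estimation on $Q$ with initial state $\ket\mu$ using $M$ applications of $Q$ (hence $M$ queries to $U$) outputs, exactly as in quantum counting, an estimate $\tilde\theta$ with $|\tilde\theta - \theta| = O(1/M)$ with constant probability; a constant number of independent repetitions together with a median brings the success probability to $2/3$.

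Finally I would fix $M$ and the decision rule. If $2w > d$ the \emph{yes} case is vacuous and the tester always rejects; and if $w/d$ exceeds a fixed small constant the problem needs only $O(1)$ queries, so assume $w/d$ is small. Let $\theta_w, \theta_{2w} \in [0, \pi/2]$ satisfy $\sin^2\theta_w = w/d$ and $\sin^2\theta_{2w} = 2w/d$. Using $\sin^2\theta = \tfrac{1}{2}(1 - \cos 2\theta)$ together with \Cref{lem:cosine_lemma} gives $\theta_w \in [\sqrt{w/d}, \sqrt{3w/(2d)}]$ and $\theta_{2w} \in [\sqrt{2w/d}, \sqrt{3w/d}]$, so $\theta_{2w} - \theta_w = \Omega(\sqrt{w/d})$. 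Choosing $M = O(\sqrt{d/w})$ large enough that the phase-estimation error is below half of this gap, and accepting iff $\tilde\theta > \theta^\ast := \tfrac{1}{2}(\theta_w + \theta_{2w})$, correctness follows from monotonicity of $\sin^2$ on $[0,\pi/2]$: when $k \le w$ we have $\theta \le \theta_w < \theta^\ast$, and when $k \ge 2w$ we have $\theta \ge \theta_{2w} > \theta^\ast$, so in both cases $\tilde\theta$ lands on the correct side of $\theta^\ast$ with probability $2/3$. The total query count is $M = O(\sqrt{d/w})$. The points requiring care are the precision bookkeeping of the phase-estimation step (converting phase error into error in $\theta$, plus the boosting) and the degenerate cases $k \in \{0, d\}$ and $2w > d$; everything else is a direct application of standard amplitude estimation.
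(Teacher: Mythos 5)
Your proposal is correct and follows essentially the same route as the paper's proof: prepare the maximally entangled state on $\C^d \otimes \C^d$, form the Grover-type iterate from the reflection about that state together with $U \otimes I$, and run phase estimation to estimate the angle $\theta$ with $\sin^2\theta = \dim(S)/d$ using $O(\sqrt{d/w})$ queries. Your version simply makes the precision bookkeeping (the $\Omega(\sqrt{w/d})$ angle gap and the decision threshold) more explicit than the paper does.
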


\begin{proof}
Prepare the maximally entangled state $\ket{\Phi}$ in $\C^d \otimes \C^d$ and observe that $\ket{\Phi}$ can be written as $\ket{\Phi} = \frac{1}{\sqrt{n}} \sum_{i=1}^n \ket{v_i} \ket{\overline{v_i}}$ for any basis $B = \{\ket{v_i}\}$ of $\C^d$. Hence, we can assume that $B = B_1 \cup B_2$ where $B_1$ is a basis for $S$ and $B_2$ is a basis for the orthogonal complement $S^{\perp}.$ 

Let $s = \dim S$, and $\ket{\Phi_S} = \frac{1}{\sqrt{s}} \sum_{\ket{v_i} \in B_1} \ket{v_i} \ket{\overline{v_i}}$ be maximally entangled over $S$ and $\ket{\Phi_{S^{\perp}}} = \frac{1}{\sqrt{d-s}} \sum_{\ket{v_i} \in B_2} \ket{v_i} \ket{\overline{v_i}}$ be maximally entangled over $S^{\perp}.$ Let $R = 2 \ket{\Phi} \bra{\Phi} - 1$ be the reflection around the maximally entangled state. Observe that $\ket{\Phi} \in \text{span}\{\ket{\Phi_S}, \ket{\Phi_{S^{\perp}}}\}$ and furthermore $(U \otimes I) \ket{\Phi_S} = -\ket{\Phi_S},$ and $(U \otimes I) \ket{\Phi_{S^{\perp}}} = \ket{\Phi_{S^{\perp}}}.$ Hence, by the analysis of Grover search, the operator $G = R (U \otimes I)$ is a rotation in the plane spanned by $\ket{\Phi_S}$ and $\ket{\Phi_{S^{\perp}}}$ by an angle of $2 \theta$, where $\sin^2 \theta = \frac{s}{d}.$ Hence applying phase estimation with the operator $G$ and the state $\ket{\Phi}$ as input, produces an estimate of the angle $\theta$ and hence the dimension $s$ since the eigenvalues of $G$ are $e^{\pm 2 i \theta}.$ By the analysis of phase estimation, at most $O( \sqrt{\frac{d}{w}})$ oracle calls to $G$ can be used to get an estimate $\tilde{s}$ of $s$ satisfying $0.9 s \leq \tilde{s} \leq 1.1s$, and hence we can distinguish whether or not $s \geq 2w$ or $s \leq w$ with access to this estimate $\tilde{s}.$
\end{proof}

Aaronson, et al.~\cite{aaronson2020quantum} showed that having access to a quantum proof does not help reduce the query complexity of the classical Approximate Counting problem, unless the proof state is very large (at least $w$ qubits). Since a $\QMA$ tester for Approximate Dimension is automatically a $\QMA$ tester for the Approximate Counting problem, the lower bound proved by~\cite{aaronson2020quantum} directly gives the following:

\appxdimqma*

We note that \Cref{prop:dim_counting_algorithm} shows that \Cref{thm:apx_dim_qma} is tight in the regime where the quantum proof satisfies $m = o(w)$. Otherwise, we conjecture that providing $O(w)$ copies of the mixed state $\rho$, where $\rho$ is maximally mixed over the the hidden subspace $S$ and performing swap tests to estimate the purity of $\rho$, suffices to solve the approximate dimension problem. However, this algorithm seems to require an unentanglement guarantee on the witness, which does not immediately show that it is a $\QMA$ tester. We leave this investigation to further work.

\subsection{Recurrence Times of Unitaries}
\label{sec:recurrence}

We now turn to analyzing the problem of testing recurrence times of unitaries. This corresponds to analyzing unitarily invariant properties that are not subspace properties. As mentioned in the introduction, one cannot directly use lower bounds on a related classical property testing problem; instead we have to make full use of the generalized polynomial method.

Recall the Recurrence Time problem defined in the introduction:

\recurrencedef*

\paragraph{Upper Bound.} We first present an upper bound on the query complexity of the Recurrence Time Problem.

\recurrenceub*

\begin{proof}

Fix an integer $t$. The goal is to determine whether there is an eigenvector $\ket{\psi}$ of $U^t$ such that the phase $e^{2\pi i \varphi}$ associated with $\ket{\psi}$ is more than $\epsilon$ far from $1$, or, equivalently, whether the phase $e^{2 \pi i \theta}$ of $\ket{\psi}$ with respect to $U$ satisfies $2 \pi i \theta t$ being more than $\epsilon$ away from an integer multiple of $2 \pi i$. If there is no such eigenvector, then $t$ is an $\epsilon$-recurrence time for $U$. To find such an eigenvector, we first prepare the $d$-dimensional maximally entangled state $\ket{\Phi} = \frac{1}{\sqrt{d}} \sum_j \ket{j} \ket{j}$. Let $\{ \ket{\psi_j} \}_j$ denote an eigenbasis for $U$ with associated eigenvalues $\{ e^{2\pi i \theta_j} \}_j$; then we have that $\ket{\Phi}$ can be equivalently expressed as 
\[
    \ket{\Phi} = \frac{1}{\sqrt{d}} \sum_j \ket{\psi_j} \ket{\overline{\psi_j}}
\]
where $\ket{\overline{\psi_j}}$ denotes the complex conjugate of $\ket{\psi_j}$ with respect to the standard basis. We perform phase estimation on the first register of $\ket{\Phi}$ with respect to $U$ to estimate the phases $\theta_j$ up to $\pm \epsilon/8t$ additive error, with success probability at least, say, $99\%$. The analysis of~\cite[Section 5.2.1]{nielsen2010quantum} shows that this requires $O(t/\epsilon)$ calls to the unitary $U$. 

The state then has the form
\[
   \ket{\Phi'} = \frac{1}{\sqrt{d}} \sum_{j,k} \alpha_{j,k} \ket{\widetilde{\theta}_j^{(k)}}\ket{\psi_j} \ket{\overline{\psi_j}}
\]
where $\widetilde{\theta}_j^{(k)}$ are the estimates of $\ket{\theta_j}$ from phase estimation, and $\alpha_{j,k}$ are the amplitudes of each of the estimates. As mentioned, the sum of squares of amplitudes $\alpha_{j,k}$ such that the estimate $\widetilde{\theta}_j^{(k)}$ differs from $\theta_j$ by more than $\epsilon/8t$ (we call such an estimate $\widetilde{\theta}_j^{(k)}$ \emph{bad}, otherwise it is \emph{good}) is at most $1\%$. 

We now perform amplitude amplification in order to identify whether there is an estimate $\ket{\widetilde{\theta}^{(k)}_j}$ such that 
\begin{equation}
    \label{eq:phase-epsilon}
    \Big |e^{2\pi i t \widetilde{\theta}^{(k)}_j} - 1 \Big | \geq \epsilon/2~.
\end{equation}
The amplitude amplification procedure will alternate between applying a phase on the $\ket{\widetilde{\theta}^{(k)}_j}$ states satisfying~\eqref{eq:phase-epsilon}, and reflecting about the state $\ket{\Phi'}$. Let $P$ be the projector onto estimates satisfying ~\eqref{eq:phase-epsilon} in the first register. %

In the \textit{no} case, either phase estimation fails, which occurs with occurs at most $1\%$ probability, or there is an estimate for a phase $\ket{\theta_j^k}$ that is $\epsilon$-far away from 1.  We claim that amplitude amplification finds a phase satisfying the condition ~\eqref{eq:phase-epsilon} with constant probability. If there is a phase $\theta_j$ such that $|e^{2\pi i t \theta_j} - 1| \geq \epsilon$, then the good estimates $\widetilde{\theta}_j^{(k)}$ of $\theta_j$ satisfy
    \begin{align*}
    \Big |e^{2\pi i t \widetilde{\theta}^{(k)}_j} - 1 \Big | &\geq  \Big |e^{2\pi i t \theta_j} - 1 \Big | -  \Big |e^{2\pi i t \widetilde{\theta}^{(k)}_j} - e^{2\pi i t \theta_j} \Big |  & \text{(triangle inequality)} \\
    &\geq \epsilon - 4 t \Big |\theta_j - \widetilde{\theta}_j^{(k)} \Big| & (\text{calculus}) \\
    &\geq \epsilon - \epsilon/2 = \epsilon/2.
    \end{align*}

Thus when phase estimation succeeds, the initial state satisfies $|P \ket{\Phi'}| \geq \frac{1}{\sqrt{d}}$ and hence when the marked phase is unique, $O(\sqrt{d})$ iterations suffice to boost the probability on the marked phase to constant probability. In the case where there are multiple phases satisfying the condition \eqref{eq:phase-epsilon}, then we run the amplitude amplification algorithm for $\sqrt{d}, \sqrt{\frac{d}{2}}, \sqrt{\frac{d}{4}}, \dots,$ iterations and so on. Since a marked item can be found with $O(\sqrt{\frac{d}{k}})$ iterations if there are $k$ marked items, the binary search procedure  terminates in $O(\sqrt{d})$ iterations and finds a phase that is $\epsilon$-far from 1 with constant probability.

Otherwise, in the \textit{yes} case where $U^t = I,$ the analysis of \cite[Section 5.2.1]{nielsen2010quantum} shows that the phase estimation algorithm produces exact values for the phase register $\ket{\theta_j^{k}}$ since the phases are integer multiples of $2 \pi$. Hence, the initial state in the amplitude amplification algorithm has no overlap with the subspace satisfying ~\eqref{eq:phase-epsilon} and  hence the final state after amplification remains the same as the initial state up to a global phase. Thus, the algorithm never finds a phase $\epsilon$ far from 1.

Hence in $O(\frac{t \sqrt{d}}{\epsilon})$ iterations we are able to distinguish between the \textit{yes} and \textit{no} cases with constant bias.

\end{proof}

\paragraph{Lower Bound.} Using the generalized polynomial method for unitarily invariant properties, we prove the following query lower bound for the Recurrence Time problem. 

\recurrence*

Before doing this, we introduce a useful symmetrization lemma we use to reduce the number of variables of the polynomial.

\begin{lemma}
Let $q(z_1, \dots, z_d)$ be the polynomial obtained from \Cref{lem:unitary_invariant_main_thm} for the acceptance probability of a $T$-query algorithm on the Recurrence Time problem. 

Let $D(p, z)$ be a distribution on $d$-dimensional diagonal unitaries where each diagonal entry is chosen to be equal to $z = e^{i \theta}$ with probability $p$ and otherwise equal to 1 with probability $1 - p.$ Then the expected value $$r(p, z) = \mathbb{E}_{(z_1, \dots, z_d) \sim D(p, z)}[q(z_1, \dots, z_d)]$$

is a self-adjoint polynomial of degree at most $2 \deg q.$
\label{lem:recurrence_time_lemma}
\end{lemma}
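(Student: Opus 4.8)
The starting point is the polynomial $q(z_1,\dots,z_d)$ from \Cref{lem:unitary_invariant_main_thm}, which we know is a self-adjoint polynomial of degree at most $2\deg q$ — wait, rather of degree at most $\deg q$ in total — symmetric under simultaneously permuting the $z_i$ and the $z_i^*$. (Here I keep in mind that $q$ is genuinely a polynomial in $2d$ variables $z_1,\dots,z_d,z_1^*,\dots,z_d^*$, but on the domain of interest the $z_i$ lie on the unit circle so $z_i^* = z_i^{-1}$; still, it is cleanest to treat $q$ as a formal polynomial and take expectations coefficient by coefficient.) The plan is to compute $r(p,z) = \E_{(z_1,\dots,z_d)\sim D(p,z)}[q(z_1,\dots,z_d)]$ directly by linearity of expectation over the monomials of $q$, and observe that each monomial's expectation is a polynomial in $p$ and $z$ (and $z^*$) of controlled degree.

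First I would expand $q$ as a linear combination of monomials $\prod_{i=1}^d z_i^{a_i}(z_i^*)^{b_i}$ with $\sum_i (a_i+b_i) \le \deg q$. Under the distribution $D(p,z)$ the entries $z_1,\dots,z_d$ are i.i.d., each equal to $z$ with probability $p$ and to $1$ with probability $1-p$. Hence the expectation of the monomial factors as $\prod_{i=1}^d \E[z_i^{a_i}(z_i^*)^{b_i}]$, and each factor equals $p\, z^{a_i}(z^*)^{b_i} + (1-p)\cdot 1$, which is an affine function of $p$ whose coefficients are monomials in $z,z^*$ of degree at most $a_i+b_i$. Multiplying these $d$ factors, the total degree in $p$ is at most $d$, but more importantly — and this is the point of the lemma — the total degree in $z$ and $z^*$ across the product is at most $\sum_i (a_i + b_i) \le \deg q$. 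I would then need to be slightly more careful to get the claimed bound $2\deg q$ on the overall degree of $r$ as a polynomial in $p, z, z^*$: since the $p$-degree is bounded by the number of distinct indices $i$ with $a_i + b_i > 0$, which is at most $\deg q$, and the combined $z,z^*$-degree is at most $\deg q$, the total degree of $r(p,z)$ is at most $2\deg q$. (If one instead wants to treat $r$ as a polynomial purely in $p$ and a single variable $z$ with $z^* = z^{-1}$ substituted, one should phrase it as a Laurent polynomial; but the statement as written, keeping $z$ and $z^*$ as separate formal variables, is the natural reading given the role of self-adjointness elsewhere.)

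The self-adjointness of $r$ follows because $q$ is self-adjoint and $D(p,z)$ is manifestly symmetric under the involution that conjugates $z \leftrightarrow z^*$ together with all coefficients: conjugating the formula for $r$ swaps each factor $p z^{a_i}(z^*)^{b_i} + (1-p)$ with $p z^{b_i}(z^*)^{a_i} + (1-p)$ while conjugating the (real-after-summing) coefficients of $q$, and summing over the symmetric set of monomials of $q$ returns $r$ unchanged; so $r(p,z)^* = r(p,z^*)$, i.e. $r$ is self-adjoint in the stated sense. I would spell this out using that the self-adjoint symmetric $q$ can be grouped into conjugate pairs of monomials (plus genuinely self-conjugate ones), each pair contributing a real quantity after averaging.

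The only mildly delicate point — the ``main obstacle,'' such as it is — is bookkeeping the degree correctly: one must resist the temptation to bound the $p$-degree by $d$ (which would give $d + \deg q$, useless when $d \gg \deg q$) and instead notice that only monomials actually appearing in $q$ contribute, so the number of ``active'' coordinates, and hence the $p$-degree, is itself bounded by $\deg q$. Everything else is a routine application of linearity of expectation and independence of the diagonal entries under $D(p,z)$.
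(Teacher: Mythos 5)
Your proposal is correct and takes essentially the same route as the paper: expand $q$ monomial-by-monomial (the paper groups them into conjugate binomials $z_I z_J^* + z_J z_I^*$), use independence of the diagonal entries under $D(p,z)$ to write each expectation as a product of affine-in-$p$ factors (the paper writes out the resulting binomial sum explicitly), and note that both the $p$-degree and the $z,z^*$-degree are bounded by $\deg q$, giving total degree at most $2\deg q$. One small caveat on wording: self-adjointness of $r$ is the statement that conjugating all coefficients and variables leaves $r$ unchanged, which your pairing-of-conjugate-monomials argument does establish; but the identity you wrote, $r(p,z)^* = r(p,z^*)$, should not be conflated with the swap symmetry $r(p,z) = r(p,z^*)$ that the paper additionally records (it follows from closure of the property under inversion via \Cref{prop:basic-polynomial-inv}, not from self-adjointness of $q$ alone) and later uses in the lower-bound argument.
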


\begin{proof}
Recurrence Time is a unitarily invariant property as $U^t = I$ iff $(VUV^*)^t = I$ and also $\|U^t - I\| \geq \epsilon$ iff $\|(VUV^*)^t - I\| \geq \epsilon$ for any unitaries $U$ and $V$. Therefore, \Cref{lem:unitary_invariant_main_thm} guarantees that the acceptance probability of a $T$-query algorithm for the problem can be written as a degree $\leq 2T$ self-adjoint polynomial in the eigenvalues of $U$. 

Since $q$ is a self-adjoint polynomial defined on the unit circle, $p$ can be expanded in a basis of binomials $z_I z_J^* + z_J z_I^*$ where $I \subseteq [n], J \subseteq [n]$, $I \cap J = \emptyset$, $z_I = \prod_{i \in I} z_i$ and $z_J = \prod_{j \in J} z_j^*$. The expected value of each binomial under when the eigenvalues are chosen according to $D(p, z)$ is then

$$\sum_{k_1 = 0}^{|I|} \sum_{k_2 = 0}^{|J|} \binom{|I|}{k_1} \binom{|J|}{k_2} p^{k_1 + k_2} (1-p)^{|I| + |J| - k_1 - k_2} [z^{k_1 - k_2} + (z^*)^{k_1 - k_2}].$$

Hence the expected value  $r(p, z) = \mathbb{E}_{(z_1, \dots, z_d) \sim D(p, z)}[q(z_1, \dots, z_d)]$ is a polynomial of degree at most $2 \deg q$ with the property that $r(p, z) = r(p, z^*).$ 
\end{proof}

We are now ready to prove \Cref{thm:recurrence}.

\begin{proof}
By \Cref{lem:recurrence_time_lemma}, if there was a $T$-query algorithm for the Recurrence Time problem, $r(p, z)$ is a polynomial of degree at most $4T$ that represents the expected probability the algorithm accepts on the distribution $D(p, z).$ We now lower bound the degree of $q$ by lower bounding the degrees of $p$ and $z$ separately.

Firstly we lower bound the degree of $p$ by fixing $z' =  \exp(\frac{4 \pi i \epsilon}{t}).$ For this value of $z'$, $r_1(p) = r(p, z')$ is a real-valued univariate polynomial with the property that $r_1(0) \geq \frac{2}{3}$ (since if $p = 0$ we are given the identity unitary as input).

Otherwise if $p = \frac{2}{d}$, the number of eigenvalues equal to $z'$ is a binomial random variable with $d$ trials and success probability $p = \frac{2}{d},$ which for sufficiently large $d$ is approximately Poisson distributed with mean equal to $2.$ Hence, for sufficiently large $d$, the probability that the input is the identity is at most $e^{-2}.$  If not, the input is a \textit{no} instance, since in this case $U^t$ would have an eigenvalue equal to $(z')^t = \exp(4 \pi i {\epsilon})$, and therefore

$$\|U^t - I\| = \sqrt{2 - 2 \cos(4 \pi \epsilon)} \geq \sqrt{\frac{2}{3} (4 \pi \epsilon)^2} \geq 10 \epsilon,$$ 

by assumption that $|4 \pi \epsilon| \leq 2$ and \Cref{lem:cosine_lemma}.

Hence we have that
$$r_1 \Big (\frac{2}{d} \Big ) \leq e^{-2} + \frac{1}{3} (1 - e^{-2}) \leq \frac{1}{2}.$$

Therefore, $r_1$ satisfies the properties that $0 \leq r_1(p) \leq 1$ for all $0 \leq p \leq 1$, $r_1(0) \geq \frac{2}{3}$, and $r_1(\frac{2}{d}) \leq \frac{1}{2}.$ By Markov's inequality (\Cref{lem:markov_inequality}), the inequality $$\frac{d}{12} \leq 2 (\deg r_1)^2,$$ must be satisfied, so $\deg r_1 \geq \Omega(\sqrt{d}).$

Now we lower bound by degree of $z$ by fixing $p = \frac{2}{d}$ and consider the polynomial $r_2(z) = r(p, z).$ Observe that $r_2$ has the property that $r_2(z^*) = r_2(z)$. Hence  \Cref{lem:symmetric_hermitian} applies and we can assume $r_2(z) = s_2(z + z^*)$ for some real-valued polynomial $s_2$ of the same degree. Observe that the $s_2$ is bounded by one and defined on the interval $[-2, 2].$ Furthermore, for $z_1 = 1,$ we have $s_2(z_1 + z_1^*) = r(p, z_1) \geq \frac{2}{3},$ and otherwise for $z_2 = \exp(\frac{4 \pi i \epsilon}{t}),$ we have from the previous calculation that $s_2(z_2 + z_2^*) =  r(p, z_2) \leq \frac{1}{2}.$ Since by \Cref{lem:cosine_lemma} and the assumption that $\epsilon \leq \frac{1}{2\pi},$

$$|(z_1 + z_1^*) - (z_2 + z_2^*)| = |2 - 2 \cos (\frac{4 \pi \epsilon}{t})| \leq \frac{16 \pi^2 \epsilon^2}{t^2},$$ we conclude that the derivative of $s_2$ must satisfy $|s_2'(x)| \geq \frac{t^2}{96 \pi^2 \epsilon^2}.$
for some point $x \in [2 - 2 \cos (\frac{4 \pi \epsilon}{t}), 2]$. Hence, by Markov's inequality, we have that the degree of $s_2$ must satisfy:

$$\frac{t^2}{96 \pi^2 \epsilon^2} \leq \frac{2 (\deg s_2)^2}{4},$$

Hence, $\deg r_2 = \deg s_2 \geq \Omega(\frac{t}{\epsilon}).$

Therefore, combining the two lower bounds implies that there must be monomials in $r(p, z)$ with $p$-degree at least $\Omega(\sqrt{d})$ and $z$-degree at least $\Omega(\frac{t}{\epsilon}).$ Hence, since $\deg r \leq 2 \deg q \leq 4 T$ where $T$ is the query complexity of the algorithm, we obtain $T \geq \Omega(\max(\frac{t}{\epsilon}, \sqrt{d})).$
\end{proof}

We note that a similar lower bound can also be obtained using hybrid method of \cite{bennett1997strengths}. However, it is unclear whether the hybrid method can be used to obtain lower bounds in the $\QMA$ setting. We now modify the previous arguments to show that the Recurrence Time problem remains hard even when the tester receives a quantum proof that is not too large. %

\recurrenceqma*

\begin{proof}
If $m \geq \Omega(d)$ we are done, so we assume that $m \leq o(d)$.

Let $r(p, z)$ be obtained from \Cref{lem:recurrence_time_lemma} and \Cref{lem:guessing_lemma}. Again we will lower bound the degree of $q$ by considering the degree of $p$ and $z$ separately. 

We first lower bound the degree of $p$ by fixing $z' = \exp(\frac{4 \pi i \epsilon}{t}).$ For this value of $z'$, $r_1(p) = r(p, z')$ is a real-valued univariate polynomial with the property that $r(0) \geq 2^{-m}$. Otherwise for all $p \geq \frac{10 m}{d},$ we have the probability that there is there is no non-trivial eigenvalue is bounded by $(1 - p)^d \leq \exp(-dp) \leq \exp(-10m),$ and hence for all $\frac{10m}{d} \leq p \leq 1,$ we have

$$r_1(p) \leq \exp(-10m) + 2^{-10 m} (1 - \exp(- 10m)) \leq 2^{-9m}$$

since this value of $z'$ corresponds to a no instance. 

Therefore, if $y_0 = \frac{10m}{d}, y_1 = 1$ the polynomial

$$s_1(x) = 2^{9m} r_1 \left(\frac{y_0 - y_1}{2} (x - 1) + y_0 \right),$$

implies that the polynomial $s_1$ satisfies $\deg s_1 \leq \deg r_1,$ $|s_1(x)| \leq 1$ for all $|x| \leq 1$, and $s(1 +  \frac{2y_0}{y_1 - y_0}) \geq 2^{8m}.$ Observe that $\frac{2y_0} {y_1 - y_0} = \frac{20m}{d-10m} \leq \frac{40m}{d}$ by our assumption on $m$. Hence, applying Paturi's Lemma (Lemma \ref{lem:paturi}) with these conditions and $\mu = \frac{40m}{d}$, we obtain the inequality:

$$2^{8m} \leq \exp(2 (\deg s_1) \sqrt{\mu^2 + 2 \mu}) \leq \exp(4 (\deg s_1) \sqrt{\mu})$$

since by assumption $\mu \leq 2.$ Therefore, solving for $\deg s_1$ implies that $\deg r_1 \geq \deg s_1 \geq \Omega(\sqrt{md}).$

Now we lower bound the degree of $z$ by fixing $p = \frac{10m}{d}$ and consider the polynomial $r_2(z) = r(p, z).$ Observe that $r_2$ has the property that $r_2(z^*) = r_2(z)$. Hence Lemma \ref{lem:symmetric_hermitian} applies and we can assume $r_2(z) = s_2(z + z^*)$ for some real-valued polynomial $s_2$ of the same degree. Observe that since for any integer $j$, since any unitary whose only eigenvalue is $z = \exp(\frac{2 \pi i j}{t})$ is a \textit{yes} instance, we have $$r_2\left(\exp(\frac{2 \pi i j}{t}) \right) = s_2\left(2 \cos \frac{2 \pi j}{t} \right) \geq 2^{-m}.$$

Otherwise, there is at least one point in the interval $x \in (2 \cos \frac{2 \pi (j+1)}{t}, 2 \cos \frac{2 \pi j}{t}),$ where $s_2(x) \leq 2^{-9m},$ since all unitaries whose only eigenvalue is equal to $z = \exp(\frac{2 \pi i (j + 1/2)}{t})$ corresponds to a \textit{no} instance, which corresponds to the point $z + z^* = x = 2 \cos \frac{2 \pi (j + 1/2)}{t}.$ . 

Hence, $s_2(x) - (\frac{2^{-m} + 2^{-9m}}{2})$ has at least $\frac{t}{2}$ roots in the interval $[-2, 2]$ since $s_2 - (\frac{2^{-m} + 2^{-9m}}{2})$ changes sign at least $\frac{t}{2}$ times, which implies that the degree of $s_2$ at least $\frac{t}{2}$ by the Fundamental Theorem of Algebra (\Cref{lem:fta}). Therefore, since $\deg r_2 = \deg s_2,$ the degree of $r_2(z)$ is at least $\Omega(t).$

Finally, we consider the dependence on the error $\epsilon$. Furthermore, since $s_2(2) \geq 2^{-m}$ and $s_2(x) \leq 2^{-10m}$ for all $y_0 = 2 \cos \frac{2 \pi (1 - \epsilon)}{t}  \leq x \leq y_1 = 2 \cos \frac{4 \pi \epsilon}{t},$ we have that

$$s_3(x) = 2^{10m} s_2 \left(\frac{y_1 - y_0}{2} (x - 1) + y_1 \right),$$

satisfies $|s_3(x)| \leq 1$ for all $|x| \leq 1$, and that when $x = 1 + \frac{2 (2 - y_1)}{y_1 - y_0}$ we have $s_3(x) \geq 2^{9m}.$ By \Cref{lem:cosine_lemma}, $y_1 - y_0 \geq \frac{8 \pi^2}{3 t^2} + O(\epsilon)$ and $2 - y_1 \leq \frac{16 \pi^2 \epsilon^2}{t^2}.$ Hence, we may take $\mu \leq O(\epsilon^2)$ in Paturi's Lemma to conclude that $\deg s_3$ satisfies

$$2^{9m} \leq \exp(4 (\deg s_3) \sqrt{\mu}) = \exp(2 (\deg s_3) O(\epsilon)),$$

and hence $\deg r_2 = \deg s_2 \geq \deg s_3 \geq \Omega(\frac{m}{\epsilon}).$

Putting these bounds together, we conclude that either $m \geq \Omega(d),$ or otherwise, since $\deg r_1 \leq \deg r$ and $\deg r_2 \leq \deg r$, we have $$\max(\Omega(\sqrt{md}), \Omega(t), \Omega(\frac{m}{\epsilon})) \leq \deg r \leq 2 \deg q \leq O(mT),$$

which was the claimed bound.
\end{proof}

Observe that there is a weaker dependence on $\epsilon$ in our $\QMA$ lower bound, compared to the $\BQP$ lower bound for the Recurrence Time problem. We leave improving this dependence to further work.

We end with some brief observations about the \textsf{coQMA} query complexity of the problem where we provide a certificate for \textit{non}-recurrence. We note that the query complexity of the problem changes significantly in the \textsf{coQMA} setting compared to the \textsf{QMA} setting. Here, a valid certificate is an eigenvector $\ket{\psi}$ of $U^t$ with eigenvalue not  equal to $e^{i \theta}$ where $\theta \in [-\epsilon, \epsilon]$, and a quantum phase estimation can be used with $O(\frac{t}{\epsilon})$ queries to compute the the corresponding eigenvalue of $\ket{\psi}$ to $O(\epsilon)$ precision. In particular, there is no dependence on the dimension $d$. Therefore, in the setting where the recurrence time $t$ is constant, the unitary recurrence problem provides an exponential query complexity separation between \textsf{QMA} and \textsf{coQMA}.

\section{Invariance under Local Unitaries}

\label{sec:local_unitary}

\subsection{Local Unitary Invariants}

Recall from the introduction the definition of the local unitary group.

\localunitary*

As discussed in the introduction, $\LU$-invariance naturally captures the symmetry associated with entanglement properties of states and operators. \Cref{prop:basic-polynomial-symmetrized} implies that a $T$-query tester for an $\LU$-invariant property $\cal{P}$ gives rise to a degree-$2T$ polynomial $q$ belonging to the invariant ring $\C[X,X^*]^{\LU(d_1,d_2)}$ that decides $\cal{P}$, where $X,X^*$ represent the variables and their conjugates of matrices acting on $\C^{d_1} \otimes \C^{d_2}$.

As with the full unitary group case, it is possible to characterize the polynomial functions on matrices that are invariant under the local unitary group. The next theorem, due to Procesi \cite{PROCESI1976306} and Brauer \cite{brauer1937algebras}, presents such a characterization.

\begin{theorem}[Generators for $\LU$-invariant polynomials]
Let $\sigma, \tau$ be permutations on $k$ elements and let $R_\sigma, R_\tau$ be the corresponding permutation operators on $(\mathbb{C}^d)^{\otimes k}$. Then the homogenous degree $k$ part of the invariant ring $\C[X]^{\LU(d,d)}$, where $X$ represents the variables of a matrix acting on $\C^d \otimes \C^d$, is in the linear span of the polynomials $\Tr((R_\sigma \otimes R_\tau) X^{\otimes k})$,\footnote{The way the operators should be multiplied is as follows: if the $i$'th copy of $X$ acts on registers $A_i B_i$, then $R_\sigma$ permutes the $A_i$ registers and $R_\tau$ permutes the $B_i$ registers.} ranging over all permutations $\sigma, \tau \in S_k.$
\label{thm:local_invariant_polys}
\end{theorem}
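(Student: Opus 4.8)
The plan is to recognize this statement as the first fundamental theorem of invariant theory for the conjugation action of $\unitary(d)\times\unitary(d)$ on $\mathrm{End}(\C^d\otimes\C^d)$ and to derive it from Schur--Weyl duality. Write $W = \C^d\otimes\C^d$, so that $X$ ranges over $\mathrm{End}(W)$ and $(g\otimes h)\in\LU(d,d)$ acts by $X\mapsto (g\otimes h)X(g\otimes h)^{-1}$. First I would reduce to multilinear invariants: by polarization, every homogeneous degree-$k$ invariant is a linear combination of restitutions (``set all variables equal'') of multilinear $\LU(d,d)$-invariant functions $F(X_1,\dots,X_k)$ of $k$ independent matrices $X_i\in\mathrm{End}(W)$, so it is enough to classify these multilinear invariants and then specialize $X_1=\dots=X_k=X$.

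Next I would identify the multilinear invariants with a space of invariant operators. Using the trace pairing $\langle A,B\rangle = \Tr(AB)$ to identify $\mathrm{End}(W)^*\cong\mathrm{End}(W)$ and $\mathrm{End}(W)^{\otimes k}\cong\mathrm{End}(W^{\otimes k})$, every multilinear function has the form $F(X_1,\dots,X_k) = \Tr\!\big(M\,(X_1\otimes\dots\otimes X_k)\big)$ for a unique $M\in\mathrm{End}(W^{\otimes k})$, where the $i$-th tensor slot acts on the $i$-th copy of $W$, and $F$ is $\LU(d,d)$-invariant precisely when $M$ commutes with the diagonal action of $\LU(d,d)$ on $W^{\otimes k}$. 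Then I would apply Schur--Weyl duality: reordering the tensor legs of $W^{\otimes k} = (\C^d\otimes\C^d)^{\otimes k}$ into $(\C^d)^{\otimes k}\otimes(\C^d)^{\otimes k}$ converts the action of $(g\otimes h)$ into $g^{\otimes k}\otimes h^{\otimes k}$, so the commutant is spanned by the operators $R_\sigma\otimes R_\tau$ with $\sigma,\tau\in S_k$ (using that the commutant of $\{g^{\otimes k}:g\in\unitary(d)\}$ is spanned by the $R_\sigma$, and likewise for the second factor). Translating back through the trace pairing shows the multilinear invariants are spanned by $\Tr\!\big((R_\sigma\otimes R_\tau)(X_1\otimes\dots\otimes X_k)\big)$, and restituting gives exactly the polynomials $\Tr\!\big((R_\sigma\otimes R_\tau)X^{\otimes k}\big)$. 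The reverse inclusion is immediate: each such polynomial is homogeneous of degree $k$ in the entries of $X$ and is invariant, since $R_\sigma\otimes R_\tau$ commutes with $g^{\otimes k}\otimes h^{\otimes k}$ and the trace is cyclic.

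The conceptual heart is Schur--Weyl duality, which I would invoke as a black box; the real work is bookkeeping. The part I expect to be the main obstacle is keeping track, after the reordering $W^{\otimes k}\cong(\C^d)^{\otimes k}\otimes(\C^d)^{\otimes k}$, of exactly which registers $R_\sigma$ versus $R_\tau$ permute --- this is precisely the convention recorded in the footnote of the statement --- and confirming that polarization and restitution lose no generators (which holds in characteristic zero). I would also need the standard fact that polynomial invariance under the compact group $\unitary(d)$ agrees with invariance under $GL(d,\C)$, since $\unitary(d)$ is Zariski-dense in $GL(d,\C)$; this is what makes the algebraic Schur--Weyl commutant statement applicable here. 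With these ingredients in hand the argument is routine, and one may alternatively cite Procesi \cite{PROCESI1976306} and Brauer \cite{brauer1937algebras} directly, or adapt the Weingarten-averaging computation of \Cref{subsection:weingarten} to the product group $\unitary(d)\times\unitary(d)$.
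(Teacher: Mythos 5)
Your proposal is correct, but note that the paper does not supply a proof of this statement at all: it is invoked as a known result, cited directly to Procesi~\cite{PROCESI1976306} and Brauer~\cite{brauer1937algebras}, in the same way that \Cref{thm:unitarily_invariants} is cited for the full unitary group. So your argument is a genuinely self-contained alternative rather than a reproduction of the paper's route. The chain you describe --- polarize a homogeneous degree-$k$ invariant to a multilinear invariant $F(X_1,\dots,X_k)$ (lossless in characteristic zero), write $F = \Tr\big(M(X_1\otimes\dots\otimes X_k)\big)$ via the trace pairing, observe that $\LU(d,d)$-invariance of $F$ is exactly the condition that $M$ commutes with $\{(g\otimes h)^{\otimes k}\}$, reorder the legs so this becomes $g^{\otimes k}\otimes h^{\otimes k}$ on $(\C^d)^{\otimes k}\otimes(\C^d)^{\otimes k}$, and compute the commutant as $\mathrm{span}\{R_\sigma\otimes R_\tau\}$ by Schur--Weyl duality applied to each factor --- is the standard first-fundamental-theorem argument and is sound. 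The one step you should make explicit is why the commutant of the \emph{product} group factorizes: the algebra generated by $\{g^{\otimes k}\otimes h^{\otimes k}\}$ is the tensor product of the two $*$-closed algebras $\mathrm{span}\{g^{\otimes k}\}$ and $\mathrm{span}\{h^{\otimes k}\}$, and in finite dimensions the commutant of a tensor product of $*$-closed algebras is the tensor product of the commutants; with that noted, the bookkeeping of which registers $R_\sigma$ and $R_\tau$ permute matches the footnote of the statement. The Zariski-density remark is only needed if you want to pass to $GL(d)$-invariance, which your unitary Schur--Weyl formulation does not actually require. What your derivation buys is independence from the references; what the paper's citation buys is brevity and access to Procesi's more general results (tuples of matrices, other classical groups), which the paper leans on elsewhere.
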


As an aside, we note that \cite{biamonte2013tensor} has provided an interpretation of these invariants in terms of \textit{tensor networks}, which are a visual tool for representing high dimensional tensors. We also note that these invariants have been studied extensively in the pure mathematics and physics literature \cite{qiao2020local, grassl1998computing, turner2017complete}.

Ultimately, we would like to use \Cref{thm:local_invariant_polys} to prove query complexity lower bounds on $\LU$-invariant properties. However, this characterization of the $\LU$-invariant ring, while explicit, appears less simple to use than \Cref{thm:unitarily_invariants}. In the full unitarily invariant case, the generators $\Tr(R_\sigma \, X^{\otimes k})$ are symmetric polynomials depending only on the cycle structure of $\sigma$ and the eigenvalues of $X$. This information is sufficient for us to leverage tools from approximation theory to lower bound the degree the invariant polynomial. 

In contrast, it is not so clear how to make use of the quantities $\Tr((R_\sigma \otimes R_\tau) U^{\otimes k})$ for general unitaries $U$; for example we do not know if the traces can be expressed as a polynomial of some natural quantities (like how the eigenvalues of $U$ are natural linear-algebraic quantities) that capture some entanglement properties of $U$. 
However, there is a special case for which we can give a good characterization of the invariant polynomials, which is when $X$ is a projector onto a one-dimensional subspace:

\begin{theorem}[{{\cite[Theorem 22]{biamonte2013tensor}}}]
Let $\Pi = \ketbra{\psi}{\psi}$ be the projector onto a bipartite state $\ket{\psi} \in \C^d \otimes \C^d.$ Let $\rho = \sum_i \lambda_i \ketbra{v_i}{v_i}$ denote the reduced density matrix of $\rho$ on the first subsystem. Let $\sigma, \tau \in S_k$. Then  $\Tr((R_\sigma \otimes R_\tau) \Pi^{\otimes k})$ is a symmetric degree-$k$ polynomial in the eigenvalues $\lambda_i$ of $\rho$.
\label{thm:one_dim_local_unitaries}
\end{theorem}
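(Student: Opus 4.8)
The plan is to compute $\Tr\big((R_\sigma\otimes R_\tau)\,\Pi^{\otimes k}\big)$ directly in the Schmidt basis of $\ket\psi$ and show it factors into a product of power-sum symmetric polynomials in the Schmidt coefficients. First I would write the Schmidt decomposition $\ket\psi = \sum_i \sqrt{\lambda_i}\,\ket{v_i}\ket{w_i}$ with $\{\ket{v_i}\}$, $\{\ket{w_i}\}$ orthonormal, so that $\rho = \sum_i \lambda_i\ketbra{v_i}{v_i}$ and $\Pi = \sum_{i,j}\sqrt{\lambda_i\lambda_j}\,\ketbra{v_i}{v_j}\otimes\ketbra{w_i}{w_j}$. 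Grouping registers so that the $m$-th copy of $\Pi$ acts on $A_mB_m$, expanding the tensor power gives
\[
\Pi^{\otimes k}=\sum_{\vec\imath,\vec\jmath}\Big(\prod_{m=1}^k\sqrt{\lambda_{i_m}\lambda_{j_m}}\Big)\,\ketbra{v_{i_1}\cdots v_{i_k}}{v_{j_1}\cdots v_{j_k}}_A\otimes\ketbra{w_{i_1}\cdots w_{i_k}}{w_{j_1}\cdots w_{j_k}}_B .
\]

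Next I would apply $R_\sigma\otimes R_\tau$, which permutes the $A$-kets according to $\sigma$ and the $B$-kets according to $\tau$, and take the trace over all $2k$ registers. By orthonormality of the Schmidt vectors, the trace enforces two systems of Kronecker deltas: one from the $A$ side relating $\vec\jmath$ to $\vec\imath$ via $\sigma$, one from the $B$ side relating $\vec\jmath$ to $\vec\imath$ via $\tau$. Eliminating $\vec\jmath$, these combine to the single condition that $\vec\imath$ be constant on each cycle of a permutation $\pi$, where $\pi$ is $\sigma^{-1}\tau$ or its inverse depending on the convention for $R_\sigma$ — only the cycle type of $\pi$ will matter. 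Moreover, once $\vec\imath$ is fixed, $\vec\jmath$ is a rearrangement of the multiset $\{i_1,\dots,i_k\}$.

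The crucial simplification — and the only place where the square roots, which a priori threaten polynomiality, could have caused trouble — is that since $\vec\jmath$ is a permutation of $\vec\imath$ we have $\prod_m\lambda_{j_m}=\prod_m\lambda_{i_m}$, hence $\prod_m\sqrt{\lambda_{i_m}\lambda_{j_m}}=\prod_m\lambda_{i_m}$, an honest monomial. Writing the cycles of $\pi$ as $C_1,\dots,C_r$ with lengths $\ell_1,\dots,\ell_r$ so that $\sum_s\ell_s=k$, a vector $\vec\imath$ constant on cycles is exactly a choice of value $a_s\in[d]$ for each cycle, and $\prod_m\lambda_{i_m}=\prod_s\lambda_{a_s}^{\ell_s}$. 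Summing over all such $\vec\imath$ therefore factorizes:
\[
\Tr\big((R_\sigma\otimes R_\tau)\,\Pi^{\otimes k}\big)=\sum_{a_1,\dots,a_r\in[d]}\ \prod_{s=1}^r\lambda_{a_s}^{\ell_s}=\prod_{s=1}^r\Big(\sum_{a=1}^d\lambda_a^{\ell_s}\Big)=\prod_{s=1}^r p_{\ell_s}(\lambda_1,\dots,\lambda_d),
\]
a product of power-sum symmetric polynomials of total degree $\sum_s\ell_s=k$. This is symmetric in the $\lambda_i$ and homogeneous of degree $k$, which proves the theorem.

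I expect the only delicate point to be bookkeeping: tracking the permutation-operator convention carefully enough to verify that the two delta-systems really do combine to ``$\vec\imath$ constant on the cycles of $\pi$'', and keeping the $A/B$ register labelling consistent with the footnote in \Cref{thm:local_invariant_polys}. Everything after that reduction is the elementary factorization displayed above.
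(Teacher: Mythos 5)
Your computation is correct, and the delicate point you flag does work out: with the paper's convention $R_\sigma\ket{x_1\cdots x_k}=\ket{x_{\sigma^{-1}(1)}\cdots x_{\sigma^{-1}(k)}}$, the two delta systems read $j_m=i_{\sigma^{-1}(m)}$ and $j_m=i_{\tau^{-1}(m)}$, which combine to $i_{\sigma^{-1}\tau(n)}=i_n$ for all $n$, i.e.\ $\vec\imath$ constant on the cycles of $\sigma^{-1}\tau$, with $\vec\jmath$ a rearrangement of $\vec\imath$ so the square roots collapse as you say. Note, though, that the paper does not prove this statement at all: it is imported verbatim as \cite[Theorem 22]{biamonte2013tensor}, so there is no internal proof to compare against. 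Your Schmidt-basis contraction is a clean self-contained substitute, and it actually gives more than the stated theorem, namely the explicit identity
\begin{equation*}
\Tr\bigl((R_\sigma\otimes R_\tau)\,\Pi^{\otimes k}\bigr)\;=\;\prod_{C\in C(\sigma^{-1}\tau)}\Tr\bigl(\rho^{|C|}\bigr)\;=\;\Tr_{\sigma^{-1}\tau}(\rho,\dots,\rho),
\end{equation*}
a product of power sums determined solely by the cycle type of $\sigma^{-1}\tau$ (which, as you note, is convention-independent since a permutation and its inverse or conjugates share a cycle type). This homogeneous degree-$k$ symmetric polynomial form is exactly the shape of invariant the paper uses elsewhere (cf.\ its $\Tr_\sigma$ notation in the Weingarten discussion and the power-sum manipulations in the entanglement-entropy lower bound), so your argument slots in seamlessly; the only cosmetic caveat is that the theorem as stated says ``degree-$k$'' while you prove the sharper homogeneity of degree exactly $k$, which is consistent.
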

\noindent The tuple of eigenvalues $(\lambda_1,\ldots,\lambda_d)$ is called the \emph{entanglement spectrum} of $\ket{\psi}$. This has the following consequence for $\LU$-invariant one-dimensional subspace properties:

\begin{lemma}
Let $\cal{P} = (\cal{P}_{yes},\cal{P}_{no})$ denote a $\LU$-invariant subspace property where the instances consist of reflections $U = I - 2\ketbra{\psi}{\psi}$ for some pure state $\ket{\psi} \in \C^d \otimes \C^d$. Suppose there is a $T$-query tester for $\cal{P}$ that accepts \emph{yes} instances with probability at least $a$ and \emph{no} instances with probability at most $b$. Then there exists a degree-$2T$ symmetric polynomial $p(\lambda_1,\ldots,\lambda_d)$ such that 
\begin{itemize}
    \item If $U \in \cal{P}_{yes}$, then $p(\lambda_1,\ldots,\lambda_d) \geq a$.
    \item If $U \in \cal{P}_{no}$, then 
    $p(\lambda_1,\ldots,\lambda_d) \leq b$.
\end{itemize}
Here, $p$ is evaluated at the entanglement spectrum $(\lambda_1,\ldots,\lambda_d)$ of the pure state $\ket{\psi}$ corresponding to $U$.
\label{lem:lu-one-dim}
\end{lemma}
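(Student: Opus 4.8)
The plan is to chain the $G$-invariant symmetrization of \Cref{prop:basic-polynomial-symmetrized} together with the two structural results on $\LU$-invariant polynomials (\Cref{thm:local_invariant_polys} and \Cref{thm:one_dim_local_unitaries}), specialized to rank-one projectors.

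First I would apply \Cref{prop:basic-polynomial-symmetrized} with $G = \LU(d,d) \subseteq \unitary(d^2)$: since $\cal{P}$ is $\LU$-invariant, a $T$-query tester gives a self-adjoint degree-$2T$ polynomial $q \in \C[X,X^*]^{\LU(d,d)}_{d^2}$ with $q(U,U^*) \geq a$ on \emph{yes} instances and $q(U,U^*) \leq b$ on \emph{no} instances. The instances here are Hermitian reflections, $U = U^* = I - 2\Pi$ with $\Pi = \ketbra{\psi}{\psi}$, so I would pass to the single-variable polynomial $\tilde q(Z) := q(I - 2Z,\, I - 2Z)$. This substitution is affine in the matrix variable $Z$, so $\deg \tilde q \leq 2T$; and since $g(I-2Z)g^{-1} = I - 2\,gZg^{-1}$ for any $g$, the polynomial $\tilde q$ lies in the one-matrix invariant ring $\C[Z]^{\LU(d,d)}$. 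By construction $\tilde q(\Pi) = q(U,U^*)$ for every instance.

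Next I would decompose $\tilde q$ into homogeneous parts and invoke \Cref{thm:local_invariant_polys}: the degree-$k$ part of $\C[Z]^{\LU(d,d)}$ is spanned by the generators $\Tr\big((R_\sigma \otimes R_\tau)\,Z^{\otimes k}\big)$ with $\sigma,\tau \in S_k$, so $\tilde q$ is a linear combination of such generators with $k \leq 2T$. Specializing $Z = \Pi = \ketbra{\psi}{\psi}$ and applying \Cref{thm:one_dim_local_unitaries}, each $\Tr\big((R_\sigma \otimes R_\tau)\,\Pi^{\otimes k}\big)$ becomes a symmetric polynomial of degree at most $k \le 2T$ in the entanglement spectrum $(\lambda_1,\ldots,\lambda_d)$ (the eigenvalues of the reduced density matrix of $\ket{\psi}$). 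Hence $\tilde q(\Pi)$ is a linear combination of symmetric polynomials of degree $\leq 2T$ in $(\lambda_1,\ldots,\lambda_d)$, i.e.\ equals $p(\lambda_1,\ldots,\lambda_d)$ for a symmetric degree-$2T$ polynomial $p$; and $p(\lambda_1,\ldots,\lambda_d) = \tilde q(\Pi) = q(U,U^*)$ gives the claimed bounds. If real coefficients are desired, one replaces $p$ by its real part, which agrees with $p$ on every genuine entanglement spectrum because $q$ is self-adjoint and $U$ is Hermitian.

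The argument is essentially a bookkeeping exercise on top of the cited theorems; the only things to check carefully are that the affine substitution $U \mapsto I - 2\Pi$ keeps us inside the $\LU$-invariant ring and does not increase the degree beyond $2T$, and that \Cref{thm:one_dim_local_unitaries} applies exactly because $\Pi$ has rank one --- which is precisely the one-dimensional-subspace hypothesis. The one place a reader might worry is whether passing from matrix entries to the $\lambda_i$ inflates the degree; it does not, since each generator of degree $k$ in the entries of $\Pi$ maps to a polynomial of degree at most $k$ in the $\lambda_i$. So I do not anticipate a genuine obstacle here, only care in tracking the reductions.
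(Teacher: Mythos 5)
Your proposal is correct and follows essentially the same route as the paper: apply \Cref{prop:basic-polynomial-symmetrized} with $G = \LU(d,d)$, use self-adjointness and the affine substitution $U = I - 2\Pi$ to land in the one-matrix invariant ring, then invoke \Cref{thm:local_invariant_polys} and \Cref{thm:one_dim_local_unitaries} to rewrite the generators as symmetric polynomials of degree at most $2T$ in the entanglement spectrum. Your write-up is just a more explicit bookkeeping of the same steps the paper states tersely.
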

\begin{proof}
By \Cref{prop:basic-polynomial-symmetrized} there exists a degree-$2T$ polynomial $q(U,U^*)$ belonging to the invariant ring $\C[X,X^*]^{\LU(d,d)}$ that decides $\cal{P}$ with the acceptance probabilities at least $a$ and at most $b$ for \emph{yes} and \emph{no} instances respectively. However since $U$ is self-adjoint this means that $q$ in fact belongs to the invariant ring $\C[X]^{\LU(d,d)}$. Since $U = I - 2\ketbra{\psi}{\psi}$, the polynomial $q$ can be equivalently expressed as a degree-$2T$ function of the projector $\ketbra{\psi}{\psi}$. 
By \Cref{thm:local_invariant_polys}, the polynomial $q$ can be expressed as a linear combination of $\Tr((R_\sigma \otimes R_\tau) \ketbra{\psi}{\psi}^{\otimes k})$ over all permutations $\sigma,\tau$ of $k$ elements with  $1 \leq k \leq 2T$. By \Cref{thm:one_dim_local_unitaries}, these traces are degree-$k$ symmetric polynomials in the entanglement spectrum of $\ket{\psi}$. Put together, this yields the desired polynomial $p$. %
\end{proof}

\subsection{The Entanglement Entropy Problem}

We use \Cref{lem:lu-one-dim} to prove lower bounds on an entanglement testing problem. Recall the Entanglement Entropy problem as defined in the introduction:

\renyientropy*

\entanglemententropyproblem*

Recall that the entanglement entropy is invariant under local unitaries, and that the entanglement entropy can be computed by the formula $H_2(\ket{\psi}) =  - \log (\sum_{i=1}^d \lambda_i^2)$ where $\lambda_i$ are the eigenvalues of the reduced density $\rho$. In particular, if the reduced density $\rho$ of $\ket{\psi}$ was maximally mixed on a subspace of dimension $r$, then $H_2(\ket{\psi}) = \log r.$ With a fairly straightforward application of the polynomial, we can prove the following query lower bound for the entanglement entropy problem.

\entanglemententropybound*

\begin{proof}
By properties of the entanglement entropy, we observe that if oracle $O = I - 2 \ket{\psi} \bra{\psi}$ satisfies the low or high entropy condition, so does the oracle $(U \otimes V) O (U^* \otimes V^*)$ for any unitaries $U$ and $V$. Hence applying \Cref{lem:lu-one-dim}
implies that if there was a $T$ query algorithm to solve the problem, there exists a degree $\leq 2T$ symmetric polynomial $p(\lambda_1, \dots, \lambda_n)$ which represents the success probability of the algorithm where $(\lambda_1, \dots, \lambda_n)$ are the eigenvalues of the reduced density $\rho$ of $\ket{\psi}.$

Let $r \geq 1$ be an integer and consider the success probability of the algorithm on instances where $\rho$ has $r$ eigenvalues equal to $\frac{1}{r}$ (i.e. $\rho$ is maximally mixed on a subspace of dimension $r$). For this set of eigenvalues $\Lambda_{r}$ we have $\lambda_1^i + \dots + \lambda_n^i = \frac{1}{r^{i-1}}.$ Hence, the substitution $q_1(r) = p(\Lambda_r)$ produces a Laurent polynomial with non-positive exponents only. This means that $q_2(r) = q_1(\frac{1}{r})$ is a polynomial satisfying the properties that:

\begin{itemize}
    \item $q_2(\exp(-a))) \leq \frac{1}{3}$ to satisfy the low entropy case.
    \item $q_2(\exp(-b)) \geq \frac{2}{3}$ to satisfy the high entropy case
    \item $0 \leq q_2(i) \leq 1$ at all points $i = \frac{1}{n}$.
\end{itemize}

Now we bound the degree of $q_2$. First assume that that $q_2(x)$ is bounded by 2 for all $x$ in the range $x \in [\frac{1}{d}, \exp(-a/2)]$. Then there is a point $y$ where the derivative of $q_2$ satisfies

$$|q_2'(y)| \geq \frac{\frac{2}{3} - \frac{1}{3}}{\exp(-a) - \exp(-b)} = \frac{\exp(a)}{3 (1 - \exp(a-b))}.$$

Since $exp(-a/2) - \frac{1}{d} \geq \exp(-a/2) - \exp(-a) \geq \frac{1}{2} \exp(-a/2)$ by assumption that $a \geq 5,$ then Markov's inequality implies that

$$\frac{\exp(a)}{3 (1 - \exp(a-b))} \leq \frac{2 (\deg q_2)^2}{\exp(-a/2) - \frac{1}{d}} \leq 4 \exp(a/2) (\deg q_2)^2.$$

Hence, in this case, $\deg q_2 \geq \Omega(\frac{\exp(a/4)}{\sqrt{1 - \exp(a-b)}}) \geq \Omega(\exp(a/4)).$

Otherwise, there exists a point $x \in [\frac{1}{d}, \exp(-a/2)]$ where $q_2(x) = k \geq 2$. In this case, there is a point $\frac{1}{r_1} < y < \frac{1}{r_1 + 1}$ where the derivative satisfies:

$$|q_2'(y)| \geq \frac{k-1}{\frac{1}{r_1} - \frac{1}{r_1 + 1}} = \frac{k-1}{\frac{1}{r_1 (r_1 + 1)}} \geq \frac{k}{2} r_1^2 \geq \frac{k}{2} \exp(a).$$

Hence in this case, Markov's inequality implies that 

$$\frac{k}{2} \exp(a) \leq \frac{2 k (\deg q_2)^2}{\exp(-a/2) - \frac{1}{d}} \leq 4 k \exp(a/2) (\deg q_2)^2,$$

which implies that, $\deg q_2 \geq \Omega(\exp(a/4)).$ Combining the two cases yields the claimed lower bound. 
\end{proof}

Observe the previous bound applies to any local unitarily invariant measure of entanglement entropy with the property that if the reduced density $\rho$ of $\ket{\psi}$ is maximally mixed on a $r$-dimensional subspace, then $H_2(\ket{\psi}) = \log r.$ Hence, the query lower bound applies to the von Neumann entropy as well as the Renyi $\alpha$-entropy for any $\alpha \neq 1,$ as these properties are satisfied by these measures of entanglement entropy as well.  

Furthermore, the lower bound also extends to the $\QMA$ setting much like for the recurrence problem by using Aaronson's guessing lemma. 

\entanglemententropyboundqma*

\begin{proof}
By the proof of \Cref{thm:entanglement_entropy_bound} and the guessing lemma (\Cref{lem:guessing_lemma}), if there was a $T$-query algorithm using an $m$-qubit witness that solves the entanglement entropy problem, there exists a polynomial $q$ of degree $O(mT)$ with the property that:

\begin{itemize}
    \item $q(i) \leq 2^{-10m}$ for all $i = \frac{1}{n}$ and integers $n \leq \exp(a).$
    \item $q(i) \geq 2^{-m}$ for all $i = \frac{1}{n}$ and integers $\exp(b) \leq n \leq d.$
\end{itemize}

In the first case, assume that the maximum of $q$ in the range $[\exp(-a), \exp(-a/2)]$ satisfies $q(x) = k \geq 2^{-9m}$ so that $2^{-10m} \leq \frac{k}{2^{-m}} \leq \frac{k}{2}$. Then, since for that point satisfies $\frac{1}{r + 1} \leq k \leq \frac{1}{r}$ for some $r \geq \exp(a/2),$ the derivative of $q$ in that interval satisfies

$$|q'(y)| \geq \frac{k - 2^{-10m}}{\frac{1}{r+1} - \frac{1}{r}} \geq \frac{k}{2} r^2 \geq \frac{k}{2} \exp(a),$$

for some point $y$ in that interval. Hence, by Markov's inequality the degree of $q$ satisfies

$$\frac{k}{2} \exp(a) \leq \frac{2 k (\deg q)^2}{\exp(-a/2) - \exp(-a)}  \leq 4 k \exp(a/2) (\deg q)^2,$$

since $\exp(-a/2) - \exp(-a) \geq \frac{1}{2} \exp(-a/2)$ by assumption. Therefore, $\deg q \geq \Omega(\exp(a/4))$ in this case. 

Otherwise, we have that $q$ is bounded by $2^{-9m}$ in the range $[\exp(-a), \exp(-a/2)].$ Let $y_0 = \exp(-a/2)$ and $y_1 = \exp(-a)$, by rescaling $q$, using

$$r(x) = 2^{9m} q \left( \frac{y_1 - y_0}{2} (x - 1) + y_1 \right),$$

$r$ satisfies $|r(x)| \leq 1$ for all $|x| \leq 1$ from the low entropy case. If $y_2 = \exp(-b),$ when $x =
1 + \frac{2 (y_2 - y_1)}{y_1 - y_0} = 1 + \frac{2 (\exp(-a) - \exp(-b))}{\exp(-a/2) - \exp(-a)}$ we have reached the high entropy case, we have $r(x) \geq 2^{8m}.$ Therefore, by Paturi's inequality with $\mu = \frac{2 (\exp(-a) - \exp(-b))}{\exp(-a/2) - \exp(-a)} \leq 4 \exp(-a/2),$ we obtain

$$2^{8m} \leq \exp(4 (\deg r) \sqrt{4 \exp(-a/2)}),$$

and hence $\deg q$ satisfies $\deg q \geq \deg r \geq \Omega(m \exp(a/4)).$

Hence recalling by the guessing lemma that $\deg q = O(mT)$ where $m$ is the witness size and $T$ is the query complexity, we have $mT \geq \Omega(\exp(a/4))$ as claimed.
\end{proof}

We now briefly sketch an upper bound in the setting where our property tester has access to proof states and our entanglement entropy measure is the Renyi 2-entropy and in the regime where $a \geq 5$, and $b \geq 2a$. Given two copies of the state $\ket{\psi}$ with reduced density matrix $\rho$, a swap test can be used to produce a Bernoulli random variable $X$ with mean $\mu$ equal to $\frac{1}{2} - \frac{1}{2} \Tr(\rho^2).$ Furthermore, by using the quantum amplitude estimation algorithm \cite{brassard2002quantum, hamoudi2021quantum, kothari2022mean}, one can produce an estimation of the mean $\mu$ of $X$ to additive error $\epsilon$ by an efficient quantum algorithm given $O(\frac{1}{\epsilon})$ samples from $X$. In particular, $O(\frac{1}{\beta-\alpha})$ samples can be used to distinguish between a Bernoulli random variable with mean at least $\beta$ or at most $\alpha$. 

Applying these results to our setting, we can solve the entanglement entropy problem if we can distinguish between states $\ket{\psi}$  whose purity satisfies $\Tr(\rho^2) \geq \exp(-b)$ or $\Tr(\rho^2) \leq \exp(-a).$ Hence with $O(\exp(a))$ copies of the state $\ket{\psi},$ a series of swap tests produces $O(\exp(a))$ samples from a Bernoulli random variable with mean equal to $\frac{1}{2} - \frac{1}{2} \Tr(\rho^2).$ As the gap between the means in the yes and no cases satisfies $\beta - \alpha = \exp(-a) - \exp(-b) \geq \frac{1}{2} \exp(-a)$ by assumption, this sample complexity is sufficient to distinguish between the yes and no cases. Overall, this yields an algorithm using $O(\exp(a))$ queries and a proof state with $O(\exp(a))$ qubits.

\section{The Entangled Subspace Problem and $\QMA$ versus $\QMAtwo$}
\label{sec:qma2}

We now turn towards studying an $\LU$-invariant unitary property testing problem that corresponds to a candidate oracle separation between $\QMA$ and $\QMAtwo$. Recall the definition of completely entangled subspaces and the Entangled Subspace problem from the introduction: an \emph{$\epsilon$-completely entangled subspace} $S \subseteq \C^d \otimes \C^d$ is such that all states $\ket{\theta} \in S$ are $\epsilon$-far in trace distance from any product state $\ket{\psi} \otimes \ket{\phi}$.

\entangledsubspace*
As mentioned earlier, the Entangled Subspace property is $\LU$-invariant: applying local unitaries $g \otimes h$ to a subspace $S$ preserves whether it is a \emph{yes} instance or a \emph{no} instance of the problem.

\subsection{$\QMAtwo$ Upper Bound}
\label{subsection:entangled_subspace}

We first give a $\QMAtwo$ upper bound to the Entangled Subspace problem:

\entangledsubspaceqmatwo*

\begin{figure}
    \centering
    \mbox{
    \Qcircuit @C=1.4em @R=1.4em{  
       \lstick{\ket{\psi}} & \qw & \multigate{1}{U} & \qw & \qw & \qw & \qw \\ 
       \lstick{\ket{\phi}}  & \qw & \ghost{U} & \qw & \qw & \qw & \qw \\ 
       \lstick{\ket{+}} & \qw & \ctrl{-1} & \qw & \gate{H} & \qw & \meter \\ 
     }
     }
    \caption{The verifier $V'$ in the proof of \Cref{thm:one_dimensional_property_qma}.}
    \label{fig:qma2_verifier}
\end{figure}

\begin{proof}
Consider the verifier illustrated in \Cref{fig:qma2_verifier} where the provided proof state is the product state $\ket{\psi} \otimes \ket{\phi}.$
The controlled-$U$ operation is essentially performing a subspace membership test. The state after the controlled-$U$ operation and the Hadamard on the ancilla qubit can be written as
\[
    \frac{I + U}{2} \Big( \ket{\psi} \otimes \ket{\phi}  \Big) \otimes \ket{0} +\frac{I - U}{2} \Big ( \ket{\psi} \otimes \ket{\phi} \Big ) \otimes \ket{1}.
\]
Since $(I - U)/2 = \Pi$, the acceptance probability of the subspace membership test is
\[
    \Big \| \Pi \Big ( \ket{\psi} \otimes \ket{\phi} \Big ) \Big \|^2 = |\braket{\xi \vert \psi \otimes \phi}|^2 = 1 - d(\ket{\xi}, \ket{\psi} \otimes \ket{\phi})^2 
\]
where $\ket{\xi}$ is the projection of $\ket{\psi} \otimes \ket{\phi}$ on $S$ and $d$ is the trace distance. 

In the \text{yes} case, there exists a product state that is $a$-close to a state in $S$, and hence providing that state as a certificate makes the verifier accept with probability at least $1 - a^2$. Otherwise in the \text{no} case, all states $\ket{\xi} \in S$ are $b$-far from product, and hence the verifier accepts with probability no more than $1 - b^2.$ %
\end{proof}

We note that the verifier analyzed in the proof of \Cref{prop:entangledsubspaceqmatwo} has the property that in the \text{yes} case, proof state may not be a symmetric product state (i.e. a state of the form $\ket{\psi}^{\otimes 2}$). We now present a $\QMAtwo$ verifier, which we call the \textit{product test verifier}, for the Entangled Subspace Problem, with the additional property that in the \text{yes} case there exists a valid proof state that is symmetric. The verifier relies on a procedure known as the \emph{product test}, which was analyzed by Harrow and Montanaro \cite{harrow2013testing} and also later in \cite{soleimanifar2022testing}. We state the main results about the product test here, specialized to the case of bipartite states.

\begin{definition}[Product test]
Let $\ket{\psi}$ be a state in $\C^d \otimes \C^d$. Consider two copies of the $\ket{\psi}^{\otimes 2}$, where the first copy is on registers $A_1 B_1$ and the second copy is on registers $A_2 B_2$. The product test applies the swap test on registers $A_1 A_2$, and another swap test on $B_1 B_2$. The product test accepts iff both swap tests accept.
\end{definition}
Observe that if $\ket{\psi} = \ket{\varphi} \otimes \ket{\xi}$, then the product test accepts with probability $1$. On the other hand, we have the following bound for the probability an entangled $\ket{\psi}$ will pass the product test. %

\begin{theorem}[{{\cite[Theorem 8]{soleimanifar2022testing}}}]
Given a state $\ket{\psi} \in \C^d \otimes \C^d$, let $$\omega_{\ket{\psi}} = \max \{ |\braket{\psi \vert \phi_1 \otimes \phi_2}|^2, \ket{\phi_1} \in \mathbb{C}^d, \ket{\phi_2} \in \mathbb{C}^d\}$$
denote the overlap of $\ket{\psi}$ with the closest product state. Then the probability $\alpha$ that the product test  passes satisfies 
\[
\frac{1}{2} (1 + \omega_{\ket{\psi}}^2) \leq \alpha \leq \frac{1}{3} \omega_{\ket{\psi}}^2 + \frac{2}{3}~.
\]
\label{thm:product_test}
\end{theorem}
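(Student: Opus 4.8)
The plan is to evaluate the acceptance probability $\alpha$ of the product test \emph{exactly} in terms of the Schmidt coefficients of $\ket{\psi}$, and then read off the two claimed inequalities by elementary estimates. Throughout, write the Schmidt decomposition $\ket{\psi} = \sum_i \sqrt{\lambda_i}\,\ket{a_i}\ket{b_i}$ with $\lambda_1 \geq \lambda_2 \geq \cdots \geq 0$ and $\sum_i \lambda_i = 1$, and let $\rho$ be the reduced density matrix of $\ket{\psi}$ on the first register, so that $\Tr(\rho^2) = \sum_i \lambda_i^2$.

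First I would note that $\omega_{\ket{\psi}} = \lambda_1$: for any product state, Cauchy--Schwarz gives $|\braket{\psi \vert \phi_1 \otimes \phi_2}|^2 \leq \big(\sum_i \lambda_i |\braket{a_i \vert \phi_1}|^2\big)\big(\sum_i |\braket{b_i \vert \phi_2}|^2\big) \leq \lambda_1$, with equality at $\ket{\phi_1}\otimes\ket{\phi_2} = \ket{a_1}\otimes\ket{b_1}$. Next, observe that the product test is the measurement of the two commuting projectors $\Pi_A = \tfrac12(I + F_{A_1A_2})$ and $\Pi_B = \tfrac12(I + F_{B_1B_2})$ onto the symmetric subspaces of the register pairs $A_1A_2$ and $B_1B_2$ (here $F$ denotes the swap operator), applied to $\ket{\psi}^{\otimes 2}$; hence $\alpha = \bra{\psi}^{\otimes 2} \Pi_A \Pi_B \ket{\psi}^{\otimes 2}$. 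Expanding the product into four terms and using the swap trick, the term $I$ contributes $1$; the term $F_{A_1A_2}$ contributes $\Tr(\rho^2)$; the term $F_{B_1B_2}$ also contributes $\Tr(\rho^2)$ (the two reduced states of a pure state share a spectrum); and the term $F_{A_1A_2}F_{B_1B_2}$, being the swap of the two entire copies $A_1B_1 \leftrightarrow A_2B_2$, contributes $|\braket{\psi \vert \psi}|^2 = 1$. This yields the exact identity $\alpha = \tfrac14\big(1 + \Tr(\rho^2) + \Tr(\rho^2) + 1\big) = \tfrac12\big(1 + \sum_i \lambda_i^2\big)$.

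Given this identity, the two bounds are immediate. The lower bound follows from $\sum_i \lambda_i^2 \geq \lambda_1^2 = \omega_{\ket{\psi}}^2$. For the upper bound it suffices to show $\sum_{i \geq 2} \lambda_i^2 \leq \tfrac13(1 - \lambda_1^2)$, since then $\alpha = \tfrac12\big(1 + \lambda_1^2 + \sum_{i\geq2}\lambda_i^2\big) \leq \tfrac12\big(1 + \lambda_1^2 + \tfrac13(1-\lambda_1^2)\big) = \tfrac13 \lambda_1^2 + \tfrac23 = \tfrac13 \omega_{\ket{\psi}}^2 + \tfrac23$. To obtain this, bound $\sum_{i\geq2}\lambda_i^2 \leq \lambda_2 \sum_{i\geq2}\lambda_i = \lambda_2(1-\lambda_1)$, and note $\lambda_2 \leq \min(\lambda_1,\, 1-\lambda_1) \leq \tfrac13(1+\lambda_1)$ --- the last step being a one-variable check (use $\lambda_2 \leq \lambda_1$ when $\lambda_1 \leq \tfrac12$, and $\lambda_2 \leq 1-\lambda_1$ when $\lambda_1 \geq \tfrac12$).

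I do not expect a serious obstacle here; the entire content is the exact evaluation of $\alpha$, and the only place one must be careful is the cross term, namely checking that $F_{A_1A_2}F_{B_1B_2}$ acts as the full swap of the two copies of $\ket{\psi}$ (so that it contributes $1$, rather than, say, $\Tr(\rho^4)$). After that the argument is purely a matter of bounding the $2$-norm of a probability vector by its largest entry.
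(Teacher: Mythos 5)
Your proof is correct. The exact identity you derive, $\alpha = \tfrac12\bigl(1 + \Tr(\rho^2)\bigr)$, is right: the two accept-POVM elements are the symmetric-subspace projectors $\tfrac12(I+F_{A_1A_2})$ and $\tfrac12(I+F_{B_1B_2})$ on disjoint registers, the two single-swap terms each give $\Tr(\rho^2)$ by the swap trick, and the cross term is indeed the full swap of the two copies and contributes $1$ (the point you correctly flag). Your identification $\omega_{\ket{\psi}}=\lambda_1$ via Cauchy--Schwarz and the elementary estimates $\sum_{i\geq 2}\lambda_i^2 \leq \lambda_2(1-\lambda_1)$ with $\lambda_2 \leq \min(\lambda_1,1-\lambda_1)\leq \tfrac13(1+\lambda_1)$ check out in both cases, giving both inequalities.

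Your route differs from the paper's treatment. The paper itself only cites this $k=2$ statement (to Soleimanifar--Wright), but in its appendix it proves a generalization: it computes the exact acceptance probability of the $k$-copy product test as the complete homogeneous symmetric polynomial $h_k(\lambda_1,\dots,\lambda_d)$ in the Schmidt eigenvalues, via an expansion of $\ket{\psi}^{\otimes k}$ in an orthonormal basis of the symmetric subspace indexed by types (a tableaux/type-counting argument), and then bounds $h_k \leq \tfrac{k-1}{k+1}\lambda_1^k + \tfrac{2}{k+1}$ by a case analysis over compositions. At $k=2$ its formula specializes to exactly your identity $h_2(\lambda)=\tfrac12(1+\Tr\rho^2)$, so the two approaches agree on the core computation; your swap-operator derivation is shorter and more self-contained for $k=2$, and you additionally prove the lower bound (which the paper does not reprove), while the paper's heavier machinery is what allows the bound to be extended to arbitrary $k$, which it needs elsewhere.
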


While \Cref{thm:product_test} assumes that the input to the product test is symmetric, we note that the product test is also sound against non-symmetric witnesses.

\begin{proposition}[{{\cite[Appendix E]{harrow2013testing}}}]
    Let $P(\ket{\Phi})$ be the probability that the product test passes when given a state $\ket{\Phi}$ as input. Then, for any $\ket{\psi_1} \in \C^d \otimes \C^d$ and $\ket{\psi_2} \in \C^d \otimes \C^d,$ we have

    $$P(\ket{\psi_1} \otimes \ket{\psi_2}) \leq \frac{P(\ket{\psi_1}^{\otimes 2}) + P(\ket{\psi_2}^{\otimes 2})}{2}.$$
\label{prop:product_test_non_symmetric}
\end{proposition}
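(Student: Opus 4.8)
The plan is to compute the product test's acceptance probability explicitly in terms of reduced density matrices and then reduce the claimed inequality to the elementary bound $\Tr[\rho\sigma]\le\tfrac12(\Tr[\rho^2]+\Tr[\sigma^2])$ applied on each of the two tensor factors. Write $S_A$ for the operator swapping registers $A_1\leftrightarrow A_2$ and $S_B$ for the operator swapping $B_1\leftrightarrow B_2$. Since a swap test on a pair of registers measures the projector onto their symmetric subspace, the product test accepts a two-copy input state $\Phi$ with probability
\[
P(\Phi) = \Tr\!\left[\frac{I+S_A}{2}\cdot\frac{I+S_B}{2}\,\Phi\right] = \frac14\Big(1 + \Tr[S_A\Phi] + \Tr[S_B\Phi] + \Tr[S_{12}\Phi]\Big),
\]
where we used $\Tr[\Phi]=1$ and the fact that $S_AS_B = S_{12}$ is the full swap of copy $1$ with copy $2$ (the two swaps act on disjoint registers, hence commute).

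First I would evaluate this for the product input $\Phi = \ketbra{\psi_1}{\psi_1}\otimes\ketbra{\psi_2}{\psi_2}$. Let $\alpha_i$ and $\beta_i$ be the reduced density matrices of $\ket{\psi_i}$ on the first and second $\C^d$ factor. Tracing out the registers untouched by each swap, $\Tr[S_A\Phi]=\Tr[\alpha_1\alpha_2]$, $\Tr[S_B\Phi]=\Tr[\beta_1\beta_2]$, and $\Tr[S_{12}\Phi]=|\braket{\psi_1\vert\psi_2}|^2$, so
\[
P(\ket{\psi_1}\otimes\ket{\psi_2}) = \frac14\Big(1 + \Tr[\alpha_1\alpha_2] + \Tr[\beta_1\beta_2] + |\braket{\psi_1\vert\psi_2}|^2\Big).
\]
Setting $\psi_1=\psi_2=\psi_i$ specializes this to $P(\ket{\psi_i}^{\otimes2}) = \tfrac14\big(2 + \Tr[\alpha_i^2] + \Tr[\beta_i^2]\big)$. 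Substituting, the claimed inequality is equivalent to
\[
\Tr[\alpha_1\alpha_2] + \Tr[\beta_1\beta_2] + |\braket{\psi_1\vert\psi_2}|^2 \;\le\; 1 + \frac{\Tr[\alpha_1^2]+\Tr[\alpha_2^2]}{2} + \frac{\Tr[\beta_1^2]+\Tr[\beta_2^2]}{2}.
\]

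Finally I would finish by bounding each term on the left. Since $\alpha_1,\alpha_2$ are Hermitian, $\Tr[(\alpha_1-\alpha_2)^2]\ge0$ gives $\Tr[\alpha_1\alpha_2]\le\tfrac12(\Tr[\alpha_1^2]+\Tr[\alpha_2^2])$, and likewise for the $\beta$ terms; and $|\braket{\psi_1\vert\psi_2}|^2\le1$ because both states are normalized. Adding these three inequalities yields exactly the displayed bound, which completes the proof. I do not expect a genuine obstacle: the only place requiring care is the register bookkeeping in the second step (one must trace out $B_1B_2$ before evaluating $\Tr[S_A\Phi]$, which is legitimate because $S_A$ acts as the identity there), and the substantive content is just the convexity inequality $\Tr[\rho\sigma]\le\tfrac12(\Tr[\rho^2]+\Tr[\sigma^2])$ used on each factor.
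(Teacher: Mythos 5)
Your proof is correct. The paper does not give its own argument for this proposition --- it simply cites Harrow and Montanaro (Appendix E) --- and your derivation is precisely the bipartite specialization of that cited argument: expand the test projector as $\frac{1}{4}(I+S_A+S_B+S_AS_B)$, evaluate each swap term on the product input to get $\Tr[\alpha_1\alpha_2]$, $\Tr[\beta_1\beta_2]$, and $|\braket{\psi_1 \vert \psi_2}|^2$, and finish with $\Tr[\rho\sigma]\leq\frac{1}{2}(\Tr[\rho^2]+\Tr[\sigma^2])$ together with $|\braket{\psi_1 \vert \psi_2}|^2\leq 1$. The register bookkeeping and the commuting-projector identity $P(\Phi)=\Tr[\Pi_A\Pi_B\Phi]$ are handled correctly, so the argument stands as a valid self-contained proof of the cited fact.
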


By combining \Cref{thm:product_test} and \Cref{prop:product_test_non_symmetric} we obtain the following result.

\begin{proposition}
Suppose $0 \leq a < b \leq 1$ are constants satisfying satisfy $$\frac{1}{2} (1 + (1-a^2)^2) > \frac{1}{3} (1 - b^2)^2 + \frac{2}{3}.$$ Then there exists a $\QMAtwo$ verifier for the $(a,b)$-Entangled Subspace problem with the property that on \text{yes} instances with oracle $O = I - 2P_S$ where $P_S$ is a projector onto subspace $S$, a valid proof state is $\ket{\psi}^{\otimes 2}$ where $\ket{\psi}$ is any state in $S$ that is $a$-close to product. \label{prop:qma2_product_test_verifier}
\end{proposition}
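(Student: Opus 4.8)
The plan is to realize the \emph{product test verifier} as the following sequence of sub-tests, accepting iff all of them accept. Parse the $\QMAtwo$ proof as $\ket{\psi_1}_{R_1} \otimes \ket{\psi_2}_{R_2}$, where each $R_i = A_i B_i$ carries a state in $\C^d \otimes \C^d$. First the verifier runs the subspace membership test of \Cref{prop:entangledsubspaceqmatwo} on $R_1$ — one controlled-$U$ call followed by a Hadamard and measurement on a fresh ancilla — which, on the "in $S$" outcome (probability $\|P_S \ket{\psi_1}\|^2$ with $P_S = (I-U)/2$), collapses $R_1$ to the normalized state $\ket{\xi_1} := P_S\ket{\psi_1}/\|P_S\ket{\psi_1}\| \in S$. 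It then does the same on $R_2$, collapsing it to some $\ket{\xi_2} \in S$. Finally it runs the product test (a swap test on $A_1 A_2$ and a swap test on $B_1 B_2$) on whatever now occupies $R_1 R_2$. This verifier makes two queries to $U$ and uses $O(\log d)$ qubits of proof.

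For completeness, the plan is to take the honest proof to be $\ket{\psi}^{\otimes 2}$ for a state $\ket{\psi} \in S$ that is $a$-close to a product state. Since $\ket{\psi} \in S$ exactly, $P_S \ket{\psi} = \ket{\psi}$, so both membership tests accept with probability $1$ and leave the registers undisturbed; the product test then receives $\ket{\psi}^{\otimes 2}$, and by the lower bound of \Cref{thm:product_test} it accepts with probability at least $\tfrac12(1 + \omega_{\ket{\psi}}^2)$. The trace-distance/fidelity relation together with $a$-closeness gives $\omega_{\ket{\psi}} \geq 1 - a^2$, so the verifier accepts with probability at least $\tfrac12(1 + (1-a^2)^2)$.

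For soundness, the key observation is that no matter which product-state proof $\ket{\psi_1} \otimes \ket{\psi_2}$ a cheating prover submits, the two membership tests act locally on $R_1$ and $R_2$, so conditioned on both accepting the state is still a product $\ket{\xi_1} \otimes \ket{\xi_2}$ with $\ket{\xi_1}, \ket{\xi_2} \in S$. Since $S$ is $b$-completely entangled, the trace-distance/fidelity relation forces $\omega_{\ket{\xi_i}} \leq 1 - b^2$. Because $\ket{\xi_1}$ and $\ket{\xi_2}$ need not be equal, I would bound the product test's acceptance via \Cref{prop:product_test_non_symmetric} followed by the upper bound of \Cref{thm:product_test}: the product test accepts with probability at most $\tfrac12\bigl(P(\ket{\xi_1}^{\otimes 2}) + P(\ket{\xi_2}^{\otimes 2})\bigr) \leq \tfrac13 \cdot \tfrac{\omega_{\ket{\xi_1}}^2 + \omega_{\ket{\xi_2}}^2}{2} + \tfrac23 \leq \tfrac13(1-b^2)^2 + \tfrac23$. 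Multiplying by the probability (at most $1$) that both membership tests accept, the overall acceptance probability in the \emph{no} case is at most $\tfrac13(1-b^2)^2 + \tfrac23$. The hypothesis $\tfrac12(1+(1-a^2)^2) > \tfrac13(1-b^2)^2 + \tfrac23$ is exactly the statement that the completeness bound strictly exceeds the soundness bound, yielding a constant gap — which, by the conventions of \Cref{sec:prelims}, is all that is required of a $\QMAtwo$ verifier.

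The main subtlety — and the only place a naive approach goes wrong — is the \emph{ordering} of the sub-tests. The product-test inequalities of \Cref{thm:product_test} are only useful when the tested state lies in $S$, so the membership tests must come first and their post-measurement states, not the raw proof, must be what the product test sees; this is what lets completeness and soundness both reduce cleanly to $\omega$-bounds. Performing all tests "in one shot" would instead require a continuity estimate bounding $\omega_{\ket{\psi_i}}$ in terms of the distance of $\ket{\psi_i}$ to $S$, which this sequencing sidesteps. A secondary point, already anticipated by the presence of \Cref{prop:product_test_non_symmetric} in the toolkit, is that a cheating prover can place \emph{different} near-$S$ states in the two registers, so one genuinely needs the non-symmetric product-test bound in addition to the symmetric one.
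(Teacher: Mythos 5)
Your proposal is correct and follows essentially the same route as the paper: condition on the two subspace-membership tests accepting (which collapses the proof to $\ket{\xi_1}\otimes\ket{\xi_2}$ with $\ket{\xi_1},\ket{\xi_2}\in S$), use \Cref{prop:product_test_non_symmetric} to reduce to the symmetric case, and then apply the two bounds of \Cref{thm:product_test} together with the trace-distance/fidelity relation to get the completeness bound $\tfrac12(1+(1-a^2)^2)$ and soundness bound $\tfrac13(1-b^2)^2+\tfrac23$. The paper's verifier (\Cref{figure:product_test_verifier}) defers the ancilla measurements to the end, but by the deferred-measurement principle this is equivalent to your ``membership first, then product test'' ordering, so the analyses coincide.
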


\begin{figure}[ht]
         \centering
         \mbox{
         \Qcircuit @C=1.4em @R=1.4em {
                 & \lstick{\ket{\psi_1}_{A_1 B_1}}  & \gate{U} & \qw & \multigate{1}{\mathrm{Swap}_{A_1 A_2}} & \multigate{1}{\mathrm{Swap}_{B_1 B_2}} & \qw &  \qw  \\
                 & \lstick{\ket{\psi_2}_{A_2 B_2}}  & \qw & \gate{U} & \ghost{\mathrm{Swap}_{A_1 A_2}} & \ghost{\mathrm{Swap}_{B_1 B_2}} & \qw & \qw  \\
                 & \lstick{\ket{+}_1} & \ctrl{-2} &  \qw & \qw   & \qw & \gate{H} & \meter  \\
                 & \lstick{\ket{+}_2} & \qw & \ctrl{-2} & \qw   & \qw & \gate{H} & \meter \\
                 & \lstick{\ket{+}_3} & \qw & \qw & \ctrl{-3}  & \qw & \gate{H} & \meter \\
                 & \lstick{\ket{+}_4} & \qw & \qw & \qw  & \ctrl{-4} & \gate{H} &  \meter 
         }    }
         \caption{$\QMA(2)$ product test verifier}
         \label{figure:product_test_verifier}
\end{figure}

\begin{proof}

Suppose $\ket{\psi_1} \otimes \ket{\psi_2}$ is given as input to the verifier. From the proof of \Cref{prop:entangledsubspaceqmatwo}, the probability that the first two ancillas accept is $\|P_S \ket{\psi_1}\| \|P_S \ket{\psi_2}\|.$ Conditioned on the first two ancillas accepting, the probability the third and fourth ancillas accept is the probability that the product test passes when provided the state $\frac{P_S \ket{\psi_1}}{\| P_S \ket{\psi_1} \|} \otimes \frac{P_S \ket{\psi_2}}{\| P_S \ket{\psi_2} \|}$ as input. Hence, the verifier maximizes its success probability when $\ket{\psi_1} \in S$ and $\ket{\psi_2} \in S$. Furthermore, by \Cref{prop:product_test_non_symmetric}, we can assume that $\ket{\psi_1} = \ket{\psi_2}$ to maximize the verifier's success probability.

By \Cref{thm:product_test}, in the \text{yes} case, there exists a state in $\ket{\psi} \in S$ that is $a$-close to product, and hence the product test passes with probability at least $\frac{1}{2} (1 + (1-a^2)^2)$ when given $\ket{\psi}^{\otimes 2}$ as input. Otherwise, in the \text{no} case, all states in $S$ are $b$-far from product, and hence the product test passes with probability at most $\frac{1}{3} (1-b^2)^2 + \frac{2}{3}$ in this case on any input $\ket{\psi}^{\otimes 2}$ for $\ket{\psi} \in S$. Therefore, as long as $\frac{1}{2} (1 + (1-a^2)^2) > \frac{1}{3} (1-b^2)^2 + \frac{2}{3},$ there is a bounded gap in the success probability between the two cases.
\end{proof}

To extend the result to an arbitrary gap between $a$ and $b$, we note that the product test can be further generalized to the situation when input consists of $k \geq 2$ copies of a given state $\ket{\psi}.$

\begin{definition}[$k$-copy product test]
Let $\ket{\psi}$ be a state in $\C^d \otimes \C^d$. Consider $k \geq 2$ copies of the $\ket{\psi}^{\otimes 2}$ where copy $i$ is on registers $A_i B_i$. The product test is a circuit that performs a projective measurement $\{P = \Pi_A \otimes \Pi_B, I - P\}$ where $\Pi_A$ is the projector on the symmetric subspace on registers $A_1, \dots, A_k$ and $\Pi_B$ is the projector on the symmetric subspace on registers $B_1, \dots, B_k.$ The success probability of the product test is $\|P \ket{\psi}^{\otimes k}\|^2.$
\label{defn:k-copy-product-test}
\end{definition}

By the results of \cite{harrow2013testing} and \cite{barenco1997stabilization}, there is an efficient quantum circuit which implements the product test for all constant $k$. Similarly, we can bound the success probability of the $k$-copy product test in terms of the overlap with the closest product state, using the proof techniques presented in \cite{soleimanifar2022testing}.

\begin{restatable}{theorem}{generalqmakptbound}
Let $\omega_{\ket{\psi}}$ be the overlap of $\ket{\psi}$ with the closest product state, as defined in \Cref{thm:product_test}. For all constant $k \geq 2$, the probability $\alpha$ that the product test passes when given $\ket{\psi}^{\otimes k}$ as input satisfies 

$$\alpha \leq \frac{k-1}{k+1} \omega_{\ket{\psi}}^k + \frac{2}{k+1}.$$
\label{thm:general_qma_k_bound}
\end{restatable}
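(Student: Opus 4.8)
The plan is to first turn the acceptance probability $\alpha = \|P\ket{\psi}^{\otimes k}\|^2$, where $P = \Pi_A \otimes \Pi_B$ projects onto the tensor product of the symmetric subspaces of the $A$- and $B$-registers, into an explicit symmetric function of the entanglement spectrum of $\ket{\psi}$, generalizing the computation underlying \Cref{thm:product_test}. Taking a Schmidt decomposition $\ket{\psi} = \sum_i\sqrt{\lambda_i}\ket{a_i}\ket{b_i}$, we have $\ket{\psi}^{\otimes k} = \sum_{\vec{i}}\sqrt{\lambda_{\vec{i}}}\ket{a_{\vec{i}}}_A\ket{b_{\vec{i}}}_B$, where the multi-index $\vec{i} = (i_1,\dots,i_k)$ labels the two register groups identically and $\lambda_{\vec{i}} = \prod_\ell\lambda_{i_\ell}$. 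Expanding $\Pi = \frac{1}{k!}\sum_{\sigma\in S_k}P_\sigma$ for each symmetric projector, and using that $\bra{a_{\vec{i}}}\Pi_A\ket{a_{\vec{j}}} = \bra{b_{\vec{i}}}\Pi_B\ket{b_{\vec{j}}}$ in the respective Schmidt bases, one computes
\[
\alpha = \sum_{\vec{i},\vec{j}}\sqrt{\lambda_{\vec{i}}\lambda_{\vec{j}}}\;\big(\bra{a_{\vec{i}}}\Pi_A\ket{a_{\vec{j}}}\big)^2 = \sum_{M:\,|M| = k}\prod_v \lambda_v^{m_v(M)} = h_k(\lambda_1,\dots,\lambda_d),
\]
the degree-$k$ complete homogeneous symmetric polynomial in the Schmidt coefficients (for $k = 2$ this is $\tfrac12(1+\Tr\rho^2)$, recovering \Cref{thm:product_test}). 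Since $\omega_{\ket{\psi}} = \lambda_{\max} =: \omega$, the theorem is equivalent to the inequality $h_k(\lambda) \le \tfrac{k-1}{k+1}\omega^k + \tfrac{2}{k+1}$ for every probability vector $\lambda$ with $\max_i\lambda_i = \omega$.

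I would then maximize $h_k$ over the polytope $\mathcal{K}_\omega = \{\lambda : \lambda_i \ge 0,\ \sum_i\lambda_i = 1,\ \lambda_i \le \omega\}$, which contains $\lambda$. The crucial structural fact is that $h_k$ is convex on the nonnegative orthant: this follows from the representation $h_k(x) = \tfrac{1}{k!}\,\E_{s}\big[(s\cdot x)^k\big]$ with $s = (s_1,\dots,s_d)$ a vector of i.i.d.\ standard exponential random variables, which exhibits $h_k$ as an average of the convex maps $x\mapsto(s\cdot x)^k$. Hence the maximum of $h_k$ over $\mathcal{K}_\omega$ is attained at a vertex, and every vertex of $\mathcal{K}_\omega$ is, up to permuting coordinates, $\lambda^\star = (\omega,\dots,\omega,r,0,\dots,0)$ with $m := \lfloor 1/\omega\rfloor$ entries equal to $\omega$ and $r := 1 - m\omega \in [0,\omega)$. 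It therefore suffices to prove
\[
h_k(\lambda^\star) = \sum_{a=0}^{k} r^{a}\,\omega^{k-a}\binom{m+k-a-1}{k-a} \;\le\; \frac{k-1}{k+1}\,\omega^{k} + \frac{2}{k+1}.
\]

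The last step is to check this univariate inequality on each window $\omega \in [\tfrac{1}{m+1},\tfrac{1}{m}]$, on which the left-hand side is an explicit degree-$k$ polynomial in $\omega$. For $m = 1$ it is $\phi_k(\omega) = \sum_{a=0}^k\omega^{k-a}(1-\omega)^a = \tfrac{\omega^{k+1}-(1-\omega)^{k+1}}{2\omega - 1}$; both sides equal $1$ at $\omega = 1$, at the left endpoint $\omega = \tfrac12$ the inequality reduces to $2^{k+1} \ge k^2+k+2$, and the intermediate range is handled by elementary monotonicity estimates. For $m \ge 2$ the value at each window endpoint is controlled by similar binomial identities, and the claim again follows by comparison with the (convex) right-hand side on the window; the $k = 2$ instance is precisely the upper bound of \Cref{thm:product_test}. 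I expect this final case analysis — carrying out the comparison uniformly in $m$ and $k$ and pinning down the constants $\tfrac{k-1}{k+1}$ and $\tfrac{2}{k+1}$ exactly — to be the main technical obstacle, whereas the reduction $\alpha = h_k(\lambda)$ and the convexity argument are short and clean. If the case analysis proves unwieldy, an alternative is to restrict the vertex argument to $\omega \ge \tfrac12$ and, for $\omega < \tfrac12$, bound $h_k(\lambda)$ instead by $\tfrac{1}{k!}\prod_{i=1}^{k-1}(1+i\omega)$, which drops out of $p_j \le \omega^{j-1}$ together with the exponential formula $\sum_k h_k\,x^k = \exp\big(\sum_{j\ge1}\tfrac{p_j}{j}x^j\big)$ and comfortably beats $\tfrac{2}{k+1}$ in that range.
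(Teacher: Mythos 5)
Your reduction of the acceptance probability to the complete homogeneous symmetric polynomial, $\alpha = h_k(\lambda_1,\dots,\lambda_d)$ in the Schmidt spectrum, is correct and coincides with the paper's first step (its Lemma giving the exact product-test probability), and your observation that $h_k(x) = \tfrac{1}{k!}\E_s[(s\cdot x)^k]$ for i.i.d.\ exponentials, hence $h_k$ is convex on the nonnegative orthant and is maximized at a vertex $(\omega,\dots,\omega,r,0,\dots,0)$ of the polytope $\{\lambda\ge 0,\ \sum_i\lambda_i=1,\ \lambda_i\le\omega\}$, is also sound. The problem is that everything after that — the actual content of the theorem — is not proved. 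The window-by-window univariate inequality is only checked at endpoints (those checks are right: at $m=1$, $\omega=\tfrac12$ it is $2^{k+1}\ge k^2+k+2$, with equality at $k=2$), and the interior of each window and the uniform-in-$m$ comparison are waved away as ``elementary monotonicity estimates'' and ``similar binomial identities''; you yourself flag this as the main obstacle. Moreover your fallback for $\omega<\tfrac12$ has a quantitative slip: the bound $h_k\le\tfrac{1}{k!}\prod_{i=1}^{k-1}(1+i\omega)$ does \emph{not} ``comfortably beat'' $\tfrac{2}{k+1}$ when $k=2$ and $\omega\in(\tfrac13,\tfrac12)$ (at $\omega=\tfrac12$ it equals $\tfrac{k+1}{2^k}$, and $(k+1)^2\le 2^{k+1}$ fails at $k=2$); one must compare against the full right-hand side $\tfrac{k-1}{k+1}\omega^k+\tfrac{2}{k+1}$ or treat $k=2$ separately. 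So as it stands the proposal is a plausible plan with the key inequality left open.

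For comparison, the paper avoids any vertex reduction or case analysis over $\omega$: it works directly with an arbitrary spectrum, using the normalization $1=\sum_{\beta\vdash k}\binom{k}{\beta}\Lambda^{2\beta}$ (where $\Lambda^{2\beta}=\prod_i\lambda_i^{\beta_i}$) to rewrite the target as $\tfrac{2}{k+1}+\sum_{\beta:\beta_1<k}\bigl[1-\tfrac{2}{k+1}\binom{k}{\beta}\bigr]\Lambda^{2\beta}$, and then shows every bracketed term contributes nonpositively: whenever $\beta$ has at least two nonzero parts, $\binom{k}{\beta}\ge k$ kills the term, and the remaining terms ($\lambda_i^k$ for $i\ge2$ and $\lambda_1^{k-1}\lambda_i$) pair up into $\tfrac{k-1}{k+1}\sum_{i\ge2}\lambda_i(\lambda_i^{k-1}-\lambda_1^{k-1})\le 0$. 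If you want to salvage your route, you would need to actually prove the vertex inequality $\sum_{a=0}^k r^a\omega^{k-a}\binom{m+k-a-1}{k-a}\le\tfrac{k-1}{k+1}\omega^k+\tfrac{2}{k+1}$ uniformly in $m$ and $k$; the paper's multinomial-counting argument is the cleaner way to pin down exactly the constants $\tfrac{k-1}{k+1}$ and $\tfrac{2}{k+1}$.
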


We defer the proof of \Cref{thm:general_qma_k_bound} to  \Cref{sec;product_test_analysis}. We apply \Cref{thm:general_qma_k_bound} to obtain the following result, whose proof is deferred to \Cref{sec:symqma_entangledsubspace}.

\begin{restatable}{theorem}{generalentangledverifier}
Let $0 \leq a < b < 1$ be constants. Then there exists a constant $k \geq 2$ sufficiently large such that there is a $\mathsf{SymQMA(k+1)}$ verifier for the $(a,b)$-Entangled Subspace problem.
\label{thm:general_qma_k_verifier}
\end{restatable}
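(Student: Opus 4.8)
The plan is to construct a $\mathsf{SymQMA}(k+1)$ verifier by following the same architecture as the product test verifier from \Cref{prop:qma2_product_test_verifier}, but using the $k$-copy product test of \Cref{defn:k-copy-product-test} in place of the $2$-copy test, and then choosing $k$ large enough that the completeness/soundness gap survives for the given constants $a < b$. Recall that $\mathsf{SymQMA}(k+1)$ means the verifier receives a single proof state on $k+1$ registers that is promised to lie in the symmetric subspace (equivalently, we may assume a valid \emph{yes} proof has the form $\ket{\psi}^{\otimes(k+1)}$, and soundness must hold against \emph{all} symmetric states). The verifier proceeds as follows: interpret the $k+1$ registers as registers $R_0, R_1, \ldots, R_k$, where each $R_i$ is a bipartite register $A_i B_i$ on $\C^d \otimes \C^d$. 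First, apply the subspace-membership test of \Cref{prop:entangledsubspaceqmatwo} — a controlled-$U$ followed by a Hadamard and measurement — to \emph{one} of the registers, say $R_0$, which accepts with probability $\|P_S \ket{\psi_0}\|^2$ where $\ket{\psi_0}$ is the state on $R_0$. Then, on the remaining $k$ registers $R_1, \ldots, R_k$, run the $k$-copy product test of \Cref{defn:k-copy-product-test} (which has an efficient circuit for constant $k$ by \cite{harrow2013testing, barenco1997stabilization}). Accept iff both subtests accept. Actually, to keep the analysis clean, it is cleaner to apply the subspace test to \emph{all} $k$ copies used in the product test and condition on all of them landing in $S$; I will set it up so that the optimal symmetric proof is forced to be $\ket{\psi}^{\otimes(k+1)}$ with $\ket{\psi} \in S$.

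\textbf{The completeness analysis} is straightforward: in the \emph{yes} case there is a state $\ket{\psi} \in S$ with overlap $\omega_{\ket{\psi}} \geq 1 - a^2$ with the closest product state; providing $\ket{\psi}^{\otimes(k+1)}$ (which is symmetric) passes the subspace tests with probability $1$ and, by the lower bound in the product-test analysis (the $k=2$ case gives $\tfrac12(1+\omega^2)$; more generally one has $\alpha \geq \omega^k$ trivially, or a better bound), passes the $k$-copy product test with probability at least, say, $(1-a^2)^k$ — but in fact I want a bound bounded away from the soundness threshold, so I should cite or note the matching completeness lower bound for the $k$-copy test. \textbf{The soundness analysis} is where \Cref{thm:general_qma_k_bound} does the work: for \emph{any} symmetric proof, after conditioning on the subspace tests accepting, the residual state on the product-test registers is (proportional to) $\ket{\varphi}^{\otimes k}$ for some $\ket{\varphi} \in S$ — here I use symmetry of the proof together with the fact that a symmetric state that has all $k$ copies supported on $S$ is, up to normalization, a symmetric state on $S^{\otimes k}$; by convexity / the fact that the product-test acceptance is a linear functional, it suffices to bound pure symmetric states, i.e. $\ket{\varphi}^{\otimes k}$ with $\ket{\varphi} \in S$. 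Since $S$ is $b$-completely entangled, $\omega_{\ket{\varphi}} \leq 1 - b^2$, so \Cref{thm:general_qma_k_bound} gives acceptance probability at most $\tfrac{k-1}{k+1}(1-b^2)^k + \tfrac{2}{k+1}$.

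\textbf{The final step} is to choose $k$. The soundness bound $\tfrac{k-1}{k+1}(1-b^2)^k + \tfrac{2}{k+1} \to 0$ as $k \to \infty$ (since $1-b^2 < 1$ is a constant and the first term decays geometrically while the second decays as $2/(k+1)$), whereas the completeness bound is at least some fixed positive constant depending only on $a$ (for instance it stays above $(1-a^2)^k$ is not good enough since that also $\to 0$, so I must instead use the sharper completeness estimate — the product test passes with probability $\geq \tfrac12(1 + \omega_{\ket\psi}^2)$ for $k=2$, and for general $k$ one similarly gets a bound of the form $\geq \tfrac12 + \tfrac12\omega^{?}$ that is bounded below by a constant $> 0$ uniformly; this is implicit in the analysis of \cite{harrow2013testing, soleimanifar2022testing}). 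Concretely: there is a constant $c(a) > 1/2$ such that completeness is $\geq c(a)$ for all $k$, while soundness is $\leq \tfrac{2}{k+1} + o(1)$, so picking $k$ with $\tfrac{2}{k+1} + (1-b^2)^k < c(a)$ — equivalently $k$ a sufficiently large constant — yields a constant gap, which can be amplified by parallel repetition within $\mathsf{SymQMA}(k'+1)$ for a still-larger constant $k'$. \textbf{The main obstacle} I anticipate is establishing a completeness lower bound for the $k$-copy product test that is bounded \emph{below by a constant independent of $k$}; the clean geometric upper bound of \Cref{thm:general_qma_k_bound} handles soundness, but the completeness side requires either extracting the matching lower bound from the representation-theoretic analysis in \Cref{sec;product_test_analysis} (the symmetric-subspace projectors are large, so $\|P\ket{\psi}^{\otimes k}\|^2$ for $\ket{\psi}$ within $a$ of product should remain bounded below) or, more simply, restricting to product proofs $\ket{\varphi\otimes\xi}^{\otimes(k+1)}$ whenever $a = 0$ and handling $a > 0$ by a perturbation argument. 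I would carry out the construction and soundness first (these are immediate from the stated theorems), then devote the bulk of the write-up to nailing down the uniform completeness constant.
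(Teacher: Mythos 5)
There is a genuine gap, and it sits exactly where you flagged your ``main obstacle'': the completeness of your verifier. You take the honest \emph{yes} witness to be $\ket{\psi}^{\otimes(k+1)}$ for a state $\ket{\psi}\in S$ that is $a$-close to product, and you hope for a completeness bound for the $k$-copy product test that stays above a constant $c(a)>1/2$ uniformly in $k$. No such bound exists: by \Cref{lem:exact_product_test} the $k$-copy test accepts $\ket{\psi}^{\otimes k}$ with probability $h_k(\lambda_1,\ldots,\lambda_d)$, and for a state with Schmidt spectrum $(1-a^2,a^2)$ this equals $\sum_{j=0}^{k}(1-a^2)^{k-j}(a^2)^{j}=O\bigl((1-a^2)^k\bigr)$, decaying geometrically in $k$, whereas the soundness bound of \Cref{thm:general_qma_k_bound} never drops below $\tfrac{2}{k+1}$. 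So for any fixed $a>0$ the completeness of your honest strategy eventually falls \emph{below} the soundness threshold, no ``sufficiently large $k$'' exists, and moderate $k$ only recovers the restricted parameter regime already handled by \Cref{prop:qma2_product_test_verifier} rather than arbitrary constants $0\le a<b<1$. Your proposed modification of applying the subspace-membership test to all $k$ copies makes matters worse, since each such test costs another factor of roughly $1-a^2$ in completeness (the ``$1/2(1+\omega^2)$-style'' lower bound you hope to extend is special to $k=2$).

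The paper's proof avoids this with a different honest witness: in the \emph{yes} case the prover sends $(k+1)$ copies of the nearby \emph{product} state $\ket{\theta}=\ket{\varphi}\otimes\ket{\xi}$, not the subspace state $\ket{\psi}$. Then the $k$-copy product test accepts with probability exactly $1$ for every $k$, and only the single subspace-membership test (applied to one register, as in your first design) pays a cost, giving completeness at least $1-a^2$ independent of $k$. Soundness then proceeds without any conditioning argument: writing $\epsilon=b^2-a^2$ and choosing $k$ so that $\tfrac{k-1}{k+1}\bigl(1-\tfrac{\epsilon^2}{4}\bigr)^k+\tfrac{2}{k+1}\le 1-b^2$, if a symmetric witness $\ket{\psi}^{\otimes(k+1)}$ is accepted with probability above $1-\tfrac{\epsilon}{2}-a^2$, then \Cref{thm:general_qma_k_bound} forces $\ket{\psi}$ to be $\tfrac{\epsilon}{2}$-close in trace distance to some product state $\ket{\theta}$, and the high acceptance of the subspace test transfers to $\Tr(P_S\ketbra{\theta}{\theta})>1-b^2$, contradicting $b$-complete entanglement. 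If you adopt this witness choice and drop the ``subspace test on all copies'' variant, your construction and soundness sketch become essentially the paper's argument.
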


We recall that $\mathsf{SymQMA(k)}$ is the variant of $\mathsf{QMA(k)}$ where the witness is promised to be a symmetric product state $\ket{\psi}^{\otimes k}$. Since for any constant $k \geq 2,$ $\textsf{SymQMA(k)} = \textsf{QMA(k)} = \QMAtwo$ result of \cite[Lemma 38]{aaronson2008power}, this construction provides another proof that the Entangled Subspace Problem is in $\QMAtwo.$

\subsection{$\QMA$ versus $\QMAtwo$ for State Property Testing}

Next, we turn to constraints on using the Entangled Subspace problem to obtain an oracle separation between $\QMA$ and $\QMAtwo$. One might hope that, given the characterization of $\LU$-invariant polynomials for unitaries that encode a one-dimensional subspace (\Cref{lem:lu-one-dim}), we may be able to obtain a $\QMA$ versus $\QMAtwo$ separation by proving a lower bound on the Entangled Subspace problem by focusing on one-dimensional subspaces only. 

However we show that generally property testing questions concerning states (equivalently, one-dimensional subspaces) are not sufficient to resolve the $\QMA$ versus $\QMAtwo$ problem. Therefore proving \Cref{conj:qmatwo} necessarily requires studying problems about the entanglement of higher dimensional subspaces.

\onedimensionalqma*

\begin{proof}
Let $V$ denote the $\QMAtwo$ verifier that decides $\cal{P}$. We construct a $\QMA$ verifier $V'$ (depicted in \Cref{fig:new_verifier}) that can receive an entangled proof state $\ket{\Phi}$. Label the registers of $\ket{\Phi}$ by $A_1 B_1 A_2 B_2$. The verifier $V'$ performs the subspace membership test on registers $A_1 B_1$ and $A_2 B_2$ separately by calling $U$ controlled on two ancilla qubits initialized in the $\ket{+}$ state. 

The verifier $V'$ then applies Hadamard gates to the ancilla and measures. It takes the post-measurement state of registers $A_1 B_1 A_2 B_2$ and runs the original verifier $V$ on them. The new verifier $V'$ accepts if and only if the two ancilla bits accepted and the original verifier $V$ accepted.

If $U$ encodes a pure state $\ket{\psi}$ and is a \emph{yes} instance, then by assumption on the original verifier $V$, we can run $V'$ with proof state $\ket{\Phi} = \ket{\psi}^{\otimes 2}$ and it will accept with the same probability as $V$.

On the other hand, assume that $U$ is a \emph{no} instance, and let $\ket{\Phi}$ be the (possibly entangled) proof state provided to verifier $V'$.
Decompose $\ket{\Phi} = c_0 \ket{\psi}^{\otimes 2} + c_1 \ket{\xi}$ for some state $\ket{\xi}$ orthogonal to $\ket{\psi}^{\otimes 2}$. Since $V'$ accepts $\ket{\Phi}$ only when the subspace membership tests accept, then $V'$ accepts $\ket{\Phi}$ with probability $|c_0|^2 s$, where $s$ is the probability that $V$ accepts $\ket{\psi}^{\otimes 2}.$ Since $|c_0|^2 \leq 1,$ then the acceptance probability of $V$ at most $s$.

Therefore, $V'$ has the same soundness and completeness as $V$, even allowing for entangled states as input.

\begin{figure}
    \centering
    \mbox{
    \Qcircuit @C=1.4em @R=1.4em{  
       \lstick{\ket{\Phi}} & \qw & \gate{U} & \qw & \qw & \multigate{2}{V} & \qw \\ 
       & \qw & \qw & \gate{U} & \qw & \ghost{V} & \qw \\ 
       \lstick{\ket{0}} & \qw & \qw & \qw & \qw &  \ghost{V} & \qw \\
       \lstick{\ket{+}} & \qw & \ctrl{-3} & \qw & \gate{H} & \qw & \meter \\ 
       \lstick{\ket{+}} & \qw & \qw & \ctrl{-3} & \gate{H} & \qw & \meter \\ 
     }
     }
    \caption{The verifier $V'$ in the proof of \Cref{thm:one_dimensional_property_qma}.}
    \label{fig:new_verifier}
\end{figure}
\end{proof}

Hence, combining \Cref{thm:one_dimensional_property_qma} and \Cref{prop:qma2_product_test_verifier}, we obtain that there exists a parameter range $(a, b)$ for which there exists a $\QMA$ verifier for solving the one-dimensional $(a,b)$-Entangled Subspace problem.

\subsection{The $\QMA$ (Un)soundness of the Product Test Verifier}
\label{subsec:qma_example}

We now show that when there is no unentanglement guarantee for the proof state, the product test verifier fails to be sound. What this means is that there is a \emph{no} instance $U$ of the Entangled Subspace problem (with different parameters) but also a proof state $\ket{\theta}$ that may be completely entangled across the four registers $A_1 B_1 A_2 B_2$, such that the product test verifier will accept with probability $1$ when making queries to $U$. In other words, the product test verifier can be \emph{fooled} by an entangled proof in the $\QMA$ setting.

\begin{proposition}
Let $d \geq 4$. Let $u,v,w,x \in [d]$ be distinct. Let $S \subseteq \C^d \otimes \C^d$ denote the six-dimensional subspace spanned by $\frac{\ket{uv} + \ket{vu}}{\sqrt{2}}, \frac{\ket{uw} + \ket{wu}}{\sqrt{2}}, \frac{\ket{ux} + \ket{xu}}{\sqrt{2}}, \frac{\ket{vw} + \ket{wv}}{\sqrt{2}}, \frac{\ket{vx} + \ket{xv}}{\sqrt{2}}, \frac{\ket{wx} + \ket{xw}}{\sqrt{2}}.$ Let $\Pi$ be the projector onto $S$ and let $\ket{\psi_{uvwx}}$ be the state

\begin{equation*} 
\begin{split} 
\ket{\psi_{uvwx}} &= \frac{1}{\sqrt{24}} \Big [(\ket{uv} + \ket{vu}) \otimes (\ket{wx} + \ket{xw}) +  (\ket{wx} + \ket{xw}) \otimes (\ket{uv} + \ket{vu})  \\ 
&+ (\ket{uw} + \ket{wu}) \otimes (\ket{vx} + \ket{xv}) + (\ket{vx} + \ket{xv}) \otimes (\ket{uw} + \ket{wu}) \\
&+  (\ket{ux} + \ket{xu}) \otimes (\ket{vw} + \ket{wv}) + (\ket{vw} + \ket{wv}) \otimes (\ket{ux} + \ket{xu}) \Big ]~.
\end{split}
\end{equation*}

Then:

\begin{enumerate}
    \item $S$ is a $1/4$-completely entangled subspace, with every state $\ket{\varphi} \in S$ having overlap at most $\frac{3}{4}$ with a product state.
    \item The product test verifier making queries to $U = I - 2\Pi$, accepts the entangled proof state $\ket{\psi_{uvwx}}$ with probability $1$ for any $\ket{\phi} \in S$. %
\end{enumerate}
\label{thm:counterexample}
\end{proposition}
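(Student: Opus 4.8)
The plan is to prove the two parts separately, both by elementary linear algebra; the only genuine (if small) point is the overlap estimate in part~1. For part~1, I would set $W=\mathrm{span}\{\ket u,\ket v,\ket w,\ket x\}\cong\C^4$. Every $\ket\varphi\in S$ lies in the symmetric subspace of $W\otimes W$, so write $\ket\varphi=\sum_{i<j}c_{ij}\,\tfrac{\ket{ij}+\ket{ji}}{\sqrt2}$ with $\sum_{i<j}|c_{ij}|^2=1$, the sum over unordered pairs of $\{u,v,w,x\}$. Given a product state $\ket\alpha\otimes\ket\beta$, replacing $\ket\alpha,\ket\beta$ by their normalized projections onto $W$ does not decrease $|\braket{\varphi|\alpha\otimes\beta}|$ (since $\ket\varphi\in W\otimes W$), so I may take $\ket\alpha,\ket\beta$ to be unit vectors in $W$ with coordinates $\alpha_i,\beta_i$. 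Then $\braket{\varphi|\alpha\otimes\beta}=\sum_{i<j}\tfrac{\overline{c_{ij}}}{\sqrt2}(\alpha_i\beta_j+\alpha_j\beta_i)$, and Cauchy--Schwarz followed by a direct expansion gives
\[
|\braket{\varphi|\alpha\otimes\beta}|^2\ \le\ \tfrac12\sum_{i<j}|\alpha_i\beta_j+\alpha_j\beta_i|^2\ =\ \frac{1+|\braket{\alpha|\beta}|^2}{2}-\sum_i|\alpha_i|^2|\beta_i|^2.
\]
It then remains to show $|\braket{\alpha|\beta}|^2\le\tfrac12+2\sum_i|\alpha_i|^2|\beta_i|^2$: setting $\sigma=\sum_i|\alpha_i||\beta_i|\ge|\braket{\alpha|\beta}|$, Cauchy--Schwarz over the four indices gives $\sigma\le1$ and $\sum_i|\alpha_i|^2|\beta_i|^2\ge\sigma^2/4$, whence $\tfrac12+2\sum_i|\alpha_i|^2|\beta_i|^2\ge\tfrac12+\sigma^2/2\ge\sigma^2$. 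So $\omega_{\ket\varphi}\le 3/4$ for every $\ket\varphi\in S$, and by the pure-state fidelity/trace-distance relation every state of $S$ is at trace distance at least $\sqrt{1-3/4}=1/2$ from every product state; in particular $S$ is $1/4$-completely entangled.

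For part~2, the key observation is that $\ket{\psi_{uvwx}}$ is the fully symmetric state of the four single-qudit registers $A_1,B_1,A_2,B_2$. Each of the six summands in its definition has the form $(\ket{ij}+\ket{ji})_{A_1B_1}\otimes(\ket{kl}+\ket{lk})_{A_2B_2}$, where $\{i,j\}$ and $\{k,l\}$ are the two blocks of one of the three pair-partitions of $\{u,v,w,x\}$, taken in both orders; expanding, each summand contributes four computational-basis strings, each a distinct ordering of $(u,v,w,x)$, and I would check that the resulting $6\cdot4=24$ strings are pairwise distinct, hence are precisely the $4!=24$ orderings of $(u,v,w,x)$, each with amplitude $1/\sqrt{24}$. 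Thus $\ket{\psi_{uvwx}}$ is the uniform superposition over all orderings of $u,v,w,x$, invariant under every permutation of the four registers. Now run the product test verifier of \Cref{prop:qma2_product_test_verifier} on this proof with oracle $U=I-2\Pi$: the two subspace membership tests accept with probability $1$ because each summand lies in $S_{A_1B_1}\otimes S_{A_2B_2}$ (both tensor factors are among the six spanning vectors of $S$), and the swap tests on $A_1A_2$ and on $B_1B_2$ accept with probability $1$ because $\ket{\psi_{uvwx}}$ is fixed by the corresponding transpositions. Hence the verifier accepts with probability $1$; by part~1 and the product-test bounds (\Cref{thm:product_test}, \Cref{prop:product_test_non_symmetric}) the same verifier rejects every product proof with constant probability, so it is unsound in the $\QMA$ setting.

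I expect the main obstacle to be pure bookkeeping rather than anything conceptual. In part~1 this means verifying the displayed identity $\tfrac12\sum_{i<j}|\alpha_i\beta_j+\alpha_j\beta_i|^2=\tfrac{1+|\braket{\alpha|\beta}|^2}{2}-\sum_i|\alpha_i|^2|\beta_i|^2$ and being careful in the regime $\ket\alpha\approx\ket\beta$, where the cruder symmetrization bound $\tfrac12(1+|\braket{\alpha|\beta}|^2)$ is too weak; the extra subtracted term, which arises precisely because $S$ omits the diagonal vectors $\ket{ii}$, is what saves the argument. In part~2 it means checking carefully that the twenty-four basis strings are pairwise distinct — equivalently, that every ordering of $(u,v,w,x)$ occurs exactly once. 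The part~1 inequality is the only place a (minor) new idea is needed.
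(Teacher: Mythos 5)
Your proposal is correct, and part 2 is essentially the paper's argument: the paper likewise notes that each summand lies in $S \otimes S$ (so the membership tests accept with certainty) and that $\ket{\psi_{uvwx}}$ is invariant under all permutations of the four registers (so both swap tests accept with certainty); you merely make the symmetry explicit by identifying $\ket{\psi_{uvwx}}$ as the uniform superposition over the $24$ orderings of $(u,v,w,x)$, which is a fine way to verify it. Part 1, however, takes a genuinely different route. The paper first reduces to the case $d=4$ by an isometry, then invokes a lemma of Hayashi et al.\ (cited as \cite{hayashi2008entanglement} in the paper) saying that since $S$ lies in the symmetric subspace, the closest product state to any $\ket{\varphi}\in S$ may be taken of the form $\ket{\phi}^{\otimes 2}$; writing $\ket{\phi}=\sum_i\beta_i\ket{i}$, a single Cauchy--Schwarz gives $|\braket{\phi^{\otimes2}|\varphi}|^2 \le 2\sum_{i<j}|\beta_i|^2|\beta_j|^2 = 1-\sum_i|\beta_i|^4 \le 3/4$, using $\sum_i|\beta_i|^4\ge 1/4$ over the four indices. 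You instead bound the overlap against an \emph{arbitrary} product state $\ket{\alpha}\otimes\ket{\beta}$ directly, via the exact identity $\tfrac12\sum_{i<j}|\alpha_i\beta_j+\alpha_j\beta_i|^2=\tfrac{1+|\braket{\alpha|\beta}|^2}{2}-\sum_i|\alpha_i|^2|\beta_i|^2$ and the elementary estimate $|\braket{\alpha|\beta}|^2\le\sigma^2\le\tfrac12+\tfrac{\sigma^2}{2}\le\tfrac12+2\sum_i|\alpha_i|^2|\beta_i|^2$ with $\sigma=\sum_i|\alpha_i||\beta_i|$; I checked the identity and the chain of inequalities, and they are correct (your $\sum_i|\alpha_i|^2|\beta_i|^2\ge\sigma^2/4$ plays exactly the role of the paper's $\sum_i|\beta_i|^4\ge1/4$, and both arguments crucially use that the support has only four indices). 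What each buys: your argument is self-contained and avoids the symmetric-closest-product-state lemma entirely, at the cost of a slightly longer computation; the paper's is shorter but leans on the cited lemma and the symmetry of $S$. Both give $\omega_{\ket{\varphi}}\le 3/4$, hence trace distance at least $1/2\ge1/4$ from every product state, as required.
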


\begin{proof}
Assume without loss of generality that $d = 4$. Otherwise apply an isometry $W \otimes W$ to $S$ where $W: \text{Span}(\ket{u}, \ket{v}, \ket{w}, \ket{x}) \rightarrow \mathbb{C}^d$ is an isometry, which does not change the magnitude of the closest product state by construction.  

Let $\ket{\varphi} \in S$. Observe that $S$ is contained in the symmetric subspace of $\C^d \otimes \C^d$, so by \cite[Lemma 1]{hayashi2008entanglement}, the closest product state to $\ket{\varphi}$ can be chosen to be a symmetric state $\ket{\phi}^{\otimes 2}.$ Write $\ket{\phi} = \sum_{i=1}^4 \beta_i \ket{i}.$ Write $\ket{\psi_{ij}} = \frac{\ket{ij} + \ket{ji}}{\sqrt{2}}$ and $\ket{\psi} = \sum_{1 \leq i < j \leq 4} \alpha_{ij} \ket{\psi_{ij}}.$ Then by the Cauchy-Schwartz inequality and normalization: 

$$| \braket{\phi^{\otimes 2} \vert \psi} |^2 = \left| \sum_{1 \leq i < j \leq 4} \alpha_{ij} \frac{\beta_i \beta_j + \beta_j \beta_i}{\sqrt{2}} \right |^2 \leq 2 \sum_{1 \leq i < j \leq 4} |\alpha_{ij}|^2  \sum_{1 \leq i < j \leq 4} |\beta_i|^2 |\beta_j|^2   = 2 \sum_{1 \leq i < j \leq 4} |\beta_i|^2 |\beta_j|^2 $$

Since $2 \sum_{i < j} |\beta_i|^2 |\beta_j|^2 + \sum_{i=1}^4 |\beta_i|^4 = 1$ and $\sum_{i=1}^4 |\beta_i|^2 = 1$ since $\ket{\phi}$ and $\ket{\phi}^{\otimes 2}$ are normalized, we can conclude that $$\sum_{i=1}^4 |\beta_i|^4  = \sum_{i=1}^4 |\beta_i|^2 |\beta_i|^2 \geq \frac{1}{4} \sum_{i=1}^4 |\beta_i|^2 = \frac{1}{4}.$$

Therefore, $2 \sum_{i < j} |\beta_i|^2 |\beta_j|^2 \leq \frac{3}{4}$ is an upper bound for the overlap with a product state for any state $\ket{\varphi} \in S$. This establishes the first item of the proposition statement.

For the second item, let $\ket{\psi_{uvwx}}$ be the proof state given to the product test verifier in Figure \ref{figure:product_test}. By construction, the membership queries pass with probability one. Furthermore, observe that $\ket{\psi_{uvwx}}$ is symmetric under all permutations of the registers, so $\ket{\psi_{uvwx}}$ passes the 2-copy product test with probability $1$. Hence the verifier accepts the state $\ket{\psi_{uvwx}}$ and oracle $U$ with probability one. Hence the verifier is not sound against entangled proofs since $\ket{\psi_{uvwx}}$ was entangled, and $S$ is a $\frac{1}{4}$-completely entangled subspace. 
\end{proof}

In \Cref{sec:fool} we give a complete characterization of the proof states that fool the product test verifier; this relies on the representation theory of the symmetric group. We believe that these examples of subspaces and proof states that fool purported $\QMA$ verifiers for the Entangled Subspace problem should yield insight into proving general $\QMA$ lower bounds for the Entangled Subspace problem. 

\subsection{Average Case Versions of the Entangled Subspace Problem}
\label{subsec:average_case}

In this section, we discuss the average case variants of the Entangled Subspace problem, restated from the introduction.

\plantedproductstate*

\restricteddimcount*

Combined with the following result about Haar-random subspaces, we obtain our two property testing problems are in fact average case versions of the Entangled Subspace problem as claimed.  Informally, the lemma asserts that if $S \subseteq \mathbb{C}^d \otimes \mathbb{C}^d$ was a Haar-random subspace of small dimension $s$ compared to $d^2,$ then every state in $S$ is entangled with high probability. Otherwise, for sufficiently large $s$, there exists a state in $S$ that is close to a product state. The proof is based on the techniques of \cite{Hayden_2006} that use L\'evy lemma for the Haar measure and some additional observations about the closest product state.

\begin{lemma}[Levy's Lemma {{\cite{milman1986asymptotic, ledoux2001concentration}}}]
\label{lem:levy}
Let $f: \mathbb{S}^k \to \mathbb{R}$ be a function with Lipschitz constant $\eta$ (with respect to the Euclidean norm) and let $\ket{\psi} \in \mathrm{S}^k$ be chosen uniformly at random from the Haar measure. Then
\[
\Pr \Big( \Big| f(\ket{\psi}) - \E f \Big| > \alpha \Big) \leq 2 \exp \Big( - C (k+1)\alpha^2 /\eta^2 \Big)
\]
where $C = (9\pi^3 \ln 2)^{-1}$ and $\E f$ denotes the average of $f$ over $\mathbb{S}^k$.
\end{lemma}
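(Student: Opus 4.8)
The plan is to derive Levy's lemma from the \emph{spherical isoperimetric inequality}, following the classical Lévy--Milman--Schechtman argument. The statement as quoted is a standard textbook fact and is quite reasonably just cited from \cite{milman1986asymptotic, ledoux2001concentration}; what follows is the skeleton one would use to make it self-contained. First I would normalize: replacing $f$ by $f/\eta$ we may assume the Lipschitz constant is $1$, and it suffices to prove concentration around a \emph{median} $M_f$ (a value with $\Pr(f \leq M_f) \geq 1/2$ and $\Pr(f \geq M_f) \geq 1/2$) and then transfer to the mean at the end. Let $A = \{x \in \mathbb{S}^k : f(x) \leq M_f\}$, so $\mu(A) \geq 1/2$ for $\mu$ the uniform (normalized Haar) measure. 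Since $f$ is $1$-Lipschitz, the $\alpha$-neighborhood $A_\alpha = \{x : d(x,A) < \alpha\}$ (geodesic distance, equivalent up to a harmless constant to the chordal/Euclidean one) is contained in $\{f < M_f + \alpha\}$.

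The crux is the isoperimetric inequality on the sphere: among all measurable sets of a fixed measure, a spherical cap minimizes the measure of the $\alpha$-neighborhood. Applying this to $A$ (whose measure is at least that of a hemisphere) gives $\mu(A_\alpha) \geq \mu(C_\alpha)$ where $C$ is a hemisphere. I would then estimate the complement of the $\alpha$-neighborhood of a hemisphere: this is a cap of geodesic radius $\pi/2 - \alpha$, and integrating the surface-area density $\sin^{k-1}\theta$ yields $\mu(\mathbb{S}^k \setminus C_\alpha) \leq \exp(-c(k+1)\alpha^2)$ for an absolute constant $c$; tracking this estimate through the normalizations used in the cited references produces the stated value $C = (9\pi^3 \ln 2)^{-1}$. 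Combining, $\Pr(f \geq M_f + \alpha) \leq \mu(\mathbb{S}^k \setminus A_\alpha) \leq \exp(-C(k+1)\alpha^2)$, and the same argument applied to $-f$ gives $\Pr(f \leq M_f - \alpha) \leq \exp(-C(k+1)\alpha^2)$. A union bound gives the two-sided bound around $M_f$. Finally, $|M_f - \E f| \leq \int_0^\infty \Pr(|f - M_f| > t)\,dt = O(1/\sqrt{k})$, which can be absorbed by adjusting the constant so that concentration holds around $\E f$ as stated.

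The main obstacle is the spherical isoperimetric inequality itself, which is the genuine analytic content: showing that caps are extremal needs either a two-point symmetrization / spherical rearrangement argument or a calibration argument, and it is not reconstructed in a couple of lines. The secondary nuisance is bookkeeping the explicit constant $C = (9\pi^3 \ln 2)^{-1}$, since different sources normalize the metric and the neighborhood slightly differently; I would fix conventions to match \cite{milman1986asymptotic, ledoux2001concentration} rather than re-optimize. Because this is used purely as a black box in what follows, in the paper itself the appropriate move is simply to cite it, with the above sketch recording how a proof would go if one were demanded.
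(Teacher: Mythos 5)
The paper offers no proof of this lemma at all---it is invoked purely as a cited black box from \cite{milman1986asymptotic,ledoux2001concentration}---so your decision to cite it and merely record a proof skeleton matches the paper exactly. Your sketch (normalize the Lipschitz constant, concentrate around a median via the spherical isoperimetric inequality plus a cap-measure estimate of the form $\exp(-c(k+1)\alpha^2)$, then transfer from median to mean while absorbing the shift into the constant) is the standard and correct route, with the only genuinely delicate points being the isoperimetric inequality itself and the bookkeeping of the explicit constant $C=(9\pi^3\ln 2)^{-1}$, both of which you rightly defer to the cited references.
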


\begin{lemma}
Let $\omega$ be the overlap of $\ket{\psi}$ with the closest product state as defined in \Cref{thm:product_test}. If $\rho = \Tr_1(\ket{\psi} \bra{\psi})$ was the reduced density matrix of $\ket{\psi}$, then 

$$\omega^2 \leq \Tr(\rho^2) \leq \omega.$$ 
\label{lem:entanglement_schmidt}
\end{lemma}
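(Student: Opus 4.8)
The plan is to reduce both inequalities to elementary facts about the Schmidt coefficients of $\ket{\psi}$. First I would fix a Schmidt decomposition $\ket{\psi} = \sum_i \sqrt{\lambda_i}\,\ket{a_i}\otimes\ket{b_i}$, where the $\lambda_i \geq 0$ are the eigenvalues of $\rho = \Tr_1(\ket{\psi}\bra{\psi})$ and $\sum_i \lambda_i = 1$; then $\Tr(\rho^2) = \sum_i \lambda_i^2$. The crux is the identity $\omega = \lambda_{\max}$, where $\lambda_{\max} = \max_i \lambda_i$ is the largest Schmidt coefficient, equivalently the top eigenvalue of $\rho$.

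To establish $\omega = \lambda_{\max}$, I would optimize over the two local factors one at a time. For a fixed $\ket{\phi_2}$, writing $\ket{v} = (I \otimes \bra{\phi_2})\ket{\psi}$, we have $|\braket{\psi | \phi_1 \otimes \phi_2}|^2 = |\braket{\phi_1 | v}|^2$, which is maximized over unit vectors $\ket{\phi_1}$ at $\|v\|^2 = \bra{\phi_2}\rho\ket{\phi_2}$. Maximizing this over unit $\ket{\phi_2}$ gives exactly $\lambda_{\max}$, attained when $\ket{\phi_2}$ is a top eigenvector of $\rho$ and $\ket{\phi_1}$ the corresponding Schmidt partner. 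Alternatively, one can apply Cauchy--Schwarz directly to $\sum_i \sqrt{\lambda_i}\,\braket{a_i|\phi_1}\braket{b_i|\phi_2}$.

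Given this identity, both bounds are one-liners. The lower bound $\omega^2 \leq \Tr(\rho^2)$ follows because $\lambda_{\max}^2$ is one of the nonnegative summands in $\sum_i \lambda_i^2$. The upper bound $\Tr(\rho^2) \leq \omega$ follows from $\sum_i \lambda_i^2 \leq \lambda_{\max}\sum_i \lambda_i = \lambda_{\max}$, using the normalization $\Tr\rho = 1$. I do not expect a genuine obstacle here; the only point requiring care is the (standard) computation $\omega = \lambda_{\max}(\rho)$ and carrying out the two-stage maximization over $\ket{\phi_1}$ and $\ket{\phi_2}$ in the correct order.
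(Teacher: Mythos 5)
Your proof is correct and follows essentially the same route as the paper: establish $\omega = \lambda_{\max}(\rho)$ and then apply the two elementary inequalities $\lambda_{\max}^2 \leq \sum_i \lambda_i^2 \leq \lambda_{\max}\sum_i\lambda_i = \lambda_{\max}$. The only difference is that the paper simply cites \cite[Lemma 2]{harrow2013testing} for the identity $\omega = \lambda_{\max}$, whereas you prove it directly via the two-stage maximization, which makes the argument self-contained.
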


\begin{proof}
Let $\lambda_1 \geq \lambda_2 \geq \dots \geq \lambda_d$ be the eigenvalues of $\rho$. Then \cite[Lemma 2]{harrow2013testing} shows that $\omega = \lambda_1.$ Therefore, since $\sum_{i=1}^d \lambda_i = 1$ and each $\lambda_i$ is non-negative, we have $$\lambda_1^2 \leq \Tr(\rho^2) = \sum_{i=1}^d \lambda_i^2 \leq \sum_{i=1}^d \lambda_1 \lambda_i = \lambda_1.$$
\end{proof}

\begin{theorem}
Let $S \subseteq \mathbb{C}^d \otimes \mathbb{C}^d$ be a Haar-random subspace of dimension $s$, and let $\ket{\psi} \in S$ and let $C$ be the constant from Levy's Lemma. Then 

\begin{enumerate}
    \item For all constant $\delta > 0$, if $d \geq \frac{4}{\delta}$ and $s \leq \frac{C \delta^2}{1024 \log(\frac{40}{\delta})} d^2,$ then all states in $S$ have purity at most $\delta$ with high probability. In particular,
    
    $$\Pr_{\ket{\psi} \in S}[\omega_{\ket{\psi}} \geq \sqrt{\delta}] \leq \exp(-O(d^2)).$$
    \item For all constant $\delta > 0,$ if $s \geq 2 \sqrt{\delta}  d^2$, then $S$ contains a state with overlap with the closest product state $\delta$ with high probability. In particular,  $$\Pr_{\ket{\psi} \in S}[\sup_{\ket{\psi} \in S} \omega_{\ket{\psi}} \leq \delta] \leq  \exp(-O(d^2)).$$ 
\end{enumerate}
\label{lem:random_entanglement}
\end{theorem}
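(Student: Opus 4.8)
The plan is to use Levy's Lemma (\Cref{lem:levy}) applied to the function that measures overlap with the product states, combined with a union bound over an $\epsilon$-net.

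**Item 1 (small $s$ gives entanglement).** First I would consider the function $f(\ket{\psi}) = \sqrt{\omega_{\ket{\psi}}} = \max_{\ket{\phi_1},\ket{\phi_2}} |\braket{\psi}{\phi_1 \otimes \phi_2}|$, which is the overlap of $\ket{\psi}$ with the set of product states. This function is $1$-Lipschitz on the sphere $\mathbb{S}^{2d^2-1}$ (it is a maximum of linear functionals of norm $1$, hence $1$-Lipschitz). Its expectation over a Haar-random $\ket{\psi} \in \C^d \otimes \C^d$ is small: the standard argument (used in \cite{Hayden_2006}) bounds $\E f$ by first fixing a product state $\ket{\phi_1}\otimes\ket{\phi_2}$, noting $|\braket{\psi}{\phi_1 \otimes \phi_2}|$ concentrates around $\approx 1/d$, and then taking a union bound over an $\epsilon$-net of product states, which has size $\exp(O(d))$. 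One concludes $\E f \leq O(1/d)$ up to lower-order terms, so for $d \geq 4/\delta$ we get $\E f \leq \sqrt{\delta}/4$ say. Now Levy's Lemma gives that a \emph{fixed} $\ket{\psi}$ has $f(\ket{\psi}) \geq \sqrt{\delta}/2$ with probability at most $\exp(-\Omega(d^2 \delta))$. To upgrade this to \emph{all} $\ket{\psi} \in S$ simultaneously, I take an $\epsilon$-net $\mathcal{N}$ on the unit sphere of the fixed reference subspace of dimension $s$ with $\epsilon$ a small constant (say $\epsilon = \sqrt{\delta}/4$); this net has size $(O(1/\epsilon))^{2s} = \exp(O(s \log(1/\delta)))$. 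The subspace $S = US_0$ for Haar-random $U$; equivalently, the images $U\ket{v}$ for $\ket{v}$ in the net $\mathcal{N}$ are (non-independent but) each individually Haar-distributed. A union bound over $\mathcal{N}$ shows that with probability $1 - \exp(O(s\log(1/\delta)) - \Omega(d^2\delta))$, every net point has small overlap; since $s \leq \frac{C\delta^2}{1024\log(40/\delta)}d^2$, the exponent is dominated by $-\Omega(d^2\delta)$ and the failure probability is $\exp(-\Omega(d^2))$. Finally, by $1$-Lipschitzness of $f$ and the definition of the net, every $\ket{\psi} \in S$ (not just net points) has $f(\ket{\psi}) \leq \sqrt{\delta}$, i.e., $\omega_{\ket{\psi}} \leq \delta$; in particular $\Tr(\rho^2) \leq \omega_{\ket{\psi}} \leq \delta$ by \Cref{lem:entanglement_schmidt}.

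**Item 2 (large $s$ contains near-product state).** Here I would use a dimension-counting / intersection argument rather than concentration. The set of product states spans, in an appropriate sense, a variety of real dimension $\approx 4d$ inside the $2d^2$-real-dimensional sphere. If $S$ has dimension $s \geq 2\sqrt{\delta}\, d^2$, then a Haar-random $s$-dimensional subspace will, with overwhelming probability, come within trace distance roughly $\sqrt{1-s/d^2}$-ish of \emph{some} fixed vector; more precisely, for any fixed product state $\ket{\phi_1 \otimes \phi_2}$, the squared overlap $\|P_S \ket{\phi_1\otimes\phi_2}\|^2$ of a Haar-random $s$-dimensional projector $P_S$ concentrates around $s/d^2$. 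So taking $\ket{\psi}$ to be the normalized projection $P_S\ket{\phi_1\otimes\phi_2}/\|P_S\ket{\phi_1\otimes\phi_2}\|$, which lies in $S$, we get $\omega_{\ket{\psi}} \geq |\braket{\psi}{\phi_1\otimes\phi_2}|^2 = \|P_S\ket{\phi_1\otimes\phi_2}\|^2 \approx s/d^2 \geq \sqrt{\delta}$, so in particular $\omega_{\ket{\psi}} \geq \delta$ (using $\sqrt\delta \geq \delta$ for $\delta \leq 1$; one should track constants to land at the stated $\delta$, possibly absorbing a factor of $2$). The concentration of $\|P_S\ket{\phi_1\otimes\phi_2}\|^2$ follows again from Levy's Lemma applied to the map $U \mapsto \|U P_{S_0} U^* \ket{\phi_1\otimes\phi_2}\|^2$ — or more simply from the fact that $U\ket{\phi_1\otimes\phi_2}$ is Haar-random and $\|P_{S_0}\cdot\|^2$ of a Haar-random vector concentrates around $s/d^2$ with fluctuations $\exp(-\Omega(d^2))$.

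**Main obstacle.** The delicate part is Item 1: getting the constants in the net argument to line up so that the net-size exponent $O(s\log(1/\delta))$ is genuinely beaten by the Levy concentration exponent $\Omega(Cd^2\delta)$ — this is exactly why the theorem statement carries the specific constant $\frac{C\delta^2}{1024\log(40/\delta)}$. One must be careful that the $\epsilon$-net is taken over \emph{product states} in $\C^d\otimes\C^d$ (size $\exp(O(d\log(1/\epsilon)))$, cheap) for the bound on $\E f$, and separately over the \emph{reference subspace} $S_0$ (size $\exp(O(s\log(1/\epsilon)))$, the expensive one) for the uniform-over-$S$ statement, and that the Lipschitz constants and net granularities are chosen consistently so the final overlap bound is exactly $\sqrt\delta$ after accounting for the $2\epsilon$ slack. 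I would follow the bookkeeping in \cite{Hayden_2006} closely here.
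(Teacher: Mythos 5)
Your Item 2 is essentially the paper's own argument: fix a product state $\ket{a}\otimes\ket{b}$, observe that its squared overlap with the Haar-random subspace, $\bra{v}P_S\ket{v}$, has Haar mean $s/d^2$ and is $O(1)$-Lipschitz, apply Levy's Lemma, and take the normalized projection $P_S\ket{v}/\|P_S\ket{v}\|$ as the witness state; the paper does exactly this (Lipschitz constant $2$, mean $s/d^2$). For Item 1 you take a genuinely different route. The paper concentrates the \emph{purity} $f(\ket{\psi})=\Tr(\rho^2)$, whose Haar mean is known exactly ($2d/(d^2+1)$) and whose Lipschitz constant is $4$, runs the $\epsilon$-net-over-the-subspace union bound with $\epsilon=\delta/8$, and only at the end converts purity to product overlap via \Cref{lem:entanglement_schmidt}. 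You instead concentrate $\sqrt{\omega_{\ket{\psi}}}=\max|\braket{\psi \vert \phi_1\otimes\phi_2}|$ directly; this is legitimate (it is indeed $1$-Lipschitz as a supremum of norm-one linear functionals), and because your deviation scale is $\sqrt{\delta}$ rather than $\delta$, your Levy exponent is $\Omega(d^2\delta)$ versus the paper's $\Omega(d^2\delta^2)$, so your route tolerates $s$ up to roughly $d^2\delta/\log(1/\delta)$ and the stated hypothesis on $s$ is satisfied with room to spare. What the paper's choice buys is an exact, net-free formula for the Haar mean; what your choice buys is a quantitatively stronger $s$-dependence, at the cost of having to estimate $\E\sqrt{\omega}$ separately.

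That estimate is where you have a genuine (though non-fatal) error: a union bound over an $\exp(O(d))$-sized net of product states against Levy tails $\exp(-\Omega(d^2t^2))$ only yields fluctuations $t\sim 1/\sqrt{d}$, so it gives $\E\max_{\mathrm{prod}}|\braket{\psi \vert \phi_1\otimes\phi_2}|=O(1/\sqrt{d})$, not $O(1/d)$ — and this is tight, since for a Haar-random bipartite state $\omega_{\ket{\psi}}=\lambda_{\max}(\rho)\approx 4/d$, hence $\E\sqrt{\omega}=\Theta(1/\sqrt{d})$. The asymptotic conclusion survives, because $O(1/\sqrt{d})\leq O(\sqrt{\delta})$ once $d\gtrsim 1/\delta$, but your bookkeeping as written ($\E f\leq\sqrt{\delta}/4$ already at $d\geq 4/\delta$) does not close with the stated constants: at $d=4/\delta$ the mean alone is already about $\sqrt{\delta}$. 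You would need either $d\geq c/\delta$ for a larger constant $c$, or to accept a conclusion of the form $\omega_{\ket{\psi}}\leq O(\delta)$ rather than exactly $\delta$; alternatively, switching to the purity as in the paper sidesteps the issue since its mean is $\Theta(1/d)$ and known in closed form.
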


\begin{proof}
To show the first part, let $f(\ket{\psi}) = \Tr(\rho^2)$ denote the purity of the reduced density matrix of $\ket{\psi}$ on the first $\mathbb{C}^d$ factor. The Lipschitz constant of $f$ is $4$ by \cite[Lemma III.8]{Hayden_2006}. 

Let $P$ be a fixed projector onto the first $s$ basis states (according to some canonical ordering) of $\mathbb{C}^d \otimes \mathbb{C}^d$, and $U$ is a Haar-random unitary on $\mathbb{C}^d \otimes \mathbb{C}^d$. Let $\cal{N}$ denote an $\epsilon$-net for the image of the projector $P$ of size $(5/\epsilon)^{2s}$, which exists by \cite[Lemma III.6]{Hayden_2006}. Note that $U \cal{N}$ is an $\epsilon$-net for the image of the projector $P_S = U P U^*$, which is a uniformly random subspace of dimension $s$ when $U$ is choosen to be Haar-random.

We want to bound the probability that there exists a state $\ket{\psi} \in S$ whose purity is large. 
\begin{align*}
    \Pr \Big( \exists \, \ket{\psi} \in S \, \text{ such that } f(\ket{\psi}) \geq \delta \Big) &\leq \Pr \Big( \exists \, \ket{\varphi} \in U \cal{N} \, \text{ such that } f(\ket{\varphi}) \geq \delta - 4\epsilon\Big) \\
    &\leq \sum_{\ket{\phi} \in \cal{N}} \Pr \Big( f(U \ket{\phi}) \geq \delta - 4\epsilon \Big) \\
    &= \Big( \frac{5}{\epsilon} \Big)^{2s} \cdot \Pr_{\ket{\phi} \sim \text{Haar}(d^2)} \Big( f(\ket{\phi}) \geq \delta - 4\epsilon \Big).
\end{align*}

On average, the purity of a Haar-random state in $\mathbb{C}^d \otimes \mathbb{C}^d$ is $\beta := 2d/(d^2 + 1)$ by  \cite[Proposition 4.14]{collins2016random}.

Thus by Levy's Lemma we have that
\[
\Pr_{\ket{\phi} \sim \text{Haar}(d^2)} \Big( f(\ket{\phi}) \geq \delta - 4\epsilon\Big) \leq 2 \exp \Big( -\frac{C}{16} (d^2 + 1) (\delta - 4\epsilon - \beta)^2 \Big)
\]
where $C$ was the constant from  \Cref{lem:levy}.  Now choosing $\epsilon = \frac{\delta}{8}$ and $d$ sufficiently large so that $\frac{1}{d} \leq \frac{2d}{d^2 + 1} \leq \frac{\delta}{4},$ we have

\begin{align*} 
 \Pr \Big( \exists \, \ket{\psi} \in S \, \text{ such that } f(\ket{\psi}) \geq \delta \Big) &\leq 2 \Big(\frac{40}{\delta} \Big)^{2s} \exp \Big (-\frac{C}{16}(d^2 + 1) (\frac{\delta}{4})^2 \Big) \\ &= 2  \exp \Big (-\frac{C}{256} (d^2 + 1) \delta^2 + 2s \log(\frac{40}{\delta}) \Big)
\end{align*}

by combining the above bounds. Hence the claimed choice of $s$ makes this probability exponentially small in $d^2$. Furthermore, by  \Cref{lem:entanglement_schmidt}, we have $\Pr[\Tr(\rho^2) \geq \delta] \geq \Pr[\omega_{\ket{\psi}} \geq \sqrt{\delta}]$, and hence the probability that the overlap with a product state is at least $\sqrt{\delta}$ is exponentially small.

To show the second part, fix a product state $\ket{v} = \ket{a} \otimes \ket{b} \in \mathbb{C}^d \otimes \mathbb{C}^d$. The overlap between $\ket{v}$ and the subspace $S$ is captured by the quantity
\[
     f(U) = \bra{v} P_S \ket{v} = \bra{v} U P U^* \ket{v}~.
\]

Equivalently, the overlap is \[
    f(\ket{\psi}) = \bra{\psi} P \ket{\psi} 
\]
where $\ket{\psi}$ is a Haar-random state since $\ket{v}$ was a fixed vector. The average of $f(\ket{\psi})$ can be computed as
\[
    \int f(\ket{\psi} \, \mathrm{d}\ket{\psi} = \Tr \Big( P \, \int \ket{\psi}\bra{\psi} \, \mathrm{d}\ket{\psi} \Big) = \frac{1}{d^2} \Tr(P) = \frac{s}{d^2}~.
\]

We compute the Lipschitz constant of $f$ since $P$ is a projector:
\begin{align*}
    \sup_{\ket{\psi},\ket{\varphi}} \frac{ | f(\ket{\psi}) - f(\ket{\varphi})|}{ \| \ket{\psi} - \ket{\varphi} \|} &= \sup_{\ket{\psi},\ket{\varphi}} \frac{ \Big| \| P \ket{\psi} \|^2 - \| P \ket{\varphi} \|^2 \Big| }{\| \ket{\psi} - \ket{\varphi} \|} \\
    &= \sup_{\ket{\psi},\ket{\varphi}} \frac{ \Big| \| P \ket{\psi} \| + \| P \ket{\varphi} \| \Big | \cdot \Big| \| P \ket{\psi} \| - \| P \ket{\varphi} \| \Big | }{\| \ket{\psi} - \ket{\varphi} \|} \\
    &\leq \sup_{\ket{\psi},\ket{\varphi}} \frac{ 2 \Big| \| P \ket{\psi} \| - \| P \ket{\varphi} \| \Big | }{\| \ket{\psi} - \ket{\varphi} \|} \\
    &\leq \sup_{\ket{\psi},\ket{\varphi}} \frac{ 2  \| P (\ket{\psi} - \ket{\varphi}) \| }{\| \ket{\psi} - \ket{\varphi} \|} \\
    &\leq 2~.
\end{align*}

We now apply Levy's Lemma to conclude that
\[
    \Pr \Big( f(\ket{\psi}) < \frac{s}{d^2} - \delta \Big) \leq 2 \exp \Big ( -\frac{C}{4} (d^2 + 1) \delta^2 \Big)
\]

Hence as long as $s \geq 2\sqrt{\delta} d^2,$ we have for $\ket{w} = P_S \ket{v},$

\begin{align*}
     \Pr[|\langle v | w \rangle|^2 \geq \delta] &=  \Pr[|\langle v | w \rangle| \geq \sqrt{\delta}] = 1 -  \Pr[|\langle v | w \rangle| \leq  \sqrt{\delta}] \\ 
     &\geq  1 -  \Pr[|\langle v | w \rangle| \leq \frac{s}{d^2} -  \sqrt{\delta}] \\
     &\geq 1 - 2 \exp \Big( -\frac{C \delta}{4} (d^2 + 1) \Big)
\end{align*}

and therefore $S$ contains a state with overlap at least $\delta$ with probability 1 - $\exp(-O(d^2)).$ 
\end{proof}

We note that the above bounds in  \Cref{lem:random_entanglement} are likely not tight, and finding tight bounds would be an interesting open problem. However, we are also now able to show \Cref{prop:plantedyesno} and \Cref{prop:restrictedyesno} using this result.

\plantedyesno*

\begin{proof}
Clearly in the \text{yes} case, the subspace $S$ contains a product state. Otherwise, given $\epsilon > 0$, choosing $\delta = (1 - \epsilon^2)^2$ in \Cref{lem:random_entanglement} (1) implies that there is some constant $C$ such that a Haar-random subspace of dimension $s \leq Cd^2$ is $\epsilon$-completely entangled. Hence setting choosing any $s$ in this range gives a \textit{no} instance with probability at least $1 - \exp(-O(d^2)).$ 
\end{proof}

\restrictedyesno*

\begin{proof}
Let $a < b$ be the two parameters in the Entangled Subspace problem where \textit{yes} instances have a state that is $a$-close to product and otherwise \textit{no} instances are $b$-completely entangled. Choose $C_1$ using $\delta = (1 - b^2)^2$ from \Cref{lem:random_entanglement} (1), and $C_2$ using $\delta = 1 -a^2$ from \Cref{lem:random_entanglement} (2). Then with probability at least $1  - \exp(-O(d^2)),$ a Haar-random subspace of dimension $\leq C_1 t^2$ is $b$-completely entangled, and a subspace of dimension $\geq C_2 t^2$ contains a state that is $a$-close to a product state. 
\end{proof}

Hence, having observed that our average-case problems can be reduced to the Entangled Subspace problem with overwhelming probability, we use our results from \Cref{subsection:entangled_subspace} to show that they can be solved by a $\QMAtwo$ tester with high probability. Furthermore, we conjecture that a lower bound for a $\QMA$ tester for the Entangled Subspace problem extends to this average case setting.

\subsection{Connections to Invariant Theory}
\label{sec:invariant_theory_qma}

Observe that all of our candidate problems, being special cases of the Entangled Subspace problem, have local unitary symmetries. This follows from product states being preserved under local unitary transformations, and the unitary invariance of the trace distance. 

This opens up the possibly of using the generalized polynomial method to prove  a  $\QMA$ lower bound for our candidate problems. While these problems are similar in spirit to the entanglement entropy problem that also has a local unitary symmetry introduced in \Cref{sec:local_unitary}, the main barrier to applying the polynomial method in this case is that we do not appear to have a good characterization of the invariant polynomials in  \Cref{thm:local_invariant_polys} in the case where $P$ is a projector onto a high-dimensional subspace. While  \Cref{thm:one_dim_local_unitaries} characterizes these polynomials in the case where $P$ is a one-dimensional projector, we have seen in the previous section that one-dimensional properties cannot be used to separate $\QMA$ and $\QMAtwo.$ We are not aware of a good characterization of these invariants even in the case where $P$ is a projector onto a two-dimensional subspace. A deeper understanding of these invariants appears necessary to make further progress on these questions.

\subsection{$\QCMA$ Lower Bound for the Entangled Subspace Problem}

As described in the previous section, we are not currently able to prove a strong $\QMA$ lower bound on the query complexity of the entangled subspace problem. However, using a similar proof strategy as Aaronson and Kuperberg in \cite{aaronson2007quantum}, we show a lower bound against $\textsf{QCMA}$, which is the subclass of $\QMA$ of problems verifiable by a polynomial time quantum verifier with a classical proof string. 

To present this lower bound, we first recall the definition of a $p$-uniform measure over quantum states from \cite{aaronson2007quantum}.

\begin{definition}
Let $\mu$ be the Haar measure over $n$-dimensional sphere $\mathbb{S}^n$. A measure $\sigma$ is $p$-uniform if it can be obtained from $\mu$ by conditioning on an event $A$ with measure $\mu(A) \geq p.$
\end{definition}

Using L\'evy's lemma, we can observe the following property of $p$-uniform measures.

\begin{lemma}
Let $f(\ket{\psi}): \mathbb{S}^d\rightarrow \R$ be a non-negative, Lipschitz function on the sphere bounded by 1. Let $\E_\mu[f]$ be its expectation over the Haar measure. Then if $\sigma$ is a $p$-uniform measure, then 

$$\E_\sigma[f] \leq \E_\mu[f] + O\left(\sqrt{\frac{\log \frac{1}{p} + \log d}{d}}\right).$$ 
\label{lem:p-uniform-lemma}
\end{lemma}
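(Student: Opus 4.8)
The plan is to combine L\'evy's Lemma (\Cref{lem:levy}) with a union bound over a net of conditioning events, essentially following the argument Aaronson and Kuperberg use in \cite{aaronson2007quantum}. First I would recall that a $p$-uniform measure $\sigma$ is obtained from the Haar measure $\mu$ by conditioning on an event $A$ with $\mu(A) \geq p$. The key relationship is that for any non-negative function $g$ bounded by $1$, conditioning on $A$ inflates the expectation by at most a factor of $1/\mu(A)$: indeed $\E_\sigma[g] = \E_\mu[g \cdot \mathbf{1}_A]/\mu(A) \leq \E_\mu[g]/p$. This crude bound alone is too lossy (it would give a multiplicative $1/p$ rather than the additive tail term we want), so the trick is to apply it not to $f$ itself but to the ``excess'' of $f$ above its Haar mean.

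\textbf{Main steps.} Concretely, set $t = C'\sqrt{(\log(1/p) + \log d)/d}$ for a suitable constant $C'$, and consider the event $B = \{\ket{\psi} : f(\ket{\psi}) > \E_\mu[f] + t\}$. By \Cref{lem:levy} applied to $f$ (which has Lipschitz constant $O(1)$ by hypothesis — I should normalize the statement so $\eta$ is absorbed into the constant), we have $\mu(B) \leq 2\exp(-C(d+1)t^2/\eta^2)$, and by the choice of $t$ this is at most, say, $p \cdot d^{-\Omega(1)}$, in particular much smaller than $p$. Now I would split the expectation under $\sigma$ according to whether $\ket{\psi} \in B$ or not:
\begin{equation*}
\E_\sigma[f] = \E_\sigma[f \cdot \mathbf{1}_{B}] + \E_\sigma[f \cdot \mathbf{1}_{\overline{B}}] \leq \frac{\mu(B)}{p} \cdot 1 + \big(\E_\mu[f] + t\big),
\end{equation*}
where the first term uses $f \leq 1$ and $\mu(B \cap A) \leq \mu(B)$ together with $\mu(A) \geq p$, and the second uses that on $\overline{B}$ the function $f$ is at most $\E_\mu[f] + t$ pointwise. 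Since $\mu(B)/p \leq d^{-\Omega(1)} = O(t)$, the whole bound is $\E_\mu[f] + O(t) = \E_\mu[f] + O\!\left(\sqrt{(\log(1/p) + \log d)/d}\right)$, as claimed.

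\textbf{Main obstacle.} The argument is essentially routine once set up correctly; the only real subtlety — and the step I'd be most careful about — is bookkeeping the constants so that $\mu(B)/p$ is genuinely dominated by $t$, which forces the precise choice of $t$ (in particular the $\log d$ summand inside the square root is exactly what makes $2\exp(-\Omega(d t^2))$ beat $p/d$ rather than just $p$). One also needs to be slightly careful that $d$ here denotes the dimension of the sphere appearing in \Cref{lem:levy} (the lemma is stated for $\mathbb{S}^k$), so that the $(k+1)$ factor in the exponent matches the $d$ in the denominator of the bound; I would either align notation or note that replacing $d$ by $d \pm O(1)$ changes nothing. No deep input beyond \Cref{lem:levy} and the definition of $p$-uniform measure is needed.
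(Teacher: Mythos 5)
Your proposal is correct and follows essentially the same route as the paper's proof: both split the expectation under $\sigma$ according to whether $f$ deviates from $\E_\mu[f]$ by more than a threshold of order $\sqrt{(\log(1/p)+\log d)/d}$, bound the deviation probability under $\mu$ by L\'evy's Lemma, transfer it to $\sigma$ at a cost of $1/p$, and use the $\log d$ term inside the square root to make the resulting tail term $O(1/d)$, hence dominated by the threshold itself. The only cosmetic differences are that you use a one-sided deviation event and absorb the Lipschitz constant slightly differently, neither of which affects the argument.
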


\begin{proof}
Let $\bar{f} = \mathbb{E}_\mu[f]$, $X = f(\ket{\psi}),$ and $a > 0.$ By the definition of $p$-uniform measure, we have

$$\Pr_\sigma[|X - \bar{f}| \geq a] \leq \frac{1}{p} \Pr_\mu[|X - \bar{f}| \geq a],$$

and by L\'evy's lemma there exists a constant $C$ such that, 

$$\Pr_\mu[|X - \bar{f}| \geq a] \leq 2 \exp(- C d a^2).$$  

Hence, for every $a > 0$, we get

$$\E_\sigma[f] \leq (\bar{f} + a) \Pr_\sigma[|X  - \bar{f}| \leq a] + \Pr_\sigma[|X - \bar{f}| \geq a] \leq \bar{f} + a+ \frac{2}{p} \exp(- C da^2).$$

To minimize the expectation, we choose $a = \sqrt{\frac{\log \frac{2}{p} + \log d}{Cd}} = O\left(\sqrt{\frac{\log \frac{1}{p} + \log d}{d}}\right) .$ This choice of $a$ ensures that $\frac{2}{p} \exp(-Cd a^2) = \frac{2}{d} \leq \sqrt{\frac{\log d}{d}} \leq a$ for sufficiently large $d$. Hence,

$$\mathbb{E}_\sigma[f] \leq \bar{f} + O(a) = \mathbb{E}_\mu[f] + O\left(\sqrt{\frac{\log \frac{1}{p} + \log d}{d}}\right).$$ 
\end{proof}

We also require the following observation about the Haar measure.

\begin{lemma}
Let $\ket{\theta}$ be a Haar random state in $\C^d$. For every density matrix $\rho$ on $\C^d \otimes \C^d$, we have
$$\E_{\ket{\theta}}[\Tr(\ketbra{\theta}{\theta}^{\otimes 2} \rho)] \leq \frac{2}{d(d+1)}.$$
\label{lem:haar_symmetric_qcma}.
\end{lemma}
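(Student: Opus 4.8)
The plan is to reduce the statement to the standard formula for the average of the twofold tensor power of a Haar-random pure state, namely that it is proportional to the projector onto the symmetric subspace. Write $\Pi_{\mathrm{sym}}$ for the orthogonal projector onto the symmetric subspace of $\C^d \otimes \C^d$ (the two tensor factors being the two copies of $\ket{\theta}$), which has dimension $\binom{d+1}{2} = \tfrac{d(d+1)}{2}$. The first step is to establish
\[
\E_{\ket{\theta}}\big[\ketbra{\theta}{\theta}^{\otimes 2}\big] = \frac{\Pi_{\mathrm{sym}}}{\binom{d+1}{2}}.
\]
This is standard: the operator $M := \E_{\ket{\theta}}[\ketbra{\theta}{\theta}^{\otimes 2}]$ satisfies $(g \otimes g) M (g \otimes g)^* = M$ for every $g \in \unitary(d)$ by left-invariance of the Haar measure over states; since $\C^d \otimes \C^d$ decomposes under $g \mapsto g \otimes g$ into the two irreducible components (the symmetric and antisymmetric subspaces), Schur's lemma forces $M$ to be a linear combination of the two corresponding projectors; and since each $\ketbra{\theta}{\theta}^{\otimes 2}$ is supported on the symmetric subspace, so is $M$, giving $M = c\,\Pi_{\mathrm{sym}}$. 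Taking traces and using $\Tr(M) = 1$ fixes $c = 1/\binom{d+1}{2}$.

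The second step is to substitute this into the quantity of interest. By linearity of trace and expectation,
\[
\E_{\ket{\theta}}\big[\Tr(\ketbra{\theta}{\theta}^{\otimes 2} \rho)\big] = \Tr\!\Big(\E_{\ket{\theta}}\big[\ketbra{\theta}{\theta}^{\otimes 2}\big]\,\rho\Big) = \frac{1}{\binom{d+1}{2}}\,\Tr(\Pi_{\mathrm{sym}}\,\rho).
\]
The final step is to bound $\Tr(\Pi_{\mathrm{sym}}\,\rho) \leq 1$: since $0 \preceq \Pi_{\mathrm{sym}} \preceq I$ and $\rho$ is positive semidefinite with $\Tr(\rho) = 1$, we have $\Tr(\Pi_{\mathrm{sym}}\,\rho) \leq \Tr(\rho) = 1$. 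Combining, $\E_{\ket{\theta}}[\Tr(\ketbra{\theta}{\theta}^{\otimes 2}\rho)] \leq 1/\binom{d+1}{2} = \tfrac{2}{d(d+1)}$, as claimed.

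There is essentially no obstacle here; this is a short computation. The only point requiring a moment's care is to identify the correct ``symmetric subspace'': it is the symmetric subspace across the bipartition into the two copies of $\ket{\theta}$, each copy being a vector in $\C^d$, and not any structure internal to a single $\C^d$. Everything else is an application of Schur's lemma and the operator inequality $\Pi_{\mathrm{sym}} \preceq I$. (If one prefers to avoid even the Schur-lemma argument, one can instead invoke the explicit Weingarten/symmetric-subspace identity $\E_{\ket{\theta}}[\ketbra{\theta}{\theta}^{\otimes 2}] = \Pi_{\mathrm{sym}}/\binom{d+1}{2}$ directly, as this is a well-documented fact about the Haar measure.)
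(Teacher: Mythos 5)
Your proof is correct and follows essentially the same route as the paper: use $\E_{\ket{\theta}}[\ketbra{\theta}{\theta}^{\otimes 2}] = \Pi_{\mathrm{sym}}/\binom{d+1}{2}$, then bound $\Tr(\Pi_{\mathrm{sym}}\rho) \leq 1$ for any density matrix $\rho$. The paper simply cites this symmetric-subspace identity directly, whereas you also sketch its standard Schur's-lemma derivation, which is fine but not a different argument.
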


\begin{proof}
Since $\E[\ketbra{\theta}{\theta}^{\otimes 2}] = \frac{2 \Pi}{d(d+1)}$ where $\Pi$ is the projector onto the symmetric subspace of $\C^d \otimes \C^d$, then for any density matrix $\rho:$

$$\E_{\ket{\theta}}[\Tr(\ketbra{\theta}{\theta}^{\otimes 2} \rho)] = \frac{2}{d(d+1)} \Tr(\Pi \rho) \leq  \frac{2}{d(d+1)}.$$
\end{proof}

We are now ready to prove the lower bound. In fact, we will prove the lower bound on the average case version of the Entangled Subspace problem, which is the Planted Product State problem, introduced in the previous sections. In this section, we modify the definition of the problem to ensure that the planted product state is always a symmetric state $\ket{\theta}^{\otimes 2}$. However, the modified problem is clearly also a special case of the Entangled Subspace problem.

\begin{theorem}
Any quantum algorithm solving the Planted Product State problem using $T$ queries and an $m$-bit classical witness must use $$T \geq \Omega\left(\sqrt[4]{\frac{d}{m + \log d}}\right)$$ queries to the oracle.
\label{thm:qcma_lower_bound}
\end{theorem}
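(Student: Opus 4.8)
The plan is to follow the Aaronson–Kuperberg strategy of "adversary via hybrid argument over a random ensemble," adapted to the reflection-oracle setting. Suppose a $\QCMA$ tester uses $T$ queries and an $m$-bit classical witness $w \in \{0,1\}^m$. The key reduction is: fix the witness $w$; then the tester becomes a deterministic (up to its internal randomness) $T$-query quantum algorithm $A_w$ that accepts \emph{yes} instances (for which $w$ is the ``correct'' witness) with probability $\geq 2/3$ and all \emph{no} instances with probability $\leq 1/3$. The \emph{no} instances are reflections $U_S = I - 2\Pi_S$ about a Haar-random $s$-dimensional subspace $S$; the \emph{yes} instances are reflections about $S' = \mathrm{span}(S, \ket{\theta}^{\otimes 2})$ for a Haar-random $\ket{\theta}$. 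Because $\ket{\theta}^{\otimes 2}$ lies in a $1$-dimensional extension, $U_{S'}$ and $U_S$ differ only by $2\ketbra{\theta'}{\theta'}$ where $\ket{\theta'}$ is the component of $\ket{\theta}^{\otimes 2}$ orthogonal to $S$; so the oracles are close and a hybrid argument applies.

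First I would set up the hybrid argument. Let $\ket{\phi_t}$ be the state of $A_w$ just before query $t+1$ when run with the \emph{no} oracle $U_S$. A standard calculation (as in the Bennett–Bernstein–Brassard–Vazirani / Aaronson–Kuperberg ``hybrid method'') shows that the total variation in acceptance probability between running with $U_S$ and running with $U_{S'}$ is at most $O\!\left(\sum_{t=0}^{T-1} \| (U_{S'} - U_S)\ket{\phi_t}\|\right) = O\!\left(\sum_t \|\ketbra{\theta'}{\theta'}\ket{\phi_t}\|\right)$, where the query-weight term $\|\ketbra{\theta'}{\theta'}\ket{\phi_t}\| = |\braket{\theta'|\phi_t}|$. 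Next I would take expectations over the Haar-random planted state $\ket{\theta}$ (holding $S$ and hence $A_w$ fixed, conditioned on $S$): by Cauchy–Schwarz, $\E_{\ket\theta}|\braket{\theta'|\phi_t}| \leq \sqrt{\E_{\ket\theta}|\braket{\theta'|\phi_t}|^2} = \sqrt{\E_{\ket\theta}\Tr(\ketbra{\theta'}{\theta'}\,\ketbra{\phi_t}{\phi_t})}$. Since $\ket{\theta'}$ is (close to) $\ket{\theta}^{\otimes 2}$ projected off a fixed low-dimensional space, \Cref{lem:haar_symmetric_qcma} bounds this by $O(1/d^2)$ on the relevant $\C^d\otimes\C^d$ factor — more precisely, writing out $\ket{\theta}^{\otimes 2}$ and tracing, each query contributes $\E_{\ket\theta}|\braket{\theta|\phi_t}|^2$-type terms of size $O(1/d)$ for the single-copy piece appearing after a query, since a reflection oracle touches the state linearly; I would carefully track that the effective dimension governing the decay is $d$ (not $d^2$), yielding $\E_{\ket\theta}|\braket{\theta'|\phi_t}|^2 = O(1/d)$ and hence $\E_{\ket\theta}|\braket{\theta'|\phi_t}| = O(1/\sqrt d)$.

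Then I would sum over $t$: $\E_{\ket\theta}\big[\,\big|\Pr[A_w \text{ accepts } U_{S'}] - \Pr[A_w \text{ accepts } U_S]\big|\,\big] = O(T/\sqrt d)$. To incorporate the witness, I use a union-bound / $p$-uniform-measure argument: for $A_w$ to correctly decide the planted instance, there must exist \emph{some} $w$ that works, and the set of planted states $\ket\theta$ for which a \emph{fixed} $w$ succeeds forms a $p$-uniform measure with $p \geq 2^{-m}$ (since there are only $2^m$ witnesses and they must cover a $1-o(1)$ fraction of instances). \Cref{lem:p-uniform-lemma} then upgrades the Haar bound: the expected oracle-distinguishing advantage under the conditioned measure is $O\!\big(T/\sqrt d + \sqrt{(m + \log d)/d}\,\big)$ — actually I would apply \Cref{lem:p-uniform-lemma} to the function $f(\ket\theta) = \big|\Pr[A_w\text{ acc }U_{S'}] - \Pr[A_w\text{ acc }U_S]\big|$, which is Lipschitz in $\ket\theta$ with constant $O(T)$ (each query contributes $O(1)$ Lipschitz-ness, and the state is $T$-query-Lipschitz in the oracle perturbation). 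This gives conditioned advantage $\leq \E_\mu[f] + O\big(T\sqrt{(m+\log d)/d}\big) = O\big(T/\sqrt d\big) + O\big(T\sqrt{(m+\log d)/d}\big)$. For the tester to distinguish \emph{yes} from \emph{no} with constant bias we need this to be $\Omega(1)$, forcing $T \cdot \sqrt{(m+\log d)/d} = \Omega(1)$, i.e. $T = \Omega\!\big((d/(m+\log d))^{1/2}\big)$. To get the claimed fourth-root $T = \Omega\!\big((d/(m+\log d))^{1/4}\big)$, I expect the correct bookkeeping gives Lipschitz constant $O(T^2)$ rather than $O(T)$ for the advantage-as-function-of-$\ket\theta$ (a $T$-query algorithm's acceptance probability is a degree-$2T$ object, and combined with the $1/\sqrt d$ per-query decay one loses another factor), so that \Cref{lem:p-uniform-lemma} yields advantage $O\big(T^2/\sqrt d + T^2\sqrt{(m+\log d)/d}\big)$ and the constant-bias requirement gives $T^4 = \Omega(d/(m+\log d))$.

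The main obstacle I anticipate is the precise interplay between two separate ``$T$ losses'': one from the hybrid sum over $T$ queries, and one from the Lipschitz constant (as a function of $\ket\theta$) that feeds into \Cref{lem:p-uniform-lemma}; getting these to combine into exactly $T^4$ rather than $T^2$ requires care about whether the per-query amplitude decay is $1/\sqrt d$ or $1/d$ and whether the symmetric-subspace structure of $\ket\theta^{\otimes 2}$ (via \Cref{lem:haar_symmetric_qcma}) is exploited at the amplitude level or the probability level. A secondary technical point is verifying that conditioning on the ``$w$ works'' event genuinely produces a $p$-uniform measure in the sense of the stated definition — one must argue the correct witness function partitions (most of) the Haar measure into $\leq 2^m$ cells, so some cell has measure $\geq \Omega(2^{-m})$, and apply the argument to the best such $w$; this is exactly the Aaronson–Kuperberg move and should go through with the reflection-oracle instances in place of their phase oracles.
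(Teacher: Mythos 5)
Your overall framework matches the paper's: an Aaronson--Kuperberg hybrid argument over the planted state $\ket{\theta}^{\otimes 2}$, combined with fixing the best witness $w^*$ so that the conditioned distribution over $\ket{\theta}$ is a $2^{-m}$-uniform measure, and then invoking \Cref{lem:p-uniform-lemma} and \Cref{lem:haar_symmetric_qcma}. However, there is a genuine gap in the quantitative heart of the argument: you never correctly derive the fourth root, and the route you sketch for it would not work. In the paper, the exponent $1/4$ arises in one clean step: the per-hybrid displacement is bounded by $\|\ket{\psi_{t+1}}-\ket{\psi_t}\| \leq 2\sqrt{\Tr(\ketbra{\theta}{\theta}^{\otimes 2}\rho_t)}$, and the $p$-uniform lemma is applied \emph{per query, inside the square root}, to the function $f(\ket{\theta}) = \Tr(\ketbra{\theta}{\theta}^{\otimes 2}\rho_t)$ (bounded, non-negative, $O(1)$-Lipschitz, with $\rho_t$ independent of $\ket{\theta}$). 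Under the Haar measure this is $O(1/d^2)$ by \Cref{lem:haar_symmetric_qcma}, but under the $2^{-m}$-uniform measure it degrades to $O(\sqrt{(m+\log d)/d})$, whose square root gives a per-query contribution of $O\bigl(((m+\log d)/d)^{1/4}\bigr)$; summing over $T$ hybrids and requiring $\Omega(1)$ total displacement yields $T \geq \Omega\bigl((d/(m+\log d))^{1/4}\bigr)$.

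Your proposal instead applies \Cref{lem:p-uniform-lemma} to the \emph{total} advantage as a function of $\ket{\theta}$, with a Lipschitz constant you guess to be $O(T)$ or, to force the answer, $O(T^2)$; this speculation about the Lipschitz constant is unjustified and is not the mechanism behind the bound (the correct proof never needs a $T$-dependent Lipschitz constant at all). You also assert that the Haar-level per-query decay is $O(1/\sqrt d)$ because "the effective dimension is $d$, not $d^2$"; this is wrong -- the planted projector is $\ketbra{\theta}{\theta}^{\otimes 2}$ on $\C^d \otimes \C^d$, and \Cref{lem:haar_symmetric_qcma} gives $O(1/d^2)$, hence $O(1/d)$ after the square root. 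The crucial point your writeup misses is that the dominant term is not the Haar term at all but the $O(\sqrt{(m+\log d)/d})$ correction from conditioning, and that this correction must be inserted \emph{before} taking the square root coming from the hybrid step. Without that ordering, the exponent $1/4$ does not appear, which is exactly the difficulty you flag in your final paragraph but do not resolve.
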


\begin{proof}
We apply the hybrid method variant introduced in \cite{aaronson2007quantum}. Let $O_1$ be the entangled oracle and $O_2 = O_1 - 2 \ketbra{\theta}{\theta}^{\otimes 2}$ be the oracle with a hidden product state $\ket{\theta}^{\otimes 2}$, given $\ket{\theta} \in \mathbb{C}^d$.  

Suppose we have a quantum algorithm $A$ that solves the Planted Product State problem with $T$ queries with the help of an $m$-bit classical witness. For each $\ket{\theta},$ fix the string $w$ that maximizes the probability that algorithm accepts $O_2$. Let $S(w) \subseteq S^d$ be the set of states associated with witness string $w$. Since $S(w)$ form a partition, then there must be one set $S(w^*)$ with measure at least $\frac{1}{2^m}$. 

Let $\sigma$ be the uniform measure over $S(w^*)$. Hence, fix $w^*$ as the witness in the algorithm, and choose $O_2$ where $\ket{\theta}$ is selected from $\sigma$. We claim that the algorithm still requires a large number of queries $T$ to distinguish between oracles $O_1$ and $O_2$ in this case. 
To establish the lower bound, let $\ket{\psi_t}$ be the result of the algorithm $A$ with oracle $O_2$ applied $t$ times followed by oracle $O_1$ applied $T - t$ times. By \cite{aaronson2007quantum}, We can bound the difference in Euclidean norm between successive hybrids by:

$$\|\ket{\psi_{t+1}} - \ket{\psi_t}\|_2 \leq \sqrt{\Tr((O_1 - O_2)^* (O_1 - O_2) \rho_t)} = 2 \sqrt{\Tr(\ketbra{\theta}{\theta}^{\otimes 2} \rho_t)},$$

where $\rho_t$ is the marginal state of the query register before the $t^{th}$ query since $O_1 - O_2 = 2 \ketbra{\theta}{\theta}^{\otimes 2}$. Hence, the Cauchy-Schwartz inequality implies that over a randomly selected $\ket{\theta}$ from $\sigma$:

$$\E_{\sigma}[\|\ket{\psi_{t+1}} - \ket{\psi_t}\|_2] \leq 2 \sqrt{\E_{\sigma}[\Tr(\ketbra{\theta}{\theta}^{\otimes 2}  \rho_t)]}.$$

Since $\sigma$ is $2^{-m}$-uniform, and the function $f(\ket{\psi}) = \Tr(\ketbra{\psi}{\psi}^{\otimes 2} \rho)$ is a non-negative, bounded, Lipschitz function, then we can bound by \Cref{lem:p-uniform-lemma} and \Cref{lem:haar_symmetric_qcma} that:

\begin{equation*}
\begin{split} 
\E_{\sigma}[\Tr(\ketbra{\theta}{\theta}^{\otimes 2} \rho_t)] &\leq \E_{\mu}[\Tr(\ketbra{\theta}{\theta}^{\otimes 2} \rho_t)] +  O\left(\sqrt{\frac{m + \log d}{d}}\right)  \\  & \leq \frac{2}{d(d+1)} + O\left(\sqrt{\frac{m + \log d}{d}}\right) \\ &\leq O\left(\sqrt{\frac{m + \log d}{d}}\right),
\end{split} 
\end{equation*}

since $\frac{2}{d^2} \leq \sqrt{\frac{\log d}{d}}$ for sufficiently large $d$. Hence,

$$\E_{\sigma}[\|\ket{\psi_{t+1}} - \ket{\psi_t}\|_2] \leq 2 \sqrt{\E_{\sigma}[\Tr(\ketbra{\theta}{\theta}^{\otimes 2} \rho_t)]} \leq O\left(\sqrt[4]{\frac{m + \log d}{d}}\right).$$

Hence, if $\ket{\psi_0}$ was the final state where all oracle calls were to $O_1,$ and $\ket{\psi_T}$ was the final state where all oracle calls were to $O_2,$ then the triangle inequality implies that

$$\E_{\sigma}[\|\ket{\psi_T} - \ket{\psi_0}\|_2] \leq O\left(T \sqrt[4]{\frac{m + \log d}{d}}\right).$$

If the algorithm $A$ correctly distinguishes between the two cases, then $\E_{\sigma}[\|\ket{\psi_T} - \ket{\psi_0}\|_2]  = \Omega(1).$ Hence the number of queries $T$ satisfies $$T \geq \Omega\left(\sqrt[4]{\frac{d}{m + \log d}}\right)$$.
\end{proof}

In particular, this bound shows that a polynomial sized classical witness is not sufficient to help a quantum verifier solve the Entangled Subspace Problem efficiently, since any quantum verifier that solves the Entangled Subspace problem can also be used to solve the Planted Product State problem.

\section{Open Questions}

We end by describing some open problems and future directions. 

\paragraph{Strong $\QMA$ Lower Bounds for the Entangled Subspace Problem.} Can one show that any $\QMA$ tester for the Entangled Subspace problem requires either a superpolynomial number of queries, or a superpolynomial sized witness? This would yield a (quantum) oracle separation between $\QMA$ and $\QMAtwo$, and in particular would rule out the existence of so-called ``disentanglers''~\cite{aaronson2008power}.

\paragraph{Better Query Upper Bounds.} Are the bounds proven using the generalized polynomial method tight? In particular, the following gaps remain:
    
    \begin{itemize}

        \item We have shown that there is a $O(\frac{t \sqrt{d}}{\epsilon})$ upper bound and a $\Omega(\max(\frac{t}{\epsilon}, \sqrt{d}))$ lower bound in the $\BQP$ setting for the recurrence problem and used this bound to prove a similar lower bound in the $\QMA$ setting. Is there a better lower or upper bound in either the $\BQP$ or $\QMA$ settings? However, a more sophisticated symmetrization technique may be required to improve the lower bound. 
        \item We expect the $\BQP$ lower bound in Theorem \ref{thm:entanglement_entropy_bound} for the entanglement entropy can be improved by using a more creative application of the polynomial method. %
    \end{itemize}

\paragraph{Improving \Cref{thm:counterexample}.}
Is the counterexample of  \Cref{thm:counterexample} tight, in the sense that there are no examples that fool the verifier in dimensions 2, 3, 4, or 5? Otherwise, if there was an example that fools the verifier in dimension 2, this would give additional evidence that the Entangled Subspace problem in low dimensions is already hard for $\QMA.$ %

\paragraph{Other Applications of the Generalized Polynomial Method.} What are other applications of the generalized polynomial method? For instance, Procesi \cite{PROCESI1976306} has characterized the invariants of matrix tuples under conjugation by the general linear, unitary, orthogonal, and symplectic groups. Are there natural problems in quantum query complexity that display other, non-unitary symmetries? 

\paragraph{Applications of Weingarten Calculus.} Can the Weingarten calculus techniques introduced in Section \ref{sec:genpoly} be used to prove lower bounds on unitary property testing problems? In particular, could it be used to prove degree lower bounds for acceptance probability polynomials? 

\paragraph{A Generalized Dual Polynomial Method?} A line of works established tight quantum query lower bounds on classical problems by employing a method of \emph{dual polynomials}~\cite{sherstov2013intersection,spalek2008dual,bun2018polynomial}. The goal of this method is to prove degree lower bounds of acceptance probability polynomials, but instead of symmetrizing the polynomials to obtain a polynomial of one or two variables, one instead takes advantage of \emph{linear programming duality} to prove the degree lower bounds; this involves constructing objects known as dual polynomials. A natural question would be to investigate whether the method of dual polynomials can be extended to prove query lower bounds for unitary property testing.

\appendix

\bibliography{references}

\newcommand{\etalchar}[1]{$^{#1}$}
\begin{thebibliography}{BCHJ{\etalchar{+}}21}

\bibitem[Aar11]{aaronson2011impossibility}
Scott Aaronson.
\newblock Impossibility of succinct quantum proofs for collision-freeness.
\newblock {\em arXiv preprint arXiv:1101.0403}, 2011.

\bibitem[Aar21]{aaronson2021open}
Scott Aaronson.
\newblock Open problems related to quantum query complexity.
\newblock {\em ACM Transactions on Quantum Computing}, 2(4):1--9, 2021.

\bibitem[ABD{\etalchar{+}}08]{aaronson2008power}
Scott Aaronson, Salman Beigi, Andrew Drucker, Bill Fefferman, and Peter Shor.
\newblock The power of unentanglement.
\newblock In {\em 2008 23rd Annual IEEE Conference on Computational
  Complexity}, pages 223--236. IEEE, 2008.

\bibitem[ACQ22]{aharonov2022quantum}
Dorit Aharonov, Jordan Cotler, and Xiao-Liang Qi.
\newblock Quantum algorithmic measurement.
\newblock {\em Nature communications}, 13(1):1--9, 2022.

\bibitem[AK07]{aaronson2007quantum}
Scott Aaronson and Greg Kuperberg.
\newblock Quantum versus classical proofs and advice.
\newblock {\em Theory of Computing}, 3:129--157, 2007.

\bibitem[AKKT20]{aaronson2020quantum}
Scott Aaronson, Robin Kothari, William Kretschmer, and Justin Thaler.
\newblock Quantum lower bounds for approximate counting via {Laurent}
  polynomials.
\newblock In {\em 35th Computational Complexity Conference (CCC 2020)}. Schloss
  Dagstuhl-Leibniz-Zentrum f{\"u}r Informatik, 2020.

\bibitem[Amb06]{ambainis2006polynomial}
Andris Ambainis.
\newblock Polynomial degree vs. quantum query complexity.
\newblock {\em Journal of Computer and System Sciences}, 72(2):220--238, 2006.

\bibitem[Amb18]{ambainis2018understanding}
Andris Ambainis.
\newblock Understanding quantum algorithms via query complexity.
\newblock In {\em Proceedings of the International Congress of Mathematicians:
  Rio de Janeiro 2018}, pages 3265--3285. World Scientific, 2018.

\bibitem[AS04]{aaronson2004quantum}
Scott Aaronson and Yaoyun Shi.
\newblock Quantum lower bounds for the collision and the element distinctness
  problems.
\newblock {\em Journal of the ACM (JACM)}, 51(4):595--605, 2004.

\bibitem[BBBV97]{bennett1997strengths}
Charles~H Bennett, Ethan Bernstein, Gilles Brassard, and Umesh Vazirani.
\newblock Strengths and weaknesses of quantum computing.
\newblock {\em SIAM journal on Computing}, 26(5):1510--1523, 1997.

\bibitem[BBC{\etalchar{+}}01]{beals2001quantum}
Robert Beals, Harry Buhrman, Richard Cleve, Michele Mosca, and Ronald de~Wolf.
\newblock Quantum lower bounds by polynomials.
\newblock {\em Journal of the ACM (JACM)}, 48(4):778--797, 2001.

\bibitem[BBD{\etalchar{+}}97]{barenco1997stabilization}
Adriano Barenco, Andre Berthiaume, David Deutsch, Artur Ekert, Richard Jozsa,
  and Chiara Macchiavello.
\newblock Stabilization of quantum computations by symmetrization.
\newblock {\em SIAM Journal on Computing}, 26(5):1541--1557, 1997.

\bibitem[BBL13]{biamonte2013tensor}
Jacob Biamonte, Ville Bergholm, and Marco Lanzagorta.
\newblock Tensor network methods for invariant theory.
\newblock {\em Journal of Physics A: Mathematical and Theoretical},
  46(47):475301, 2013.

\bibitem[BCHJ{\etalchar{+}}21]{Brandao2021}
Fernando~G.S.L. {Brand\~{a}o}, Wissam Chemissany, Nicholas Hunter-Jones,
  Richard Kueng, and John Preskill.
\newblock Models of quantum complexity growth.
\newblock {\em {PRX} Quantum}, 2(3), July 2021.

\bibitem[Bel15]{belovs2015variations}
Aleksandrs Belovs.
\newblock Variations on quantum adversary.
\newblock {\em arXiv preprint arXiv:1504.06943}, 2015.

\bibitem[Ber09]{berget2009symmetries}
Andrew~Schaffer Berget.
\newblock {\em Symmetries of tensors}.
\newblock University of Minnesota, 2009.

\bibitem[BHMT02]{brassard2002quantum}
Gilles Brassard, Peter Hoyer, Michele Mosca, and Alain Tapp.
\newblock Quantum amplitude amplification and estimation.
\newblock {\em Contemporary Mathematics}, 305:53--74, 2002.

\bibitem[BHT98]{brassard1998quantum}
Gilles Brassard, Peter H{\o}yer, and Alain Tapp.
\newblock Quantum counting.
\newblock In {\em International Colloquium on Automata, Languages, and
  Programming}, pages 820--831. Springer, 1998.

\bibitem[BKT18]{bun2018polynomial}
Mark Bun, Robin Kothari, and Justin Thaler.
\newblock The polynomial method strikes back: Tight quantum query bounds via
  dual polynomials.
\newblock In {\em Proceedings of the 50th Annual ACM SIGACT Symposium on Theory
  of Computing}, pages 297--310, 2018.

\bibitem[BL57]{bocchieri1957quantum}
P~Bocchieri and A~Loinger.
\newblock Quantum recurrence theorem.
\newblock {\em Physical Review}, 107(2):337, 1957.

\bibitem[BR20]{belovs2020tight}
Aleksandrs Belovs and Ansis Rosmanis.
\newblock Tight quantum lower bound for approximate with quantum states.
\newblock {\em arXiv preprint arXiv:2002.06879}, 2020.

\bibitem[Bra37]{brauer1937algebras}
Richard Brauer.
\newblock On algebras which are connected with the semisimple continuous
  groups.
\newblock {\em Annals of Mathematics}, pages 857--872, 1937.

\bibitem[BSW21]{brakerski2021unitary}
Zvika Brakerski, Devika Sharma, and Guy Weissenberg.
\newblock Unitary subgroup testing.
\newblock {\em arXiv preprint arXiv:2104.03591}, 2021.

\bibitem[BT09]{bliertapp2009}
Hugue Blier and Alain Tapp.
\newblock All languages in $\mathsf{NP}$ have very short quantum proofs.
\newblock In {\em 2009 Third International Conference on Quantum, Nano and
  Micro Technologies}, pages 34--37, 2009.

\bibitem[CCHL21]{chen2021hierarchy}
Sitan Chen, Jordan Cotler, Hsin-Yuan Huang, and Jerry Li.
\newblock A hierarchy for replica quantum advantage.
\newblock {\em arXiv preprint arXiv:2111.05874}, 2021.

\bibitem[CCHL22]{chen2022exponential}
Sitan Chen, Jordan Cotler, Hsin-Yuan Huang, and Jerry Li.
\newblock Exponential separations between learning with and without quantum
  memory.
\newblock In {\em 2021 IEEE 62nd Annual Symposium on Foundations of Computer
  Science (FOCS)}, pages 574--585. IEEE, 2022.

\bibitem[CD10]{chen2010short}
Jing Chen and Andrew Drucker.
\newblock Short multi-prover quantum proofs for {SAT} without entangled
  measurements.
\newblock {\em arXiv preprint arXiv:1011.0716}, 2010.

\bibitem[CN16]{collins2016random}
Benoit Collins and Ion Nechita.
\newblock Random matrix techniques in quantum information theory.
\newblock {\em Journal of Mathematical Physics}, 57(1):015215, 2016.

\bibitem[CNY22]{chen2022testing}
Thomas Chen, Shivam Nadimpalli, and Henry Yuen.
\newblock Testing and learning quantum juntas nearly optimally.
\newblock {\em arXiv preprint arXiv:2207.05898}, 2022.

\bibitem[CP21]{copeland2021quantum}
Daniel Copeland and Jamie Pommersheim.
\newblock Quantum query complexity of symmetric oracle problems.
\newblock {\em Quantum}, 5:403, 2021.

\bibitem[DGMT21]{dall2021quantum}
Marcel Dall'Agnol, Tom Gur, Subhayan~Roy Moulik, and Justin Thaler.
\newblock Quantum proofs of proximity.
\newblock {\em arXiv preprint arXiv:2105.03697}, 2021.

\bibitem[dW08]{de2008note}
Ronald de~Wolf.
\newblock A note on quantum algorithms and the minimal degree of epsilon-error
  polynomials for symmetric functions.
\newblock {\em arXiv preprint arXiv:0802.1816}, 2008.

\bibitem[GHS21]{gur2021sublinear}
Tom Gur, Min-Hsiu Hsieh, and Sathyawageeswar Subramanian.
\newblock Sublinear quantum algorithms for estimating von {Neumann} entropy.
\newblock {\em arXiv preprint arXiv:2111.11139}, 2021.

\bibitem[GRB98]{grassl1998computing}
Markus Grassl, Martin R{\"o}tteler, and Thomas Beth.
\newblock Computing local invariants of quantum-bit systems.
\newblock {\em Physical Review A}, 58(3):1833, 1998.

\bibitem[Gu13]{gu2013moments}
Yinzheng Gu.
\newblock {\em Moments of random matrices and {Weingarten} functions}.
\newblock PhD thesis, Queen's University, 2013.

\bibitem[Ham21]{hamoudi2021quantum}
Yassine Hamoudi.
\newblock Quantum {Sub-Gaussian} mean estimator.
\newblock {\em arXiv preprint arXiv:2108.12172}, 2021.

\bibitem[HLW06]{Hayden_2006}
Patrick Hayden, Debbie~W. Leung, and Andreas Winter.
\newblock Aspects of generic entanglement.
\newblock {\em Communications in Mathematical Physics}, 265(1):95--117, Mar
  2006.

\bibitem[HM13]{harrow2013testing}
Aram~W Harrow and Ashley Montanaro.
\newblock Testing product states, quantum {M}erlin-{A}rthur games and tensor
  optimization.
\newblock {\em Journal of the ACM (JACM)}, 60(1):3, 2013.

\bibitem[HMM{\etalchar{+}}08]{hayashi2008entanglement}
Masahito Hayashi, Damian Markham, Mio Murao, Masaki Owari, and Shashank
  Virmani.
\newblock Entanglement of multiparty-stabilizer, symmetric, and antisymmetric
  states.
\newblock {\em Physical Review A}, 77(1):012104, 2008.

\bibitem[KO22]{kothari2022mean}
Robin Kothari and Ryan O'Donnell.
\newblock Mean estimation when you have the source code; or, quantum {Monte
  Carlo} methods.
\newblock {\em arXiv preprint arXiv:2208.07544}, 2022.

\bibitem[KP96]{kraft1996classical}
Hanspeter Kraft and Claudio Procesi.
\newblock Classical invariant theory, a primer.
\newblock {\em Lecture Notes. Preliminary version}, 1996.

\bibitem[Kre21]{Kretschmer2021quantumsupremacy}
William Kretschmer.
\newblock The {Q}uantum {S}upremacy {T}sirelson {I}nequality.
\newblock {\em {Quantum}}, 5:560, October 2021.

\bibitem[Kut03]{kutin2003quantum}
Samuel Kutin.
\newblock A quantum lower bound for the collision problem.
\newblock {\em arXiv preprint quant-ph/0304162}, 2003.

\bibitem[LCV07]{liu2007}
Yi-Kai Liu, Matthias Christandl, and Frank Verstraete.
\newblock Quantum computational complexity of the $n$-{R}epresentability
  {P}roblem: $\mathsf{QMA}$ complete.
\newblock {\em Physical Review Letters}, 98(11), Mar 2007.

\bibitem[Led01]{ledoux2001concentration}
Michel Ledoux.
\newblock {\em The concentration of measure phenomenon}.
\newblock Number~89. American Mathematical Soc., 2001.

\bibitem[Mar89]{markov1889question}
AA~Markov.
\newblock Sur une question pos{\'e}e par {Mendeleieff}.
\newblock {\em IAN}, 62:1--24, 1889.

\bibitem[Mar14]{https://doi.org/10.48550/arxiv.1412.3085}
Olivier Marchal.
\newblock Matrix models, {T}oeplitz determinants and recurrence times for
  powers of random unitary matrices, 2014.

\bibitem[MdW13]{montanaro2013survey}
Ashley Montanaro and Ronald de~Wolf.
\newblock A survey of quantum property testing.
\newblock {\em arXiv preprint arXiv:1310.2035}, 2013.

\bibitem[Mon13]{montanaro2013weak}
Ashley Montanaro.
\newblock Weak multiplicativity for random quantum channels.
\newblock {\em Communications in Mathematical Physics}, 319(2):535--555, 2013.

\bibitem[MP17]{minsky2017perceptrons}
Marvin Minsky and Seymour Papert.
\newblock {\em Perceptrons}.
\newblock MIT press, 2017.

\bibitem[MS86]{milman1986asymptotic}
Vitali~D Milman and Gideon Schechtman.
\newblock {\em Asymptotic Theory of Finite Dimensional Normed Spaced}, volume
  1200.
\newblock Springer Berlin, 1986.

\bibitem[MW05]{marriott2005quantum}
Chris Marriott and John Watrous.
\newblock Quantum {A}rthur--{M}erlin games.
\newblock {\em Computational Complexity}, 14(2):122--152, 2005.

\bibitem[NC10]{nielsen2010quantum}
Michael Nielsen and Isaac Chuang.
\newblock {\em Quantum {C}omputation and {Q}uantum {I}nformation}.
\newblock Cambridge University Press, 2010.

\bibitem[NS94]{nisan1994degree}
Noam Nisan and Mario Szegedy.
\newblock On the degree of {Boolean} functions as real polynomials.
\newblock {\em Computational complexity}, 4(4):301--313, 1994.

\bibitem[Pat92]{paturi92degree}
Ramamohan Paturi.
\newblock On the degree of polynomials that approximate symmetric {Boolean}
  functions (preliminary version).
\newblock In {\em Proceedings of the Twenty-Fourth Annual ACM Symposium on
  Theory of Computing}, STOC '92, page 468–474, New York, NY, USA, 1992.
  Association for Computing Machinery.

\bibitem[Pro76]{PROCESI1976306}
C~Procesi.
\newblock The invariant theory of $n \times n$ matrices.
\newblock {\em Advances in Mathematics}, 19(3):306--381, 1976.

\bibitem[QSY20]{qiao2020local}
Youming Qiao, Xiaoming Sun, and Nengkun Yu.
\newblock Local equivalence of multipartite entanglement.
\newblock {\em IEEE Journal on Selected Areas in Communications},
  38(3):568--574, 2020.

\bibitem[Rei11]{reichardt2011reflections}
Ben~W Reichardt.
\newblock Reflections for quantum query algorithms.
\newblock In {\em Proceedings of the Twenty-Second Annual ACM-SIAM Symposium on
  Discrete Algorithms}, pages 560--569. SIAM, 2011.

\bibitem[Sau09]{saussol2009introduction}
Benoit Saussol.
\newblock An introduction to quantitative {Poincar{\'e}} recurrence in
  dynamical systems.
\newblock {\em Reviews in Mathematical Physics}, 21(08):949--979, 2009.

\bibitem[She13]{sherstov2013intersection}
Alexander~A Sherstov.
\newblock The intersection of two halfspaces has high threshold degree.
\newblock {\em SIAM Journal on Computing}, 42(6):2329--2374, 2013.

\bibitem[Spa08]{spalek2008dual}
Robert Spalek.
\newblock A dual polynomial for {OR}.
\newblock {\em arXiv preprint arXiv:0803.4516}, 2008.

\bibitem[{\v{S}}S05]{vspalek2005all}
Robert {\v{S}}palek and Mario Szegedy.
\newblock All quantum adversary methods are equivalent.
\newblock In {\em International Colloquium on Automata, Languages, and
  Programming}, pages 1299--1311. Springer, 2005.

\bibitem[SW22]{soleimanifar2022testing}
Mehdi Soleimanifar and John Wright.
\newblock Testing matrix product states.
\newblock In {\em Proceedings of the 2022 Annual ACM-SIAM Symposium on Discrete
  Algorithms (SODA)}, pages 1679--1701. SIAM, 2022.

\bibitem[TM{\etalchar{+}}17]{turner2017complete}
Jacob Turner, Jason Morton, et~al.
\newblock A complete set of invariants for {LU}-equivalence of density
  operators.
\newblock {\em SIGMA. Symmetry, Integrability and Geometry: Methods and
  Applications}, 13:028, 2017.

\bibitem[Wan11]{wang2011property}
Guoming Wang.
\newblock Property testing of unitary operators.
\newblock {\em Physical Review A}, 84(5):052328, 2011.

\bibitem[Wes11]{westrich2011youngs}
Quinton Westrich.
\newblock Young's natural representations of $\mathcal{S}_4$.
\newblock {\em arXiv preprint arXiv:1112.0687}, 2011.

\end{thebibliography}
\bibliographystyle{alpha}

\section{A Generalized Product Test Analysis}
\label{sec:product-test}
\label{sec;product_test_analysis}

Our goal in this section is to prove \Cref{thm:general_qma_k_bound}. Let $\ket{\psi} \in \C^d \otimes \C^d$ be a bipartite state. Recall that the $k$-copy product test uses as input $k$-copies of the state $\ket{\psi}^{\otimes k}$, where each copy is on registers $A_i$ and $B_i$, and then performs the measurement $\{P = \Pi_A \otimes \Pi_B, I - P\},$ where $\Pi_A$ is the projection onto the symmetric subspace among registers $A_1, \dots, A_k$ and $\Pi_B$ is the projection onto the symmetric subspace among registers $B_1, \dots, B_k.$ Recall from \Cref{thm:product_test} that we defined

$$\omega_{\ket{\psi}} = \max_{\ket{\phi_1}, \ket{\phi_2}} \{\lvert \braket{\psi \vert \phi_1 \otimes \phi_2} \rvert^2, \ket{\phi_1}, \ket{\phi_2} \in \C^d\}$$

be the overlap with the closest product state. In particular, from \Cref{lem:entanglement_schmidt}, we have that $\omega_{\ket{\psi}} = \lambda_1$ where $\lambda_1$ is the largest eigenvalue of the reduced density matrix $\rho$ of $\ket{\psi} \bra{\psi}.$

We establish the following bound on the performance of the product test for any constant $k \geq 2$, using the techniques of \cite{soleimanifar2022testing}.

\generalqmakptbound*

Before proceeding with the proof, we fix some notation we will use throughout the rest of this section. Let $[n] = \{1, \dots, n\}$, $\lambda = (\lambda_1, \dots, \lambda_m)$ be a list of real numbers and $\alpha = (\alpha_1, \dots, \alpha_l)$ be a list of non-negative integers. Whenever well-defined, let $\lambda_\alpha = \prod_{i=1}^l \lambda_{\alpha_i}$ and $\lambda^\alpha = \prod_{i=1}^m \lambda_i^{\alpha_i}.$ Similarly if $a = \{a_1, \dots, a_n\}$ is a set of vectors, then $\ket{a}_\alpha = \ket{a_{\alpha_1}} \otimes \dots \otimes \ket{a_{\alpha_l}}.$ Next, for the list $\alpha$, let $n(\alpha)$ be the number of non-zero entries in $\alpha$ and $\text{type}(\alpha)$ be the list $(\beta_1, \dots, \beta_j)$ where $\beta_j$ is the number of times integer $j$ appears in the list $\alpha$. We will also write $\alpha \vdash k$ if $\alpha = (\alpha_1, \dots, \alpha_l)$ is a list with $\sum_{i=1}^l \alpha_i = k$, and $\binom{k}{\alpha}$ for the multinomial coefficient $\frac{k!}{\prod_{i=1}^l \alpha_i!}.$ 

We first establish the \textit{exact} probability the product test passes given the state $\ket{\psi}^{\otimes k}$ as input.

\begin{lemma}
Let $h_k(x_1, \dots, x_d) = \sum_{\substack{\beta_1 + \dots + \beta_d = k \\ \beta_i \geq 0}} \prod_{i=1}^d x_i^{\beta_i}$ be the homogenous symmetric polynomial of degree $k$ in $d$ variables. Then the probability that the $k$-copy product state passes when run on state $\ket{\psi}^{\otimes k}$ is equal to $h_k(\lambda_1, \dots, \lambda_d),$ where $\lambda_1, \dots, \lambda_d$ are the eigenvalues of the reduced density matrix $\rho$ of $\ket{\psi}\bra{\psi}.$
\label{lem:exact_product_test}
\end{lemma}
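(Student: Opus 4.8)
## Proof proposal for Lemma~\ref{lem:exact_product_test}

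The plan is to compute the acceptance probability $\|P\ket{\psi}^{\otimes k}\|^2$ directly by writing $\ket{\psi}$ in its Schmidt basis and tracking how the projectors $\Pi_A$ and $\Pi_B$ onto the symmetric subspaces act. First I would fix a Schmidt decomposition $\ket{\psi} = \sum_{i=1}^d \sqrt{\lambda_i}\,\ket{a_i}\ket{b_i}$, where $\{\ket{a_i}\}$ and $\{\ket{b_i}\}$ are orthonormal sets and the $\lambda_i$ are exactly the eigenvalues of $\rho$. Then $\ket{\psi}^{\otimes k} = \sum_{\alpha} \sqrt{\lambda_\alpha}\,\ket{a}_\alpha \ket{b}_\alpha$, where the sum ranges over all length-$k$ tuples $\alpha = (\alpha_1,\dots,\alpha_k) \in [d]^k$, the $A$-registers holding $\ket{a}_\alpha = \ket{a_{\alpha_1}}\otimes\cdots\otimes\ket{a_{\alpha_k}}$ and the $B$-registers holding the matching $\ket{b}_\alpha$.

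Next I would use the standard fact that $\Pi_A$ (the projector onto the symmetric subspace of $A_1\cdots A_k$) acts as the normalized average over permutations, $\Pi_A = \frac{1}{k!}\sum_{\pi \in S_k} R_\pi$, and likewise for $\Pi_B$. Since $\Pi_A \otimes \Pi_B$ is a projector, the acceptance probability is $\bra{\psi}^{\otimes k}(\Pi_A\otimes\Pi_B)\ket{\psi}^{\otimes k}$. Expanding both sides in the Schmidt basis, and using orthonormality of $\{\ket{a_i}\}$ and $\{\ket{b_i}\}$, the only surviving terms are those where a permutation $\pi$ sends one index tuple $\alpha$ to another tuple $\beta$ in the $A$-part and a permutation $\rho$ does the same in the $B$-part; orthonormality forces $\beta = \pi\cdot\alpha$ on the $A$-side and $\beta = \rho\cdot\alpha$ on the $B$-side, hence $\pi\cdot\alpha = \rho\cdot\alpha$. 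Carefully counting: for each tuple $\alpha$ of a given "type" (i.e.\ multiplicity pattern of the values appearing), the number of pairs $(\pi,\rho)\in S_k\times S_k$ with $\pi\cdot\alpha=\rho\cdot\alpha$ is $k! \cdot |\mathrm{Stab}(\alpha)|$, while the coefficient $\sqrt{\lambda_\alpha}$ appears once for the bra and once for the ket giving $\lambda_\alpha$. Combining the $\frac{1}{(k!)^2}$ normalization with the $k!\,|\mathrm{Stab}(\alpha)|$ count and summing over all $\alpha \in [d]^k$, the stabilizer factors cancel the overcounting of tuples with the same underlying value-multiset, and what remains is $\sum_{\beta_1+\cdots+\beta_d = k} \lambda_1^{\beta_1}\cdots\lambda_d^{\beta_d} = h_k(\lambda_1,\dots,\lambda_d)$.

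The main obstacle I expect is the bookkeeping in the middle step: making the counting of pairs $(\pi,\rho)$ with $\pi\cdot\alpha = \rho\cdot\alpha$ precise, and verifying that after dividing by $(k!)^2$ the sum over ordered tuples $\alpha \in [d]^k$ collapses exactly to the sum over monomials of $h_k$ with all coefficients equal to $1$. A clean way to organize this is to group tuples $\alpha$ by their type $\mathrm{type}(\alpha) = (\beta_1,\dots,\beta_d)$: there are $\binom{k}{\beta}$ ordered tuples of each type, each with $|\mathrm{Stab}(\alpha)| = \prod_i \beta_i!$, so $\binom{k}{\beta}\cdot\prod_i\beta_i! = k!$; the per-type contribution is then $\frac{1}{(k!)^2}\cdot k!\cdot k!\cdot\lambda^\beta = \lambda^\beta$, and summing over all compositions $\beta$ of $k$ into $d$ parts yields $h_k(\lambda_1,\dots,\lambda_d)$ as claimed. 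It remains only to note that $\sum_i\lambda_i = 1$ is not needed for this identity (it will be used later, in the bounds of \Cref{thm:general_qma_k_bound}).
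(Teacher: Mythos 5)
Your proposal is correct, and it reaches the same sum $\sum_{\beta \vdash k} \lambda^\beta = h_k(\lambda_1,\dots,\lambda_d)$ from the same starting point (the Schmidt decomposition and grouping of index tuples by type), but the middle of the argument is organized differently from the paper. You expand \emph{both} projectors as group averages, $\Pi_A \otimes \Pi_B = \frac{1}{(k!)^2}\sum_{\pi,\rho} R_\pi \otimes R_\rho$, and evaluate the single matrix element $\bra{\psi}^{\otimes k}(\Pi_A \otimes \Pi_B)\ket{\psi}^{\otimes k}$ by a stabilizer count: orthonormality of the Schmidt vectors forces $\pi\cdot\alpha = \rho\cdot\alpha$, there are $k!\,\lvert\mathrm{Stab}(\alpha)\rvert$ such pairs, and $\binom{k}{\beta}\prod_i \beta_i! = k!$ makes the per-type contribution exactly $\lambda^\beta$. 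The paper instead computes the action of $\Pi_A$ alone on the tuple states, rewrites $(\Pi_A\otimes I)\ket{\psi}^{\otimes k}$ in the orthonormal symmetric-subspace basis $\ket{a}^\beta\ket{b}^\beta$, and observes that conditioned on the $A$-side test passing, the $B$-registers are automatically symmetric, so the $B$-projection passes with probability one; the acceptance probability is then read off as $\sum_\beta (\Lambda^\beta)^2$. Your route avoids any discussion of post-measurement states and is a clean one-shot trace computation; the paper's route gives a bit more structural information (the explicit form of the post-measurement state), which is not needed for the lemma itself. One small point to tighten in a write-up: when the bra tuple is $\alpha' = \pi\cdot\alpha$, the coefficient there is $\sqrt{\lambda_{\alpha'}}$, and you should note explicitly that $\lambda_{\alpha'} = \lambda_\alpha$ because $\alpha'$ is a permutation of $\alpha$ — this is exactly why each surviving term contributes $\lambda_\alpha$ rather than a cross term. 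Your closing remark that normalization of the $\lambda_i$ is not needed for the identity is also accurate.
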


\begin{proof}
Let $k \geq 2$ be given. Consider the Schmidt decomposition of $\ket{\phi} = \sum_{i=1}^d \sqrt{\lambda_i} \ket{a_i} \ket{b_i}$ across the two subsystems, ordered so that $\lambda_j \geq \lambda_{j+1}$ for every index $1 \leq j < d.$ Letting $\Lambda = (\sqrt{\lambda_1}, \dots, \sqrt{\lambda_n})$ be the list of Schmidt coefficients, then

\begin{equation}
    \ket{\phi}^{\otimes k} = \sum_{\substack{\alpha = (\alpha_1, \dots, \alpha_k) \\ \alpha_i \in [d]}} \Lambda_\alpha \ket{a}_\alpha \ket{b}_\alpha.
    \label{eqn:expansion_1}
\end{equation}

Now, note that if $\alpha, \alpha'$ are two sequences with the same type $\beta$, then $\Lambda_\alpha = \Lambda_{\alpha'} = \Lambda^\beta$. So rewrite Equation \ref{eqn:expansion_1}, combining sequences of the same type.

\begin{equation}
    \ket{\phi}^{\otimes k} = \sum_{\substack{\beta = (\beta_1, \dots, \beta_d) \\ \beta_1 + \dots + \beta_d = k}} \Lambda^\beta \sum_{\substack{\alpha = (\alpha_1, \dots, \alpha_k) \\  \text{type}(\alpha) = \beta}}  \ket{a}_\alpha \ket{b}_\alpha.
    \label{eqn:expansion_2}
\end{equation}

We now consider the action of the projection $\Pi_A$ on the state $\ket{\phi}^{\otimes k}.$ Write the projection as $\Pi_A = \frac{1}{k!} \sum_{\sigma \in S_k} P_\sigma$ where $P_\sigma$ permutes vectors in $(\mathbb{C}^d)^k.$ Note that if $\alpha$ and $\alpha'$ have the same type $\beta = (\beta_1, \dots, \beta_d)$, there is some permutation $P_{\sigma}$ for which $P_\sigma \ket{a}_{\alpha} = \ket{a}_{\alpha'}$ and that there is a subgroup $S_\beta = S_{\beta_1} \times \dots \times S_{\beta_d} \leq S_k$ of permutations fixing $\ket{a}_\alpha$ of size $\prod_{i=1}^d \beta_i !.$ Hence, for every $\ket{a}_\alpha,$

$$\Pi_A \ket{a}_\alpha = \frac{1}{k!} \sum_{\text{cosets } \sigma S_\beta} |S_\beta| P_\sigma\ket{a}_\alpha = \frac{1}{\binom{k}{\beta}} \sum_{\substack{\text{all } \alpha' = (\alpha_1, \dots, \alpha_k) \\ \text{type}(\alpha) = \alpha'}} \ket{a}_{\alpha'}.$$

Therefore,

\begin{equation}
\begin{split} 
    (\Pi_A \otimes I) \ket{\phi}^{\otimes k} &=\sum_{\substack{\beta = (\beta_1, \dots, \beta_d) \\ \beta_1 + \dots + \beta_d = k}} \frac{\Lambda^\beta}{\binom{k}{\beta}} \sum_{\substack{\alpha = (\alpha_1, \dots, \alpha_k) \\  \text{type}(\alpha) = \beta}} \sum_{\substack{\alpha' = (\alpha_1, \dots, \alpha_k) \\  \text{type}(\alpha') = \beta}}  \ket{a}_{\alpha'} \ket{b}_\alpha. \\ 
    &= \sum_{\substack{\beta = (\beta_1, \dots, \beta_d) \\ \beta_1 + \dots + \beta_d = k}} \Lambda^\beta \left( \frac{\displaystyle \sum_{\substack{\alpha = (\alpha_1, \dots, \alpha_k) \\  \text{type}(\alpha) = \beta}}  \ket{a}_{\alpha}}{\sqrt{\binom{k}{\beta}}} \right) \left( \frac{\displaystyle \sum_{\substack{\alpha = (\alpha_1, \dots, \alpha_k) \\  \text{type}(\alpha) = \beta}}  \ket{b}_{\alpha}}{\sqrt{\binom{k}{\beta}}} \right) 
\end{split}
\end{equation}

We will write $\ket{a}^\beta = \left( \frac{\displaystyle \sum_{\substack{\alpha = (\alpha_1, \dots, \alpha_k) \\  \text{type}(\alpha) = \beta}}  \ket{a}_{\alpha}}{\sqrt{\binom{k}{\beta}}} \right)$ and note that the set of all $\{\ket{a}^\beta\}$ over all $\beta \vdash k$ is an orthonormal basis for the symmetric subspace $(\text{Sym}^k \mathbb{C}^d).$ Hence the  probability that the  product test applied to the first subsystem passes is $\mu = \sum_{\beta \vdash k} (\Lambda^\beta)^2,$ and conditioned on this, the post-measurement mixed state is $\ket{a}^\beta \ket{b}^\beta$ with probability $\frac{(\Lambda^\beta)^2}{\mu}$. At this point, the second subsystem becomes a symmetric state, and hence the projection onto the $B$ registers passes with probability one. Hence, we obtain that the probability that the product test passes is

\begin{equation} 
\mu = \sum_{\beta \vdash k} (\Lambda^\beta)^2 = \sum_{\substack{\beta_1 + \dots + \beta_d = k \\ \beta_i \geq 0}} \prod_{i=1}^d \lambda_i^{\beta_i} = h_k(\lambda_1, \dots, \lambda_d).
\label{eqn:prod_test_equation}
\end{equation}
\end{proof}

\begin{proof}[Proof of \Cref{thm:general_qma_k_bound}]
Firstly, we claim that 

\begin{equation} 
\frac{2}{k+1} \lambda_1^k + \sum_{\beta \vdash k, \beta_1 < k} (\Lambda^\beta)^2 \leq \frac{2}{k+1}.
\label{eqn:bound_1}
\end{equation}

 From \Cref{eqn:expansion_2}, we observe that

\begin{equation}
    1 = \sum_{\beta \vdash k} \binom{k}{\beta} (\Lambda^\beta)^2 = \lambda_1^k +  \sum_{\beta \vdash k. \beta_1 < k} \binom{k}{\beta} (\Lambda^\beta)^2 . 
\end{equation}

Therefore, we can divide the sum into three cases depending on the length $n(\beta)$ and the the value of $\beta_1,$

\begin{equation}
\begin{split} 
    &\frac{2}{k+1} \lambda_1^k + \sum_{\beta \vdash k, \beta_1 < k} (\Lambda^\beta)^2 = \frac{2}{k+1} \left[1 - \sum_{\beta \vdash k, \beta_1 < k} \binom{k}{\beta} (\Lambda^\beta)^2 \right] + \sum_{\beta \vdash k, \beta_1 < k} (\Lambda^\beta)^2 \\  &=  \frac{2}{k+1} + \sum_{\beta \vdash k, \beta_1 < k} \left[1 - \frac{2}{k+1} \binom{k}{\beta}\right] (\Lambda^\beta)^2  \\ 
    &= \frac{2}{k+1} + \sum_{n(\beta) = 1, \beta_1 < k}  \left[1 - \frac{2}{k+1} \binom{k}{\beta}\right] (\Lambda^\beta)^2 + \sum_{n(\beta) = 2, \beta_1 = k - 1}  \left[1 - \frac{2}{k+1} \binom{k}{\beta}\right] (\Lambda^\beta)^2 \\ &+ \sum_{\beta \vdash k, \beta_1 < k-1, n(\beta) \geq 2} \left[1 - \frac{2}{k+1} \binom{k}{\beta}\right ] (\Lambda^\beta)^2.
    \label{eqn:want_to_be_negative}
\end{split} 
\end{equation}

Observe firstly that if $n(\beta) \geq 2$, $\binom{k}{\beta} \geq \binom{k}{\beta_i} \geq k$  since $\beta_i \leq k-1$ is the largest entry in the list $\beta.$ Hence in all these cases $\left[1 - \frac{2}{k+1} \binom{k}{\beta}\right ] (\Lambda^\beta)^2 \leq 0$ since $1 - \frac{2}{k+1} \binom{k}{\beta} < 0.$ This shows that the third case is always negative.

Next, to bound the first and second cases:

\begin{equation}
\begin{split}
& \sum_{n(\beta) = 1, \beta_1 < k}  \left[1 - \frac{2}{k+1} \binom{k}{\beta}\right] (\Lambda^\beta)^2 + \sum_{n(\beta) = 2, \beta_1 = k - 1}  \left[1 - \frac{2}{k+1} \binom{k}{\beta}\right] (\Lambda^\beta)^2 \\ 
&= (1 - \frac{2}{k+1}) \sum_{i=2}^d \lambda_i^k + [1 - \frac{2k}{k+1}] \sum_{i=2}^d \lambda_1^{k-1} \lambda_i  \\
&= \frac{k-1}{k+1} \sum_{i=2}^d \lambda_i^k - \frac{k-1}{k+1} \sum_{i=2}^d \lambda_1^{k-1} \lambda_i 
= \frac{k-1}{k+1} [\sum_{i=2}^d (\lambda_i (\lambda_i^{k-1} - \lambda_1^{k-1})] \leq 0
\end{split}
\end{equation}

since each $\lambda_i$ is positive and $\lambda_1$ is the greatest of all of the Schmidt coefficients. Therefore, we have established \Cref{eqn:bound_1} since we have shown all of the sums in \Cref{eqn:want_to_be_negative} are negative. 

\Cref{eqn:bound_1} and \Cref{lem:exact_product_test} implies our bound since the probability that the product test passes is

\begin{equation}
\begin{split} 
    h_k(\lambda_1, \dots, \lambda_d) &= \sum_{\beta \vdash k} (\Lambda^\beta)^2 = \lambda_1^k + \sum_{\beta \vdash k, \beta_1 < k} (\Lambda^\beta)^2 \\ &\leq \lambda_1^k + \frac{2}{k+1} - \frac{2}{k+1} \lambda_1^k = \frac{k-1}{k+1} \lambda_1^k + \frac{2}{k+1} = \frac{k-1}{k+1} \omega_{\ket{\psi}}^k + \frac{2}{k+1}.
\end{split}
\end{equation}

where we have used the fact that for bipartite states, $\omega_{\ket{\psi}} = \lambda_1$ from \Cref{lem:entanglement_schmidt}. 
\end{proof} 
\section{A $\mathsf{SymQMA}$ Verifier for the Entangled Subspace Problem}
\label{sec:symqma_entangledsubspace}
We now ready to apply \Cref{thm:general_qma_k_bound} to prove \Cref{thm:general_qma_k_verifier}.

\generalentangledverifier*

The verifier $V_{k+1}$ has $k + 1$ proof registers $\ket{\psi}^{\otimes k+1}$. The $k$-copy product test is performed on registers 1 to $k$. Finally, the circuit performs a controlled $U$ operation on the $k+1$(st) proof state. The verifier accepts if and only if the product test passes and the ancilla qubit for the controlled $U$ measures to be $1$. The verifier $V_3$ is depicted in \Cref{figure:product_test}.

\begin{figure}[ht]
         \centering
         \mbox{
         \Qcircuit @C=1.4em @R=1.4em {
                 & \lstick{\ket{\psi_1}_{A_1 B_1}}  & \multigate{1}{\mathrm{Swap}_{A_1 A_2}} & \multigate{1}{\mathrm{Swap}_{B_1 B_2}} & \qw &  \qw & \qw \\
                 & \lstick{\ket{\psi_2}_{A_2 B_2}}  & \ghost{\mathrm{Swap}_{A_1 A_2}} & \ghost{\mathrm{Swap}_{B_1 B_2}} & \qw & \qw & \qw \\
                 & \lstick{\ket{\psi_3}_{A_3 B_3}} & \qw & \qw & \gate{U} & \qw & \qw \\ 
                 & \lstick{\ket{+}_1} & \ctrl{-2} & \qw  & \qw & \gate{H} & \meter  \\
                 & \lstick{\ket{+}_2} & \qw & \ctrl{-3} & \qw  & \gate{H} & \meter \\
                 & \lstick{\ket{+}_3} & \qw & \qw & \ctrl{-3}  & \gate{H} & \meter 
         }    }
         \caption{$\mathsf{SymQMA(3)}$ verifier}
         \label{figure:product_test}
\end{figure}

\begin{proof}
By assumption there exists a constant $\epsilon$ such that $b^2 - a^2 = \epsilon > 0$. Choose $k$ sufficiently large so that $\frac{k-1}{k+1} (1 - \frac{\epsilon^2}{4})^k + \frac{2}{k+1} \leq 1 - b^2.$ Such $k$ exists if $\epsilon > 0$ and $0 \leq b < 1$. We claim that the verifier $V_{k+1}$ suffices to distinguishes between the two cases.

Suppose we are given a \emph{yes} instance $U$ of the Entangled Subspace problem. This means that $U$ encodes a subspace $S$ containing a state $\ket{\psi} \in \C^d \otimes \C^d$ that is $a$-close to a product state $\ket{\varphi} \otimes \ket{\xi}$ in trace distance. Suppose we run the verifier on the following proof state: $\ket{\theta}^{\otimes (k+1)}$ where $\ket{\theta} = \ket{\varphi} \otimes \ket{\xi}$. Clearly, this proof will pass the product test with probability $1$ for any $k \geq 2$. Furthermore, %
the acceptance probability of the subspace membership test is $\Big \| \Pi \Big ( \ket{\varphi} \otimes \ket{\xi} \Big ) \Big \|^2 = |\braket{\psi \vert \varphi \otimes \xi}|^2 \geq 1 - a^2$, by assumption that $S$ contains a state that is $a$-close to product. %

Now suppose $U$ is a \emph{no} instance, and suppose for contradiction there exists a proof $\ket{\psi}^{\otimes (k + 1)}$ that is accepted by the verifier with probability greater than $1 - \frac{\epsilon}{2} - a^2$. Then in particular this means that the product test with $\ket{\psi}^{\otimes k}$ accepts with probability at least $1 - \frac{\epsilon}{2} - a^2$. Hence, by \Cref{thm:general_qma_k_bound}, if the witness causes the verifier to accept with probability at least $1 - \frac{\epsilon}{2} - a^2,$ then by the choice of $k$, there exists a product state $\ket{\theta} = \ket{\varphi} \otimes \ket{\xi}$ such that $|\braket{\phi \vert \theta}|^2 \geq 1 - \frac{\epsilon^2}{4}$ since $b^2 - a^2 = \epsilon$ implies that $1 - \frac{\epsilon}{2} - a^2 \geq 1 - b^2.$

Let $P_S$ be the projector onto the hidden subspace $S$. Since the definition of trace distance implies that:

$$|\Tr(P_S (\ket{\psi} \bra{\psi} - \ket{\theta} \bra{\theta}))| \leq \|\ket{\phi} \bra{\phi} - \ket{\theta} \bra{\theta}\| \leq \sqrt{1 - (1 - \frac{\epsilon^2}{4})} = \frac{\epsilon}{2}.$$

Then since $\Tr(P_S \ket{\psi} \bra{\psi}) > 1 - \frac{\epsilon}{2} - a^2,$ by assumption we have

$$\Tr(P_S \ket{\theta} \bra{\theta}) > 1 - 
\epsilon - a^2 = 1 - (\epsilon + a^2) = 1 - b^2.$$

which is a contradiction since $S$ was $b$-completely entangled. Therefore, the verifier must accept with probability at most $1 - \frac{\epsilon}{2} - a^2$ in the no case. Hence, there is a gap of at least $1 - a^2 - (1 - \frac{\epsilon}{2} - a^2) = \frac{\epsilon}{2}$ in distinguishing between the yes and no cases. Thus, there is a constant gap between the acceptance probabilities of the \emph{yes} and \emph{no} instances, showing that the $(a,b)$-Entangled Subspace problem is in $\textsf{SymQMA(k+1)}$. 
\end{proof}

\section{Characterization of Product Test Witnesses}
\label{sec:fool}
\label{sec:product_test_witnesses}

In this section, we characterize all witness states that ``fool'' the two-copy product test. This characterization enables us to construct additional counterexamples that fool the two-copy product test verifier, in addition to the one example presented in \Cref{subsec:qma_example}. To do this, we first translate our problem in the language of representation theory. 

\subsection{Background} 

To construct the necessary states and subspaces, we use the theory of Schur-Weyl duality, which states that there is an isomorphism

$$(\mathbb{C}^d)^{\otimes k} \cong \bigoplus_{\lambda \vdash k} (V_\lambda \otimes W_\lambda),$$

where $\lambda$ is a partition of $k$,  $V_\lambda$ is a Specht module (an irreducible representation of the symmetric group $S_k$) and $W_\lambda$ is a Weyl module (an irreducible representation of $GL(d).$) We focus on the action of $S_k$ on $(\mathbb{C}^d)^{\otimes k},$ which acts by permuting registers, so we only need to consider the structure of $V_\lambda$. 

There is a basis of $V_\lambda$ given explicitly by tableau states. Recall that a standard Young tableau (SYT) of shape $\lambda$ is a filling of a diagram of shape $\lambda$ in such a way that rows and columns increase strictly. A semistandard Young tableau (SSYT) is a filling such that rows weakly increase and columns strictly increase. An SYT and SSYT of shape $(3,2)$ is shown in the figure below.

\begin{figure} 
\center 
\begin{ytableau}
1 & 2 & 5 \cr 
3 & 4
\end{ytableau}
\hspace{1cm}
\begin{ytableau} 
1 & 2 & 2 \cr 
3 & 5
\end{ytableau} 
\caption{SYT and SSYT of shape $(3,2)$}
\end{figure} 

Given an SYT $T$ with $n$ cells, let $P_T$ and $Q_T$ be the row and column preserving subgroups of $T$. Let $a_T = \sum_{\sigma \in P_T} \Pi_{\sigma}$ and $b_T = \sum_{\sigma \in Q_T} \text{sgn}(\sigma) \Pi_{\sigma}$ be the row symmeterization operator and column anti-symmeterization operators associated tableau $T$. The Young symmeterizer is defined to be the product $c_T = b_T a_T.$

The following theorem expresses Schur-Weyl theorem combinatorially.

\begin{theorem}[\cite{berget2009symmetries}, Appendix]
Let $\lambda$ be a partition of $k$ and let $(P, Q)$ be tableau of shape $\lambda$. Let $e_{P, Q}$ be the tensor product of the basis elements in $Q$ in the order given by $P$. Then the set of tensors $e_{P, Q} c_P$ where $Q$ is fixed and $P$ varies across all possible SYTs is a basis for $V_\lambda.$
\end{theorem}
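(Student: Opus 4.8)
The plan is to reduce the statement to the classical Standard Basis Theorem for Specht modules, using Schur--Weyl duality only for a dimension count. First I would fix a reference standard tableau $P_0$ of shape $\lambda$ and, for each standard tableau $P$, write $P=\pi\cdot P_0$ for the unique $\pi\in S_k$ relabelling cells. Then the Young symmetrizer transforms by conjugation, $c_P=\Pi_\pi\,c_{P_0}\,\Pi_\pi^{-1}$, while the content tensor $e_{P,Q}$ --- the tensor whose $i$-th factor is the basis vector sitting in the cell of $Q$ that $P$ labels by $i$ --- transforms by $e_{P,Q}=\Pi_\pi\,e_{P_0,Q}$, where $\Pi_\pi$ is the permutation operator on $(\C^d)^{\otimes k}$. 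A short computation then gives $e_{P,Q}\,c_P=\Pi_\pi\,(e_{P_0,Q}\,c_{P_0})$. Setting $v:=e_{P_0,Q}\,c_{P_0}$, the candidate set is therefore $\{\,\Pi_{\pi(P)}\,v : P\in\mathrm{SYT}(\lambda)\,\}$, which spans the cyclic $S_k$-module $\C[S_k]\,v$.

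Second, I would identify this cyclic module. Because $Q$ is semistandard its columns strictly increase, so by the usual compatibility criterion $c_{P_0}\,e_{P_0,Q}\neq 0$; hence $v\neq 0$, and since $c_{P_0}^2=\kappa\,c_{P_0}$ with $\kappa\neq 0$ we get $c_{P_0}v=\kappa v\neq 0$. By the standard fact that a vector not annihilated by $c_{P_0}$ generates a submodule isomorphic to the Specht module $S^\lambda=V_\lambda$, we conclude $\C[S_k]\,v\cong V_\lambda$, so its dimension equals $f^\lambda$, the number of standard tableaux of shape $\lambda$ --- exactly the size of our candidate set. (Schur--Weyl duality explains this intrinsically: $v$ lies in the $V_\lambda$-isotypic component, in fact in the single copy $V_\lambda\otimes w_Q$, where $w_Q$ is the multiplicity-space vector attached to the semistandard tableau $Q$.)

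Third --- and this is the crux --- I would prove the $f^\lambda$ vectors $e_{P,Q}\,c_P$ are linearly independent, which, the count now matching, is equivalent to their spanning $\C[S_k]v$. For this I would construct the $S_k$-equivariant isomorphism $\C[S_k]v\xrightarrow{\sim}S^\lambda$ sending $v\mapsto e_{P_0}$, the polytabloid of $P_0$; under it $e_{P,Q}\,c_P=\Pi_{\pi(P)}v\mapsto \pi(P)\cdot e_{P_0}=\pm\,e_P$, the polytabloid of $P$. Linear independence of the polytabloids indexed by standard tableaux is the classical Standard Basis Theorem (see James's book on the representation theory of the symmetric group, or Fulton's \emph{Young Tableaux}), which I would cite rather than reprove; a self-contained route is the straightening algorithm, where Garnir relations rewrite each non-standard polytabloid as an integer combination of lexicographically smaller standard ones, yielding a unitriangular change of basis. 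The hard part will be the bookkeeping: pinning down normalization conventions so the abstract polytabloid basis of $S^\lambda$ matches the concrete vectors $e_{P,Q}\,c_P$ in $(\C^d)^{\otimes k}$, and checking that the equivariant map $\C[S_k]v\to S^\lambda$ is well defined and nonzero --- precisely where semistandardness of $Q$ (not merely column-injectivity) is used, to fix the copy $V_\lambda\otimes w_Q$. A related point worth stating explicitly is that the relabellings $\pi(P)$ form a transversal whose $v$-translates span $\C[S_k]v$, which is not automatic for a general cyclic module and, once unwound, is again the Standard Basis Theorem in disguise.
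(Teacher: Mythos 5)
The paper never proves this statement at all: it is imported verbatim, with a citation, from the appendix of \cite{berget2009symmetries}, so there is no internal proof to compare against --- your proposal supplies an argument where the paper supplies only a reference. Taken on its own terms, your route (pass to the cyclic module $\C[S_k]v$ with $v=e_{P_0,Q}c_{P_0}$, identify it with $V_\lambda$ via non-annihilation by the Young symmetrizer, then transfer linear independence from the standard polytabloid basis of the Specht module) is sound, and the dimension count correctly converts linear independence of the $f^\lambda$ candidate vectors into the basis statement; the opening assertion that the candidate set ``spans $\C[S_k]v$'' is premature, but your later remark that independence and spanning are equivalent once the count matches repairs this.

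Two points must be made explicit for the argument to be complete. First, the theorem as quoted suppresses the hypothesis that $Q$ is semistandard (at least column-strict); without it every vector $e_{P,Q}c_P$ can vanish, and your nonvanishing step $c_{P_0}e_{P_0,Q}\neq 0$ is precisely where semistandardness enters, so it should be stated as a hypothesis (it holds in all of the paper's applications). Second, the step you yourself flag --- that there is a well-defined equivariant map $\C[S_k]v\to S^\lambda$ sending $v\mapsto e_{P_0}$ --- is a genuine gap as written: an abstract isomorphism of two irreducibles need not match your chosen generators. It can be closed cleanly as follows. Since $c_{P_0}\C[S_k]c_{P_0}=\C\, c_{P_0}$, the operator $c_{P_0}$ has one-dimensional image on any module isomorphic to $V_\lambda$. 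In $\C[S_k]v$ that image is spanned by $v$ (because $c_{P_0}v=\kappa v\neq 0$), while in $S^\lambda$ it is spanned by the polytabloid $e_{P_0}$ (under the identification of the permutation module $M^\lambda$ with $\C[S_k]a_{P_0}$, the polytabloid $e_{P_0}=b_{P_0}\{P_0\}$ corresponds to $c_{P_0}=b_{P_0}a_{P_0}$ itself, which spans $c_{P_0}\C[S_k]c_{P_0}$). Hence the Schur-lemma isomorphism automatically carries $v$ to a nonzero multiple of $e_{P_0}$, and then $\Pi_{\pi(P)}v\mapsto e_P$ exactly (no sign is needed, since $\sigma e_T=e_{\sigma T}$), so the Standard Basis Theorem for polytabloids finishes the proof as you intended.
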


We call each basis vector $e_{P,Q} c_P$ a tableau state.

Let $\Pi_{ij}$ be a swap operator among registers $i$ and $j$. We observe that a state $\ket{\psi} \in (\mathbb{C}^d)^{\otimes 4}$ passes the product test with probability one if and only if $\Pi_{13} \ket{\psi} = \ket{\psi}$ and $\Pi_{24} \ket{\psi} = \ket{\psi}.$ Let $G = \{e, (13), (24), (13)(24)\}$ be the subgroup of $S_4$ generated by these swap operators. Therefore, the condition that $\ket{\psi}$ passes the product test with probability one is equivalent to the condition that the one-dimensional subspace spanned by $\ket{\psi}$ is a trivial representation of the group $G$.

Therefore, the Schur-Weyl isomorphism gives us a way to classify all states that pass the product test with probability one. First we identify which representations $V_\lambda$ of $S_4$ restrict to the trivial representation of the group $G$. Then, we construct a basis of $G$-invariant states by using the tableau state construction. This allows us to identify the set of all states that fool the product test. 

\subsection{Calculations}

\begin{lemma}
Let $V_\lambda$ be an irreducible $S_4$-representation. Then $V_\lambda$ contains a trivial representation of $G$ if and only if $\lambda = (4), (3,1)$ or $(2,2).$ Furthermore, the restricted representation is a one-dimensional subspace of $V_\lambda.$
\end{lemma}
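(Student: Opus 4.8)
The plan is to reduce the statement to a one-line character computation for the Klein four-group $G$. First I would record the only structural facts about $G=\{e,(13),(24),(13)(24)\}$ that the argument uses: $G\cong\mathbb{Z}_2\times\mathbb{Z}_2$, and as a subset of $S_4$ it consists of the identity, two transpositions $(13),(24)$ (both lying in the conjugacy class of $(12)$), and one double transposition $(13)(24)$ (lying in the class of $(12)(34)$). Since characters are class functions, this is all the information about $G$ that is needed.

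Next I would invoke the standard fact that $\tfrac{1}{|G|}\sum_{g\in G}\rho(g)$ is the orthogonal projection onto the $G$-invariant subspace $V^G$ of any finite-dimensional $G$-representation $(\rho,V)$, so that $\dim V^G=\tfrac{1}{|G|}\sum_{g\in G}\chi_V(g)$. Applying this to $V_\lambda$ restricted to $G$ and using the class count above gives
\[
\dim V_\lambda^G=\tfrac14\Bigl(\chi_\lambda(e)+2\,\chi_\lambda\bigl((12)\bigr)+\chi_\lambda\bigl((12)(34)\bigr)\Bigr),
\]
where $\chi_\lambda$ denotes the irreducible character of $S_4$ labelled by the partition $\lambda\vdash 4$. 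Note that a trivial $G$-subrepresentation of $V_\lambda$ is precisely a nonzero vector of $V_\lambda^G$, so the lemma follows once $\dim V_\lambda^G$ is computed for all five $\lambda$.

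Finally I would substitute the character table of $S_4$. On the three classes $e$, $(12)$, $(12)(34)$ the relevant values are $(1,1,1)$ for $\lambda=(4)$, $(3,1,-1)$ for $(3,1)$, $(2,0,2)$ for $(2,2)$, $(3,-1,-1)$ for $(2,1,1)$, and $(1,-1,1)$ for $(1^4)$. Plugging these into the formula yields $\dim V_\lambda^G=1$ for $\lambda\in\{(4),(3,1),(2,2)\}$ and $\dim V_\lambda^G=0$ for $\lambda\in\{(2,1,1),(1^4)\}$, which establishes both halves of the claim at once: the trivial $G$-representation occurs in $V_\lambda$ exactly when $\lambda=(4),(3,1),(2,2)$, and in those cases the invariant subspace is one-dimensional. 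There is essentially no obstacle here beyond bookkeeping; the only points requiring care are reading off the correct three entries of the $S_4$ character table and correctly counting how many elements of $G$ lie in each class. As an optional cross-check one can verify the $\lambda=(2,2)$ case directly against the explicit tableau-state description of $V_{(2,2)}$ from the previous subsection, and if a later argument needs an \emph{explicit} $G$-invariant vector in each $V_\lambda$ (rather than just its existence), one would additionally symmetrize a tableau state over $G$ — but that goes beyond what the present lemma asserts.
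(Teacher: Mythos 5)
Your proof is correct and follows essentially the same route as the paper: both compute the multiplicity of the trivial $G$-representation as the average $\tfrac14\sum_{g\in G}\chi_\lambda(g)$ over the four elements of $G$ and read off the values from the $S_4$ character table. You are merely a bit more explicit than the paper in justifying that this average equals $\dim V_\lambda^G$ via the projection $\tfrac{1}{|G|}\sum_g\rho(g)$, and your character values and arithmetic check out.
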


\begin{proof}
Let $\chi_\lambda$ be the character of the irreducible representation $V_\lambda$. Using the character table of $S_4$, computed on \cite[Page 13]{westrich2011youngs}, the sum $$\frac{1}{4} (\chi_\lambda(e) + \chi_\lambda((13)) + \chi_\lambda((24)) + \chi_\lambda((13)(24)))$$ 

is non-zero (and equal to one) if and only if $\lambda = (4), (3,1)$ or $(2,2)$.
\end{proof}

\begin{figure} 
\center 
\begin{ytableau}
1 & 2 & 3 \cr 
4 
\end{ytableau}
\hspace{0.5cm}
\begin{ytableau} 
1 & 2 & 4 \cr 
3 
\end{ytableau} 
\hspace{0.5cm}
\begin{ytableau} 
1 & 3 & 4  \cr 
2
\end{ytableau} 
\hspace{0.5cm}
\begin{ytableau} 
a & b & c \cr 
d
\end{ytableau} 
\caption{SYT and SSYT of shape $(3,1)$}
\label{fig:tableau_1}
\end{figure} 

\begin{lemma}
Each copy of $V_4$ in $(\mathbb{C}^d)^{\otimes 4}$ is one-dimensional and symmetric under all permutations, given by the span of

\begin{equation} 
\begin{split} 
\ket{\psi_{abcd}^4} &= (\ket{ab} + \ket{ba})(\ket{cd} + \ket{dc}) +  (\ket{cd} + \ket{dc})(\ket{ab} + \ket{ba})  \\ 
&+ (\ket{ac} + \ket{ca})(\ket{bd} + \ket{db}) + (\ket{bd} + \ket{db})(\ket{ac} + \ket{ca}) \\
&+  (\ket{ad} + \ket{da})(\ket{bc} + \ket{cb}) + (\ket{bc} + \ket{cb})(\ket{ad} + \ket{da}). 
\end{split}
\end{equation}

where $a \leq b \leq c \leq d.$
\end{lemma}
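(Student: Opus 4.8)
The goal is to identify, inside $(\mathbb{C}^d)^{\otimes 4}$, the subspace of states fixed by the group $G = \{e,(13),(24),(13)(24)\}$ that sit inside the isotypic component of the trivial $S_4$-representation $V_{(4)}$, and to exhibit an explicit spanning set indexed by weakly increasing quadruples $a\le b\le c\le d$. The first step is to invoke the tableau state machinery from the previous subsection: by Schur–Weyl duality the copies of $V_{(4)}$ in $(\mathbb{C}^d)^{\otimes 4}$ are parametrized by semistandard Young tableaux of shape $(4)$, i.e. by weakly increasing fillings of a single row with four boxes drawn from $[d]$, which are exactly the multisets $\{a,b,c,d\}$ with $a\le b\le c\le d$. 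For each such filling $Q$ the tableau state is $e_{P,Q}c_P$ where $P$ is the unique SYT of shape $(4)$ and $c_P = a_P$ is the full row-symmetrizer $\sum_{\sigma\in S_4}\Pi_\sigma$; this produces a symmetric tensor. So the first part of the claim — one-dimensionality of each $V_{(4)}$ copy and full permutation symmetry — is immediate from the general construction, and I would cite the tableau-state theorem and the preceding lemma identifying $(4)$ as one of the three shapes whose restriction to $G$ contains the trivial representation.

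The substance of the statement is the explicit formula, so the second step is the combinatorial identification: I would compute $c_P\,(\ket{a}\otimes\ket{b}\otimes\ket{c}\otimes\ket{d})$ for the four basis labels and show it equals (a scalar multiple of) the displayed $\ket{\psi^4_{abcd}}$. The natural way to organize this is to note that symmetrizing $\ket{a}\ket{b}\ket{c}\ket{d}$ over $S_4$ and then grouping the $24$ resulting permutations according to which unordered pair occupies registers $\{1,2\}$ versus $\{3,4\}$ gives exactly the three ``pairings'' $\{ab|cd\}$, $\{ac|bd\}$, $\{ad|bc\}$, each contributing a symmetrized pair-tensor on the first two registers times a symmetrized pair-tensor on the last two, plus the swap of the two halves. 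Writing $S_{xy} := \ket{xy}+\ket{yx}$, the symmetrizer yields $\sum_{\text{pairings}} (S_{**}\otimes S_{**} + S_{**}\otimes S_{**})$ up to an overall constant, which is precisely the six-term expression in the lemma. I would also double-check the boundary/degenerate cases where some of $a,b,c,d$ coincide: the formula still defines a (possibly zero, but generically nonzero) symmetric vector, and the collection over all weakly increasing quadruples still spans the $V_{(4)}$-isotypic component because the corresponding SSYT labels form the standard indexing set.

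The last step is to verify that $\ket{\psi^4_{abcd}}$ is indeed $G$-fixed — but this is trivial given full $S_4$-symmetry, since $G\subseteq S_4$. I would state this as a one-line corollary. The main obstacle, such as it is, is purely bookkeeping: making sure the count of the $24$ permutations is distributed correctly among the six tensor-product terms and that no spurious sign or multiplicity creeps in (the shape $(4)$ has trivial column group, so there are no signs, which simplifies matters compared with $(3,1)$ or $(2,2)$). A secondary subtlety worth a remark is that the lemma asserts ``each copy of $V_{(4)}$ is one-dimensional and symmetric under all permutations'' — this is consistent with $V_{(4)}$ being the trivial $S_4$-module, so every copy is spanned by a fully symmetric tensor, and the tableau-state construction realizes each such copy explicitly; no genuine difficulty arises here.
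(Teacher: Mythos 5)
Your proposal is correct and follows essentially the same route as the paper, whose proof is simply the remark that the formula follows from the definition of the Young symmetrizer for the one-row shape $(4)$ (trivial column group, full row symmetrizer). Your explicit grouping of the $24$ permutations into the three pairings, together with the SSYT indexing of the copies of $V_{(4)}$ and the trivial observation of $G$-invariance, is exactly the computation that one-line proof encapsulates.
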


\begin{proof}
Follows from the definition of Young symmetrizer. 
\end{proof}

\begin{lemma}
The unique $G$-invariant subspace in each copy of $V_{3,1}$ in $(\mathbb{C}^d)^{\otimes 4}$ is spanned by the vector

\begin{equation} 
\begin{split} 
\ket{\psi_{abcd}^{(3,1)}} &= (\ket{ab} + \ket{ba})(\ket{cd} - \ket{dc}) +  (\ket{ac} + \ket{ca})(\ket{bd} - \ket{db}) + (\ket{bc} + \ket{cb})(\ket{ad} - \ket{da})  \\ 
&+ (\ket{cd} - \ket{dc})(\ket{ab} + \ket{ba}) +  (\ket{bd} - \ket{db})(\ket{ac} + \ket{ca}) + (\ket{ad} - \ket{da}) (\ket{bc} + \ket{cb})
\end{split}
\end{equation} 

for some SSYT of shape $(3,1)$ containing elements $abcd.$
\end{lemma}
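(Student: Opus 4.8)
## Proof Proposal

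The plan is to compute the unique (up to scalar) $G$-invariant vector inside a copy of $V_{(3,1)}$ via an explicit application of the appropriate Young symmetrizer to a suitable tensor, then verify the claimed formula directly. First I would fix an SSYT of shape $(3,1)$ with content $\{a,b,c,d\}$ (with the single box in row two holding, say, $d$, and the top row holding $a,b,c$ in weakly increasing order), which picks out one copy of $V_{(3,1)}$ inside $(\mathbb{C}^d)^{\otimes 4}$; this is the register-content datum "$Q$" in the tableau-state construction cited just before. The Weyl-module copy is determined by $Q$, and within it the Specht-module directions are spanned by the tensors $e_{P,Q}c_P$ as $P$ ranges over the three SYT of shape $(3,1)$ displayed in \Cref{fig:tableau_1}.

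Next I would use the previous lemma: restricted to $G=\{e,(13),(24),(13)(24)\}$, the representation $V_{(3,1)}$ contains the trivial $G$-representation with multiplicity exactly one. Concretely, the $G$-invariant vector is obtained by averaging any nonzero element of $V_{(3,1)}$ over $G$, i.e. $\Pi_G = \tfrac14\sum_{g\in G}\Pi_g$ applied to a tableau state $e_{P,Q}c_P$; genericity of the chosen $P$ guarantees the result is nonzero (if it happened to vanish, one replaces $P$ by another SYT — at least one must give a nonzero projection since the $G$-invariant subspace is one-dimensional and nonzero). I would then simply expand $\Pi_G\, e_{P,Q}c_P$ in the standard basis. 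The Young symmetrizer $c_P = b_P a_P$ for shape $(3,1)$ symmetrizes over the row $\{1,2,3\}$ (a sum over $S_3$) and antisymmetrizes over the column $\{1,4\}$ (the two-element alternation $e-(14)$, up to convention), so $c_P$ applied to $\ket{a}\ket{b}\ket{c}\ket{d}$ produces, after the $S_3$ row-symmetrization, a sum of terms each of the shape (symmetrized triple on positions $1,2,3$) $\otimes$ (difference on positions $1,4$). Pushing the $G$-average through and collecting terms by which pair of letters lands in the "symmetric" slot pairs versus the "antisymmetric" slot pairs yields exactly the six summands in the stated $\ket{\psi_{abcd}^{(3,1)}}$: three with the symmetric two-qubit factor on registers $(1,2)$ and the antisymmetric factor on $(3,4)$, and the three mirror terms. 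The verification that $\Pi_{13}\ket{\psi_{abcd}^{(3,1)}}=\ket{\psi_{abcd}^{(3,1)}}$ and $\Pi_{24}\ket{\psi_{abcd}^{(3,1)}}=\ket{\psi_{abcd}^{(3,1)}}$ can also be read off directly from the displayed formula — swapping registers $1\leftrightarrow 3$ exchanges the symmetric-factor part with a sign-corrected antisymmetric-factor part in a way that permutes the six summands among themselves — which independently confirms $G$-invariance without reference to the symmetrizer computation.

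The main obstacle is bookkeeping: matching the conventions (left vs. right action of $S_k$, the order in which $a_P$ and $b_P$ are composed, and the identification of "registers" with "tableau cells" used in \cite{berget2009symmetries}) so that the symmetrizer output literally equals the displayed vector rather than a basis-dependent cousin of it. I would handle this by not relying on a particular convention at all for the final claim: since the previous lemma already guarantees the $G$-invariant subspace of each $V_{(3,1)}$-copy is exactly one-dimensional, it suffices to (i) exhibit $\ket{\psi_{abcd}^{(3,1)}}$ as written, (ii) check it is $G$-invariant (immediate from the symmetry of the formula under $1\leftrightarrow3$ and $2\leftrightarrow4$), (iii) check it is nonzero and lies in the isotypic component $V_{(3,1)}$ — the latter by noting it is symmetric under $1\leftrightarrow 2$ and $3\leftrightarrow 4$ but is annihilated by full symmetrization (it contains the antisymmetric factors $\ket{cd}-\ket{dc}$ etc.) and is not in the sign or the $(2,2)$ or $(2,1,1)$ components, which one pins down by applying a couple of well-chosen symmetrizers or by a dimension/character count on the $\langle (12),(34),(13)(24)\rangle$-action. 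Steps (ii) and (iii) together with the one-dimensionality lemma force $\ket{\psi_{abcd}^{(3,1)}}$ to be *the* invariant vector, sidestepping any convention-dependent computation. I would then remark that as $Q$ ranges over SSYT of shape $(3,1)$ with content in $[d]$ one gets a spanning set for the full $G$-invariant subspace of the $V_{(3,1)}$-isotypic part of $(\mathbb{C}^d)^{\otimes 4}$, which is what is needed for the subsequent characterization of product-test witnesses.
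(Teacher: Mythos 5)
Your fallback argument — the one you actually commit to — has two problems, one factual and one structural. Factually, the displayed vector is \emph{not} symmetric under $1\leftrightarrow 2$ or $3\leftrightarrow 4$: swapping registers $1$ and $2$ fixes the first three summands but negates the last three, whose first factor is the antisymmetric combination $(\ket{cd}-\ket{dc})$, etc. So the symmetry test you propose in step (iii) for placing the vector in the $(3,1)$-isotypic component fails as stated. Likewise, in step (ii) the six summands do not simply permute among themselves under $\Pi_{13}$ (e.g.\ $\Pi_{13}$ applied to the first summand gives $\ket{cbad}+\ket{cabd}-\ket{dbac}-\ket{dabc}$, which mixes pieces of several other summands); the invariance itself is true, but the clean way to see it is that in the standard basis the coefficient of a word is $+1$ when $d$ occupies register $2$ or $4$ and $-1$ when it occupies register $1$ or $3$, a pattern manifestly preserved by $(13)$ and $(24)$.

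Structurally, even after repairing those checks, ``nonzero, $G$-invariant, and in the $(3,1)$-isotypic component'' does not deliver the lemma: inside the span of the permutations of $\ket{abcd}$ (with $a,b,c,d$ distinct) the $(3,1)$-isotypic component contains three copies of $V_{(3,1)}$, so its $G$-fixed subspace is three-dimensional, and the per-copy uniqueness from the preceding character computation does not single out your vector or tie it to the copy built from the tableau states $e_{P,Q}c_P$ — which is the copy the paper needs here and in the later basis theorem. To close this you must actually exhibit the vector inside a concrete copy, i.e.\ express it in the tableau basis; that is exactly the computation you set aside for convention reasons. The paper does precisely this: it computes $v_1,v_2,v_3$ by applying the Young symmetrizer for shape $(3,1)$, writes the matrices of $(13)$ and $(24)$ in that basis, and reads off the common $+1$-eigenvector $(1,-1,1)$, giving $\ket{\psi^{(3,1)}_{abcd}}=v_1-v_2+v_3$. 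Your first sketch — averaging a tableau state over $G$ with $\Pi_G=\tfrac14\sum_{g\in G}\Pi_g$ — is essentially equivalent to that (the average is the projector onto the per-copy invariant line) and would be fine; but as finally proposed, with that computation replaced by the symmetry-based sidestep, the argument has a genuine gap.
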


\begin{proof}
Let $T$ be the leftmost tableau in Figure \ref{fig:tableau_1} and let $c_T = b_T a_T$ be the corresponding Young symmeterizer. Since the row-preserving subgroup consists of all permutations of $\{1,2,3\}$ and the column-preserving subgroup consists of all permutations of $\{1,4\},$ then

$$a_T \ket{abcd} = \ket{abcd} + \ket{bacd} + \ket{acbd} + \ket{cabd} + \ket{bcad} + \ket{cbad},$$

and therefore

\begin{equation*}
\begin{split} 
c_T \ket{abcd} &= b_T a_T \ket{abcd} \\ 
&= \ket{abcd} - \ket{dbca} + \ket{bacd} - \ket{dacb} + \ket{acbd} - \ket{dcba} \\ &+ \ket{cabd} - \ket{dabc} + \ket{bcad} - \ket{dcab} + \ket{cbad} - \ket{dbac}.
\end{split} 
\end{equation*} 

Call this vector $v_1.$ To obtain the other two vectors in $V_{3,1}$, we compute 
\begin{equation*}
\begin{split} 
v_2 = c_T \ket{abdc} &= \ket{abdc} - \ket{dbac} + \ket{acdb} - \ket{dcab} + \ket{badc} - \ket{dabc} \\ &+ \ket{bcda} - \ket{dcba} + \ket{cadb} - \ket{dacb} + \ket{cbda} - \ket{dbca}.
\end{split} 
\end{equation*} 

using the second tableau in Figure \ref{fig:tableau_1}, and finally

\begin{equation*}
\begin{split} 
v_3 = c_T \ket{adbc} &= \ket{adbc} - \ket{dabc} + \ket{adcb} - \ket{dacb} + \ket{bdac} - \ket{dbac} \\ &+ \ket{bdca} - \ket{dbca} + \ket{cdab} - \ket{dcab} + \ket{cdba} - \ket{dcba}.
\end{split} 
\end{equation*} 

using the third and final tableau.

In the basis $\{v_1, v_2, v_3\}$ of $V_{3,1},$ the representation matrices of permutations $(13)$ and $(24)$ are

$$\begin{bmatrix} 1 & 0 & 0 \\ -1 & -1 & -1 \\ 0 & 0 & 1 \end{bmatrix} \quad  \begin{bmatrix} 0 & 0 & 1 \\ 0 & 1 & 0 \\ 1 & 0 & 0 \end{bmatrix}$$

respectively. One can verify this directly or using the computations in \cite{westrich2011youngs}. Therefore, a vector that is invariant under simultaneous $(13)$ and $(24)$ permutations is a common eigenvector of both matrices of eigenvalue 1, which is $\begin{bmatrix} 1 \\ -1 \\ 1 \end{bmatrix}.$ Hence setting $\ket{\psi_{abcd}^{(3,1)}} = v_1 - v_2 + v_3$ completes the proof. 
\end{proof}

\begin{figure} 
\center 
\begin{ytableau}
1 & 2 \cr 
3 & 4 
\end{ytableau}
\hspace{0.5cm}
\begin{ytableau} 
1 & 3  \cr 
2 & 4 
\end{ytableau} 
\hspace{0.5cm}
\begin{ytableau} 
a & b   \cr 
c & d
\end{ytableau} 
\caption{SYT and SSYT of shape $(2,2)$}
\label{fig:tableau_2}
\end{figure} 

\begin{lemma}
The unique $G$-invariant subspace in each copy of $V_{2,2}$ is spanned by the vector

\begin{equation} 
\begin{split} 
\ket{\psi_{abcd}^{(2,2)}} &= (\ket{ab} + \ket{ba})(\ket{cd} + \ket{dc}) + (\ket{cd} + \ket{dc})(\ket{ab} + \ket{ba}) \\ 
&+ \ket{acdb} + \ket{cabd} - 2\ket{acbd} - 2\ket{cadb}  \\ 
&+ \ket{bcda} + \ket{cbad} - 2\ket{bcad} - 2\ket{cbda} \\ 
&+ \ket{adcb} + \ket{dabc} - 2\ket{adbc} - 2\ket{dacb} \\ 
&+ \ket{bdca} + \ket{dbac} - 2\ket{bdac} - 2\ket{dbca}. 
\end{split}
\end{equation} 

for some SSYT of shape $(2,2)$ containing elements $abcd.$
\end{lemma}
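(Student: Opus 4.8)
The plan is to reuse the computational strategy from the proofs of the two preceding lemmas (for $V_4$ and $V_{3,1}$), specialized to $\lambda = (2,2)$. First I would fix the leftmost standard Young tableau $T$ of shape $(2,2)$ in Figure~\ref{fig:tableau_2}, whose row-preserving subgroup is $P_T = \{e,(12),(34),(12)(34)\}$ and whose column-preserving subgroup is $Q_T = \{e,(13),(24),(13)(24)\}$, and form the Young symmetrizer $c_T = b_T a_T$ with $a_T = \sum_{\sigma \in P_T}\Pi_{\sigma}$ and $b_T = \sum_{\sigma \in Q_T}\mathrm{sgn}(\sigma)\Pi_{\sigma}$. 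Since $\dim V_{2,2}$ equals the number of standard Young tableaux of shape $(2,2)$, namely $2$, applying $c_T$ to the two tensors $\ket{abcd}$ and $\ket{acbd}$ (the orderings associated with the two SYT of shape $(2,2)$) yields, by the tableau-state theorem quoted above, a basis $\{v_1, v_2\}$ of a copy of $V_{2,2}$ inside $(\C^d)^{\otimes 4}$.

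Next I would write out the $2 \times 2$ matrices representing the transpositions $(13)$ and $(24)$ in the basis $\{v_1, v_2\}$, either by acting with $\Pi_{13}$ and $\Pi_{24}$ directly on the expanded sums or by reading off the known matrices for the $S_4$-action on $V_{2,2}$ from \cite{westrich2011youngs}. By the previous lemma these two matrices have a one-dimensional common eigenspace with eigenvalue $1$; locating it is a trivial linear-algebra step and produces an explicit integer combination $\alpha v_1 + \beta v_2$. Expanding that combination and collecting like kets should reproduce the stated vector $\ket{\psi_{abcd}^{(2,2)}}$ verbatim. I would close by observing directly from the symmetrization pattern of the resulting expression that it is fixed by both $\Pi_{13}$ and $\Pi_{24}$, hence spans the trivial representation of $G$ inside that copy of $V_{2,2}$; uniqueness and one-dimensionality then follow from the character computation in the previous lemma, and ranging over all SSYT of shape $(2,2)$ with entries $a \leq b \leq c \leq d$ gives one such invariant line in each copy of $V_{2,2}$.

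The main obstacle will be purely organizational: carrying out $c_T \ket{abcd}$ and $c_T \ket{acbd}$ as sums of up to sixteen basis kets in a fixed register convention, getting every sign from $b_T$ right, and then matching the resulting combination to the particular grouping in the lemma statement. Note in particular the mixed structure, where one block is the manifestly symmetric $(\ket{ab}+\ket{ba})(\ket{cd}+\ket{dc}) + (\ket{cd}+\ket{dc})(\ket{ab}+\ket{ba})$ while the others have the form $\ket{acdb} + \ket{cabd} - 2\ket{acbd} - 2\ket{cadb}$ with the distinctive coefficient $-2$. To be safe I would verify the coefficient $-2$ and the index pairings by evaluating both sides of the claimed identity on a handful of specific computational-basis vectors, say with $a,b,c,d$ pairwise distinct, rather than trusting the bookkeeping alone. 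There is no conceptual difficulty, since $V_{2,2}$ is only two-dimensional, so the entire argument amounts to a finite, explicit check.
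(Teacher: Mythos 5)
Your proposal follows essentially the same route as the paper's proof: fix the first standard tableau of shape $(2,2)$, apply the Young symmetrizer $c_T$ to $\ket{abcd}$ and $\ket{acbd}$ to obtain a basis $\{v_1,v_2\}$, write down the $2\times 2$ matrices of $(13)$ and $(24)$ in that basis, and take the common eigenvector of eigenvalue $1$ (which in the paper is $v_1 - 2v_2$, matching the coefficient $-2$ you flag). The only difference is your added sanity check on specific basis vectors, which is a reasonable safeguard but not a change of method.
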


\begin{proof}
The SYT of shape $(2,2)$ are listed in Figure \ref{fig:tableau_2}. Using the first tableau with row-preserving subgroup is $\{e, (12), (34), (12)(34)\}$ and the column-preserving subgroup is $\{e, (13), (24), (13)(24)\}$, we have
\begin{equation*}
\begin{split} 
    v_1 = c_T \ket{abcd} &= \ket{abcd} - \ket{cbad} - \ket{adcb} + \ket{cdab} \\ 
    &+ \ket{bacd} - \ket{cabd} - \ket{bdca} + \ket{cdba} \\ 
    &+ \ket{abdc} - \ket{dbac} - \ket{acdb} + \ket{dcab} \\ 
    &+ \ket{badc} - \ket{dabc} - \ket{bcda} + \ket{dcba}. 
\end{split} 
\end{equation*}

Similarly, using the second tableau, we get that

\begin{equation*}
\begin{split} 
    v_2 = c_T \ket{acbd} &= \ket{acbd} - \ket{cabd} - \ket{acdb} + \ket{cadb} \\ 
    &+ \ket{bcad} - \ket{cbad} - \ket{bcda} + \ket{cbda} \\ 
    &+ \ket{adbc} - \ket{dabc} - \ket{adcb} + \ket{dacb} \\ 
    &+ \ket{bdac} - \ket{dbac} - \ket{bdca} + \ket{dbca}. 
\end{split} 
\end{equation*}

In the basis $\{v_1, v_2\}$ of $V_{2,2},$ the matrices for $(13)$ and $(24)$ are the same, and equal to $\begin{bmatrix} -1 & -1 \\ 0 & 1 \end{bmatrix}.$ An eigenvector of eigenvalue one is $\begin{bmatrix} 1 \\ -2 \end{bmatrix}. $ Setting $\ket{\psi_{abcd}^{(2,2)}} = v_1 - 2v_2$ completes the proof. 
\end{proof}

\begin{theorem}
There exists a basis for the $G$-invariant subspace $V$ of $(\mathbb{C}^d)^{\otimes 4}$ consisting where each $\ket{\psi} \in V$ is in the span of states $\ket{\psi_1} \otimes \ket{\psi_2},$ where $\ket{\psi_1}, \ket{\psi_2} \in (\mathbb{C}^d)^{\otimes 2}.$
\label{theorem:basis_lemma}
\end{theorem}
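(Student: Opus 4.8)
The plan is to assemble the desired basis from the explicit $G$-invariant tableau states computed in the three preceding lemmas, one family for each partition $\lambda \in \{(4),(3,1),(2,2)\}$, and to observe that each such state is manifestly a sum of product vectors across the $(12)\,|\,(34)$ bipartition. First I would recall the Schur--Weyl decomposition $(\mathbb{C}^d)^{\otimes 4} \cong \bigoplus_{\lambda \vdash 4} V_\lambda \otimes W_\lambda$ and note that since $G = \{e,(13),(24),(13)(24)\} \subseteq S_4$ acts only on the $V_\lambda$ factors, the $G$-invariant subspace $V$ decomposes as $\bigoplus_\lambda V_\lambda^G \otimes W_\lambda$. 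By the character computation in the lemma above, $V_\lambda^G = 0$ unless $\lambda \in \{(4),(3,1),(2,2)\}$, and in each of those cases $V_\lambda^G$ is one-dimensional, spanned by the tableau state $\ket{\psi_{abcd}^\lambda}$ exhibited in the corresponding lemma (for $\lambda=(4)$ the $G$-invariant line is the whole one-dimensional $V_{(4)}$). Concretely, running over all SSYT $T$ of shape $\lambda$ with entries drawn from $[d]$ — which index a basis of the multiplicity space $W_\lambda$ — the collection of all vectors $\{\ket{\psi^{(4)}_{abcd}}, \ket{\psi^{(3,1)}_{abcd}}, \ket{\psi^{(2,2)}_{abcd}}\}$ over all such tableaux forms a basis of $V$. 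This is the content I would state and prove.

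The remaining point is to check that every basis vector lies in $\mathrm{span}\{\ket{\psi_1}\otimes\ket{\psi_2} : \ket{\psi_1},\ket{\psi_2}\in(\mathbb{C}^d)^{\otimes 2}\}$, where the tensor factorization is across registers $\{1,2\}$ versus $\{3,4\}$. For this I would simply inspect the three explicit formulas: the $\lambda=(4)$ state is visibly a sum of six terms each of the form $(\text{symmetric vector on }12)\otimes(\text{symmetric vector on }34)$; the $\lambda=(3,1)$ state is a sum of six terms each of the form $(\text{symmetrized pair})\otimes(\text{antisymmetrized pair})$ (or vice versa); and the $\lambda=(2,2)$ state, after grouping, is again a sum of terms of the shape $\ket{xy}\otimes\ket{zw}$ — indeed every monomial $\ket{abcd}$ appearing is by definition $\ket{ab}\otimes\ket{cd}$, so the claim is automatic once one simply observes that $V$ is spanned by these vectors. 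In fact the cleanest phrasing is: \emph{any} vector in $(\mathbb{C}^d)^{\otimes 4}$ is a linear combination of product vectors $\ket{\psi_1}\otimes\ket{\psi_2}$ across the $12|34$ cut (the computational basis already is such a combination), so the content of the theorem is really that $V$ \emph{has a basis consisting of vectors that are explicitly written as short sums of product vectors} — which is exactly what the three lemmas provide. I would therefore present the proof as: (i) cite the decomposition $V = \bigoplus_{\lambda\in\{(4),(3,1),(2,2)\}} V_\lambda^G\otimes W_\lambda$; (ii) cite the three lemmas identifying $V_\lambda^G$ and its spanning tableau state; (iii) conclude that as $T$ ranges over SSYT of these shapes the tableau states span $V$, and each is visibly a sum of $12|34$-product vectors.

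The step I expect to require the most care is (iii): verifying that the tableau states $\ket{\psi^\lambda_{abcd}}$, as the SSYT $T$ varies, really do give a full basis of $V$ and not merely a spanning set, and that they are correctly normalized relative to the multiplicity space $W_\lambda$. This amounts to invoking the combinatorial form of Schur--Weyl duality quoted as the theorem of \cite{berget2009symmetries} in the background subsection — the tensors $e_{P,Q}c_P$ with $Q$ ranging over SSYT and $P$ over SYT form a basis of $(\mathbb{C}^d)^{\otimes 4}$ — and then restricting $P$ to the single SYT (up to the $G$-invariant combination found in each lemma) that picks out $V_\lambda^G$. One must also handle the degenerate cases where the tableau $T$ has repeated entries (e.g. $a=b$), in which case some of the displayed vectors vanish or coincide; these do not contribute new basis elements and can be discarded. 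Once the bookkeeping of which $(P,Q)$ pairs survive is settled, the product-vector structure is immediate from the displayed formulas, and the theorem follows.
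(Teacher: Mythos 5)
Your setup (the decomposition $V=\bigoplus_{\lambda} V_\lambda^G\otimes W_\lambda$ with $\lambda\in\{(4),(3,1),(2,2)\}$, each $V_\lambda^G$ one-dimensional and spanned by the tableau state from the preceding lemmas) agrees with the paper, and your care about step (iii) — that the tableau states over all admissible fillings really form a basis, discarding degenerate fillings — is reasonable and in fact more explicit than the paper, which simply takes that basis as already established. The gap is in how you dispose of the product-structure claim. You correctly notice that, read literally, ``in the span of $\ket{\psi_1}\otimes\ket{\psi_2}$ across the $12|34$ cut'' is vacuous, but you then conclude the claim is ``automatic'' because every computational monomial $\ket{abcd}$ is $\ket{ab}\otimes\ket{cd}$. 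That reading empties the theorem of content; the paper's own proof shows the intended property is stronger: each basis vector should be exhibited as a \emph{short explicit sum of products whose two-register factors are the symmetrized/antisymmetrized pair states} $\ket{xy}\pm\ket{yx}$. This is what makes the basis usable downstream — one reads off a structured subspace $S$ (e.g.\ inside the symmetric subspace, hence completely entangled as in \Cref{thm:counterexample}) containing the factors, so that the $G$-invariant state lies in $S\otimes S$ and fools both the membership queries and the product test.

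For $\lambda=(4)$ and $\lambda=(3,1)$ the tableau states already have this form, but for $\lambda=(2,2)$ the tableau state $\ket{\psi^{(2,2)}_{abcd}}$ does not visibly regroup into such products, and this is exactly where the paper's proof does its only real work: it replaces $\ket{\psi^{(2,2)}_{abcd}}$ by the combination $\ket{\psi^{(2,2)}_{abcd}}'=\tfrac13\ket{\psi^{4}_{abcd}}+\tfrac23\ket{\psi^{(2,2)}_{abcd}}$, verifies by direct computation that this equals a sum of six products of the form $(\ket{xy}+\ket{yx})\otimes(\ket{zw}+\ket{wz})$ and $(\ket{xy}-\ket{yx})\otimes(\ket{zw}-\ket{wz})$, and observes that the modified collection is still a basis of $V$ (since the change of basis within each $\mathrm{span}\{\ket{\psi^4},\ket{\psi^{(2,2)}}\}$ block is invertible). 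Your proposal omits this modification and the accompanying computation entirely, so as written it proves only the trivial reading of the statement and does not reproduce the paper's argument; to close the gap you would need either to perform this regrouping (or an equivalent one) for the $(2,2)$ sector, or to state explicitly the stronger form of the conclusion your basis is supposed to satisfy and verify it there.
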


\begin{proof}
We have already computed that $\{\ket{\psi_{abcd}^4}, \ket{\psi_{abcd}^{(3,1)}}, \ket{\psi_{abcd}^{(2,2)}}\}$ is a basis for $V$ ranging over all tableau of the appropriate shape with fillings $abcd.$ Observe that states of the form $\ket{\psi_{abcd}^4}$ and  $\ket{\psi_{abcd}^{(3,1)}}$ already have the desired property. Now defining $$\ket{\psi_{abcd}^{(2,2)}}' = \frac{1}{3} \ket{\psi_{abcd}^4} + \frac{2}{3} \ket{\psi_{abcd}^{(2,2)}},$$ we can compute that

\begin{equation*} 
\begin{split} 
\ket{\psi_{abcd}^{(2,2)}}' &= (\ket{ab} + \ket{ba})(\ket{cd} + \ket{dc}) + (\ket{cd} + \ket{dc})(\ket{ab} + \ket{ba}) \\ 
&+ \ket{acdb} + \ket{cabd} - \ket{acbd} - \ket{cadb}  \\ 
&+ \ket{bcda} + \ket{cbad} - \ket{bcad} - \ket{cbda} \\ 
&+ \ket{adcb} + \ket{dabc} - \ket{adbc} - \ket{dacb} \\ 
&+ \ket{bdca} + \ket{dbac} - \ket{bdac} - \ket{dbca} \\ 
&= (\ket{ab} + \ket{ba})(\ket{cd} + \ket{dc}) + (\ket{cd} + \ket{dc})(\ket{ab} + \ket{ba}) \\
&+ (\ket{ac} - \ket{ca})(\ket{db} - \ket{bd}) + (\ket{bd} - \ket{db})(\ket{ca} - \ket{ac}) \\ 
&+ (\ket{bc} - \ket{cb})(\ket{da} - \ket{ad}) + (\ket{ad} - \ket{da})(\ket{cb} - \ket{bc})
\end{split}
\end{equation*} 

has the desired property, and $\{\ket{\psi_{abcd}^4}, \ket{\psi_{abcd}^{(3,1)}}, \ket{\psi_{abcd}^{(2,2)}}'\}$ ranging over all tableau remains a basis of $V$. 
\end{proof}

Therefore, \Cref{theorem:basis_lemma} enables us to construct subspaces $S$ where there exists an witness on which the product test part of the verifier passes with probability one. This can be done, for instance, by taking $\ket{\psi} \in V$ in \Cref{theorem:basis_lemma} and letting $S$ be supported on the partial trace of $\ket{\psi}$ on the first two registers. In particular, $S$ could be an completed entangled subspace, for instance by considering the symmetric subspaces as in \Cref{thm:counterexample}.

\end{document}